\titleformat{\subsubsection}[runin]{\normalfont\bfseries}{\thesubsubsection.}{1em}{#1.}
\titleformat{\paragraph}[runin]{\normalfont\bfseries}{.}{1em}{#1}
\titleformat{\subparagraph}[runin]{\normalfont\itshape}{.}{0pt}{#1}
\titlespacing*{\subparagraph}{-1pt}{3.25ex plus 1ex minus .2ex}{1em}
\definecolor{corlinks}{RGB}{200,0,0}
\definecolor{cormenu}{RGB}{200,0,0}
\definecolor{corurl}{RGB}{200,0,0}
\newcommand{\headsubsec}[1]{\paragraph{#1.}}
\newcommand\postlegend[1]%
\tikzset{NA/.style = {fill = white}}
\tikzset{pL/.style = {fill = green!70}}
\tikzset{NL/.style = {fill = yellow!70}}
\tikzset{L/.style = {fill = orange!80}}
\tikzset{LN/.style = {fill = red!70}}
\tikzset{NP/.style = {fill = cyan!80}}
\tikzset{P/.style = {fill = magenta!70}}
\definecolor{mikadoyellow}{rgb}{1.0, 0.77, 0.05}
\definecolor{cyan(process)}{rgb}{0.0, 0.72, 0.92}
\definecolor{applegreen}{rgb}{0.55, 0.71, 0.0}
\begin{document}

\title{Constant-Depth Circuits vs.~Monotone Circuits\vspace{0.5cm}}

\author{
		Bruno P. Cavalar\footnote{Email: \texttt{bruno.pasqualotto-cavalar@warwick.ac.uk}}\vspace{0.2cm}\\{\small Department of Computer Science}\\{\small University of Warwick\vspace{0.3cm}}
		\and	
		Igor C. Oliveira\footnote{Email: \texttt{igor.oliveira@warwick.ac.uk}}\vspace{0.2cm}\\{\small Department of Computer Science}\\
		{\small University of Warwick} 
	}

\maketitle

\vspace{-0.6cm}

\begin{abstract}
We establish new separations between the power of monotone and general (non-monotone) Boolean circuits:
\begin{itemize}
    \item[--] For every $k \geq 1$, there is a monotone  function in
        $\mathsf{AC^0}$ (constant-depth poly-size circuits) that requires
        monotone circuits of depth $\Omega(\log^k n)$. This significantly  extends
        a classical result of Okol'nishnikova~\cite{okolnishnikova_1982}
        and Ajtai and Gurevich \citep{ajtai_gurevich_1987}. In addition, our separation holds for a monotone graph property, which was unknown even in the context of $\AC^0$ versus $\mAC^0$.  
    \item[--] For every $k \geq 1$, there is a monotone  function in
        $\mathsf{AC^0}[\oplus]$ (constant-depth poly-size circuits extended
        with parity gates) that requires monotone circuits of size
        $\exp(\Omega(\log^k n))$. This makes progress towards a question
        posed by Grigni and Sipser \citep{gs92}.  
\end{itemize}
These results show that constant-depth circuits can be more
efficient than monotone formulas and monotone circuits when computing monotone functions.

In the opposite direction, we observe that non-trivial simulations are possible in the absence of parity gates: every monotone function computed by an $\mathsf{AC}^0$ circuit of size $s$ and depth $d$ can be computed by a monotone circuit of size $2^{n - n/O(\log s)^{d-1}}$. We 
show that the
existence of significantly faster monotone simulations would lead to
breakthrough circuit lower bounds. In particular, if every monotone
function in $\mathsf{AC}^0$ admits a polynomial size monotone circuit, then $\NC^2$
is not contained in $\NC^1$.

Finally, we revisit our separation result against monotone circuit size and
investigate the limits of our approach, which is based on a monotone lower
bound for constraint satisfaction problems (CSPs)
established by 
G\"{o}\"{o}s, Kamath, Robere and Sokolov~\citep{gkrs19}
via lifting techniques. Adapting results of Schaefer
\citep{DBLP:conf/stoc/Schaefer78} and 
Allender, Bauland, Immerman, Schnoor and Vollmer~\citep{DBLP:journals/jcss/AllenderBISV09},
we obtain an unconditional classification
of the monotone circuit complexity of Boolean-valued CSPs via their
polymorphisms. This result and the consequences we derive from it might be
of independent interest. 
\end{abstract}

\newpage  
\setcounter{tocdepth}{3}  
  
\tableofcontents

\newpage

\vspace{-0.4cm}

\section{Introduction}

A Boolean function $f \colon \{0,1\}^n \to \{0,1\}$ is monotone if $f(x) \leq f(y)$ whenever $x_i \leq y_i$ for each coordinate $1 \leq i \leq n$. Monotone Boolean functions, and the monotone Boolean circuits\footnote{Recall that in a monotone Boolean circuit the gate set is limited to
$\set{\mathsf{AND}, \mathsf{OR}}$ and input gates are labelled by elements from $\{x_1, \ldots, x_n, 0, 1\}$.} that compute them, have been extensively investigated for decades due to their relevance in circuit complexity  \citep{razborov_boolean_85}, cryptography \citep{DBLP:conf/crypto/Leichter88}, learning theory \citep{DBLP:journals/jacm/BshoutyT96},  proof complexity \citep{Krajicek97, DBLP:journals/jsyml/Pudlak97}, property testing \citep{DBLP:conf/focs/GoldreichGLR98}, pseudorandomness \citep{DBLP:conf/stoc/ChattopadhyayZ16}, optimisation \citep{gjw18}, hazard-free computations \citep{DBLP:journals/jacm/IkenmeyerKLLMS19}, and meta-complexity \citep{Hir22}, among other topics. In addition, over the last few years a number of results have further  highlighted the importance of monotone complexity as a central topic in the study of propositional proofs, total search problems, communication protocols, and related areas (see \citep{DBLP:journals/sigact/RezendeGR22} for a recent survey). 

Some of the most fundamental results about monotone functions deal with their complexities with respect to different classes of Boolean circuits, such as the monotone circuit lower bound of Razborov \citep{razborov1985lower} for $\mathsf{Matching}$ and the constant-depth circuit lower bound of Rossman \citep{DBLP:conf/stoc/Rossman08} for $k$-$\mathsf{Clique}$. Particularly important to our discussion is a related strand of research that contrasts the computational power of monotone circuits relative to general (non-monotone) $\mathsf{AND}$/$\mathsf{OR}$/$\mathsf{NOT}$ circuits, which we review next.

\headsubsec{Weakness of Monotone Circuits} The study of monotone
simulations of non-monotone computations and associated separation results
has a long and rich history.  In a sequence of celebrated results,
\citep{razborov1985lower, andreev1985method,
DBLP:journals/combinatorica/AlonB87, tardos_1988} showed the existence of
monotone functions that can be computed by circuits of polynomial size but
require monotone circuits of size $2^{n^{\Omega(1)}}$. In other words, the
use of negations can significantly speedup the computation of monotone
functions.  More recently, 
G\"{o}\"{o}s, Kamath, Robere and Sokolov~\citep{gkrs19} 
considerably
strengthened this separation by showing that some monotone functions in
$\mathsf{NC}^2$ (poly-size $O(\log^2 n)$-depth fan-in two circuits) require
monotone circuits of size $2^{n^{\Omega(1)}}$. (An earlier weaker
separation against monotone depth $n^{\Omega(1)}$ was established in
\citep{RW92}.) Therefore, negations can also allow monotone functions to be
efficiently computed in parallel. 

Similar separations about the limitations of monotone circuits
are also known at the low-complexity end of the spectrum:   Okol'nishnikova~\cite{okolnishnikova_1982} and (independently) Ajtai and
Gurevich \citep{ajtai_gurevich_1987} exhibited monotone functions in $\mathsf{AC}^0$
(i.e., constant-depth poly-size
$\mathsf{AND}$/$\mathsf{OR}$/$\mathsf{NOT}$ circuits) that require monotone $\mathsf{AC}^0$
circuits (composed of only $\mathsf{AND}$/$\mathsf{OR}$ gates)
of super-polynomial size.\footnote{We refer to \cite{BST13} for an alternate exposition of this result.} This result has been extended to an exponential separation in \citep{cos15}, which shows the existence of a monotone function in $\mathsf{AC}^0$ that requires monotone depth-$d$ circuits of size $2^{\widetilde{\Omega}(n^{1/d})}$ even if $\mathsf{MAJ}$ (majority) gates are allowed in addition to $\mathsf{AND}$/$\mathsf{OR}$ gates.\footnote{Separations between monotone and non-monotone devices have also been extensively investigated in other settings. This includes average-case complexity \citep{DBLP:conf/icalp/BlaisHST14}, different computational models, such as span programs 
\citep{DBLP:journals/combinatorica/BabaiGW99, rprc16} and algebraic complexity (see \cite{cdm21} and references therein), and separations in first-order logic \citep{DBLP:conf/lics/Stolboushkin95, DBLP:conf/lics/Kuperberg21,  DBLP:journals/corr/abs-2201-11619}. We restrict our attention to worst-case separations for Boolean circuits in this paper.}

\headsubsec{Strength of Monotone Circuits} 

In contrast to these results, in many settings negations do not offer a significant speedup and monotone computations can be unexpectedly powerful. For instance, monotone circuits are able to efficiently implement several non-trivial algorithms, such as solving constraint satisfaction problems using treewidth bounds (see, e.g.,~\citep[Chapter 3]{Oliveira15}). As another example, in the context of cryptography, it has been proved that if one-way functions exist, then there are monotone one-way functions \citep{DBLP:journals/toc/GoldreichI12}. Below we describe results that are more closely related to the separations investigated in our paper.

 In the extremely constrained setting of depth-$2$ circuits, Quine \cite{quine_1953} showed that monotone functions computed by size-$s$ $\mathsf{DNFs}$ (resp.,~$\mathsf{CNFs}$) can always be computed by size-$s$ monotone $\mathsf{DNFs}$ (resp.,~$\mathsf{CNFs}$). Some results along this line are  known for larger circuit depth, but with respect to more structured classes of monotone Boolean functions. Rossman~\cite{rossman_2008, rossman_16} showed that any homomorphism-preserving graph property computed by \(\AC^0\) circuits is also computed
by monotone \(\AC^0\) circuits.\footnote{A function $f : \blt^{\binom{n}{2}} \to \blt$
is called a \emph{graph property}
if
$f(G) = f(H)$ whenever $G$ and $H$ are isomorphic graphs,
and \emph{homomorphism-preserving}
if
$f(G) \leq f(H)$ whenever there is a 
graph homomorphism from $G$ to $H$. It is easy to see that every homomorphism-preserving graph property is monotone.} Under no circuit depth restriction, 
Berkowitz \citep{berkowitz} proved that the monotone and non-monotone circuit size complexities of every slice function are polynomially related.\footnote{A function $f : \blt^{\binom{n}{2}} \to \blt$
is a \emph{slice function} if there is $i \geq 0$ such that $f(x)$ is $0$ on inputs of Hamming weight less than $i$ and $1$ on inputs of Hamming weight larger than $i$.}\\

Despite much progress and sustained efforts, these two classes of results
 leave open tantalising problems about the power of cancellations in computation.\footnote{Any non-monotone circuit can be written as an $\mathsf{XOR}$ (parity) of distinct monotone sub-circuits (see, e.g., \citep[Appendix A.1]{DBLP:conf/tcc/GuoMOR15}), so negations can be seen as a way of combining, or cancelling, different monotone computations. See also a  related discussion in Valiant \citep{DBLP:journals/tcs/Valiant80}.} In particular, they suggest the following basic question about the contrast between the weakness of monotone computations and the strength of negations:

\begin{center}
    \emph{What is the largest computational gap between the  power of monotone and\\ general \emph{(}non-monotone\emph{)} Boolean circuits?}
\end{center}

A concrete formalisation of this question dates back to the seminal work on monotone complexity of Grigni and Sipser \citep{gs92} in the early nineties. They asked if there are
monotone functions in $\AC^0$ that require super-polynomial size monotone
Boolean circuits, i.e., if $\semMon{\mathsf{AC}^0} \nsubseteq
\mathsf{mSIZE}[\mathsf{poly}]$. In case this separation holds, it would
exhibit the largest qualitative gap between monotone and general
Boolean circuits, i.e., even extremely parallel non-monotone computations
can be more efficient than 
arbitrary
monotone computations.

\subsection{Results}

Our results show that, with respect to the computation of monotone functions, highly parallel (non-monotone) Boolean circuits can be super-polynomially more efficient than unrestricted monotone circuits. Before providing a precise formulation of these results, we introduce some notation.

For a function $d \colon \mathbb{N} \to \mathbb{N}$, let $\mDEPTH[d]$ denote the class of Boolean functions computed by monotone fan-in two $\mathsf{AND}/\mathsf{OR}$ Boolean circuits of depth $O(d(n))$. Similarly, we use $\mSIZE[s]$ to denote the class of Boolean functions computed by monotone circuits of size $O(s(n))$. More generally, for a circuit class $\mathcal{C}$, we let $\mathsf{m}\mathcal{C}$ denote its natural monotone analogue. Finally, for a Boolean function $f \colon \{0,1\}^n \to \{0,1\}$, we use $\Cmon{f}$ and $\mdepth{f}$ to denote its monotone circuit size and depth complexities, respectively. We refer to Jukna \citep{jukna_2012} for standard background on circuit complexity theory.

\subsubsection{Constant-depth circuits vs.~monotone circuits}

Recall that the  Okol'nishnikova-Ajtai-Gurevich \cite{okolnishnikova_1982,ajtai_gurevich_1987} theorem states that 
$\semMon{\AC^0} \nsubseteq \mAC^0$. In contrast, as our
main result, we establish a separation between constant-depth Boolean
circuits
and monotone circuits of much larger depth. In particular, we show that constant-depth circuits with negations can be significantly more efficient than  monotone formulas.

\begin{theorem}[Polynomial-size constant-depth vs.~larger monotone depth]
    \label{thm:ac0-fml-graph-intro}
    For every $k \geq 1$,
    we have
    $\semMon{\AC^0} \not\sseq \mDEPTH[(\log n)^k]$. Moreover, this separation holds for a monotone graph property.
\end{theorem}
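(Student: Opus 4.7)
The plan is to construct, for each $k \ge 1$, a monotone function in $\AC^0$ whose monotone fan-in-two formula depth is $\omega(\log^k n)$, by combining a query-to-communication lifting theorem with an $\AC^0$-friendly gadget, and then to transfer the separation to a graph property via a symmetric encoding.

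Concretely, I would first fix a constant-width unsatisfiable CSP $\varphi_m$ on $m$ variables whose resolution refutation depth (equivalently, the decision-tree depth of the \emph{falsified-clause} search problem) is $\Omega(\log^k m)$. Tseitin contradictions over a constant-degree graph of treewidth $\log^k m$ provide a classical source of such CSPs. I would then lift this search problem using the indexing gadget $\mathrm{IND}_b$ with $b = m^{\varepsilon}$, via the Raz--McKenzie / G\"o\"os--Pitassi lifting theorem, which converts decision-tree depth $d$ into deterministic communication complexity $\Omega(d \cdot \log b)$. The resulting monotone function $f_n$ lives on $n = \Theta(mb)$ bits and, through the Karchmer--Wigderson correspondence, has monotone formula depth $\Omega(\log^k m \cdot \log b) = \Omega(\log^{k+1} n) = \omega(\log^k n)$, so $f_n \notin \mDEPTH[(\log n)^k]$.

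To see $f_n \in \AC^0$, note that $\mathrm{IND}_b$ admits a polynomial-size depth-$3$ formula (it is essentially a multiplexer), so plugging $\mathrm{IND}_b$ into each constant-width clause of $\varphi_m$ yields an $\AC^0$ circuit per lifted clause; the full function is then the polynomial-size disjunction of these lifted clauses. The choice of gadget is critical here -- inner-product-style gadgets commonly used in the lifting literature would not obviously keep the lifted function in $\AC^0$, so restricting to the indexing gadget is essential.

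Finally, I would promote $f_n$ to a monotone graph property $g_N$ by a symmetric encoding: represent each of the $n$ input bits as the presence of a canonical labelled subgraph embedded in a polynomially larger vertex set, and define $g_N$ to decode these bits and evaluate $f_n$. With care, $g_N$ is monotone, isomorphism-invariant, in $\AC^0$ (since gadget detection is constant-depth and polynomial-size), and any monotone depth-$D$ formula for $g_N$ yields a monotone depth-$(D + O(1))$ formula for $f_n$ by fixing a canonical placement of the gadgets, so the depth lower bound transfers. The main obstacle I anticipate is exactly this last step: arranging the graph encoding so that it is simultaneously $\AC^0$-computable, isomorphism-invariant, and compatible with the Karchmer--Wigderson reduction is the novelty the paper highlights, since no such graph-property separation was previously known even in the weaker $\AC^0$-vs-$\mAC^0$ regime. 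I expect the construction to require a ``padding-and-cleanup'' step controlling spurious gadget copies in adversarial graphs, together with an explicit KW-game reduction rather than a black-box argument.
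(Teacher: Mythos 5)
Your proposal takes a genuinely different route from the paper, and unfortunately it runs directly into a barrier that the paper identifies and explicitly circumvents. In the query-to-communication lifting framework of G\"o\"os--Kamath--Robere--Sokolov, the monotone function whose Karchmer--Wigderson game is solved by the lifted search problem is (a variant of) $\CSPf{S}$: its inputs indicate \emph{which constraint applications are present}, and it accepts unsatisfiable instances. It is not the composition $\varphi_m \circ \mathrm{IND}_b^m$ (which would not even be monotone, since $\mathrm{IND}_b$ is not monotone). Consequently, the choice of gadget controls the strength of the communication lower bound but has no bearing on whether $\CSPf{S}$ itself lies in $\AC^0$. The paper proves (\Cref{thm:cspsat-intro}, Part~2, and its strengthening \Cref{thm:cspsat-hard}) that for \emph{any} finite set $S$ of Boolean relations, $\CSPf{S} \notin \mNC^1$ forces $\CSPf{S}$ to be $\L$-hard under $\acmred$ reductions, and therefore $\CSPf{S} \notin \AC^0$. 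Since $\mDEPTH[(\log n)^k] \supseteq \mNC^1$ for $k \geq 1$, your target function would have to avoid $\mNC^1$ and hence would be $\L$-hard. Thus the ``$\AC^0$-friendly gadget'' step cannot rescue the construction: your $f_n$ can never be in $\AC^0$. The paper itself remarks that \Cref{thm:ac0-fml-graph-intro} ``was discovered by trying to avoid the barrier posed by \Cref{thm:cspsat-intro}.''

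The paper's actual argument side-steps the $\CSPf{S}$ framework entirely. It starts from the graph property $\OddFactor_n$ (does $G$ admit a spanning subgraph with all degrees odd?), which captures the \emph{structured} class of Tseitin instances where every variable occurs in exactly two constraints, a restriction that the total function $\CSPf{S}$ cannot enforce. Babai--G\'al--Wigderson give $\mdepth{\OddFactor_n} = \Omega(n)$ via a reduction from set disjointness, and crucially $\OddFactor_n \in \L$ (a graph has an odd factor iff every connected component is of even size, checkable with Reingold's algorithm). Then a graph-property-preserving padding lemma (\Cref{lemma:padding_graph}) combined with the guess-and-verify depth reduction $\NL \sseq \AC^0[2^{n^\epsilon}]$ (\Cref{lemma:ac0_nc1_sim}) yields a monotone graph property on $N = 2^{n^\epsilon}$ vertices in $\AC^0$ that inherits the depth lower bound $\Omega(n) = \Omega((\log N)^{1/\epsilon}) \gg (\log N)^k$. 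Your final ``symmetric encoding'' step also understates the difficulty: $\OddFactor$ is already a graph property, so the paper only needs to pad a graph property to a larger graph property (still nontrivial, requiring careful handling of isolated vertices and edge-count thresholds). Encoding an \emph{arbitrary} Boolean function as an isomorphism-invariant monotone graph property in $\AC^0$, as you propose, is a much harder problem and not known to be possible in general.
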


In a more constrained setting, Kuperberg~\cite{DBLP:conf/lics/Kuperberg21, DBLP:journals/corr/abs-2201-11619} 
exhibited a monotone graph property expressible in first-order logic 
that cannot be expressed in positive first-order logic. A separation that holds for a monotone graph property was unknown even in the context of $\AC^0$ versus $\mAC^0$.

Let $\HomPres$ denote the class of all homomorphism-preserving graph properties, and recall that Rossman~\cite{rossman_2008, rossman_16} established that 
$\AC^0 \cap \HomPres \subseteq \mAC^0$. \Cref{thm:ac0-fml-graph-intro} implies that this efficient monotone simulation does not extend to the larger class of monotone graph properties, even if super-logarithmic depth is allowed.

Our argument is completely different from those of \citep{okolnishnikova_1982, ajtai_gurevich_1987, BST13, cos15} and their counterparts in first-order logic \citep{DBLP:conf/lics/Stolboushkin95, DBLP:conf/lics/Kuperberg21, DBLP:journals/corr/abs-2201-11619}. In particular, it allows us to break the $O(\log n)$ monotone depth barrier present in previous separations with an 
$\AC^0$ upper bound, which rely on lower bounds against monotone circuits of depth $d$ and size (at most) $2^{n^{O(1/d)}}$. We defer the discussion of our techniques to \Cref{sec:techniques}.

In our next result, we consider monotone circuits of unbounded depth.

\begin{restatable}[Polynomial-size constant-depth vs.~larger monotone size]{theorem}{acxormp}
    \label{thm:ac02-mp-intro}
    For every $k \geq 1$,
    we have
$\semMon{\AC^0[\oplus]} \not\sseq \mSIZE[2^{(\log n)^k}]$.
\end{restatable}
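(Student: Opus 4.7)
The plan is to apply the lifting-based monotone circuit size lower bound of Göös, Kamath, Robere and Sokolov (GKRS19) to a carefully chosen Boolean CSP whose lifted ``unsatisfiability'' (or related canonical monotone) function happens to be expressible by an $\AC^0[\oplus]$ circuit. GKRS19 provides, via a constant-size lifting gadget applied to a suitable base CSP, a monotone function requiring monotone circuit size $2^{N^{\Omega(1)}}$ where $N$ is the number of lifted variables. The goal is to identify a base CSP and lifting gadget so that the resulting monotone function also admits a small $\AC^0[\oplus]$ circuit; once this is done, we scale parameters (taking $N$ to be polylogarithmic in the ultimate input length $n$ via padding) so that the lower bound reads $2^{(\log n)^k}$ for any desired $k$, giving the separation $\semMon{\AC^0[\oplus]} \not\sseq \mSIZE[2^{(\log n)^k}]$.

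For the lower bound direction I would select the base CSP to match GKRS19's hypotheses (typically a resolution / query-complexity hardness condition on the underlying template), and then invoke their lifting theorem essentially off the shelf. The nontrivial step is the $\AC^0[\oplus]$ upper bound for the lifted monotone function. My plan is to take the base CSP to be a parity-friendly template---for instance a linear CSP over $\GF(2)$ of bounded arity, or a template whose polymorphism structure (as classified later in the paper) is captured by affine-mod-$2$ operations---and to choose a lifting gadget (such as a constant-size indexing/XOR gadget) that can be decoded by a constant number of parity operations on the constraint-indicator inputs. With these choices, the whole lifted function reduces to a constant-depth combination of $\mathsf{AND}$/$\mathsf{OR}$ and $\oplus$ gates applied to the inputs.

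The main obstacle is balancing these two demands on the CSP. $\AC^0[\oplus]$ provably cannot solve general $\GF(2)$-linear systems (it cannot even compute the rank of an arbitrary matrix), so the CSP cannot encode arbitrary linear algebra; its lifted structure must decompose into constantly many layers of parities combined with $\mathsf{AND}$s and $\mathsf{OR}$s. On the other hand, the CSP must still be rich enough to satisfy the hardness prerequisites of GKRS19. Threading this needle---finding a template simultaneously hard under lifting and $\AC^0[\oplus]$-computable after lifting---is where the argument is delicate, and I would expect the bulk of the technical work to live in verifying both directions for a concrete candidate CSP (and then tuning the padding to convert the $2^{N^{\Omega(1)}}$ bound into the claimed $2^{(\log n)^k}$ bound for every $k$).
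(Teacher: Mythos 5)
Your plan for the lower bound side (invoke GKRS19's lifting-based monotone circuit lower bound) is the right starting point, but the envisioned route for the $\AC^0[\oplus]$ upper bound contains a genuine gap: the ``needle'' you are trying to thread does not exist, and the paper in fact proves this. You propose to choose a base CSP and gadget so that the lifted monotone function \emph{directly} admits a small $\AC^0[\oplus]$ circuit, and then use padding only to fine-tune parameters at the end. But \Cref{thm:cspsat-intro} (and its stronger form \Cref{thm:cspsat-hard}) shows that \emph{any} finite Boolean template $S$ for which $\CSPf{S} \notin \mSIZE[\poly]$ is $\pL$-hard under $\acmred$ reductions, and hence cannot lie in $\AC^0[\oplus]$ (since $\L \subseteq \pL$ but $\L \not\subseteq \AC^0[\oplus]$, as $\Maj \in \L \setminus \AC^0[\oplus]$). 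So no choice of parity-friendly CSP template will give a poly-size $\AC^0[\oplus]$ circuit for a lifted $\CSPf{S}$ with super-polynomial monotone size -- the obstacle you flag (``$\AC^0[\oplus]$ cannot solve general $\GF(2)$-linear systems'') is not merely a difficulty to engineer around, it is an impossibility for the whole direct-lifting strategy.

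The paper's actual argument uses padding in an essential, not cosmetic, way. It keeps GKRS19's function $\txorsat$, which is $\pL$-complete and has monotone circuit lower bound $2^{n^{\eps}}$ (\Cref{thm:xor-sat}), then applies a folklore guess-and-verify simulation showing every $\pL$ function has $\AC^0[\oplus]$ circuits of size $2^{n^{\delta}}$ for any constant $\delta > 0$ (\Cref{lemma:pl-to-ac02}). That circuit is \emph{super-polynomial} in the original input length $n$, not polynomial. Padding $\txorsat$ out to $N = 2^{\Theta(n^{\delta})}$ input bits converts this into a polynomial-size $\AC^0[\oplus]$ circuit on $N$ bits while the monotone lower bound becomes $2^{\Omega((\log N)^{\eps/\delta})}$; choosing $\delta = \eps/k$ gives the claimed $2^{(\log N)^k}$ bound. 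In short, you should not seek a CSP whose lifted function is in polynomial-size $\AC^0[\oplus]$; instead, exploit the gap between a subexponential $\AC^0[\oplus]$ simulation of $\pL$ and an exponential monotone size lower bound, and let padding collapse the former to polynomial while keeping the latter super-polynomial.
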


\Cref{thm:ac0-fml-graph-intro} and \Cref{thm:ac02-mp-intro} are incomparable: while the monotone lower bound is stronger in the latter, its constant-depth upper bound requires parity gates. \Cref{thm:ac02-mp-intro} provides the first separation between constant-depth circuits and monotone circuits of polynomial size, coming remarkably close to a solution to the question considered by Grigni and Sipser \citep{gs92}.

We note that in both of our results the family of monotone functions is explicit and has a simple description (see \Cref{sec:techniques}).

\subsubsection{Non-trivial monotone simulations and their consequences}

While \Cref{thm:ac0-fml-graph-intro} and \Cref{thm:ac02-mp-intro} provide more 
 evidence for the existence of monotone functions in $\AC^0$
which require monotone circuits of super-polynomial size, they still leave open the
intriguing possibility that
unbounded fan-in $\xor$-gates might be crucial to achieve the utmost cancellations (speedups) provided by constant-depth circuits. This further motivates the investigation of efficient monotone simulations of constant-depth circuits without parity gates, which we consider next.

For convenience, let $\AC^0_d[s]$ denote the class of Boolean functions
computed by $\AC^0$ circuits of depth $\leq d$ and size $\leq s(n)$. (We
might omit $s(n)$ and/or $d$ when implicitly quantifying over all families
of polynomial size circuits and/or all constant depths.)

We observe that a non-trivial monotone simulation is possible in the absence of parity
gates. Indeed, by combining existing results from circuit complexity theory, it is not hard to show that
$\semMon{\AC^0_d[s]} 
\sseq 
\mSIZE[2^{n(1-{1/O(\log s)^{d-1}})}]$ (see \Cref{s:simulations}).
Moreover, this upper bound is achieved by monotone $\mathsf{DNFs}$ of the same size. This is the best upper bound we can currently show for the class of all
monotone functions when the depth $d \geq 3$. (Negations offer no speedup
at depths $d \leq 2$ \citep{quine_1953}.) In contrast, we prove that a
significantly faster monotone simulation would lead to new (non-monotone)
lower bounds in  complexity theory. Recall that it is a notorious open
problem to obtain explicit lower bounds against depth-$d$ circuits of size
$2^{\omega(n^{1/(d-1)})}$, for any fixed $d \geq 3$.
We denote by $\GraphPpts$ the set of all 
Boolean functions which are graph properties.

\begin{theorem}[New circuit lower bounds from monotone simulations]
    \label{thm:simul_consequences_intro}
    There exists $\eps > 0$
    such that the following holds.
    \begin{enumerate}
        \item 
            If 
            $\semMon{\AC^0_3} 
            \subseteq \mNC^1$,
            then 
            $\NP 
            \not\sseq
            \AC^0_3[2^{o(n)}]$.
        \item 
            If 
            $\semMon{\AC^0_4} 
            \subseteq \mSIZE[\poly]$,
            then 
            $\NP 
            \not\sseq
            \AC^0_4[ 2^{o( \sqrt{n}/\log n )} ]$.
        \item If 
            $\semMon{\AC^0}  
            \sseq 
            \mSIZE[\poly]$,
            then
            $\NC^2 
            \not\sseq
            \NC^1$.
        \item 
            If $\semMon{\NC^1} \sseq \mSIZE[2^{O(n^\eps)}]$,
            then 
            $\NC^2 
            \not\sseq
            \NC^1$.
        \item 
            If 
            $\semMon{\AC^0} \cap \GraphPpts \subseteq \mSIZE[\mathsf{poly}]$,
            then $\NP \not\sseq \NC^1$.
        \item 
            If $\semMon{\NC^1} \cap \GraphPpts \sseq \mSIZE[\mathsf{poly}]$,
            then 
            $\L
            \not\sseq
            \NC^1$.
    \end{enumerate}
\end{theorem}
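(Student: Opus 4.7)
The plan is to prove each implication by contrapositive: assume the non-monotone containment claimed in the conclusion fails, and then exhibit a monotone function $f$ that witnesses the failure of the simulation in the hypothesis. In every case $f$ will be a monotone function for which a strong monotone circuit lower bound is already known in the literature; the contrapositive assumption forces $f$ into the class whose monotone functions are supposed to be efficiently simulable, giving a monotone upper bound that contradicts the known lower bound.

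Parts 3 and 4 are powered by the theorem of G\"o\"os--Kamath--Robere--Sokolov~\cite{gkrs19} already cited in the introduction, which exhibits an explicit monotone $f \in \NC^2$ with $\Cmon(f) \geq 2^{n^{\Omega(1)}}$. Assuming $\NC^2 \sseq \NC^1$ places $f$ in $\semMon{\NC^1}$, and Part 4 is then immediate: for $\varepsilon$ smaller than the exponent in the GKRS bound, $\mSIZE[2^{O(n^\varepsilon)}]$ cannot accommodate $f$. Part 3 requires an additional step because the hypothesis only controls $\semMon{\AC^0}$. The natural route is to bootstrap: under $\NC^2 \sseq \NC^1$ one uses iterated depth-collapse together with a padding argument to propagate the $\mSIZE[\poly]$ upper bound from $\semMon{\AC^0}$ up to $\semMon{\NC^1}$, at which point the GKRS lower bound closes the contradiction.

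Parts 1 and 2 use a monotone $\NP$-complete problem with a strong known monotone lower bound, such as $k$-$\mathsf{CLIQUE}$ at a suitable $k$ (Razborov~\cite{razborov1985lower}, Alon--Boppana~\cite{DBLP:journals/combinatorica/AlonB87}), or a monotone CSP from the GKRS lifting framework. The contrapositive hypothesis $\NP \sseq \AC^0_d[s(n)]$ places the chosen problem in $\AC^0_d[s]$. A padding argument then converts $s(N)$ into polynomial size in a padded input length $M$, placing the padded function into $\semMon{\AC^0_d}$; the simulation hypothesis supplies a monotone upper bound which, translated back through the padding, contradicts the monotone lower bound. The specific size bounds $2^{o(n)}$ and $2^{o(\sqrt{n}/\log n)}$ emerge from the trade-off between the strength of the known monotone lower bound and the scaling of $s$. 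Parts 5 and 6 follow the same template restricted to graph properties: Part 5 uses $k$-$\mathsf{CLIQUE}$, a monotone $\NP$-complete graph property, while Part 6 uses a monotone $\L$-hard (or monotone-formula-hard) graph property such as $st$-connectivity or bipartite perfect matching, with a Karchmer--Wigderson style monotone depth/formula lower bound.

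The main technical obstacle is Part 3. Unlike Part 4, where the GKRS hard function already sits in the hypothesized monotone class, here one must cross the gap between $\AC^0$ and $\NC^1$ using only an $\AC^0$-level simulation hypothesis together with the collapse $\NC^2 \sseq \NC^1$. Any loss in the bootstrap must still leave the GKRS $2^{n^{\Omega(1)}}$ lower bound strictly above the derived upper bound, which is precisely why the conclusion is only the qualitative separation $\NC^2 \not\sseq \NC^1$ rather than a quantitatively stronger statement. A similar calibration issue arises in Parts 5 and 6, where ensuring the hard graph property lands in $\AC^0$ (respectively $\NC^1$) under the stated collapse --- rather than merely in $\NC^1$ (respectively $\NC^2$) --- requires either bespoke padding or a carefully chosen graph property.
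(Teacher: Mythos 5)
Your high-level strategy---argue contrapositively, pad a monotone function with a known monotone lower bound into the hypothesized class, and derive a contradiction---is exactly the paper's approach, and for Parts~3, 4, 5 your outline is essentially correct. For Part~3, the paper's mechanism for the ``bootstrap'' you gesture at is the guess-and-verify simulation $\NL \subseteq \AC^0[2^{n^\eps}]$ (\Cref{lemma:ac0_nc1_sim}) combined with padding: under $\NC^2 \subseteq \NC^1 \subseteq \NL$, the GKRS-hard function sits in $\NL$, so it can be pushed into $\AC^0$ with sub-exponential blow-up while preserving a super-polynomial monotone lower bound (\Cref{thm:nc1-ac0-size}). For Part~5, your choice of $k$-Clique with the Alon--Boppana bound is what the paper uses, and the ``bespoke padding'' you flag is exactly \Cref{lemma:padding_graph}.

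Where the proposal has genuine gaps is in the choice of monotone lower bounds for Parts~1, 2, and~6, and these choices actually matter for the quantitative conclusions. For Parts~1 and~2, Razborov/Alon--Boppana $k$-Clique gives a monotone size lower bound of only $2^{n^{c}}$ for some $c < 1/2$, and GKRS gives $2^{n^{\Omega(1)}}$, neither of which survives the padding argument at the rates $2^{o(n)}$ (Part~1) and $2^{o(\sqrt{n}/\log n)}$ (Part~2): after padding by $m = 2^{n/\alpha}$ the hypothesized monotone simulation yields a monotone formula/circuit for the original function of size $2^{O(n/\alpha)} = 2^{o(n)}$, and only a \emph{linear-exponent} lower bound (monotone depth $\Omega(n)$, i.e., formula size $2^{\Omega(n)}$, due to Pitassi--Robere, and monotone circuit size $2^{\Omega(\sqrt{n}/\log n)}$ due to Cavalar--Kumar--Rossman) actually contradicts this. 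For Part~6, your candidates do not have the two properties needed simultaneously: $st$-connectivity and undirected connectivity lie in $\mSIZE[\poly]$ (the Karchmer--Wigderson bound is a \emph{depth}/formula bound, not a circuit-size bound), and bipartite perfect matching is not known to be in $\L$. The paper resolves this with $\OddFactor$, which is a monotone graph property in $\L$ (via Reingold) yet requires monotone circuits of size $n^{\Omega(\log n)}$ by Babai--Gál--Wigderson; notice also that, unlike Part~5, Part~6 needs no padding at all---the collapse $\L \subseteq \NC^1$ already places $\OddFactor$ in $\semMon{\NC^1} \cap \GraphPpts$, contradicting $\mSIZE[\poly]$ directly.
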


Item (3) of \Cref{thm:simul_consequences_intro}
implies 
in particular that,
if the 
upper bound of \Cref{thm:ac02-mp-intro}
cannot be improved to $\AC^0$ (i.e.,~the question asked by \citep{gs92} has a negative answer),
then $\NC^2 \not\sseq \NC^1$. 
It also
improves a result from \cite{choprs20} showing the weaker conclusion
$\mathsf{NP} \nsubseteq \mathsf{NC}^1$ under the same assumption.

Even if 
it's impossible to efficiently simulate
$\AC^0$ circuits computing monotone functions using unbounded depth monotone circuits,
it could still be the case that a simulation exists for certain  classes of
monotone functions with additional structure. As explained above,
Rossman's result~\cite{rossman_2008, rossman_16} achieves this for graph properties that
are preserved under homomorphisms. 
Items (5) and (6) of \Cref{thm:simul_consequences_intro}
show that a simulation that holds for all monotone 
graph properties is sufficient to get
new separations in computational complexity.

\subsubsection{Monotone complexity of constraint satisfaction problems}
\label{sec:intro_mon_csp}

Recall that 
\citep{gkrs19} 
showed the existence of a monotone function $f^{\mathsf{GKRS}}$ in $\NC^2$ that is not in $\mSIZE[2^{n^{\Omega(1)}}]$. As opposed to classical results \citep{razborov1985lower, andreev1985method, DBLP:journals/combinatorica/AlonB87, tardos_1988} that rely on the approximation method, their monotone circuit lower bound employs a  lifting technique from communication complexity. It is thus natural to consider if their approach can be adapted to provide a monotone function $g$ that is efficiently computable by constant-depth circuits but is not in $\mSIZE[\poly]$. 

As remarked in~\cite{gkrs19,DBLP:journals/sigact/RezendeGR22}, all
monotone lower bounds 
obtained from lifting theorems so far 
also hold for monotone encodings of constraint satisfaction problems (CSPs). 
Next, we introduce a class of monotone Boolean functions $\CSPf{S}$ which
capture the framework and lower bound of \citep{gkrs19}.\\

\noindent \textbf{Encoding CSPs as monotone Boolean functions.} Let $R \sseq \blt^k$ be a relation.
We call $k$ the \emph{arity} of $R$.
Let
$V = (i_1,\dots,i_k) \in [n]^k$,
and 
let
$f_{R,V}: \blt^n \to \blt$
be
the function 
that
accepts a string $x \in \blt^n$
if $(x_{i_1},\dots,x_{i_k}) \in R$.
We call $f_{R,V}$ a \emph{constraint application} of $R$ on $n$ variables.
(A different choice of the sequence $V$
gives a different constraint application of $R$.)
If $S$ is a finite set of Boolean relations,
we call any set of constraint applications of relations from $S$
on a fixed set of variables
an \emph{$S$-formula.}
In particular, we can describe an $S$-formula through a set of pairs $(V,R)$.
We say that an $S$-formula $F$ is \emph{satisfiable}
if there exists an assignment to the variables of $F$
which satisfies all the constraints of $F$.

Let $S = \set{R_1,\dots,R_k}$ be a finite set of Boolean relations.
Let $\ell_i$ be the arity of the relation $R_i$.
Note that there are $n^{\ell_i}$ possible constraint applications
of the relation $R_i$ on $n$ variables.
Let
$N := \sum_{i=1}^k n^{\ell_i}$.
We can identify each 
$S$-formula $F$
on a fixed set of $n$ variables 
with a corresponding string $w^F \in \blt^N$, where $w^F_j = 1$ if and only if the $j$-th possible constraint application (corresponding to one of the $N$ pairs $(V,R))$ appears in $F$.
Let $\CSPfn{S}{n} : \blt^N \to \blt$
be the Boolean function which accepts a given 
$S$-formula $F$ if $F$ is \emph{unsatisfiable}.
Note that this is a monotone function.
When $n$ is clear from the context or we view $\{\CSPfn{S}{n}\}_{n \geq 1}$ as a sequence of functions, we simply write
$\CSPf{S}$.\\

The function $f^{\mathsf{GKRS}}$ from \citep{gkrs19} is simply $\CSPf{S}$ for $S = \{\oplus^0_3, \oplus^1_3\}$, where $\oplus^b_3(x_1,x_2,x_3) = 1$ if and only if $\sum_i x_i = b~(\mathsf{mod}~2)$. More generally, for any finite set $S$ of Boolean relations, their framework shows how to lift a Resolution width (resp.~depth) lower bound for an arbitrary unsatisfiable $S$-formula $F$ over $m$ variables into a corresponding monotone circuit size (resp.~depth) lower bound for $\CSPfn{S}{n}$, where $n = \mathsf{poly}(m)$. 

Despite the generality of the technique from \citep{gkrs19} and the vast number of possibilities for $S$, we prove that a direct application of their approach cannot establish \Cref{thm:ac0-fml-graph-intro} and \Cref{thm:ac02-mp-intro}. This is formalised as follows. (We refer to \Cref{sec:CSPs} for much stronger forms of the result.)

\begin{theorem}[Limits of the direct approach via lifting and CSPs]
    \label{thm:cspsat-intro}
    Let $S$ be a finite set of Boolean relations. The following holds. 
    \begin{enumerate}
        \item    If 
            $\CSPf{S} \notin \mSIZE[\mathsf{poly}]$
            then
            $\CSPf{S}$ is 
             $\pL$-hard under
            $\acmred$ reductions.
        \item  
            If
            $\CSPf{S} \notin \mNC^1$
            then $\CSPf{S}$ is $\L$-hard under
            $\acmred$ reductions.
    \end{enumerate}
\end{theorem}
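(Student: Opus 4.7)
The plan is to establish both items via a Post's-lattice-style classification of finite sets $S$ of Boolean relations according to their polymorphism clones, in the spirit of Schaefer~\cite{DBLP:conf/stoc/Schaefer78} and the Allender--Bauland--Immerman--Schnoor--Vollmer classification~\cite{DBLP:journals/jcss/AllenderBISV09} of the complexity of $\mathsf{CSP}(S)$. The key structural observation is that the monotone classes $\mSIZE[\poly]$ and $\mNC^1$ are both closed under $\acmred$ reductions, so by contraposition it is enough to show that for every finite $S$, the function $\CSPfn{S}{n}$ either (i) admits polynomial-size monotone circuits (resp.\ monotone $\NC^1$ formulas), or (ii) is $\acmred$-hard for $\pL$ (resp.\ $\L$).

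First I would partition the finite sets $S$ according to the clone $\Pol(S)$ of polymorphisms preserving each relation in $S$. Post's lattice reduces this to finitely many cases up to the equivalence that matters for CSP complexity. For the \emph{easy} cases, where $\Pol(S)$ contains rich operations (e.g.\ a constant, a near-unanimity operation, min/max, or affine operations compatible with a bijunctive/Horn/dual-Horn/affine structure), I would produce an explicit monotone circuit for $\CSPfn{S}{n}$: Horn-like relations support a monotone unit-propagation circuit of polynomial size; bijunctive relations admit a 2-SAT-style algorithm expressed via monotone transitive closure, which lands in the appropriate monotone class; trivially satisfied cases ($0$- or $1$-valid $S$) give a constant function. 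For the \emph{hard} cases, one encodes a canonical complete problem, namely solvability of an $\F_2$-linear system (for $\pL$) or $s$--$t$ connectivity in undirected graphs (for $\L$), as an $S$-formula using constant-size gadgets; crucially, the gadget assembly depends on the input only by selecting which constraints are present, so the resulting map is monotone and computable in $\AC^0$.

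The main obstacle is that the classical Schaefer/ABISV classifications are about decision complexity, whereas here the target is monotone circuit complexity of the indicator $\CSPfn{S}{n}$ of \emph{unsatisfiable} $S$-formulas encoded by their full incidence vector in $\{0,1\}^N$. One must therefore recheck, clone by clone, that (a) the algorithmic upper bounds can be implemented by \emph{monotone} circuits acting on the incidence vector, not merely by monotone transformations of the variable assignments, and (b) every gadget reduction from a $\pL$- or $\L$-complete problem respects monotonicity of the incidence vector, which is automatic since turning a $0$ into a $1$ only adds constraints and thus can only make the formula less satisfiable. The $\AC^0$ implementability of these gadget reductions also needs to be verified, but the gadgets are of constant size and their placement is determined locally by the input bits.

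A secondary obstacle is to ensure exhaustiveness of the case analysis: after traversing Post's lattice, I would confirm that every clone either lies in one of the easy polymorphism classes above or contains $S$ capable of expressing enough arithmetic/connectivity to support the $\pL$- or $\L$-hardness reductions. Matching the two dichotomies (size vs.\ depth) then gives items~(1) and~(2) respectively, with the depth-based dichotomy being the more delicate one since the bijunctive/affine upper bounds must be pushed into $\mNC^1$ rather than merely $\mSIZE[\poly]$. The stronger quantitative forms promised in \Cref{sec:CSPs} should follow from the same case analysis by tracking the precise size/depth of the monotone simulations and the hardness of the complete problems being reduced from.
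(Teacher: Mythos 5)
Your high-level strategy---classify $S$ by $\Pol(S)$ along Post's lattice, prove monotone circuit and formula upper bounds for the tractable clones, and invoke the ABISV $\acmred$-hardness classification for the rest---is exactly what the paper does via \Cref{thm:refined-completeness}, \Cref{thm:mon-schaefer}, the monotonised Jeavons reduction lemma (\Cref{prop:poly}), and the auxiliary lemmas culminating in \Cref{thm:cspsat-hard}. Your plan to recheck, clone by clone, that the algorithms and gadget reductions monotonise is the right instinct, and the observation that monotonicity of $\CSPf{S}$ over its incidence encoding is automatic is correct.

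However, there is a concrete misstep in how you set up item~(2). You claim that "the bijunctive/affine upper bounds must be pushed into $\mNC^1$ rather than merely $\mSIZE[\poly]$." This is neither necessary nor, as far as we know, possible. Bijunctive $\CSPf{S}$ (i.e.\ $\clonefont{D_2} \sseq \Pol(S)$) is $\NL$-complete under $\acmred$ reductions (\Cref{thm:refined-completeness}), hence already $\L$-hard, so item~(2) holds there vacuously; the natural monotone upper bound is only $\mNL \sseq \mNC^2$, and no $\mNC^1$ bound is required or known. Likewise Horn/dual-Horn ($\Pol(S) \in \{\clonefont{E_2},\clonefont{V_2}\}$) is $\P$-complete, hence $\L$-hard and again exempt. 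The only non-trivial $\CSPf{S}$ that are \emph{not} $\L$-hard---and hence the only ones that owe you an $\mNC^1$ upper bound---are those with $\clonefont{S_{02}} \sseq \Pol(S) \sseq \clonefont{R_2}$ or $\clonefont{S_{12}} \sseq \Pol(S) \sseq \clonefont{R_2}$ that cannot express equality; these collapse to $\OR^k$/$\NAND^k$-type formulas and lie in $\mAC^0_3 \sseq \mNC^1$ (\Cref{lem:ac0-cspsat}). A related confusion is your listing of "affine" among the easy polymorphism classes for which you would "produce an explicit monotone circuit": in the monotone setting the affine case ($\Pol(S) \sseq \clonefont{L_3}$) is precisely the hard one---exponentially hard for monotone circuits by the \cite{gkrs19} lifting lower bound, and $\pL$- or $\NP$-complete. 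Monotone tractability does not track Schaefer tractability, and your case split must be drawn at the $\L$- and $\pL$-hardness boundaries of Post's lattice, not at the classical Schaefer boundary.
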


In particular, since there are functions (e.g.,~$\mathsf{Majority}$) computable in logarithmic space that are not in $\mathsf{AC}^0[\oplus]$, \Cref{thm:cspsat-intro} (Part 2) implies that any  $\CSPf{S}$ function that is hard for poly-size monotone formulas ($\mNC^1$) must lie outside $\mathsf{AC}^0[\oplus]$. Observe that this can also be interpreted as a \emph{monotone simulation}: for any finite set $S$ of Boolean relations, if $\CSPf{S} \in \AC^0[\oplus]$ then $\CSPf{S} \in \mNC^1$.\footnote{
Jumping ahead, our proof of \Cref{thm:ac02-mp-intro} still relies in a crucial way on the monotone lower bound obtained by \citep{gkrs19}. However, our argument requires an extra ingredient and does not follow from a direct application of their template. We provide more details about it in Section \ref{sec:techniques} below. Interestingly, the proof of \Cref{thm:ac0-fml-graph-intro} was discovered by trying to avoid the ``barrier'' posed by \Cref{thm:cspsat-intro}.}

\Cref{thm:cspsat-intro} is a corollary of a general result that completely
classifies the monotone circuit complexity of Boolean-valued constraint
satisfaction problems based on the set $\Pol(S)$ of \emph{polymorphisms} of
$S$, a standard concept in the investigation of CSPs.\footnote{Roughly
speaking, $\Pol(S)$ captures the amount of symmetry in $S$, and a larger
set $\Pol(S)$  implies that solving $\CSPf{S}$ is computationally easier.
We refer the reader to \Cref{sec:CSPs} for more details and for a
discussion of Post's lattice, which is relevant in the next statement.} We
present next a simplified version of this result, which shows a dichotomy
for the monotone circuit size and depth of Boolean-valued constraint
satisfaction problems. We refer to  \Cref{sec:CSPs} for a more general
formulation and additional consequences.

\begin{theorem}[Dichotomies for the monotone complexity of Boolean-valued CSPs]
    \label{thm:intro-mon-dichotomies}
    Let $S$ be a finite set of Boolean relations. The following holds.
    \begin{enumerate}
        \item \emph{Monotone Size Dichotomy:} If 
    $\Pol(S) \sseq \clonefont{L_3}$
     there is $\eps > 0$
    such that
    $\Cmon{\CSPf{S}} = 2^{\Omega(n^\eps)}$.
    Otherwise, 
    $\Cmon{\CSPf{S}} = n^{O(1)}$.

\item \emph{Monotone Depth Dichotomy:}
    If 
    $\Pol(S) \sseq \clonefont{L_3}$ or
    $\Pol(S) \sseq \clonefont{V_2}$ or
    $\Pol(S) \sseq \clonefont{E_2}$,
     there is $\eps > 0$
    such that
    $\mdepth{\CSPf{S}} = {\Omega(n^\eps)}$.
    Otherwise, 
    $\CSPf{S} \in \mNC^2$.
    \end{enumerate}
    
\end{theorem}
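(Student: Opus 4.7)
The plan is to prove both dichotomies by combining the Galois correspondence between finite sets of Boolean relations and their polymorphism clones with the $\GKRS$-style lifting framework for lower bounds, while adapting the Schaefer/$\mathsf{ABISV}$ tractability analyses into the monotone circuit setting. The core setup I would establish first is a \emph{monotonicity-preserving gadget reduction}: whenever $\Pol(S) \sseq \Pol(S')$, the Galois correspondence yields a primitive positive definition of each relation in $S'$ from the relations of $S$, and the resulting gadget -- which replaces an $S'$-constraint with a constant-size conjunction of $S$-constraints over fresh existentially quantified variables -- induces an $\acmred$-reduction from $\CSPf{S'}$ to $\CSPf{S}$ that preserves monotone size up to polynomial factors and monotone depth up to an additive constant. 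The key monotone observation is that although pp-definitions involve existential quantifiers, the \emph{encoding-level} monotonicity is maintained because the gadget introduces new constraints positively into the encoding, and unsatisfiability only grows with added constraints.

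With this reduction framework in place, I would handle the lower bound sides by pulling the $\GKRS$ lifting template back through the gadget. When $\Pol(S) \sseq \clonefont{L_3}$, the Galois correspondence guarantees that $\{\oplus^0_3, \oplus^1_3\}$ is pp-definable from $S$, so the $2^{n^{\Omega(1)}}$ monotone size lower bound of $\GKRS$ on $\CSPf{\{\oplus^0_3,\oplus^1_3\}}$ transports to $\CSPf{S}$. For the depth dichotomy, the cases $\Pol(S) \sseq \clonefont{V_2}$ (dual-HORN) and $\Pol(S) \sseq \clonefont{E_2}$ (HORN) are handled by invoking the lifting framework to a HORN (resp.\ dual-HORN) unsatisfiable formula of large Resolution depth, e.g.\ one encoding a high-pebbling-number graph, which yields an $n^{\Omega(1)}$ monotone depth lower bound on the base CSP and therefore on $\CSPf{S}$. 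The affine case $\clonefont{L_3}$ is again handled by $\GKRS$, whose lifting produces both the size and depth lower bounds simultaneously.

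For the upper bound sides, I would use Post's lattice to show that when $\Pol(S) \not\sseq \clonefont{L_3}$ it must contain a non-affine operation, so by the structure of covers above $\clonefont{L_3}$ it contains at least one of: a binary AND-like operation (HORN), a binary OR-like operation (dual-HORN), a majority operation (bijunctive), or a constant-preserving idempotent (trivial cases). The matching Schaefer tractable algorithms -- unit propagation for HORN, its dual for dual-HORN, implication-graph saturation for 2-SAT, and trivial checks for constant cases -- can each be implemented by polynomial-size monotone circuits, since they only derive positive consequences (new unit clauses, new implications) from the input constraints and the final test for an explicit contradiction is itself monotone in the clause encoding. For the depth dichotomy, excluding $\clonefont{L_3}, \clonefont{V_2}, \clonefont{E_2}$ leaves precisely the bijunctive-and-richer regime, where the implication-graph saturation admits a parallelised monotone implementation in $\mNC^2$ by iterated squaring of a monotone reachability matrix.

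The main obstacle is the monotone implementation of the Schaefer algorithms: although each such algorithm is classical, the subtle point is to ensure that the intermediate computations used to propagate implications can be expressed using only $\mathsf{AND}$/$\mathsf{OR}$ gates over the input encoding, and in particular that no implicit complementation of literal occurrences sneaks in through the encoding of signs. The secondary difficulty is verifying that the gadget reduction respects the desired lower-bound regime across every branch of Post's lattice below $\clonefont{L_3}$, $\clonefont{V_2}$, $\clonefont{E_2}$, which amounts to a finite case analysis organised by the upper covers of these three clones and requires checking that the pp-definitions involved have bounded arity so that the resulting $\acmred$-reductions are genuinely in $\AC^0$.
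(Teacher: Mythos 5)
Your proposal follows essentially the same high-level architecture as the paper: a monotone version of the Jeavons-style polymorphism-based reduction, lower bounds pulled back from $\txorsat$ (for the $\clonefont{L_3}$ regime) and from a lifted pebbling/HORN formula (for $\clonefont{E_2}, \clonefont{V_2}$), and monotonised Schaefer algorithms for the tractable regime. However, there are two places where your account of the reduction lemma is wrong or incomplete, and one of them is a genuine gap.

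First, the claim that the pp-definability gadget yields an $\acmred$-reduction is incorrect. Primitive positive definitions over $S$ go through $\COQ(S \cup \set{=})$, not $\COQ(S)$, and eliminating the equality constraints requires computing which variables are identified under the transitive closure of the ``$=$'' graph. This is an $s$-$t$ connectivity problem, so the resulting reduction is monotone $\NL$ (as in the paper's Lemma 5.5), not $\AC^0$. This is harmless for the dichotomy statement because $\mNL \sseq \mSIZE[\poly]$ and $\mNL \sseq \mNC^2$, so the conclusions survive, but your rationale (``introduces new constraints positively into the encoding'') only handles the OR-reduction part, not equality elimination, and your claim of additive-constant depth overhead is also wrong.

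Second, and more substantively, your lower-bound argument for the $\clonefont{L_3}$ case has a real gap. You assert that if $\Pol(S) \sseq \clonefont{L_3}$ then $\set{\oplus^0_3, \oplus^1_3}$ is pp-definable from $S$. This is equivalent to $\Pol(S) \sseq \Pol(\set{\oplus^0_3,\oplus^1_3}) = \clonefont{L_2}$, which is false precisely when $\Pol(S) = \clonefont{L_3}$ (note $\clonefont{L_2} \subsetneq \clonefont{L_3}$). The paper handles this via the negation trick of Allender et al.: given $S$ with $\Pol(S) = \clonefont{L_2}$, form $S' = \set{R' : R \in S}$ with $R' = \set{(\neg x_1, \dots, \neg x_k) : x \in R}$; one then checks $\Pol(S') = \clonefont{L_3}$ and that $\CSPf{S} \mprojred \CSPf{S'}$ via a fresh auxiliary variable. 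Without this step the chain $\txorsat \to \CSPf{S}$ does not close at the top of the lattice branch, and the size dichotomy is unproved for $\Pol(S) = \clonefont{L_3}$. You also should be a bit more careful in the upper-bound case analysis: the complement of ``contained in $\clonefont{L_3}$, $\clonefont{V_2}$, or $\clonefont{E_2}$'' is not just ``bijunctive and richer'' but also includes the $\clonefont{S_{00}}$- and $\clonefont{S_{10}}$-regimes, which the paper puts in $\mNL$ via a separate reduction to a monotone reachability check.
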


We note that previous papers of Schaefer \citep{DBLP:conf/stoc/Schaefer78} and 
Allender, Bauland, Immerman, Schnoor and Vollmer~\citep{DBLP:journals/jcss/AllenderBISV09}
provided a \emph{conditional}
classification of the complexity of such CSPs.  
\Cref{thm:intro-mon-dichotomies} and its extensions, which build on their
results and techniques, paint a complete and \emph{unconditional} picture of their
monotone complexity.\footnote{
We 
remark that only recently has Schaefer's classification
been extended to the non-Boolean
case~\cite{DBLP:conf/focs/Zhuk17,DBLP:conf/focs/Bulatov17}.
Though the refined classification of 
\cite{DBLP:journals/jcss/AllenderBISV09} 
is conjectured
to hold analogously in the case of non-Boolean
CSPs~\cite{DBLP:journals/tcs/LaroseT09},
this is still open (see the discussion
in~\cite[Section 7]{DBLP:journals/siglog/Bulatov18}).
}

\subsection{Techniques}
\label{sec:techniques}

Our arguments combine in novel ways several previously unrelated ideas from the literature. The exposition below follows the order in which the results appear above,
except for the overview of the proof of \Cref{thm:ac0-fml-graph-intro},
which appears last. We discuss this result after explaining the proof of
\Cref{thm:ac02-mp-intro} and the classification of the monotone complexity
of CSPs (\Cref{thm:cspsat-intro} and \Cref{thm:intro-mon-dichotomies}), as
this sheds light into how the proof of \Cref{thm:ac0-fml-graph-intro} was
discovered and into the nature of the argument.

\paragraph{A monotone circuit size lower bound for a function in
$\AC^0[\oplus]$.}

We first give an overview of the proof of \Cref{thm:ac02-mp-intro}.

\subparagraph{The lower bound of \emph{\cite{gkrs19}}.}

We begin by providing more details about the aforementioned monotone circuit lower bound of \cite{gkrs19}, since their result is a key ingredient in our separation (see~\cite{DBLP:journals/sigact/RezendeGR22}
for a more detailed overview). Recall that their function $f^{\mathsf{GKRS}}$ corresponds to $\CSPf{S}$ for $S = \{\oplus^0_3, \oplus^1_3\}$. Following their notation, this is simply the  Boolean function $\txorsat_n \colon \blt^{2n^3} \to \blt$ which uses each input bit to indicate the presence of a linear equation with exactly $3$ variables. This (monotone) function accepts a given linear system over $\bbf_2$
if the system is \emph{unsatisfiable}. As one of their main results, \cite{gkrs19} employed a lifting technique from communication complexity to show the existence of a constant $\varepsilon > 0$ such that $\mSIZE(\txorsat_n) = 2^{n^{\varepsilon}}$. (We show in \Cref{sec:xorsat-appr} that a weaker super-polynomial monotone circuit size lower bound for $\txorsat_n$ can also be obtained using the approximation method and a reduction.)\\

\noindent \emph{Sketch of the proof of  \Cref{thm:ac02-mp-intro}.} Since $\txorsat_n \in \NC^2$ (see, e.g,~\citep{gkrs19}), their result implies that 
$\semMon{\NC^2} \nsubseteq \mSIZE[2^{n^{\Omega(1)}}]$. On the
other hand, we are after a separation between \emph{constant-depth}
(non-monotone) circuits and \emph{polynomial-size} (unbounded depth)
monotone circuits. There are two natural ways that one might try to
approach this challenge, as discussed next.

First, the lifting framework explored by \citep{gkrs19} offers in principle the possibility that by carefully picking a different set $S$ of Boolean relations, one might be able to reduce the non-monotone depth complexity of $\CSPf{S}$ while retaining super-polynomial monotone hardness. However, \Cref{thm:cspsat-intro} shows that this is impossible, as explained above. 

A second possibility is to combine the \emph{exponential} $2^{n^\varepsilon}$ monotone circuit size lower bound for  $\txorsat_n$ and a padding argument, since we only need \emph{super-polynomial} hardness. Indeed, this argument can be used to define a monotone function $g \colon \{0,1\}^n \to \{0,1\}$ that is computed by polynomial-size fan-in two circuits of depth $\mathsf{poly}(\log \log n)$ but requires monotone circuit of size $n^{\omega(1)}$. However, it is clear that no padding argument alone can reduce the non-monotone circuit depth bound to $O(1)$ while retaining the desired monotone hardness. 

Given that both the classical widely investigated approximation method for monotone lower bounds and the more recent lifting technique do not appear to work in their current forms, for some time it seemed to us that, if true, a significantly new technique would be needed to establish a separation similar to the one in \Cref{thm:ac02-mp-intro}. 

Perhaps surprisingly, it turns out that a more clever approach that combines padding with a non-trivial circuit upper bound can be used to obtain the result. The first key observation, already present in \citep{gkrs19} and other papers, is that $\txorsat_n$ can be computed not only in $\NC^2$ but actually by polynomial-size span programs over $\bbf_2$. On the other hand, it is known that this model is equivalent in power to parity branching programs \citep{DBLP:conf/coco/KarchmerW93}, which correspond to the non-uniform version of $\pL$, i.e., counting modulo $2$ the number of accepting paths of a nondeterministic Turing machine that uses $O(\log n)$ space. A second key idea is that such a computation can be simulated by $\mathsf{AC}^0[\oplus]$ circuits of sub-exponential size and large depth. More precisely, similarly to an existing simulation of $\mathsf{NL}$ (nondeterministic logspace) by $\mathsf{AC}^0$ circuits of depth $d$ and size $2^{n^{O(1/d)}}$ via a ``guess-and-verify'' approach, it is possible to achieve an analogous simulation of $\pL$ using $\mathsf{AC}^0[\oplus]$ circuits (this folklore result appears implicit in \cite{AKRRV01} and \cite{DBLP:conf/coco/OliveiraS019}). Putting everything together, it follows that for a large enough but constant depth, $\txorsat_n$ can be computed by $\mathsf{AC}^0[\oplus]$ circuits of size $2^{n^{\varepsilon/2}}$. Since this function is hard against monotone circuits of size $2^{n^{\varepsilon}}$, a padding argument can now be used to establish a separation between $\AC^0[\oplus]$ and $\mSIZE[\poly]$. (A careful choice of parameters provides the slightly stronger statement in \Cref{thm:ac02-mp-intro}.)

\paragraph{Non-trivial monotone simulations and their consequences.}

In order to conclude that
significantly stronger monotone simulations imply
new complexity separations (\Cref{thm:simul_consequences_intro}), we argue contrapositively.
By supposing a complexity collapse,
we can
exploit known monotone circuit lower bounds
to conclude that a hard monotone function exists
in a lower complexity class.
For instance, if $\NC^2 \sseq \NC^1$, then
$\txorsat \in \NC^1$,
and we can conclude by standard depth-reduction for $\NC^1$ and padding,
together with the exponential lower bound for $\txorsat$ due
to~\cite{gkrs19},
that there exists a monotone function in $\AC^0$
which is hard for polynomial-size monotone circuits.
The other implications are argued in a similar fashion. In particular, we avoid the more complicated use of hardness magnification from \citep{choprs20} to establish this kind of result, while also getting a stronger consequence.

A little more work is required in the case of graph properties (\Cref{thm:simul_consequences_intro} Items 5 and 6),
as
padding the function computing a graph property does not yield a graph property.
We give a general lemma that allows us to pad 
monotone graph properties
while 
preserving their structure~(\Cref{lemma:padding_graph}).
We then argue as 
in the case for general functions,
using known monotone lower bounds for graph properties. We note that \Cref{lemma:padding_graph} is also important in the proof of \Cref{thm:ac0-fml-graph-intro}, which will be discussed below. We believe that our padding technique for graph properties might find additional applications.%

\paragraph{Monotone complexity of CSPs.} These are the most technical results of the paper. Since explaining the corresponding proofs requires more background and case analysis, here we only briefly describe the main ideas and references behind \Cref{thm:cspsat-intro}, \Cref{thm:intro-mon-dichotomies}, and the  extensions discussed in \Cref{sec:CSPs}.

A seminal work of Schaefer~\cite{DBLP:conf/stoc/Schaefer78}
proved that 
any Boolean CSP 
is either solvable in polynomial-time or it is $\NP$-complete.
Later, 
Jeavons~\cite{DBLP:journals/tcs/Jeavons98}
observed that
the complexity of deciding if a given set of
constraint applications of $S$ is satisfiable
depends exclusively on 
the
set 
$\Pol(S)$
of \emph{polymorphisms} of $S$.
Intuitively, the set of polymorphisms of a set of relations
is a measure of 
 its symmetry.
The more symmetric a set of relations is, 
the lesser is its expressive power.
Jeavons formally proves this intuition by showing that,
if
$\Pol(S) \sseq \Pol(S')$,
then
the problem of deciding the satisfiability of a given $S'$-formula
can be reduced in polynomial-time
to that of deciding the satisfiability of a given $S$-formula.
This allows Jeavons to reprove
Schaefer's result.

Existing proofs and classification results for constraint satisfaction problems
do
not encode the satisfiability problem
as a monotone Boolean function 
$\CSPf{S}$, in the way we described above.
We reexamine Schaefer's and Jeavons's proofs and
establish that 
the reduction from $\CSPf{S'}$ to $\CSPf{S}$
can also be done 
with efficient monotone circuits.
Making use of and adapting parts of the refined results and analysis
of 
\cite{DBLP:journals/jcss/AllenderBISV09},
which builds
on the earlier dichotomy result of \cite{DBLP:conf/stoc/Schaefer78} and
provides a detailed picture of the computational complexity of
Boolean-valued CSPs, we  prove in fact that the underlying reductions can
all be done in monotone nondeterministic logspace.

Finally, using known upper and lower bounds for monotone circuits together
with a direct analysis of some basic cases,
and inspecting Post's lattice~\cite{post41,playing_one,playing_two},
we are able to show that
$\CSPf{S}$ is hard for monotone circuits
only when
$\CSPf{S}$ is $\pL$-complete, as in \Cref{thm:cspsat-intro} Part 1.

\paragraph{A monotone circuit depth lower bound for a function in $\AC^0$.} Next, we combine insights obtained from the monotone lower bound of \cite{gkrs19}, our proof of \Cref{thm:ac02-mp-intro} via a guess-and-verify depth reduction and padding, and the statement of \Cref{thm:cspsat-intro} (limits of the direct approach via CSPs) to get the separation in \Cref{thm:ac0-fml-graph-intro}. As alluded to above, our approach differs from those of \citep{okolnishnikova_1982, ajtai_gurevich_1987, BST13,  cos15} and related results in the context of first-order logic \citep{DBLP:conf/lics/Stolboushkin95, DBLP:conf/lics/Kuperberg21, DBLP:journals/corr/abs-2201-11619}.

Recall that the \citep{gkrs19} framework lifts a Resolution width lower bound for an unsatisfiable $S$-formula $F$ into a corresponding monotone circuit size lower bound for $\CSPf{S}$. On the other hand, \Cref{thm:cspsat-intro} rules out separating constant-depth circuits from monotone circuits of polynomial size via $\CSPf{S}$ functions. In particular, we cannot directly apply the chain of reductions from \citep{gkrs19} to obtain the desired separation result. Instead, we extract from the specific $S$-formula $F$ that they use a \emph{structural property} that will allow us to improve the $\AC^0[\oplus]$ upper from \Cref{thm:ac02-mp-intro} to the desired $\AC^0$ upper bound in \Cref{thm:ac0-fml-graph-intro}.

In \citep{gkrs19} the formula $F$ is a \emph{Tseitin} contradiction, a well-known class of unsatisfiable CNFs with a number of applications in proof complexity. For an undirected graph $G$, the Tseitin formula $T(G)$ encodes a system of linear equations modulo $2$ as follows: each edge $e \in E(G)$ becomes a Boolean variable $x_e$, and each vertex $v \in V(G)$ corresponds to a constraint (linear equation) $C_v$ stating that $\sum_{u \in N_G(v)} x_{\{v,u\}} = 1~(\mathsf{mod}\;2)$, where $N_G(v)$ denotes the set of neighbours of $v$ in $G$. Crucially, $T(G)$ does not encode an arbitrary system of linear equations, i.e., the following key structural property holds: every variable $x_e$ appears in exactly $2$ equations. 

On a technical level, this property is not preserved when obtaining a (total) monotone function $\CSPf{S}$ by the gadget composition employed in the lifting framework and its reductions. However, we can still hope to explore this property in a somewhat different argument with the  goal of obtaining CSP instances that lie in a complexity class weaker than $\pL$, which is the main bottleneck in the proof of \Cref{thm:ac02-mp-intro} yielding $\AC^0[\oplus]$ circuits instead of $\AC^0$. At the same time, considering this structural property immediately takes us outside the domain of \Cref{thm:cspsat-intro}, which does not impose structural conditions over the CSP instances.

We can capture the computational problem corresponding to this type of
system of linear equations using the following Boolean function. Let
$\OddFactor_n : \blt^{\binom{n}{2}} \to \blt$ be the function that accepts
a given graph $G$ if the formula $T(G)$ described above is
\emph{satisfiable}. (Equivalently, if $G$ admits a spanning subgraph in
which the degree of every vertex is odd.) Note that $\OddFactor_n$ is a
monotone Boolean function: adding edges to $G$ cannot make a satisfiable
system unsatisfiable, since we can always set a new edge variable $x_e$ to
$0$.

While $\txorsat$ (the corresponding $\CSPf{S}$ function obtained from an
appropriate Tseitin formula via the framework of \citep{gkrs19}) admits a
$\pL$ upper bound, we observe that $\OddFactor_n$ can be computed in $\L$
thanks to its more structured class of input instances. Indeed, one can
prove that the formula $T(G)$ is satisfiable if and only if every connected
component of G has an even number of vertices.\footnote{A simple parity
argument shows that odd-sized components cannot be satisfied. On the other
hand, we can always satisfy an even-sized component by starting with an
arbitrary assignment, which must satisfy an even number of constraints by a
parity argument, and flipping the values of the edges in a path between
unsatisfied nodes, until all nodes in the connected component are
satisfied.} In turn, the latter condition can be checked in logarithmic
space using Reingold's algorithm for undirected $s$-$t$-connectivity
\cite{DBLP:conf/stoc/Reingold05}. (We note that related ideas appear in an
unpublished note of Johannsen \cite{joh03}.)
This is the first application of Reingold's algorithm
to this kind of separation.

At the same time, $\OddFactor_n$ retains at least part of the monotone hardness of $\txorsat$. Using a different reduction from a communication complexity lower bound, \cite{DBLP:journals/combinatorica/BabaiGW99} proved that the monotone circuit depth of $\OddFactor_n$ is $n^{\Omega(1)}$. Altogether, we obtain a monotone Boolean function (indeed a graph property) that lies in $\L$ but is not in $\mDEPTH[n^{o(1)}]$. Applying a guess-and-verify depth reduction for $\L$ and using (graph) padding (analogously to the proof sketch of \Cref{thm:ac02-mp-intro}), we get a monotone graph property in $\AC^0$ that is not in $\mDEPTH[\log^k n]$. This completes the sketch of the proof of \Cref{thm:ac0-fml-graph-intro}.

\subsection{Directions and open problems}\label{s:questions}

Constant-depth circuits and monotone circuits are possibly the two most widely investigated models in circuit complexity theory. Although our results provide new insights about the relation between them, there are exceptionally basic questions that remain open. 

While \cite{quine_1953} showed that negations can be efficiently eliminated from circuits of depth $d \leq 2$ that compute monotone functions, already at depth $d= 3$ the situation is much less clear. \Cref{thm:ros-dt-mdnf-intro} (see \Cref{s:simulations}) implies that every monotone function in depth-$3$ $\AC^0$ admits a monotone circuit of size $2^{n - \Omega(n/\log^2 n)}$. It is unclear to us if this is optimal. While \citep{cos15} rules out an efficient \emph{constant-depth} monotone simulation, it is still possible (and consistent with \Cref{thm:ac0-fml-graph-intro}) that 
$\semMon{\AC^0_3} \subseteq \mathsf{mNC}^1$. Is there a
significantly better monotone circuit size upper bound for monotone
functions computed by polynomial-size depth-$3$ circuits?

Our results come close to solving the question posed by Grigni and Sipser
\citep{gs92}. Using our approach, it would be sufficient to show that
$\OddFactor_n$ requires monotone circuits of size $\exp(n^{\Omega(1)})$.
This is closely related to the challenge of obtaining an exponential
monotone circuit size lower bound for $\mathsf{Matching}_n$, a longstanding
open problem in monotone complexity (see \citep[Section
9.11]{jukna_2012}).\footnote{Note that in $\OddFactor$ we are concerned
with the existence of a spanning subgraph where the degree of every vertex
is odd, while in $\mathsf{Matching}$ the degree should be exactly $1$.}
Indeed, it's possible to reduce $\OddFactor$ to $\Matching$
using monotone $\AC^0$ circuits (see~\cite[Lemma 6.18]{akiyama_factors_2011}).

Incidentally, the algebraic complexity variant of the $\mathsf{AC}^0$
vs.~$\mathsf{mSIZE}[\mathsf{poly}]$ problem has been recently settled in a
strong way through a new separation result obtained by Chattopadhyay,
Datta, and  Mukhopadhyay \cite{cdm21}. Could some of  their techniques be
useful to attack the more elusive Boolean case?

Finally, it would be interesting to develop a more general theory able to
explain when cancellations can speedup the computation of monotone Boolean
functions. Our investigation of monotone simulations and separations for
different classes of monotone functions (graph properties and constraint
satisfaction problems) can be seen as a further step in this direction.
\\

\noindent \textbf{Acknowledgements.} We thank Arkadev Chattopadhyay for
several conversations about the $\mathsf{AC}^0$ versus
$\mathsf{mSIZE}[\mathsf{poly}]$ problem and related questions. We are also
grateful to Denis Kuperberg for explaining to us the results from
\cite{DBLP:conf/lics/Kuperberg21, DBLP:journals/corr/abs-2201-11619}. The
first author thanks Ninad Rajgopal for helpful discussions about
depth reduction. 
Finally, we thank Gernot Salzer for the code used to generate Figures
\ref{fig:post_lattice}, \ref{fig:monckt-dichotomy}, and
\ref{fig:monfml-dichotomy}.
This
work received support from the Royal Society University Research Fellowship
URF$\setminus$R1$\setminus$191059, the EPSRC New Horizons Grant
EP/V048201/1, and the Centre for Discrete
Mathematics and its Applications (DIMAP) at the University of Warwick.

\section{Preliminaries}

\subsection{Notation}

\subparagraph{Boolean functions.}

We denote by $\monfns$ the set of all monotone Boolean functions.
We define $\poly = \set{n \mapsto n^C : C \in \bbn}$.
A Boolean function
    $f : \blt^{\binom{n}{2}} \to \blt$ is said to be a graph property
    if $f(G) = f(H)$ for any two isomorphic graphs $G$ and $H$.
Let $\calf = \set{f_n}_{n \in \bbn}$
    be a sequence of graph properties, where $f_n$ is defined over undirected graphs on
    $n$ vertices.
    We say that $\calf$ is
    \emph{preserved under homomorphisms} if,
    whenever there is a homomorphism from a graph $G$ to a graph $H$,
    we have $\calf(G) \leq \calf(H)$.
    We denote by $\HomPres$ 
    the set of all 
    graph properties which are preserved under homomorphisms.
    Note that $\HomPres \sseq \monfns$.

    \subparagraph{Boolean circuits.}

    We denote by $\AC^0_d[s]$ 
    the family of Boolean functions computed by
    size-$s$,
    depth-$d$
    Boolean circuits with unbounded fan-in $\set{\land,\lor}$-gates and input literals from $\{x_1, \overline{x_1}, \ldots, x_n, \overline{x_n}\}$.
    We write $\AC^0[s]$ as a shorthand for
    $\bigcup_{d = 1}^\infty \AC^0_d[s]$,
    and $\AC^0$ as a shorthand of $\AC^0[n^{O(1)}] = \AC^0[\poly]$.
    We will also refer to $\AC^0_d[\poly]$
    by $\AC^0_d$.
We write $\DNF[s]$ to denote the family of Boolean functions
    computed by size-$s$ DNFs, where size is measured by  number of terms.
    We write $\CNF[s]$ analogously.
    We write $\SIZE[s]$ to denote
    the family of Boolean functions
    computed by size-$s$ circuits.
    We write $\DEPTH[d]$ to denote
    the family of Boolean functions
    computed by fan-in 2 circuits of depth $d$.
We denote by $\AC^0[\xor]$ the family of Boolean functions computed by
polynomial-size $\AC^0$ circuits with unbounded fan-in $\xor$-gates.

We denote by $\L$ the 
family of
Boolean functions
computed by  logspace machines,
and by $\NL$ the family of Boolean functions
computed by polynomial-time nondeterministic logspace machines.
Moreover, we denote by $\pL$
the family of Boolean functions computed by
polynomial-time nondeterministic logspace machines with a \emph{parity}
acceptance condition (i.e.,
an input is accepted if the number of accepting paths is odd).

\subparagraph{Circuit complexity.}

Given a circuit class $\calc$, we write
    $\monclass{C}$ to denote the
    monotone version of $\calc$.
Given a function $f$, we write $\Cmon{f}$
    to denote the size of the smallest monotone circuit computing
    $f$ and $\mdepth{f}$ to denote the smallest depth of a fan-in 2
    monotone circuit computing $f$.
Given two Boolean functions $f,g$, we write
$f \mprojred g$ if
there exists a many-one reduction from $f$ to $g$
in which each bit of the reduction is a monotone
projection\footnote{A monotone projection is a projection without
negations.}
of the input.

\subparagraph{Miscellanea.}

Let $\alpha \in \blt^n$ .
    We 
    define
    $\hwt{\alpha} := \sum_{i=1}^n \alpha_i$.
    We call $\card{\alpha}_1$ the \defx{Hamming weight} of $\alpha$.
    We let
    $\supp(\alpha) = \set{i \in [n] : \alpha_i = 1}$.
We let $\thr_{k,n} : \blt^n \to \blt$ be the Boolean function
    such that
    $\thr_{k,n}(x) = 1 \iff \hwt{x} \geq k$.

\subsection{Background results}

The next lemma, which is proved via a standard ``guess-and-verify''
approach, shows that nondeterministic logspace computations can be
simulated by circuits of size $2^{n^{\varepsilon}}$ and of depth $d =
O_\varepsilon(1)$.

\begin{lemma}[Folklore; see, e.g.,~{\cite[Lemma 8.1]{AHMPS08}}]
    \label{lemma:ac0_nc1_sim}
    For all $\eps > 0$, we have
    $\NL \sseq \AC^0[2^{n^\eps}]$.
\end{lemma}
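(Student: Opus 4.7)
The plan is to encode acceptance of an $\NL$ machine as reachability in its configuration graph and to verify reachability by $k$-way divide-and-conquer, choosing the branching factor $k$ and the number of recursion levels so that both depth and size are kept under control.

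Let $M$ be a nondeterministic logspace machine deciding a language $L \in \NL$ using $c \log n$ space, and fix $\eps > 0$. On an input $x \in \blt^n$, the configuration graph $G_{M,x}$ has $N = n^{O(1)}$ vertices, and whether there is a transition $u \to v$ in $G_{M,x}$ can be computed by a constant-depth $n^{O(1)}$-size subcircuit that reads $O(\log n)$ bits of $x$ (via a standard multiplexer for the current tape cell, which is the only input-dependent part of a transition). Acceptance of $M$ on $x$ is equivalent to reachability of the unique accepting configuration from the starting configuration in at most $N$ steps of $G_{M,x}$.

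For configurations $u, v$ and $t \in \bbn$, let $R_t(u,v) = 1$ iff $v$ is reachable from $u$ in at most $t$ steps in $G_{M,x}$. The basic identity driving the construction is
\[
R_{kt}(u,v) \;=\; \bigvee_{w_1, \ldots, w_{k-1}} \; \bigwedge_{i=0}^{k-1} R_t(w_i, w_{i+1}), \qquad \text{with } w_0 := u,\; w_k := v,
\]
where each $w_j$ ranges over all $N$ configurations of $M$. I would fix the constant $d := \lceil c/\eps \rceil + 2$, set $k := \lceil N^{1/d} \rceil$ so that $k^d \geq N$, and unroll this identity $d$ times, reducing $R_N$ down to the single-step relation $R_1$ at the leaves. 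The resulting circuit alternates $\bigvee/\bigwedge$ layers and has depth $2d + O(1)$; its total size is bounded by $O(k^d \cdot N^{d(k-1)}) \cdot n^{O(1)} = 2^{O(d \cdot N^{1/d} \log N)} = 2^{O(n^{c/d} \log n)}$, which is at most $2^{n^{\eps}}$ for $n$ large enough by our choice of $d$.

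This is a folklore construction, so there is no real conceptual obstacle; the only care required is to verify the size estimate and to confirm that the small subcircuits checking single-step transitions of $M$ can be inlined beneath the $\bigvee/\bigwedge$ structure without increasing the overall depth beyond a constant depending only on $\eps$ and the space constant $c$.
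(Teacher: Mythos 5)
Your proof is correct and follows essentially the same route the paper intends, which is to defer this lemma to its $\pL$ analogue (\Cref{lemma:pl-to-ac02}): the same $d$-level divide-and-conquer on the configuration graph, with $\bigvee$ in place of $\bigoplus$, and your version is in fact slightly cleaner since plain reachability does not need the time-stamping/layering that the parity case uses to avoid double-counting paths mod $2$. One small bookkeeping fix: the number $N$ of configurations is $n^{c+O(1)}$ rather than $n^{c}$ (you must also count head position and state), so the size is $2^{O(n^{C/d}\log n)}$ with $N = n^{C}$ and $C > c$, and you should therefore take $d$ slightly larger than $\lceil c/\eps\rceil + 2$, e.g.\ $d = \lceil C/\eps\rceil + 1$.
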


\section{Constant-Depth Circuits vs.~Monotone Circuits}
\label{s:constant-depth}

In this section, we prove \Cref{thm:ac02-mp-intro,thm:ac0-fml-graph-intro}.
For the upper bounds, we require
the logspace graph connectivity algorithm due
to~\cite{DBLP:conf/stoc/Reingold05}
and the $\pL$ algorithm for solving linear systems over $\mathbb{F}_2$ due
to~\cite{DBLP:journals/mst/BuntrockDHM92},
as well as the
depth-reduction techniques of~\cite{AKRRV01,AHMPS08}.
On the lower bounds side,
our proofs rely on previous monotone circuit and depth lower bounds
from~\cite{DBLP:journals/combinatorica/BabaiGW99,gkrs19}.
In order to obtain a monotone formula lower bound for a graph property,
we prove a graph padding lemma in \Cref{s:padding_graph}.

\subsection{A monotone size  lower bound for a function in $\AC^0[\oplus]$}
\label{s:ac0-mp}

In this section, we prove \Cref{thm:ac02-mp-intro}.
We first recall the monotone circuit lower bound of~\cite{gkrs19}
and a depth-reduction lemma implicit in~\cite{AKRRV01} and \citep{DBLP:conf/coco/OliveiraS019},
whose full proof we give below for completeness.
We remark that similar arguments can be employed
to prove \Cref{lemma:ac0_nc1_sim},
essentially by replacing the $\xor$ gates by $\lor$ gates.

As explained in \cref{sec:techniques}, in its strongest form the separation result from \citep{gkrs19} can be stated as follows.

\begin{theorem}[\cite{gkrs19}]
    \label{thm:xor-sat}
    There exists $\eps > 0$ 
    such that 
    $\semMon{\pL} \not\sseq \mSIZE[2^{o(n^{\eps})}]$.
    Moreover, 
    this separation is witnessed by $\txorsat$.
\end{theorem}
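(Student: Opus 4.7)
The plan is to combine a classical Resolution-width lower bound for Tseitin contradictions on expanders with a query-to-dag-communication lifting theorem, and to verify that the resulting hard instances are naturally instances of $\txorsat$. The $\pL$ upper bound will be an easy consequence of Gaussian elimination over $\bbf_2$.

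\textbf{Upper bound: $\txorsat \in \pL$.} Given an input $x \in \blt^{2n^3}$ to $\txorsat_n$, view the encoded formula as a linear system $A x' = b$ with $A \in \bbf_2^{m \times n}$ and $b \in \bbf_2^m$, where $m \leq 2 n^3$. The formula is unsatisfiable iff $\mathrm{rank}(A) < \mathrm{rank}([A \mid b])$ over $\bbf_2$. Since rank of a matrix over $\bbf_2$ is computable in $\pL$ by~\cite{DBLP:journals/mst/BuntrockDHM92}, and the encoding from $x$ to $(A,b)$ is logspace-computable, we conclude $\txorsat \in \pL$.

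\textbf{Lower bound: $\Cmon{\txorsat_n} = 2^{\Omega(n^\eps)}$.} Let $G$ be a fixed $3$-regular spectral expander on $m$ vertices equipped with an odd charging $\chi : V(G) \to \blt$ (i.e., $\sum_v \chi(v) \equiv 1 \pmod 2$), and let $T(G,\chi)$ be the associated Tseitin $3$-XOR contradiction. By edge expansion and the classical Ben-Sasson--Wigderson argument, every Resolution refutation of $T(G,\chi)$ requires width $w = \Omega(m)$. Composing each variable of $T(G,\chi)$ with a constant-arity ``stifling'' gadget $g : \blt^b \times \blt^b \to \blt$ of size $b = O(\log m)$ (for instance an indexing or inner-product gadget) and applying the dag-like lifting theorem of~\cite{gkrs19} produces a monotone dag-like communication lower bound of $2^{\Omega(w)}$ for the search problem associated with $T(G,\chi) \circ g$. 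Via the Karchmer--Wigderson correspondence between monotone circuit size and dag-like communication, this yields a $2^{\Omega(m)}$ monotone size lower bound for the canonical monotone function $f_{T(G,\chi) \circ g}$. Finally, the lifted constraints can be re-expressed as $O(1)$-ary parity constraints over a polynomially enlarged variable set, letting us embed $f_{T(G,\chi) \circ g}$ into $\txorsat_n$ with $n = m^{O(1)}$ via a monotone projection. This yields $\Cmon{\txorsat_n} \geq 2^{\Omega(n^\eps)}$ for some $\eps > 0$ determined by the gadget size.

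\textbf{Main obstacle.} The delicate part is arranging the composition so that the gadget $g$ simultaneously (i) has the stifling/min-entropy property needed for the dag-like lifting theorem to apply without losing more than a constant factor in the exponent, (ii) is encodable by width-$3$ parity constraints (after introducing auxiliary variables) so the lifted contradiction lives inside $\CSPf{\{\oplus^0_3, \oplus^1_3\}}$, and (iii) admits a monotone-projection reduction into $\txorsat_n$. Meeting all three at once---most naturally by phrasing the lifting theorem directly at the level of $\CSPf{\{\oplus^0_3,\oplus^1_3\}}$-formulas rather than by post-composing an external gadget, as is done in~\cite{gkrs19}---is the main technical engine and the step where most of the effort goes.
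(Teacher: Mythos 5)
The paper does not prove \Cref{thm:xor-sat}: it is stated as a black-box import from \cite{gkrs19}, so there is no in-paper proof to compare your proposal against. Your sketch is a plausible high-level reconstruction of the \cite{gkrs19} argument (Tseitin contradiction on an expander, Ben-Sasson--Wigderson width bound, dag-like lifting, monotone Karchmer--Wigderson), and your upper bound --- unsatisfiability reduces to a rank comparison over $\bbf_2$, which is in $\pL$ by \cite{DBLP:journals/mst/BuntrockDHM92} --- is correct and matches the reasoning the paper gestures at via span programs over $\bbf_2$ and parity branching programs \cite{DBLP:conf/coco/KarchmerW93}. However, several details in the lower-bound sketch are off relative to the actual \cite{gkrs19} proof: ``stifling'' gadgets are anachronistic (that notion postdates \cite{gkrs19}); the gadget size and the dag-like simulation theorem used there are not a generic ``deficiency'' argument; and, most importantly, in the \cite{gkrs19} framework the lifted total search problem is \emph{already}, by construction, the canonical unsatisfiability search problem for a $\{\oplus^0_3,\oplus^1_3\}$-formula, so $\txorsat$ emerges as the natural monotone extension of the lifted instance rather than via a post-hoc re-encoding into width-$3$ parity constraints. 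Your ``main obstacle'' paragraph flags this correctly, but the proposal leaves it unresolved, so what you have written would not constitute a proof even if the individual ingredients were fully worked out.

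It is also worth pointing out that the paper contains a self-contained alternative proof of a \emph{weaker} version of this statement which your proposal overlooks and which avoids lifting entirely. In \Cref{sec:xorsat-appr} one shows $\Cmon{\txorsat} = n^{\Omega(\log n)}$ and $\mdepth{\txorsat} = \Omega(n^\eps)$ by reducing $\BipOddFactor$ to $\txorsat$ (introducing auxiliary variables to cut row/column parities into $3$-ary XOR constraints, then taking an anti-monotone projection and dualising), and then invoking the lower bounds of \cite{DBLP:journals/combinatorica/BabaiGW99}, which rest on the approximation method. As noted in \Cref{rk:lifting-approximation}, this weaker quasipolynomial bound already suffices for several of the downstream superpolynomial separations in the paper, although it does not recover the exponential bound of \Cref{thm:xor-sat}.
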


\begin{lemma}[Folklore; see, e.g.,~\cite{AKRRV01, DBLP:conf/coco/OliveiraS019}]
    \label{lemma:pl-to-ac02}
    Let $f : \blt^n \to \blt$
    be a Boolean function computed by
    a $\pL$ machine.
    For every $\delta > 0$,
    there exists an $\AC^0[\oplus]$ circuit
    of size $2^{n^\delta}$ that computes $f$.
\end{lemma}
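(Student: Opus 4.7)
The plan is to adapt the standard guess-and-verify simulation of $\NL$ by $\AC^0$ (cf.~\Cref{lemma:ac0_nc1_sim}) to the parity setting, replacing the top-level disjunction over guessed intermediate configurations by an exclusive-or. Let $M$ be a $\pL$ machine computing $f$ in space $O(\log n)$; by standard padding I may assume $M$ has running time $T = n^{O(1)}$, a single initial configuration $s$, and a single accepting configuration $t$. For $x \in \blt^n$, the configuration graph $G_x$ has $N = \poly(n)$ vertices, and $f(x) = 1$ if and only if the number of length-$T$ paths from $s$ to $t$ in $G_x$ is odd. Writing $c(u,v,\ell)$ for the parity of the number of length-$\ell$ paths from $u$ to $v$ in $G_x$, the key identity, valid for any $k \geq 2$ dividing $\ell$, is
\[
c(u,v,\ell) \;=\; \bigoplus_{v_1,\ldots,v_{k-1} \in V(G_x)}\; \bigwedge_{i=0}^{k-1} c(v_i, v_{i+1}, \ell/k),
\]
where $v_0 = u$ and $v_k = v$. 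It follows from two observations: the number of length-$\ell$ $u \to v$ paths is the sum over choices of intermediates of the product of the numbers of length-$(\ell/k)$ paths in each segment; and reduced modulo $2$, the product of $\{0,1\}$-valued counts coincides with their conjunction.

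To implement this as an $\AC^0[\oplus]$ circuit, fix $c_0$ with $T \leq n^{c_0}$, set $d = \lceil c_0/\delta \rceil + 1$, and let $k$ be the smallest integer with $k^d \geq T$, so $k = O(n^{c_0/d})$. Unrolling the identity $d$ times gives a circuit alternating $\oplus$- and $\wedge$-layers of depth $2d = O_\delta(1)$. At the leaves one must compute $c(u,v,1)$, i.e.\ the parity of the number of nondeterministic branches by which $M$ moves from $u$ to $v$ in one step, which depends on $O(1)$ bits of $x$ and is realised by a constant-size sub-circuit. Counting gates, there are at most $N^2$ distinct subproblems per level (one per pair $(u,v)$, as $\ell$ is determined by the level) and each corresponding $\oplus$-gate has fan-in $N^{k-1}$; summing over the $d$ levels gives total size $N^{O(k)} = 2^{O(n^{c_0/d}\log n)} \leq 2^{n^\delta}$ for all sufficiently large $n$, as required.

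The most delicate point is the algebraic identity modulo $2$ that converts a product of path counts into a conjunction, together with the parameter balancing that trades off the $\oplus$-fan-in $N^{k-1}$ against the depth $d$. Neither is a genuine obstacle: the identity is immediate from distributivity in $\bbf_2$, and the parameter choice is dictated by requiring $n^{c_0/d}\log n \leq n^\delta$. The remainder of the argument, namely converting the recurrence into a formally defined $\AC^0[\oplus]$ circuit, handling the base case cleanly, and sharing sub-circuits across the $N^2$ distinct $(u,v)$-pairs at each level, should be routine bookkeeping.
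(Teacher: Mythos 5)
Your proof is correct and follows essentially the same route as the paper's: a recursive guess-and-verify decomposition with $\oplus$-gates replacing the sum over intermediate checkpoints and $\wedge$-gates replacing the product of segment counts, iterated to constant depth, with the parity of one-step transition counts at the leaves computable from $O(1)$ input bits. The only cosmetic difference is that you formulate the recurrence directly in terms of the parities $c(u,v,\ell)$ while the paper writes the recurrence over integer path counts and then reduces modulo~$2$, and you make explicit the algebraic fact that a product of residues equals their conjunction, which the paper leaves implicit.
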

\begin{proof}
    Let $M$ be a $\pL$-machine computing $f$.
    Without loss of generality, we may assume that
    each configuration in the 
    configuration graph $G$ of $M$
    is \emph{time-stamped} -- in other words,
    each configuration 
    carries the information of the number of
    computational steps it takes to arrive at it.\footnote{Formally,
        we can define a $\pL$-machine $M'$ such that
        the configurations of $M'$
        are
        $(C,t)$, where
        $C$ is a configuration of $M$, and $t = 0,1,\dots,m = n^{O(1)}$
        is a number denoting the time in which the configuration was
        achieved.
        A configuration $(C,t)$ can only reach a configuration
        $(C',t+1)$ in the configuration graph of $M'$.
    }
    We may also assume that
    every accepting computation takes exactly the same amount of time,
    which means that every path from the starting configuration
    $v_{\mathsf{start}}$
    to the accepting configuration
    $v_{\mathsf{accept}}$
    has the same length in the configuration graph.
    These assumptions imply that the configuration graph is \emph{layered} 
    (because a configuration with time-stamp $t$
        can only 
        point to
    configurations with time-stamp $t+1$)
    and acyclic.
    Note that, for a fixed machine,
    the configuration graph can be computed from the input string using a projection.  

    Let $m = n^{O(1)}$ be the time that an accepting computation takes.
    We now show
    how to count (modulo 2)
    the number of accepting paths from
    $v_{\mathsf{start}}$
    to
    $v_{\mathsf{accept}}$ with a depth-$d$ $\AC^0[\oplus]$ circuit.
    First, choose
    $m^{1/d}-1$ configurations $v_1,\dots,v_{m^{1/d}-1}$
    (henceforth called ``checkpoints'')
    from $V(G)$, such that the configuration
    $v_i$ is at the level
    $i \cdot m^{1-1/d}$ in the configuration graph
    (i.e., it takes $i \cdot m^{1-1/d}$ time steps to arrive at $v_i$).
    For convenience, we let $v_0 = v_{\mathsf{start}}$
    and $v_{m^{1/d}} = v_{\mathsf{accept}}$.
    We then count the number of paths from 
    from $v_{\mathsf{start}}$
    to $v_{\mathsf{accept}}$
    that go through $v_1,\dots,v_{m^{1/d}-1}$,
    and sum over all possible choices of the checkpoints.
    Since the graph is layered and
    each path from $v_0$ to $v_{m^{1/d}}$ has length exactly $m$,
    there is only one choice of checkpoints that witnesses
    a given path from $v_0$ to $v_{m^{1/d}}$,
    so no path is counted twice in this summation.
    Letting
    $\#\mathsf{paths}(s,t,\ell)$ denote the number of paths between
    configurations $s$ and $t$ with distance exactly $\ell$,
    we obtain
    \begin{equation*}
        \#\mathsf{paths}(v_{0},v_{m^{1/d}}, m)
        =
        \sum_{v_1,\dots,v_{m^{1/d}-1}}
        \prod_{i=0}^{m^{1/d}-1}
        \#\mathsf{paths}(v_i,v_{i+1}, m^{1-1/d}).
    \end{equation*}
    The above calculation can be done in modulo 2 with an unbounded fan-in XOR gate
    (replacing the summation)
    and an unbounded fan-in AND gate
    (replacing the product).
    Note that the formula above is recursive.
    Repeating the same computation for calculating (modulo 2)
    the expression
    $\#\mathsf{paths}(v_i,v_{i+1}, m^{1-1/d})$ for each $i$,
    we obtain a depth-$2d$ $\AC^0[\oplus]$ circuit
    for calculating the number of paths from $v_{\mathsf{start}}$ to
    $v_{\mathsf{accept}}$ (modulo 2).
    Clearly, the total size of the circuit is
    $2^{O(m^{1/d} \cdot \log m)}$,
    which is smaller than $2^{n^\delta}$ for a large enough constant $d$.
\end{proof}

We now restate \Cref{thm:ac02-mp-intro} and prove it
by combining
\Cref{thm:xor-sat,lemma:pl-to-ac02}
with a padding trick.

\acxormp*
\begin{proof}
    By \Cref{thm:xor-sat}, there exists $\eps > 0$ and 
    a monotone function $f \in \pL$
    such that any monotone circuit computing $f$ has size
    $2^{\Omega(n^\eps)}$.

    Let $\delta = \eps/k$ and let $m = 2^{n^{\delta}}$.
    Let $g : \blt^{n} \times \blt^{m} \to \blt$
    be the Boolean function defined as
    $g(x,y) = f(x)$.
    Note that $g$ is a function on $N := m + n = 2^{\Theta(n^{\delta})}$
    bits.
    By \Cref{lemma:pl-to-ac02},
    there exists an $\AC^0[\oplus]$ circuit computing
    $f$ of size $2^{n^{\delta}} = N^{O(1)}$. The same circuit computes $g$.
    On the other hand, any monotone circuit computing $g$
    has size
    $2^{\Omega(n^\eps)} = 2^{\Omega((\log N)^{\eps / \delta})} =
    2^{\Omega((\log N)^k)}$.
\end{proof}

\subsection{A monotone depth lower bound for a graph property in $\AC^0$}
\label{s:padding_graph}

In this section,
we prove 
\Cref{thm:ac0-fml-graph-intro}.
We prove moreover that the function that separates $\semMon{\AC^0}$ and
$\mNC^i$ can be taken to be a graph property.
We state our result in its full generality below.

\begin{theorem}
    \label{thm:ac0-fml-graph}
    For every $i \geq 1$,
    we have
    $\semMon{\AC^0} \cap \GraphPpts \not\sseq \mDEPTH[(\log n)^i]$.
    In particular, we have
    $\semMon{\AC^0} \cap \GraphPpts \not\sseq \mNC^i$.
\end{theorem}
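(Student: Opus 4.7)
The plan is to follow the outline sketched in \Cref{sec:techniques}: exhibit the monotone graph property $\OddFactor$, establish a matching upper/lower bound pair for it at a ``medium'' complexity level, and then scale parameters down via a graph-preserving padding to push the upper bound into $\AC^0$ while keeping the monotone depth lower bound super-polylogarithmic.

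\textbf{Step 1: Locate $\OddFactor$ in $\L$.} Define $\OddFactor_N : \blt^{\binom{N}{2}} \to \blt$ as the graph property that accepts $G$ iff the Tseitin formula $T(G)$ is satisfiable. By the parity argument sketched in \Cref{sec:techniques}, this is equivalent to the statement that every connected component of $G$ has an even number of vertices. The latter can be checked in $\L$: enumerate pairs $(u,v)$ and use Reingold's undirected $s$-$t$-connectivity algorithm \cite{DBLP:conf/stoc/Reingold05} to determine the connected component of each vertex, then verify that each component has even size (which is a counting modulo $2$ of vertices with a given component representative, doable in $\L$). Monotonicity is immediate since one can always set edge variables of $T(G)$ indexed by newly added edges to $0$.

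\textbf{Step 2: Monotone depth lower bound.} By the result of Babai, G\'{a}l and Wigderson \cite{DBLP:journals/combinatorica/BabaiGW99}, there is a constant $\delta > 0$ such that $\mdepth{\OddFactor_N} = \Omega(N^{\delta})$. Combined with Step 1, this gives a monotone graph property in $\L$ requiring monotone fan-in-two depth $N^{\Omega(1)}$.

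\textbf{Step 3: Depth reduction into $\AC^0$.} Apply (the logspace analogue of) the guess-and-verify depth reduction, i.e., $\L \sseq \NL \sseq \AC^0[2^{N^{\eps}}]$ for every $\eps > 0$ (\Cref{lemma:ac0_nc1_sim}). Thus for any chosen $\eps$, $\OddFactor_N$ is computed by an $\AC^0$ circuit of size $2^{N^{\eps}}$ of some constant depth $d(\eps)$.

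\textbf{Step 4: Graph padding and parameter choice.} Here I would invoke the graph padding lemma (\Cref{lemma:padding_graph}, referenced in \Cref{sec:techniques}) that converts a monotone graph property $f$ on $N$ vertices into a monotone graph property $\tilde f$ on $n$ vertices, where $n$ is polynomially related to a prescribed blow-up parameter, such that (a)~$\tilde f$ has $\AC^0$-circuits of size polynomial in the $\AC^0$-size of $f$ plus $\poly(n)$, and (b)~$\mdepth{\tilde f} \geq \mdepth{f}$. Given target $i \geq 1$, set $\eps = \delta/(2i)$ and choose $N$ so that $2^{N^{\eps}} = \poly(n)$, i.e., $N = (\log n)^{1/\eps} = (\log n)^{2i/\delta}$. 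Then $\tilde f$ is a monotone graph property in $\AC^0$ (polynomial size, constant depth $d(\eps)$), while
\[
\mdepth{\tilde f} \;\geq\; \Omega(N^{\delta}) \;=\; \Omega\!\bigl((\log n)^{2i}\bigr) \;=\; \omega\!\bigl((\log n)^{i}\bigr),
\]
giving the desired separation $\semMon{\AC^0} \cap \GraphPpts \not\sseq \mDEPTH[(\log n)^i]$, which implies the $\mNC^i$ separation as well.

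\textbf{Main obstacle.} The technical heart is the graph padding lemma of Step~4: one cannot simply append dummy coordinates to $\OddFactor_N$, since that destroys the graph-property structure (isomorphism invariance over an $n$-vertex graph). A natural approach is to embed the $N$-vertex instance inside a canonical $n$-vertex ``scaffold'' in such a way that (i)~symmetrising over all $n!$ relabellings of the larger vertex set produces the same Boolean value (so that the padded function is a graph property), (ii)~recovering the embedded $N$-vertex subgraph from an arbitrary isomorphic copy is itself doable in $\AC^0$ (so the upper bound propagates), and (iii)~a monotone reduction in the reverse direction preserves the $N^{\delta}$ depth lower bound. The Steps 1--3 are essentially assembly of known ingredients; the payload of the argument is this padding mechanism, which must simultaneously carry an $\AC^0$ upper bound and a monotone depth lower bound through the isomorphism symmetrisation.
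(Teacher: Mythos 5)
Your high-level scaffold matches the paper's proof exactly: $\OddFactor$ is placed in $\L$ via Reingold, the Babai--G\'al--Wigderson bound supplies the $N^{\Omega(1)}$ monotone-depth lower bound, a guess-and-verify depth reduction gives the subexponential $\AC^0$ upper bound, and graph padding is used to trade parameters. However, you explicitly defer the entire technical weight to the graph padding lemma without proving it, and the sketch you give of how it might go is off the mark. You propose ``symmetrising over all $n!$ relabellings'' of an embedded instance and ``recovering the embedded $N$-vertex subgraph from an arbitrary isomorphic copy in $\AC^0$''; this raises the canonicalisation problem you'd then have to solve (given an adversarial relabelling, locate the planted copy), which is not what the paper does and is not obviously in $\AC^0$.

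The actual construction in the paper's \Cref{lemma:padding_graph} avoids symmetrisation entirely by defining the padded property $g$ on the larger vertex set directly via a rule that is isomorphism-invariant from the outset: given the big graph $G$, form $\Gclean{G}$ by deleting isolated vertices (in lexicographic order) until either none remain or exactly $n$ vertices remain, then set $g(G) = 1$ iff $G$ has too many edges, or $\Gclean{G}$ has more than $n$ vertices, or $f(\Gclean{G}) = 1$. The isolated-vertex count and the isomorphism type of $\Gclean{G}$ are isomorphism invariants, so $g$ is automatically a graph property; the first two acceptance clauses make $g$ monotone without needing a clever embedding; the monotone projection $f \mprojred g$ is literal (plant $G$ on the first $n$ vertices and leave the rest isolated); and the $\AC^0$ upper bound reduces to computing the induced subgraph on the non-isolated vertices via small threshold circuits (\Cref{lemma:induced_subgraph_circuit}), which is where \Cref{thm:polylog-thr-ac0} is needed. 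This ``shrink by deleting isolated vertices plus degenerate acceptance'' idea is the missing ingredient in your argument; Steps 1--3 of your proposal are indeed assembly of known results, but without this construction Step 4 remains a conjecture rather than a proof.
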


First, we recall a result
of~\cite{DBLP:journals/combinatorica/BabaiGW99},
which proves monotone lower bounds for the following function.
Let 
$\OddFactor_n : \blt^{\binom{n}{2}} \to \blt$
be the function that accepts a given 
graph
if 
it
contains an \emph{odd factor} -- in other words,
a spanning subgraph in which the degree of every vertex is odd.
Babai, Gál and Wigderson~\cite{DBLP:journals/combinatorica/BabaiGW99}
proved
the following result:
\begin{theorem}[\cite{DBLP:journals/combinatorica/BabaiGW99}]
    \label{thm:bip-oddfactor-lb}
        Any monotone formula computing 
        $\OddFactor_n$ 
        has size $2^{\Omega(n)}$,
        and any monotone circuit computing
        $\OddFactor_n$ 
        has size $n^{\Omega(\log n)}$.
\end{theorem}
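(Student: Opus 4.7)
The plan is to use the Karchmer-Wigderson correspondence: the monotone formula size of $f$ (number of leaves) equals the minimum leaf-count of a communication protocol tree solving the monotone Karchmer-Wigderson game for $f$ (Alice holds $x$ with $f(x)=1$, Bob holds $y$ with $f(y)=0$, they must output a coordinate $i$ with $x_i=1$ and $y_i=0$), and the monotone circuit size equals the minimum node-count of a communication DAG for the same relation. Setting $n = 2k$, I would restrict attention to a clean family of inputs: as $1$-inputs take all perfect matchings $M$ on $[n]$, and as $0$-inputs the \emph{odd-split} graphs $G_S := K_S \cup K_{\overline{S}}$ for $S \subset [n]$ with $|S|$ odd (so $G_S$ has an odd-sized connected component and hence no odd factor). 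For a pair $(M, G_S)$ the KW witnesses are exactly the $M$-edges crossing the cut $(S, \overline{S})$; these exist by a parity argument, since $|S|$ is odd and $M$ is a perfect matching. So the restricted game becomes: Alice with matching $M$, Bob with odd set $S$, output an $M$-edge crossing $S$.

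For the $2^{\Omega(n)}$ formula bound, I would exhibit a fooling set of $2^{\Omega(k)}$ pairs $\{(M_i, S_i)\}_i$ such that for $i \neq j$ no single edge is a valid answer for both cross-pairs $(M_i, S_j)$ and $(M_j, S_i)$. By Karchmer's theorem any monotone formula induces a rectangle partition of the pair space in which each rectangle is monochromatic for a chosen answer edge, and each pair in a fooling set must lie in a distinct rectangle; hence the formula has $\geq 2^{\Omega(k)} = 2^{\Omega(n)}$ leaves. A fooling set of the required size can be constructed via a probabilistic/greedy argument in the spirit of Raz and Wigderson for bipartite matching, exploiting that random perfect matchings and random odd cuts produce essentially uncorrelated crossing patterns, so that a random cross-pair $(M_i, S_j)$ almost never shares a valid edge with $(M_j, S_i)$.

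For the $n^{\Omega(\log n)}$ circuit bound, I would pass to the monotone span program framework. Any monotone circuit of size $s$ for $\OddFactor_n$ yields a monotone span program over $\mathbb{F}_2$ of size $O(s)$, so it suffices to lower-bound the span program dimension. The standard template associates to each matching $M$ a target vector $u_M$ that must lie in the span of the edge-rows indexed by $M$, and to each odd cut $S$ a linear functional $\phi_S$ annihilating the rows for edges lying inside $G_S$; the separation requirement $\phi_S(u_M) \neq 0$ whenever $M$ crosses $S$ forces a rank constraint on an incidence-type matrix indexed by $(M, S)$-pairs. To push the rank to $n^{\Omega(\log n)}$, I would construct a recursive/self-similar family of matching-cut instances in which an $n$-variable problem splits into two independent $(n/2)$-variable sub-problems, and then show that rank combines multiplicatively across the $\log n$ levels of recursion.

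The main obstacle is the circuit bound. The formula lower bound is essentially a combinatorial rectangle argument — delicate but well within the reach of standard fooling-set techniques — whereas the span program bound must additionally rule out clever $\mathbb{F}_2$-linear combinations of edge-rows that a monotone circuit implicitly allows. Obtaining $\log n$ in the exponent rather than just a polynomial requires exploiting the recursive self-similarity of the matching-versus-odd-cut family and arguing that dimension is rigidly preserved across the recursion, a genuinely algebraic step with no direct combinatorial counterpart.
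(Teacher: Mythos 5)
There is a genuine gap, and it is concentrated in the circuit-size half of your plan. Your key step asserts that ``any monotone circuit of size $s$ for $\OddFactor_n$ yields a monotone span program over $\mathbb{F}_2$ of size $O(s)$.'' No such simulation exists: monotone span programs efficiently simulate monotone \emph{formulas}, not monotone circuits, and the simulation you need is in fact provably impossible, since there are functions with polynomial-size monotone circuits that require exponential-size monotone span programs (Robere--Pitassi--Rossman--Cook \citep{rprc16}). So even a successful $n^{\Omega(\log n)}$ span-program rank bound would say nothing about monotone circuit size. The actual route (in \cite{DBLP:journals/combinatorica/BabaiGW99}, which this theorem simply cites) is different in kind: the $n^{\Omega(\log n)}$ monotone circuit bound is essentially Razborov's approximation-method lower bound for $\mathsf{Matching}$ adapted to the odd-factor function (see the remark before \Cref{thm:oddfactor} in \Cref{sec:xorsat-appr}); no algebraic span-program argument is involved, and your proposed ``recursive rank multiplicativity'' step is left entirely unspecified.

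The formula-size half also does not go through as sketched. The paper's $2^{\Omega(n)}$ bound comes from embedding set disjointness into the monotone Karchmer--Wigderson game for (bipartite) $\OddFactor$ and invoking the optimal \emph{randomized} communication lower bound for $\mathsf{DISJ}_n$ \citep{DBLP:journals/siamdm/KalyanasundaramS92}, in the style of Raz--Wigderson; the statement explicitly notes this is what makes the $2^{\Omega(n)}$ bound available. Your replacement is a deterministic fooling set of size $2^{\Omega(n)}$ over pairs (perfect matching $M_i$, odd cut $S_i$) with the condition that no edge answers both cross-pairs. The condition is sound, but your probabilistic construction fails: two independent random perfect matchings share an edge with constant probability, and a shared edge crosses both random odd cuts with constant probability, so a constant fraction of pairs violate the fooling condition; extracting a valid subfamily then only yields polynomially many pairs, not $2^{\Omega(n)}$. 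The natural deterministic repairs hit the same wall (pairwise edge-disjoint matchings number at most $n-1$; fixing one matching and varying cuts forces the crossing-edge sets $A_i \subseteq M$ to be pairwise disjoint, capping the family at $n/2$). You would need a genuinely new construction, and none is known; this is precisely why the known argument goes through randomized communication complexity rather than fooling sets. Finally, note the paper also records a small but necessary point you omit: the BGW proof is for the bipartite version, and the general-graph statement follows because the bipartite case reduces to it by a monotone projection.
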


The proof in \cite{DBLP:journals/combinatorica/BabaiGW99}  is actually for the case of
\emph{bipartite graphs}, but it easily extends to general graphs, since
the bipartite case reduces to the general case by a monotone projection.
The formula lower bound stated above is slightly stronger because it makes
use of asymptotically optimal lower bounds on the randomized communication
complexity of $\mathsf{DISJ}_n$
\citep{DBLP:journals/siamdm/KalyanasundaramS92}, which were not available
to \cite{DBLP:journals/combinatorica/BabaiGW99}. We remark that, with a
different language,
a monotone circuit lower bound for $\OddFactor$ is also implicitly proved
in Feder and
Vardi~\cite[Theorem 30]{DBLP:journals/siamcomp/FederV98}.

We now recall an upper bound for $\OddFactor$,  implicitly proved
in an unpublished note due to Johannsen~\cite{joh03}.

\begin{theorem}[\cite{joh03}]
    \label{thm:oddfactor-logspace}
    We have $\OddFactor \in \L$.
\end{theorem}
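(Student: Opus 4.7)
The plan is to reduce $\OddFactor_n$ to a simple connectivity-plus-parity check and then invoke Reingold's logspace algorithm for undirected $s$-$t$-connectivity. The key combinatorial claim I would establish first is the characterization already hinted at in \Cref{sec:techniques}: a graph $G$ admits an odd factor if and only if every connected component of $G$ has an even number of vertices.

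For the forward direction of the characterization, I would argue by contradiction. Suppose $F$ is an odd factor of $G$ and $C$ is an odd-sized connected component of $G$. Summing the degree-parity condition $\deg_F(v) \equiv 1 \pmod 2$ over all $v \in C$, each edge of $F$ with both endpoints in $C$ contributes $2$ to the left-hand side while edges leaving $C$ do not exist, so the total is $0 \pmod 2$; the right-hand side, however, equals $|C|$, which is odd, a contradiction. For the backward direction, I would argue constructively: fix an even-sized component $C$ and start from an arbitrary spanning subgraph $F_0$ of $C$. By the same parity argument, the number of vertices of even degree in $F_0$ is even. Pair them up arbitrarily, and for each pair $(u,v)$ flip the status (present/absent) of every edge along any $u$--$v$ path in $C$; only $u$ and $v$ see their degree parity change, so each such step reduces the number of even-degree vertices by two, and after finitely many steps every vertex of $C$ has odd degree.

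Given this characterization, the logspace algorithm is immediate. Iterate over vertices $v = 1, \ldots, n$; for each $v$, use Reingold's algorithm~\cite{DBLP:conf/stoc/Reingold05} to test whether $v$ is the minimum-indexed vertex in its connected component, i.e., whether no $u < v$ is reachable from $v$ in $G$. For each such canonical representative $v$, iterate again over $w \in [n]$, use Reingold to decide whether $w$ is reachable from $v$, and maintain a single bit counting the number of such $w$ modulo $2$; if this bit is ever $1$ after the inner loop, reject; otherwise, after all $v$ have been processed, accept. The bookkeeping requires only a constant number of $O(\log n)$-bit counters together with the workspace used by Reingold's procedure, so by the standard logspace composition lemma the whole computation runs in $\L$.

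The main obstacle I anticipate is not any deep mathematics but the careful verification of the combinatorial characterization, in particular the backward implication, where one must confirm that flipping edges along a path preserves degree parity at internal vertices; everything else is a routine invocation of Reingold's theorem together with logspace composition. I would phrase the characterization so that it transparently yields a decision procedure rather than merely an existence statement, which keeps the simulation in $\L$ without any further effort.
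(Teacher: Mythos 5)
Your proposal is correct and follows essentially the same approach as the paper: prove the characterization that $G$ has an odd factor if and only if every connected component is even-sized (forward via the handshaking lemma, backward via a path-flipping construction), then reduce the resulting check to undirected connectivity and apply Reingold's logspace algorithm. The only cosmetic differences are that the paper initializes the edge-flipping construction from the empty graph rather than an arbitrary spanning subgraph, and it checks the parity of reachable vertices directly from every vertex rather than first identifying canonical component representatives; both variations are harmless.
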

\begin{proof}
    We first recall the following observation about the $\OddFactor$
    function, which appears in different forms in the literature
    (see~\cite[Lemma 4.1]{DBLP:journals/jacm/Urquhart87}
    or~\cite[Lemma 18.16]{jukna_2012};
    see also~\cite[Proposition 1]{joh03} for a different proof.)
    \begin{claim*}
        A graph $G$ has an odd factor if and only if
        every connected component of $G$ has an even number of vertices.
    \end{claim*}
    \begin{proof}
        If a graph $G$ has an odd factor, 
        we can conclude
        that every
        connected
        component of $G$ has an even number of vertices
        from the well-known observation
        that in every graph 
        there is an even number of vertices of odd degree.

        Now suppose that every connected component of $G$ 
        has an even number of vertices.
        We will iteratively construct an odd factor $F$ of $G$.
        We begin with the empty graph.
        We take any two vertices $u,v$ in the same
        connected component of $G$
        which currently have even degree in $F$,
        and consider any path
        $P = (x_1,\dots,x_{k})$
        between $u$ and $v$,
        where $x_1 = u$ and $x_{k}=v$.
        If the edge $x_i x_{i+1}$ is currently in $F$, we remove 
        $x_i x_{i+1}$ from $F$; otherwise, we add $x_i x_{i+1}$ to
        $F$.
        It's easy to check that, in every iteration of this procedure,
        only the vertices $u$ and $v$ have the parity of their degree
        changed in $F$; the degree of every other vertex stays the same
        (modulo 2).
        Since every connected component has an even number of vertices,
        this means that, eventually, every vertex in $F$ will have odd
        degree.
    \end{proof}
    Now it's easy to check in logspace if every connected component of $G$ has an even
    number of vertices using Reingold's algorithm for undirected
    connectivity~\cite{DBLP:conf/stoc/Reingold05}.
    It suffices to check if, for every vertex $v$ of $G$,
    the number of vertices reachable from $v$ is odd.
\end{proof}

Now, if we only desire to obtain a function in $\AC^0$
not computed by monotone circuits of depth $(\log n)^i$,
we can follow the same argument of~\Cref{thm:ac02-mp-intro},
using \Cref{lemma:ac0_nc1_sim} instead of \Cref{lemma:pl-to-ac02}.
In order to obtain moreover a monotone \emph{graph property} witnessing
this separation, we will need the following lemma,
which enables us to obtain a graph property after ``padding'' a graph property.
We defer the proof of this lemma to
the end of this section.

\begin{restatable}{lemma}{paddinggraph}
    \label{lemma:padding_graph}
    Let $f : \blt^{\binom{n}{2}} \to \blt$
    be a monotone graph property on graphs of $n$ vertices.
    The following holds.
    \begin{enumerate}
        \item If $f \in \NC^i$ for some $i > 1$, then
            there exists a monotone graph property 
            $g$ on graphs of 
            $N = 2^{(\log n)^i}$ vertices
            such that 
            $g \in \NC^1$
            and 
            $f \mprojred g$.
        \item If $f \in \NL$, 
            then
            for all $\eps > 0$
            there exists
            a monotone graph property 
            $g$ on graphs of 
            $N = 2^{n^\eps}$ vertices
            such that 
            $g$ can be computed by  $\AC^0$ circuits of size $N^{2+o(1)}$
            and 
            $f \mprojred g$.
        \item If $f \in \pL$, 
            then
            for all $\eps > 0$
            there exists
            a monotone graph property 
            $g$ on graphs of 
            $N = 2^{n^\eps}$ vertices
            such that 
            $g$ can be computed by  $\AC^0[\xor]$ circuits of size $N^{2+o(1)}$
            and 
            $f \mprojred g$.
    \end{enumerate}
\end{restatable}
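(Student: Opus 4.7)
The plan is to realize a single padding map and a common template for $g$ that serves all three items, and then analyse the complexity of the resulting $g$ case by case.

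\textbf{Padding.} Given an $n$-vertex graph $H$, define $\phi(H)$ to be the $N$-vertex graph obtained from $H$ by adjoining a set $P$ of $N-n$ fresh vertices and declaring each $p \in P$ to be universal, i.e.~adjacent to every other vertex of $\phi(H)$. Each edge bit of $\phi(H)$ is either an edge bit of $H$ (for pairs inside $V(H)$) or the constant $1$ (for pairs touching $P$), hence $\phi$ is a monotone projection.

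\textbf{Defining $g$.} Let $U(G) = \{v \in V(G) : \deg_G(v) = N-1\}$ and $W(G) = V(G) \setminus U(G)$. If $|W(G)| > n$, put $g(G) = 0$. Otherwise, form the $n$-vertex graph $H'(G)$ whose vertex set is $W(G)$ together with $n - |W(G)|$ fresh ``filler'' vertices declared universal inside $H'(G)$, with edges inside $W(G)$ inherited from $G$; define $g(G) = f(H'(G))$, which is well defined because $f$ is a graph property. Iso-invariance of $g$ is immediate from iso-invariance of $U(G)$ and of $f$. For monotonicity, if $G \subseteq G'$ then $U(G) \subseteq U(G')$ and $W(G') \subseteq W(G)$; matching $W(G')$ to itself and the vertices of $W(G) \setminus W(G')$ (together with the filler of $H'(G)$) to the filler of $H'(G')$ yields $H'(G') \supseteq H'(G)$, and monotonicity of $f$ closes the argument. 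Finally, $U(\phi(H)) = P \cup U(H)$, so $W(\phi(H)) = V(H) \setminus U(H)$, giving $H'(\phi(H)) \cong H$ and $g(\phi(H)) = f(H)$, so $f \mprojred g$.

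\textbf{Complexity by case.} In Case 1, $f \in \NC^i$ and $N = 2^{(\log n)^i}$. Compute $U(G)$ in $\AC^0$; use iterated addition in $\NC^1$ to compute, for each $v$, the number of non-universal vertices of smaller index, yielding a canonical labeling of $W(G)$; assemble the $\binom{n}{2}$-bit adjacency matrix of $H'(G)$ from $G$ and this labeling; apply the $\NC^i$-circuit for $f$. Because $n = (\log N)^{1/i}$, the depth of $f$ is $O(\log^i n) = O((\log \log N)^i) = o(\log N)$ and its size is $\polylog N$, so the whole circuit lies in $\NC^1$. In Cases 2 and 3, $f \in \NL$ (resp.~$\pL$) and $N = 2^{n^\eps}$. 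By \Cref{lemma:ac0_nc1_sim} (resp.~\Cref{lemma:pl-to-ac02}), $f$ has an $\AC^0$ (resp.~$\AC^0[\xor]$) circuit of size $2^{n^\delta}$ for any $\delta > 0$; choosing $\delta < \eps$ makes this $N^{o(1)}$. It remains to assemble the $\binom{n}{2}$ input bits of that circuit from $G$ using $\AC^0$ (resp.~$\AC^0[\xor]$) of size $N^{2+o(1)}$. I plan to do this without computing a canonical labeling, by aggregating, for each pair of slot indices $(i,j) \in [n]^2$, an OR over ordered pairs $(u,v) \in V(G)^2$ of a local monotone witness that the non-universal vertices of $G$ can be placed into slots so that $u$ and $v$ occupy slots $i$ and $j$; iso-invariance of $f$ ensures that the particular witness does not matter, so the OR collapses to the correct iso-invariant bit. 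Each per-pair aggregation costs $\widetilde O(N^2)$ gates, and there are $n^2 = \polylog N$ pairs, so the preprocessing is of size $N^{2+o(1)}$; composing with the $N^{o(1)}$-size circuit for $f$ yields the desired overall bound.

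\textbf{Main obstacle.} The hard part is the $\AC^0$ argument in Cases 2 and 3: canonical indexing of $W(G)$ requires thresholds of arity up to $n$, which $\AC^0$ cannot realize inside the $N^{2+o(1)}$ budget, and a naive OR over all $n$-subsets or all labelings costs $\binom{N}{n}$ or $n!$, both far too large. The crux is to design the per-bit OR-aggregation above so that (i) it is truly monotone and iso-invariant, (ii) it collapses to the intended bit of $H'(G)$ whenever $|W(G)| \leq n$, using the iso-invariance of $f$ to identify different ``valid'' placements, and (iii) its size fits in $\widetilde O(N^2)$ per output bit. Verifying these combinatorial points, and extending the argument cleanly to $\AC^0[\xor]$ in Case 3, is where the technical work is concentrated.
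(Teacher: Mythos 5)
Your padding construction (universal vertices rather than the paper's isolated vertices) is a valid dual of the paper's approach, and the definition of $g$, its monotonicity, iso-invariance, and the monotone projection $f \mprojred g$ all check out. The gap is exactly where you flag it, but it is a real gap, not merely technical work to be filled in, and your proposed workaround does not succeed.

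The ``OR-aggregation without canonical labeling'' idea is unsound. Each slot-pair bit $b_{ij}$ is computed by an independent OR, so nothing forces the witnessing placement selected for $(i,j)$ to agree with the placement selected for $(i,j')$. Iso-invariance of $f$ only tells you that any \emph{single, globally consistent} placement of $W(G)$ into the $n$ slots yields the same value of $f$; it does not repair an adjacency matrix stitched together from mutually inconsistent placements, which can land in a different isomorphism class. Concretely, if $W(G)=\{a,b,c\}$ with edges $ab, bc$ and no edge $ac$, independent witnesses can set $b_{12}=1$ (via placement $a,b,c$), $b_{13}=1$ (via placement $b,a,c$), and $b_{23}=1$ (via placement $a,c,b$), assembling a triangle rather than a path. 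If you instead pin the witness to a canonical placement (``$u$ is the $i$-th non-universal vertex by index''), you are back to needing thresholds.

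Moreover, your premise that ``canonical indexing of $W(G)$ requires thresholds of arity up to $n$, which $\AC^0$ cannot realize inside the $N^{2+o(1)}$ budget'' is mistaken. Deciding whether vertex $a$ is the $i$-th non-universal vertex requires a threshold over $N$ inputs (the non-universality indicator vector) with threshold \emph{value} at most $n = \polylog N$. Exactly this is provided by \Cref{thm:polylog-thr-ac0} (\cite{hwwy94}), which gives constant-depth $\AC^0$ circuits of size $N^{o(1)}$ for such polylogarithmic thresholds. This is what the paper does: its induced-subgraph circuit (\Cref{lemma:induced_subgraph_circuit}) uses these thresholds to compute, for each slot pair $(i,j)\in[n]^2$, the adjacency bit of the $i$-th and $j$-th non-isolated vertices, with $\binom{n}{2}\cdot\binom{N}{2}\cdot N^{o(1)} = N^{2+o(1)}$ total gates. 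Dualizing to your non-universal vertices (replace $\alpha_a = \bigvee_j x_{aj}$ by $\alpha_a = \bigvee_{j\neq a} \lnot x_{aj}$) gives the circuit you need, and the rest of your plan then goes through. For Case~1 the paper uses the $\NC^1$ threshold circuits of \Cref{thm:thr-nc1}; your iterated-addition variant is fine as well.
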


We are now ready to prove
\Cref{thm:ac0-fml-graph}.
\begin{proof}[Proof of \Cref{thm:ac0-fml-graph}]
            Fix $n \in \bbn$ and take an $\eps < 1/i$.
            Observing that $\L \sseq \NL$,
            from
            \Cref{thm:oddfactor-logspace}
            and item (2) of \Cref{lemma:padding_graph}
            we conclude that
            there exists a monotone graph property $f$
            on 
            $N = 2^{n^\eps}$
            vertices
            such that 
            $f \in \AC^0$
            and $\OddFactor_n \mprojred f$.
            By \Cref{thm:bip-oddfactor-lb},
            any monotone circuit computing $f$
            has depth 
            $\Omega(n) = \Omega((\log N)^{1/\eps}) \gg (\log N)^i$.
\end{proof}

Raz and Wigderson~\cite{RW92} observed that there exists a monotone function
$f \in \NC^1 \sm \mNC$.
Using~\Cref{lemma:padding_graph},
we observe moreover that it's possible to obtain this separation
with a monotone graph property.

\begin{proposition}
    \label{prop:nc1-mnc-graph}
    We have $\semMon{\NC^1} \cap \GraphPpts \not\sseq \mNC$.
\end{proposition}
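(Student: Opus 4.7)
The plan is to take $\OddFactor$ as our base monotone graph property and invoke the graph padding lemma (\Cref{lemma:padding_graph}) to transport it into $\NC^1$ while preserving enough monotone depth hardness to keep it outside $\mNC$. First, recall that $\OddFactor_n$ is a monotone graph property. By \Cref{thm:oddfactor-logspace} we have $\OddFactor \in \L$, and since $\L \sseq \NC^2$, also $\OddFactor \in \NC^2$. On the lower bound side, \Cref{thm:bip-oddfactor-lb} tells us that every monotone formula computing $\OddFactor_n$ has size $2^{\Omega(n)}$; since a fan-in-two formula of depth $d$ has at most $2^d$ leaves, this yields a monotone circuit depth lower bound $\mdepth{\OddFactor_n} = \Omega(n)$.

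We now apply item (1) of \Cref{lemma:padding_graph} with $i = 2$, obtaining a monotone graph property $g$ on graphs with $N = 2^{(\log n)^2}$ vertices such that $g \in \NC^1$ and $\OddFactor_n \mprojred g$. Because monotone projections preserve monotone depth lower bounds up to an additive constant, this yields $\mdepth{g} = \Omega(n) = \Omega(2^{\sqrt{\log N}})$, where we used $\log N = (\log n)^2$ to convert between the two parameters. Since $\sqrt{\log N} \geq k \log \log N$ for every fixed $k$ and all sufficiently large $N$, the quantity $2^{\sqrt{\log N}}$ eventually exceeds $(\log N)^k$ for every constant $k$, so $g \notin \mDEPTH[(\log N)^k]$ for every $k$, i.e., $g \notin \mNC$. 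Combined with $g \in \NC^1$ and $g \in \GraphPpts$, this establishes the proposition.

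I do not anticipate any real obstacle: \Cref{lemma:padding_graph} is precisely designed to trade polynomial-in-$n$ hardness for quasi-polynomial-in-$N$ hardness when compressing the class from $\NC^i$ down to $\NC^1$, and $\OddFactor$ sits at the exact place in the hierarchy where both the $\L \sseq \NC^2$ upper bound and the $\Omega(n)$ monotone depth lower bound are simultaneously available. The only thing one has to verify explicitly is the routine fact that $f \mprojred g$ via monotone projections implies $\mdepth{f} \leq \mdepth{g} + O(1)$, so that the formula-size-based lower bound on $\OddFactor_n$ indeed carries over to $g$.
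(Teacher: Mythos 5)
Your proof is correct and follows essentially the same approach as the paper: apply Theorem~\ref{thm:oddfactor-logspace} ($\OddFactor \in \L \sseq \NC^2$), item (1) of Lemma~\ref{lemma:padding_graph}, and the monotone formula lower bound of Theorem~\ref{thm:bip-oddfactor-lb}, with the same $\Omega(n) = \Omega(2^{\sqrt{\log N}})$ parameter conversion. You merely spell out more explicitly the routine facts (that an exponential monotone formula-size lower bound gives a linear monotone depth lower bound, and that monotone projections preserve monotone depth) that the paper leaves implicit.
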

\begin{proof}
    Observing that $\L \sseq \NC^2$,
    we conclude 
    from
    \Cref{thm:oddfactor-logspace}
    and item (1) of \Cref{lemma:padding_graph}
    that
    there exists a monotone graph property $f$
    on 
    $N = 2^{(\log n)^2}$ 
    vertices
    such that 
    $f \in \NC^1$
    and 
    $\OddFactor_n \mprojred f$.
    By 
    \Cref{thm:bip-oddfactor-lb},
    any monotone circuit computing $f$
    has depth 
    $\Omega(n) = \Omega(2^{\sqrt{\log N}})$,
    which implies $f \not\in \mNC$.
    \qedhere
\end{proof}

\subsection{Efficient monotone padding for graph properties}

We will now prove \Cref{lemma:padding_graph}.
We first recall some low-depth circuits
for computing threshold functions, which we will use to design a circuit for efficiently computing
the adjacency matrix of induced subgraphs.

\begin{theorem}[\cite{hwwy94}]
    \label{thm:polylog-thr-ac0}
Let $d > 0$ be a constant.
The function
$\thr_{(\log n)^d,n}$ 
can be computed by an $\AC^0$ circuit of size $n^{o(1)}$
and depth $d+O(1)$.
\end{theorem}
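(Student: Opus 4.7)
The plan is to compute $\thr_{k,n}$ with $k = (\log n)^d$ by recursive partitioning. The naive DNF $\bigvee_{|S|=k} \bigwedge_{i \in S} x_i$ has size $\binom{n}{k} = n^{\omega(1)}$, so for a parameter $t$ I would instead express
\[
\thr_{k,n}(x) \;=\; \bigvee_{P \in \mathcal{F}} \bigwedge_{B \in P} \thr_{\lceil k/t \rceil,\, n/t}(x|_B),
\]
where $\mathcal{F}$ is a small family of partitions of $[n]$ into $t$ equal blocks of size $n/t$. The $\geq$ direction is immediate from pigeonhole (if every block has at least $\lceil k/t \rceil$ ones, the total is at least $k$). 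The $\leq$ direction requires $\mathcal{F}$ to be \emph{balanced}: for every $S \subseteq [n]$ with $|S| = k$, some $P \in \mathcal{F}$ has each block of $P$ containing at least $\lceil k/t \rceil$ elements of $S$. Iterating this identity $d$ times reduces the threshold all the way down to $\thr_{1}$, a single OR.

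Balanced families of sub-polynomial size exist by the probabilistic method. For a uniformly random partition of $[n]$ into $t$ equal blocks and a fixed $S$ with $|S| = k$, each block contains $k/t$ elements of $S$ in expectation, and a Chernoff bound yields $\Pr[\,\text{some block has fewer than } k/(2t) \text{ elements of } S\,] \leq t \cdot \exp(-\Omega(k/t))$. Drawing $m = \poly(t,\log n)$ independent partitions and union-bounding over all $\binom{n}{k}$ sets $S$ then produces a balanced $\mathcal{F}$ of size $(\log n)^{O(1)}$; the factor-$2$ slack can be absorbed by adjusting $k$ by a constant. Choosing $t_i = \Theta(\log n)$ at each recursion level gives $k_i = (\log n)^{d-i}$ and per-level family size $(\log n)^{O(1)}$.

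A naive recursion would add depth $2$ per level (one $\lor$ and one $\land$), yielding depth $2d + O(1)$. To reach $d + O(1)$ I would also derive a dual CNF-style identity $\thr_{k,n}(x) = \bigwedge_{P' \in \mathcal{F}'} \bigvee_{B \in P'} \thr_{\lceil k/t \rceil,\, n/t}(x|_B)$ for an analogous family $\mathcal{F}'$ balanced ``from below'' (covering inputs of weight $< k$), and alternate the DNF and CNF expansions at consecutive levels. Then the $\lor$ (resp.\ $\land$) layer on the outside of one level merges with the $\lor$ (resp.\ $\land$) layer on the inside of the next, contributing only depth $1$ per recursion step. After $d$ levels the threshold decays to $\thr_1$, a single OR at the base, giving total depth $d + O(1)$.

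The main technical point is the size bookkeeping: the total number of gates is essentially $\prod_{i=0}^{d-1} |\mathcal{F}_i| \cdot t_i$ (internal ``branching'' nodes), plus one OR per innermost sub-problem, which for the parameters above is $(\log n)^{O(d)} = n^{o(1)}$ for constant $d$. The delicate part is that the Chernoff union bound must close at \emph{every} level as the effective input length shrinks geometrically from $n$ down to $n/(\log n)^d$; this requires $k_i/t_i = \Omega(\log n_i)$ throughout, so the per-level parameters must be tuned in concert with the union bound. The other point to verify is that the alternating DNF/CNF scheme yields an exact identity for $\thr_{k,n}$ rather than one-sided approximations — this reduces to checking both ``balanced from above'' and ``balanced from below'' properties for the two families simultaneously at each level.
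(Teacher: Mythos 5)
The paper does not prove \Cref{thm:polylog-thr-ac0}; it cites it from~\cite{hwwy94}, so there is no in-text argument to compare against. Evaluated on its own merits, your proposal has a genuine gap centred on the tension between \emph{exactness} of the recursive identity and the \emph{size} of the partition family, and the issues you flag at the end as ``points to verify'' are in fact where the argument breaks.

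Consider the $\le$ direction at one level. For the identity
$\thr_{k,n}(x) = \bigvee_{P}\bigwedge_{B\in P}\thr_{q,\,n/t}(x|_B)$ to be an \emph{equality}, you need that whenever $|x|<k$, \emph{every} partition $P$ has some block with $<q$ ones. By pigeonhole this forces $tq\ge k$, and combined with the $\ge$ direction one needs $tq=k$ exactly (so $t\mid k$ and $q=k/t$). Thus for every $S$ with $|S|=k$, the balanced family must contain a partition placing \emph{exactly} $q$ elements of $S$ in every block. A uniformly random partition achieves this with probability roughly $\binom{n/t}{q}^t\big/\binom{n}{k}\approx (2\pi q)^{-t/2}$, which for $t=\Theta(\log n)$ and $q=(\log n)^{d-1}$ is $n^{-\Omega(d-1)}$. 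A union bound over the $\binom{n}{k}=n^{\Theta(k)}$ sets $S$ then forces $|\mathcal F|=n^{\Omega(1)}$ per level, destroying the $n^{o(1)}$ size bound. The Chernoff route you invoke avoids this by only guaranteeing $\ge k/(2t)$ per block, but then the identity is no longer exact: the resulting circuit $C$ only satisfies $\thr_k \le C \le \thr_{k/2}$, and the uncertainty band compounds multiplicatively across the $d$ levels of recursion. ``Adjusting $k$ by a constant'' repairs the $\ge$ direction but not the $\le$ direction — on inputs of Hamming weight between $k/2^{O(d)}$ and $k$, the value of the circuit is simply not controlled.

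The dual CNF/DNF alternation has the same problem mirrored: a family that is balanced ``from below'' and one balanced ``from above'' would need to be simultaneously exact, which runs into the same probability barrier. The upshot is that one-sided/approximate balancing gives an \emph{approximate threshold}, not $\thr_k$, and exact balancing is too expensive. The construction in~\cite{hwwy94} is based on a genuinely different mechanism (exact counting of small Hamming weight via hashing into few buckets and explicit recombination, not purely pigeonhole partition identities), precisely because the recursive-partition-with-random-family scheme cannot be made exact at sub-polynomial cost.
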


\begin{theorem}[\cite{aks83}]
    \label{thm:thr-nc1}
    For every $k \in [n]$,
    the function $\thr_{k,n}$ can be computed by a
    circuit of depth $O(\log n)$ and size
    $n^{O(1)}$.
\end{theorem}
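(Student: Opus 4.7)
The plan is to invoke the sorting network of Ajtai, Koml\'{o}s, and Szemer\'{e}di, which sorts $n$ numbers using a comparator network of depth $O(\log n)$ and $O(n \log n)$ comparators. First, I would specialise to Boolean inputs: a comparator on two bits $(a,b)$ computes the sorted pair $(a \lor b,\, a \land b)$, and thus can be implemented by a constant-size fan-in two gadget using only $\land$ and $\lor$ gates. Replacing every comparator in the AKS network by such a gadget yields a (monotone) fan-in two circuit of depth $O(\log n)$ and size $n^{O(1)}$ that sorts its $n$ input bits in, say, non-increasing order.

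Next, I would observe that $\thr_{k,n}(x) = 1$ iff at least $k$ of the bits of $x$ equal $1$, which is equivalent to saying that the $k$-th bit in the sorted output equals $1$. Hence $\thr_{k,n}(x)$ is computed by simply reading off a single designated output wire of the sorting circuit constructed above, which incurs no additional overhead in depth or size. As a bonus, the resulting circuit is monotone, a stronger statement than what the theorem requires, and one that is convenient for applications elsewhere in the paper (e.g.\ in constructing monotone padding gadgets in the next subsection).

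The main obstacle is of course the AKS construction itself, which relies on a delicate combinatorial argument based on expander-based $\epsilon$-halvers and is far from elementary. However, since we are permitted to cite \cite{aks83} as a black box, nothing further is needed: the only piece the argument must supply is the reduction from thresholds to sorting, and that is essentially immediate once one observes the specialisation of comparators to Boolean values.
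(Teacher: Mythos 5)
Your argument is correct and is the standard derivation the paper implicitly intends by citing \cite{aks83}: convert the AKS sorting network to a monotone fan-in-two circuit via Boolean comparators $(a,b)\mapsto(a\lor b,\,a\land b)$, then read the $k$-th output wire. The observation that the resulting circuit is monotone is accurate and indeed matters for the paper's later use of this lemma.
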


\begin{lemma}
    \label{lemma:induced_subgraph_circuit}
    There exists a circuit $C^k_n$ with $\binom{n}{2} + n$ inputs
    and $\binom{k}{2}$ outputs which,
    when given as input
    an adjacency matrix 
    of a graph $G$ on $n$ vertices
    and a characteristic vector of a set $S \sseq [n]$
    such that $\card{S} \leq k$,
    outputs the adjacency matrix of the graph $G[S]$,
    padded with isolated vertices when $\card{S} < k$.
    The circuit has constant-depth and size $n^{2+o(1)}$
    when $k = \polylog(n)$, and
    size $n^{O(1)}$ and depth $O(\log n)$ otherwise.
\end{lemma}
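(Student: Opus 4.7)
The plan is to construct $C_n^k$ by first identifying, for each rank $i \in [k]$ and each vertex $p \in [n]$, whether $p$ is the $i$-th smallest element of $S$, and then using these indicators to route the adjacency entries of $G$ into the output positions.

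Concretely, writing $S_1, \ldots, S_n$ for the bits of the characteristic vector of $S$ and $G_{p,q}$ for the input adjacency entries, I would define
\[
I_{i,p} \;:=\; S_p \;\wedge\; \thr_{i,p}(S_1,\dots,S_p) \;\wedge\; \neg\,\thr_{i+1,p}(S_1,\dots,S_p),
\]
which is $1$ iff $p \in S$ and exactly $i$ elements of $S$ lie in $\{1,\dots,p\}$. Since $|S| \leq k$, for each $i \in [k]$ at most one $p$ satisfies $I_{i,p} = 1$, and if $|S| \geq i$ exactly one does. Then for each output coordinate $(i,j)$ with $1 \leq i < j \leq k$, I would set
\[
\mathrm{Out}_{i,j} \;:=\; \bigvee_{1 \leq p < q \leq n} \bigl( I_{i,p} \,\wedge\, I_{j,q} \,\wedge\, G_{p,q} \bigr).
\]
If $|S| \geq j$ this selects exactly the edge between the $i$-th and $j$-th smallest elements of $S$; if $|S| < j$, the indicator $I_{j,q}$ is identically $0$ and the output is $0$, giving the desired padding by isolated vertices.

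For the size and depth analysis I would split on $k$. When $k = \polylog(n)$, each $\thr_{i,p}$ with $i \leq k$ is a $\polylog(n)$-threshold on at most $n$ bits, so \Cref{thm:polylog-thr-ac0} gives an $\AC^0$ circuit of size $n^{o(1)}$ and constant depth. Combining the $kn$ indicators $I_{i,p}$ (total size $n^{1+o(1)}$) with the $\binom{k}{2}$ output gates (each an OR over $O(n^2)$ AND-of-three gates, for total size $O(k^2 n^2) = n^{2+o(1)}$) yields an $\AC^0$ circuit of constant depth and size $n^{2+o(1)}$. For arbitrary $k \leq n$, I instead invoke \Cref{thm:thr-nc1} to compute each $\thr_{i,p}$ by an $O(\log n)$-depth, $n^{O(1)}$-size circuit; the remaining layers (the ANDs defining $I_{i,p}$ and the ORs defining $\mathrm{Out}_{i,j}$) add only $O(1)$ depth after being balanced to fan-in $2$, giving overall depth $O(\log n)$ and size $n^{O(1)}$.

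I do not anticipate a genuine obstacle: the only delicate point is ensuring that the threshold subcircuits are small enough in the $k = \polylog(n)$ regime to keep the total at $n^{2+o(1)}$ rather than $n^{2} \cdot \polylog(n)^{\omega(1)}$. This is handled cleanly by \Cref{thm:polylog-thr-ac0}, whose $n^{o(1)}$ bound for polylog thresholds is exactly what is needed. The construction is oblivious to the specific input (the ``wiring'' selects the right pair $(p,q)$ via the $I_{i,p}$ indicators), so no uniformity subtlety arises.
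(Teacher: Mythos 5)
Your construction is essentially identical to the paper's proof: both select the $i$-th and $j$-th smallest elements of $S$ via threshold functions (using \Cref{thm:polylog-thr-ac0} in the $k = \polylog(n)$ regime and \Cref{thm:thr-nc1} for general $k$), then take an OR over pairs $(p,q)$ of the conjunction of the two rank-indicators with the adjacency bit. Your explicit indicator $I_{i,p} = S_p \wedge \thr_{i,p} \wedge \neg\thr_{i+1,p}$ is just the written-out form of the paper's condition that ``$\alpha_a$ is the $i$-th non-zero entry of $\alpha$,'' and your size/depth accounting matches the paper's.
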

\begin{proof}
    Let $\set{x_{ij}}_{i,j \in [n]}$ encode the adjacency matrix of $G$.
    Let $\alpha \in \blt^n$ be the characteristic vector of $S$.
    Let $i,j \in [k]$.
    Note that $\set{i,j} \in E(G[S])$
    if and only if
    there exists
    $a,b \in [n]$
    such that
    \begin{itemize}
        \item 
            $\alpha_a$ is the $i$-th non-zero entry of $\alpha$,
        \item 
            $\alpha_b$ is the $j$-th non-zero entry of $\alpha$, and
        \item 
            $x_{ab} = 1$ (i.e., $a$ and $b$ are connected in $G$).
    \end{itemize}
    We first consider the case $k = \polylog(n)$.
    In this case,
    the first two conditions can be checked with 
    circuits of size
    $n^{o(1)}$
    using \Cref{thm:polylog-thr-ac0}.
    Therefore, we can 
    compute 
    if $i$ and $j$ are adjacent
    using $n^{2+o(1)}$ gates and constant depth.
    As there are at most 
    $(\log n)^{O(1)}$ 
    such pairs,
    we can output $G[S]$
    with at most 
    $n^{2+o(1)}$ gates.

    For any $k$, the first two conditions can be checked with an $\NC^1$
    circuit by Theorem~\ref{thm:thr-nc1}.
    Since there are at most $n^2$ pairs $i,j$, the entire adjacency matrix
    can be computed with a $O(\log n)$-depth and polynomial-size circuit.
\end{proof}

We are ready to prove \Cref{lemma:padding_graph}.

\begin{proof}[Proof of \Cref{lemma:padding_graph}]
    We first prove (1). 
    Fix $n \in \bbn$
    and let $N = 2^{(\log n)^i}$.
    For a graph $G$ on $N$
    vertices such that 
    $\card{E(G)} \leq \binom{n}{2}$,
    let
    $\Gclean{G}$ be the graph 
    obtained
    from $G$ by removing isolated vertices from $G$ one-by-one,
    in lexicographic order,
    until one of the following two 
    conditions are satisfied:
    (1) there are no more isolated vertices in $\Gclean{G}$,
    \emph{or}
    (2) $\Gclean{G}$ has exactly $n$ vertices.
    Let $g : \blt^{\binom{N}{2}} \to \blt$
    be the monotone graph property defined as follows:
    \begin{align*}
        g(G) := 
        \left(
            \card{E(G)} > 
            \binom{n}{2}
        \right)
        \lor
        \left(
            \card{V(\Gclean{G})} > n
        \right)
        \lor
            (f(\Gclean{G}) = 1).
    \end{align*}
    Note that $g$ accepts a graph $G$
    if and only if at least 
    one of the following three conditions are satisfied:
    \begin{enumerate}
        \item 
            $G$ has at most $\binom{n}{2}$ edges,
            $\Gclean{G}$ has exactly $n$ vertices
            and 
            $f(\Gclean{G}) = 1$,
            or
        \item 
            $G$ has more than $\binom{n}{2}$ edges,
            or
        \item 
            $\Gclean{G}$ has more than $n$ vertices.
    \end{enumerate}
    We observe that the monotonicity of $g$ follows from 
    the monotonicity of $f$.
    We also claim that $g$ is a graph property.
    Indeed, the graph $\Gclean{G}$ is the same (up to isomorphism),
    irrespective  of the order according to which the isolated vertices are
    removed from $G$. Moreover, the function $f$ is also a graph property.
    Because of this, all the three conditions above
    are preserved under isomorphisms.

    We first observe that $f$ is a monotone projection of $g$.
    Indeed, given a graph $G$ on $n$ vertices,
    we can easily construct by a monotone
    projection a graph $G'$ on $N$ vertices
    and at most $\binom{n}{2}$ edges
    such that
    $f(G) = g(G')$.
    We just let $G'$ have a planted copy of $G$,
    and all other vertices are isolated.
    Then $\Gclean{G'} = G$ (up to isomorphism)
    and $g(G') = f(\Gclean{G}) = f(G)$.

    We now show how to compute $g$ in $\NC^1$.
    Let $\set{x_{ij}}_{i,j \in [N]}$ be the input bits of $g$,
    corresponding to the adjacency matrix of a graph $G$.
    The circuit computes as follows.

    \begin{enumerate}
        \item If $\card{E(G)} > \binom{n}{2}$, accept the graph $G$.
        \item Compute the characteristic vector $\alpha \in \blt^N$
            of the set of all non-isolated vertices of $G$.
            If $\hwt{\alpha} > n$, accept the graph $G$.
        \item Compute $\Gclean{G}$ and output $f(\Gclean{G})$.
    \end{enumerate}

    Note that checking if
    $\card{E(G)} > \binom{n}{2}$
    can be done 
    in $\NC^1$ by Theorem~\ref{thm:thr-nc1}.
    Moreover,
    for all $i \in [N]$,
    we have $\alpha_i = \bigvee_{j \in [N]} x_{ij}$,
    and therefore $\alpha_i$ can be computed by a circuit of depth $O(\log N)$ 
    and $O(N)$ gates.
    In total, 
    the vector $\alpha$ can be computed
     with $O(N^2)$ gates and $O(\log N)$ depth.
    Finally, we can check if
    $\hwt{\alpha} > n$
    in $\NC^1$ with a threshold circuit.

    For the final step, we compute $\Gclean{G}$.
    If $\hwt{\alpha} = n$,
    note that $\Gclean{G} = G[\supp(\alpha)]$.
    When $\hwt{\alpha} < n$,
    then $\Gclean{G}$ is $G[\supp(\alpha)]$ padded with isolated vertices.
    We can therefore compute $\Gclean{G}$
    with the circuit 
    $C_N^n$
    of Lemma~\ref{lemma:induced_subgraph_circuit}.
    Moreover, since $f \in \NC^i$,
    we have that $f$ can be computed by a circuit of size
    $n^{O(1)} = N^{o(1)}$
    and depth
    $O((\log n)^i) = O(\log N)$.
    Therefore, computing $f(\Gclean{G})$ can be done in $\NC^1$. Overall, we get that $g \in \NC^1$.

    In order to prove (2), it suffices to modify the proof above.
    The modification can be briefly described as follows.
    We let $N = 2^{n^\eps}$.
    Every time \Cref{lemma:induced_subgraph_circuit} is applied,
    we use the $\AC^0$ circuit instead of the $\NC^1$ circuit,
    since $n = \polylog(N)$. This ammounts to $N^{2+o(1)}$ many gates with
    unbounded fan-in.
    Moreover, since by assumption 
    $f \in \NL$,
    applying \Cref{lemma:ac0_nc1_sim} we obtain 
    an
    $\AC^0$ circuit for
    $f$ of size
    $2^{n^{\eps/2}} = N^{o(1)}$,
    so we can compute $f(\Gclean{G})$ in constant depth
    with $N^{o(1)}$ gates.

    Finally, for (3) it suffices to apply the same argument used for (2),
    replacing an application of~\Cref{lemma:ac0_nc1_sim}
    by an application of \Cref{lemma:pl-to-ac02}.
\end{proof}

\section{Non-Trivial Monotone Simulations and Their Consequences}
\label{s:consequences}

In contrast to \Cref{s:constant-depth},
in this section we observe that a non-trivial simulation of $\AC^0$ circuits
by monotone circuits
is possible.
This follows from a refined version of the switching lemma proved by
Rossman~\cite{rossman_entropy}.
As a proof of concept, we use this simulation result to reprove a well-known
$\AC^0$ lower bound for $\maj$.

In the second part of this section,
we show that if much faster simulations are possible,
then even stronger non-monotone circuit lower bounds follow.
We also show that this implication is true
even if the simulation only holds for \emph{graph properties}.
Monotone simulations for graph properties are motivated by a result
of Rossman~\cite{rossman_2008}, which shows that
very strong monotone simulations are possible for
\emph{homomorphism-preserving graph properties}.
The 
lower bounds
from monotone simulations 
are proved with
the simulation result and
padding argument used in the previous section
(\Cref{lemma:ac0_nc1_sim,lemma:padding_graph}).

\subsection{A non-trivial simulation for bounded-depth circuits}
\label{s:simulations}

The earliest monotone simulation result was proved for $\DNF$s  by Quine~\cite{quine_1953}.

\begin{theorem}[Quine~\cite{quine_1953}]
    \label{t:quine_dnf}
    For all $s : \bbn \to \bbn$, we have
    $\semMon{\DNF[s]} \sseq \mDNF[s]$.
\end{theorem}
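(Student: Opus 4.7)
The plan is to take an arbitrary size-$s$ DNF $F$ computing the monotone function $f$ and show that \emph{erasing the negative literals term-by-term} yields a monotone DNF of size at most $s$ still computing $f$. Write $F = \bigvee_{i=1}^{s} t_i$, where each term $t_i$ is a conjunction $P_i \wedge N_i$ of its positive part $P_i$ (a conjunction of variables) and its negative part $N_i$ (a conjunction of negated variables). Define the monotone DNF $F^+ := \bigvee_{i=1}^{s} P_i$. This object has the right syntactic shape: it uses only $\wedge, \vee$ and positive literals, and it has at most $s$ terms (with the convention that an empty $P_i$ is the constant $1$, in which case $f\equiv 1$ and we are done).

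It remains to show $F^+ \equiv F$ as Boolean functions, i.e. that $F^+$ computes $f$. One direction is immediate: for every $i$ and every input $x$, we have $t_i(x) \leq P_i(x)$, so $F(x) \leq F^+(x)$, giving $f \leq F^+$. For the reverse inequality, I would use the standard ``zero-out the negated coordinates'' trick. Suppose $F^+(x) = 1$; then $P_i(x) = 1$ for some $i$. Let $y$ be obtained from $x$ by setting to $0$ exactly those coordinates that appear in $N_i$. Then $P_i(y) = 1$ (those coordinates do not appear in $P_i$, and setting new variables to $0$ does not falsify $P_i$), and $N_i(y) = 1$ (each negated literal in $N_i$ evaluates to $1$ on $y$). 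Hence $t_i(y) = 1$, so $F(y) = 1$, i.e.\ $f(y) = 1$. Since $y \leq x$ coordinatewise and $f$ is monotone, $f(x) = 1$. Thus $F^+ \leq f$, completing the equivalence.

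Putting the two inclusions together, $F^+$ is a monotone DNF with at most $s$ terms that computes $f$, proving $\semMon{\DNF[s]} \sseq \mDNF[s]$. There is essentially no obstacle: the only subtlety is handling degenerate terms (a term whose negative and positive parts share a variable is identically $0$ and can simply be dropped; a term with no positive literals forces $P_i = 1$ and hence $f \equiv 1$), neither of which increases the term count. The argument uses monotonicity of $f$ in only one place, namely the final ``pull back along $y \leq x$'' step, which is the conceptual heart of the proof.
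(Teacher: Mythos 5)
Your proof is correct and takes exactly the same approach the paper gestures at in its one-line proof (``simply removing the negative literals''); you have just filled in the standard details of why the resulting monotone DNF still computes $f$, via the zero-out argument and monotonicity.
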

\begin{proof}
    If a given $\DNF$ computes a monotone Boolean function,
    simply removing the negative literals continues to compute the same
    function.
\end{proof}

Let $\dtsize{f}$ denote the size of a smallest decision-tree computing
$f$. We will need a result obtained by
Rossman~\cite{rossman_entropy}.

\begin{theorem}[\cite{rossman_entropy}]
    \label{t:ros-dt}
    If $f : \blt^n \to \blt$ is computable by an $\AC^0$ circuit of depth
    $d$ and size $s$,
    then 
    $\dtsize{f} = 
    2^{(1-{1/O(\log s)^{d-1}})n}$.
\end{theorem}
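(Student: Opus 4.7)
The plan is to iterate Håstad-style switching, but with an entropy/counting refinement that yields decision-tree \emph{size} rather than merely decision-tree \emph{depth}. Starting from a circuit $C \in \AC^0_d[s]$, I would first reduce to width-$w$ bottom DNF/CNF with $w = O(\log s)$, using the standard trick that any AND/OR gate of fan-in exceeding $w$ can be truncated by fixing one literal per wide bottom gate, incurring only a $2^{-w}$ hit per gate. This brings the circuit to a ``normalised'' depth-$d$ form with bottom fan-in $w$.

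Next, I would apply the switching lemma iteratively $d-2$ times with retention probability $p = \Theta(1/\log s)$. Each switching step converts a depth-$2$ sub-circuit of width $w$ into a shallow decision tree of depth $O(w)$, so after one restriction the bottom two levels collapse and we are left with a depth-$(d-1)$ circuit of the same width. A union bound over at most $s$ gates survives because $p \cdot w$ is a small constant. After $d-2$ iterations we have retained a $p^{d-2}$ fraction of variables and reduced to depth $2$; one final switching converts this to a shallow decision tree on a random subset of size $\approx p^{d-1}n = n/O(\log s)^{d-1}$.

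To convert these iterated depth-reductions into a global decision-tree \emph{size} bound for the unrestricted $f$, I would proceed as follows. Fix a partition of $[n]$ into blocks of size $m = n/O(\log s)^{d-1}$. The decision tree queries \emph{all} variables outside one block $B$, then handles the remaining $B$-variables via a sub-tree. The crucial point, supplied by the switching-lemma argument above with $p = |B|/n$, is that for a uniformly random assignment to $[n] \setminus B$, the residual function on $B$ has a canonical decision tree whose \emph{expected} size (over the outer assignment) is $O(1)$. Summing over the $2^{n-m}$ outer assignments therefore gives total decision-tree size $2^{n-m} \cdot O(1) = 2^{n - n/O(\log s)^{d-1}}$, which is the claimed bound.

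The principal obstacle is the last step: the naive switching lemma only provides a tail bound of $(Cpw)^t$ on the canonical tree depth being $\geq t$, which controls the expected \emph{depth} but not directly the expected \emph{size} of the canonical tree after all the layers have been switched. To make the averaging tight, one needs Rossman's entropy refinement, which shows that the expected $\log$-size (equivalently, the Shannon entropy of the leaf distribution) of the canonical decision tree of $F|_\rho$ is at most a constant times $pw \cdot |\mathrm{vars}(F)|$ rather than the weaker $|\mathrm{vars}(F)|$ bound coming from Markov-ing the tail estimate. With that quantitative strengthening the parameters align exactly, and the iterated application across $d-1$ layers accumulates multiplicatively to yield exponent $d-1$ in $\log s$. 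Without the entropy refinement, a direct union bound would lose a logarithmic factor per layer and only give $2^{n - n/O(\log s)^d}$.
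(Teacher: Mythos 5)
The paper states this theorem purely as a citation to Rossman's work and does not reproduce the proof, so there is no internal argument to compare against; I assess your sketch on its own merits.

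Your skeleton is right: normalise bottom fan-in to $w = O(\log s)$, restrict so that a block $B$ of $m = n/O(\log s)^{d-1}$ variables remains free, and assemble the final decision tree by querying all of $[n]\setminus B$ and attaching, at each of the $2^{n-m}$ leaves, the canonical subtree for the corresponding restriction of $f$. This correctly reduces the task to showing $\mathbb{E}_\rho\big[\dtsize{f|_\rho}\big]=O(1)$ (or $\poly(s)$, which is absorbed into the $O(\cdot)$ in the exponent).

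Where the argument breaks is the justification of that last bound. You invoke an ``entropy refinement'' giving $\mathbb{E}_\rho[\log \dtsize{f|_\rho}] \le O(pw\cdot|\mathrm{vars}|)$ and treat it as interchangeable with what step 5 needs. It is not: Jensen's inequality gives $\mathbb{E}[\log X]\le \log\mathbb{E}[X]$, so a bound on $\mathbb{E}[\log X]$ is strictly \emph{weaker} than a bound on $\mathbb{E}[X]$, and your parenthetical ``equivalently, the Shannon entropy of the leaf distribution'' conflates two genuinely different quantities. The quantitative form is also off: for a single width-$w$ DNF with $pw$ a small constant, H{\aa}stad's switching lemma already gives a geometric tail $\Pr_\rho[\text{canonical DT depth of } F|_\rho \ge t]\le (O(pw))^t$, hence $\mathbb{E}_\rho[\log\dtsize{F|_\rho}]\le \mathbb{E}_\rho[\text{DT depth}]=O(1)$, far stronger than $O(pw\cdot|\mathrm{vars}|)$, so the refinement you describe cannot be the bottleneck. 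What actually closes the gap is a clean exponential \emph{tail} bound on the canonical decision-tree depth of the full depth-$d$ circuit after one restriction, $\Pr_\rho[\text{DT depth of } f|_\rho\ge t]\le (p\lambda)^t$ with $\lambda=O(\log s)^{d-1}$ and, crucially, no multiplicative factor of $s$ in front. Given such a tail, $\mathbb{E}_\rho[\dtsize{f|_\rho}]\le \mathbb{E}_\rho\big[2^{\text{DT depth}}\big]\le \sum_{t\ge 0}(2p\lambda)^t=O(1)$ once $p<1/(2\lambda)$, which is exactly your parameter regime. Establishing this tail with exponent $d-1$ rather than $d$, and without the union-bound loss across the $d-1$ layers of iterated switching, is the substantive content of the theorem; it is precisely what the multi-switching/criticality machinery is for, and your sketch does not engage with it.
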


\begin{theorem}
    \label{thm:ros-dt-mdnf-intro}
    Let $s : \bbn \to \bbn$ and $d \geq 1$.
    We have
    $\semMon{\AC^0_d[s]} 
    \sseq 
    \mSIZE[t]$, 
    where
    $t = n \cdot 2^{n(1-{1/O(\log s)^{d-1}})}$.
    Moreover, this upper bound is achieved by monotone $\mathsf{DNFs}$ of 
    size $t/n$.
\end{theorem}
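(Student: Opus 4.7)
The plan is to chain together the decision-tree size upper bound of Rossman (\Cref{t:ros-dt}) with the DNF-monotonisation result of Quine (\Cref{t:quine_dnf}).

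Suppose $f : \blt^n \to \blt$ is monotone and computed by an $\AC^0$ circuit of depth $d$ and size $s$. By \Cref{t:ros-dt}, $f$ admits a decision tree $T$ of size $\dtsize{f} \leq 2^{(1 - 1/O(\log s)^{d-1}) n}$. Next, I would convert $T$ into a DNF in the standard way: for each $1$-leaf $\ell$ of $T$, include a conjunction $C_\ell$ of the at most $n$ literals read along the root-to-$\ell$ path. Each input $x$ follows a unique root-to-leaf path, so exactly one term $C_\ell$ can be satisfied by $x$, and $f(x) = 1$ iff the corresponding leaf is a $1$-leaf. Hence this yields a DNF computing $f$ with at most $\dtsize{f}$ terms, each of width at most $n$, giving an overall circuit of size at most $n \cdot \dtsize{f}$.

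Since $f$ is monotone, \Cref{t:quine_dnf} applies directly to this DNF: deleting every negative literal from every term yields a monotone DNF computing the same function $f$, without changing the number of terms. This produces a monotone DNF of size $\dtsize{f} \leq 2^{(1 - 1/O(\log s)^{d-1}) n}$, implemented as a monotone circuit of size at most $n \cdot 2^{(1 - 1/O(\log s)^{d-1}) n} = t$. This proves both statements in the theorem.

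There is essentially no technical obstacle here, as both ingredients are black-box; the only point worth double-checking is the bookkeeping, namely that the decision-tree-to-DNF construction preserves the function exactly (so that Quine's lemma can be applied to the resulting DNF), and that the $n$ factor accounts for the width of each term. I would therefore present the argument briefly, citing \Cref{t:ros-dt} and \Cref{t:quine_dnf}, and remark explicitly that the moreover clause follows because the final object is a monotone DNF whose number of terms is bounded by $\dtsize{f}$.
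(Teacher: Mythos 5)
Your proof is correct and takes exactly the route the paper does: apply Rossman's decision-tree size bound (\Cref{t:ros-dt}), convert the decision tree to a DNF of the same size, and monotonise it with Quine's lemma (\Cref{t:quine_dnf}), accounting for term width with a factor of $n$ when converting to a circuit. You simply spell out the standard decision-tree-to-DNF step a bit more explicitly than the paper does.
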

\begin{proof}
    Let $f$ be a monotone function
    computable
    by an $\AC^0$ circuit of depth
    $d$ and size $s$.
    By \Cref{t:ros-dt},
    there exists a decision tree of size
    $2^{(1-{1/O(\log s)^{d-1}})n}$
    computing $f$.
    Therefore, there exists a $\DNF$ of the same size
    computing $f$,
    which can be taken to be monotone by
    by \Cref{t:quine_dnf}.
    This can be converted into a monotone circuit of size
    $n \cdot 2^{(1-{1/O(\log s)^{d-1}})n}$.
\end{proof}

We observe that it is possible to immediately deduce an $\AC^0$ lower bound
for $\maj$ using this simulation theorem.
Even though near-optimal lower bounds for $\maj$
have been known for a long time~\cite{DBLP:conf/stoc/Hastad86} 
and the proof of the main technical tool (\Cref{t:ros-dt}) behind our simulation result is
similar to the one used by~\cite{DBLP:conf/stoc/Hastad86},
the argument below illustrates how a monotone simulation can lead to non-monotone circuit lower bounds.

\begin{corollary}
    \label{cor:maj-lb}
    Any depth-$d$ $\AC^0$ circuit computing $\maj$
    has size $2^{\Omega((n/\log n)^{1/(d-1)})}$.
\end{corollary}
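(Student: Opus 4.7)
The plan is to combine the monotone simulation from Theorem~\ref{thm:ros-dt-mdnf-intro} with the well-known fact that the majority function has a large minimal monotone DNF. First I would recall that $\maj_n$ is a monotone function whose minterms are precisely the sets of $\lceil n/2 \rceil$ variables, so its unique minimal monotone DNF has $\binom{n}{\lceil n/2 \rceil} = \Theta(2^n/\sqrt{n})$ terms. By Quine's result (Theorem~\ref{t:quine_dnf}), any DNF computing $\maj_n$ must in fact contain all of its prime implicants, so this bound applies to \emph{every} monotone DNF computing $\maj_n$.

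Now suppose for contradiction that $\maj_n$ is computed by an $\AC^0$ circuit of depth $d$ and size $s$. Since $\maj_n$ is monotone, Theorem~\ref{thm:ros-dt-mdnf-intro} yields a monotone DNF of size at most $2^{n(1-1/c(\log s)^{d-1})}$ for some absolute constant $c > 0$. Combining this with the lower bound on monotone DNFs for $\maj_n$ gives
\begin{equation*}
\frac{2^n}{\sqrt{n}} \;\leq\; 2^{n\left(1-\tfrac{1}{c(\log s)^{d-1}}\right)}.
\end{equation*}
Taking logarithms and rearranging, this becomes $\tfrac{n}{c(\log s)^{d-1}} \leq \tfrac{1}{2}\log n$, i.e.\ $(\log s)^{d-1} = \Omega(n/\log n)$, and hence $s = 2^{\Omega((n/\log n)^{1/(d-1)})}$ as claimed.

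There is no real obstacle here: the argument is a one-line calculation once Theorem~\ref{thm:ros-dt-mdnf-intro} is available. The only point that deserves a sentence of justification is why the minimal monotone DNF of $\maj_n$ has size $\binom{n}{\lceil n/2 \rceil}$, which follows because every minterm of a monotone function must be a term of every monotone DNF representing it, and the minterms of $\maj_n$ are exactly the middle-slice subsets. The rest is bookkeeping of the $\log$'s to recover the stated exponent $(n/\log n)^{1/(d-1)}$.
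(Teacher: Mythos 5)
Your argument matches the paper's proof of \Cref{cor:maj-lb} essentially line for line: count the $\binom{n}{\lceil n/2\rceil}$ minterms of $\maj$, note that every monotone DNF must contain them all, feed this into the $2^{n(1-1/O(\log s)^{d-1})}$ upper bound of \Cref{thm:ros-dt-mdnf-intro}, and solve for $s$. One small inaccuracy in your write-up: the fact that every monotone DNF for $\maj$ contains all $\binom{n}{\lceil n/2\rceil}$ minterms as terms is not a consequence of Quine's theorem (\Cref{t:quine_dnf}, which is about removing negative literals from a DNF computing a monotone function); it is the elementary observation that for any monotone $f$, each minterm of $f$ must appear as a term of every monotone DNF for $f$, since a strict subterm would accept an input strictly below that minterm.
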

\begin{proof}
    Note that $\maj$ has $\binom{n}{n/2} = \Omega(2^n/\sqrt{n})$ minterms.
    Therefore, any monotone $\DNF$ computing $\maj$ has
    size at least $\Omega(2^n/\sqrt{n})$.
    By \Cref{thm:ros-dt-mdnf-intro}, it follows that the size $s$ of a depth-$d$
    $\AC^0$ computing $\maj$ satisfies the following inequality:
    \begin{equation*}
        2^{n(1-{1/O(\log s)^{d-1}})}
        =
        \Omega(2^{n-\frac{1}{2}\log n}).
    \end{equation*}
    From this equation we obtain
    $s = 2^{\Omega((n/\log n)^{1/(d-1)})}$.
\end{proof}

\subsection{Non-monotone lower bounds
from monotone simulations}

We now show that if monotone circuits are able to
efficiently simulate non-monotone circuits computing monotone Boolean
functions,
then striking complexity separations follow.
We also show a result of this kind for simulations of graph properties.
We first prove a lemma connecting the simulation of $\AC^0$ circuits
with the simulation of $\NL$ machines.

\begin{lemma}
    \label{thm:nc1-ac0-size}
    For all constants $\eps > 0$ and $C \geq 1$, 
    if 
    $\semMon{\AC^0} 
    \sseq \mSIZE[2^{O((\log n)^C)}]$, 
    then
    $\semMon{\NL} \sseq \mSIZE[2^{o(n^\eps)}]$.
\end{lemma}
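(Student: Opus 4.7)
The plan is to argue by contrapositive-free padding: starting from an arbitrary monotone function in $\NL$, we inflate the input length so that the function becomes computable by polynomial-size (in the new length) $\AC^0$ circuits, apply the hypothesised monotone simulation of $\semMon{\AC^0}$, and then translate the resulting monotone upper bound back to the original input length by fixing the pad to $0$.

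Concretely, fix a monotone $f \in \NL$ on $n$ bits. Pick a small parameter $\delta > 0$ (to be tuned), and invoke \Cref{lemma:ac0_nc1_sim} to obtain an $\AC^0$ circuit computing $f$ of size $2^{n^{\delta}}$ and some constant depth $d = d(\delta)$. Let $m := 2^{n^{\delta}}$ and define the padded function $g \colon \blt^n \times \blt^m \to \blt$ by $g(x,y) := f(x)$. Then $g$ is monotone (as $f$ is), and the same $\AC^0$ circuit computes $g$, which has size $m = N^{O(1)}$ in the input length $N := n + m = \Theta(2^{n^{\delta}})$. Hence $g \in \semMon{\AC^0}$.

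Applying the hypothesis $\semMon{\AC^0} \sseq \mSIZE[2^{O((\log n)^C)}]$ to $g$ (viewed on $N$ inputs) yields a monotone circuit of size $2^{O((\log N)^C)}$ computing $g$. Since $f(x) = g(x, 0^m)$, substituting the constant $0$ into each of the $m$ pad wires produces a monotone circuit for $f$ of the same size. Using $\log N = \Theta(n^{\delta})$, this size is $2^{O(n^{\delta C})}$.

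It remains to choose $\delta$. Given the target exponent $\eps > 0$, set $\delta := \eps/(2C)$; then $\delta C = \eps/2$, so the monotone circuit for $f$ has size $2^{O(n^{\eps/2})} = 2^{o(n^{\eps})}$, as required. There is no real obstacle here beyond bookkeeping: the only subtlety is that the constant depth of the $\AC^0$ simulation supplied by \Cref{lemma:ac0_nc1_sim} depends on $\delta$, but the hypothesis is quantified over all constant-depth circuits, so this is harmless.
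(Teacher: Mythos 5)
Your proof is correct and is essentially the paper's argument run in the forward direction: the paper proves the contrapositive (assuming a hard monotone function in $\NL$ and padding it to produce a hard monotone function in $\AC^0$), while you start from the hypothesised monotone simulation for $\AC^0$ and apply it to the padded function, then restrict the pad back to $0^m$; both use the same padding $g(x,y)=f(x)$ with $m=2^{n^\delta}$, the same choice $\delta=\eps/(2C)$, and the same guess-and-verify depth-reduction lemma (\Cref{lemma:ac0_nc1_sim}). The two presentations are interchangeable, and your remark that the depth constant depending on $\delta$ is harmless because $\AC^0$ quantifies over all constant depths is exactly the right thing to note.
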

\begin{proof}
    We prove the contrapositive.
    Suppose that there exists $\eps > 0$
    such that
    $\semMon{\NL} \not\sseq \mSIZE[2^{o(n^\eps)}]$.
    This means that there exists a monotone function $f$
    such that 
    $f \in \NL$
    and any monotone circuit computing $f$ has size $2^{\Omega(n^\eps)}$.

    Let $\delta = \eps/(2C)$ and let $m = 2^{n^{\delta}}$.
    Let $g : \blt^{n} \times \blt^{m} \to \blt$
    be the Boolean function defined as
    $g(x,y) = f(x)$.
    Note that $g$ is a function on $N := m + n = 2^{\Theta(n^{\delta})}$
    bits.
    By \Cref{lemma:ac0_nc1_sim},
    there exists an $\AC^0$ circuit computing
    $f$ of size 
    $2^{n^\delta} = N^{O(1)}$.
    Moreover, any monotone circuit computing $g$
    has size
    $2^{\Omega(n^\eps)} = 2^{\Omega((\log N)^{\eps / \delta})} =
    2^{\Omega((\log N)^{2C})}$.
\end{proof}

Next, we recall the strongest known monotone circuit 
and formula
lower 
bounds
for a
monotone function in $\NP$. 

\begin{theorem}[\cite{DBLP:conf/stoc/PitassiR17}]
    \label{thm:pr17-fnp}
     $\semMon{\NP} \not\sseq \mDEPTH[o(n)]$.
\end{theorem}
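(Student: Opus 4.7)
The plan is to invoke the Karchmer--Wigderson (KW) theorem, which identifies the monotone circuit depth $\mdepth{f}$ of a monotone Boolean function $f$ with the deterministic communication complexity of the KW search game: Alice holds $x \in f^{-1}(1)$, Bob holds $y \in f^{-1}(0)$, and the players must jointly name a coordinate $i$ such that $x_i = 1 > 0 = y_i$. It therefore suffices to exhibit a monotone $f \in \NP$ on $n$ inputs whose KW game requires $\Omega(n)$ bits of communication.

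A concrete route goes through lifting from proof complexity. First, fix an unsatisfiable CNF $F$ on $m$ variables (for instance a Tseitin contradiction on a bounded-degree expander) whose falsified-clause search problem $\mathrm{Search}(F)$ has decision-tree depth $\Omega(m)$; such lower bounds are classical. Next, compose $F$ with a constant-size two-party gadget $g$ (an index or inner-product gadget over a constant alphabet $\mathcal{X}$) and invoke a deterministic query-to-communication lifting theorem, so that the deterministic communication complexity of the lifted search relation $\mathrm{Search}(F) \circ g^m$ is $\Omega(m)$. Finally, embed this relation as the KW game of a monotone Boolean function $f$ whose coordinates correspond to the possible lifted constraint applications, analogously to the family $\CSPf{S}$ discussed in~\Cref{sec:intro_mon_csp} but on a richer lifted domain. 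Because $g$ has constant alphabet, $n = \Theta(m)$; monotonicity holds by construction; and the KW game for $f$ dominates $\mathrm{Search}(F) \circ g^m$, so $\mdepth{f} = \Omega(n)$.

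The principal obstacle is ensuring that $f$ lies in $\NP$ rather than merely in $\mathsf{coNP}$. A naive ``accept-when-unsatisfiable'' function would fall on the wrong side of this dichotomy, so the lifted coordinates must be arranged so that $1$-inputs carry a short combinatorial witness of structural hardness. The Pitassi--Robere construction achieves this by choosing a \emph{Generation}-style encoding whose accepting instances admit an efficiently verifiable witness that exhibits the lifted inconsistency locally, thereby placing $f$ in $\NP$ while preserving the KW structure of the lifted relation. With this ingredient in place, the KW game for $f$ inherits the $\Omega(n)$ communication lower bound from the lifting step, and the theorem follows by the Karchmer--Wigderson correspondence.
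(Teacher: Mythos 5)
The paper treats \Cref{thm:pr17-fnp} as a black-box citation to Pitassi and Robere (STOC~2017) and gives no proof, so the only question is whether your reconstruction is sound. It is not: the crux of your argument is the step where you ``compose $F$ with a constant-size two-party gadget~$g$ \ldots and invoke a deterministic query-to-communication lifting theorem.'' No such theorem is available. Deterministic query-to-communication lifting in the style of Raz--McKenzie and G\"o\"os--Pitassi--Watson, as well as the later refinements that power the CSP-based monotone lower bounds used elsewhere in this paper, all require a gadget whose size grows at least logarithmically in the number of lifted variables; with such a gadget, the resulting function lives on $N = m\cdot\mathrm{polylog}(m)$ (or, in a CSP-style encoding, $N = \mathrm{poly}(m)$) bits, and the communication bound $\Omega(m)$ translates only into $\mdepth{f} = N^{1-o(1)}$ or $N^{\Omega(1)}$, never $\Omega(N)$. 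So the gadget-size assumption you make is exactly the one that fails, and it is load-bearing: without it the bound degrades to a sublinear power of $n$.

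The actual Pitassi--Robere route is qualitatively different and is precisely what makes a constant-size gadget viable. They do not lift decision-tree depth to deterministic communication complexity; instead they lift \emph{Nullstellensatz degree} to the \emph{Razborov rank measure}, an algebraic quantity that lower-bounds monotone formula size directly. The rank-based analysis tolerates constant gadgets because it tracks algebraic dimension rather than combinatorial rectangle structure. Relatedly, the base formula cannot simply be Tseitin over an expander: Tseitin over $\mathbb{F}_2$ has low Nullstellensatz degree over $\mathbb{F}_2$, so it does not supply the linear algebraic hardness needed on the query side. Your instinct about the $\NP$-versus-$\mathsf{coNP}$ issue and the role of a GEN-style encoding is a reasonable signpost, but it does not rescue the lifting step. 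To repair the proof you would have to replace the deterministic query-to-communication lifting with the Nullstellensatz-to-rank-measure lifting of Pitassi and Robere (and accordingly replace Tseitin with a formula of linear Nullstellensatz degree), which is a genuinely different mechanism and not a presentational variant of what you wrote.
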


\begin{theorem}[\cite{CKR20}]
    \label{thm:ckr20}
     
    $\semMon{\NP} \not\sseq \mSIZE[2^{o(\sqrt{n}/\log n)}]$.
\end{theorem}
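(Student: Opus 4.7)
The plan is to derive the separation via the lifting paradigm employed in \cite{gkrs19}, starting from an unsatisfiable CSP with a strong Resolution width lower bound and composing it with a small communication gadget, then transporting the resulting hard monotone function into $\NP$.

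Concretely, I would first fix a finite set $S$ of Boolean relations and an unsatisfiable $S$-formula $F$ on $m$ variables whose Resolution refutations require width $\Omega(m)$; Tseitin contradictions on constant-degree expander graphs furnish a canonical choice, with the needed bound following from classical expansion arguments. I would then apply the lifting theorem of \cite{gkrs19} using a gadget of size $\Theta(\log m)$ per variable, producing a monotone function $f$ on $N = \widetilde{\Theta}(m^2)$ bits whose monotone circuit complexity is $2^{\Omega(m)} = 2^{\Omega(\sqrt{N}/\log N)}$, matching the target exponent.

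The main obstacle, and the step I expect to be the most delicate, is placing the resulting function in $\NP$ rather than merely in $\coNP$: the natural $\CSPf{S}$-style encoding detects \emph{unsatisfiable} formulas and therefore lives in $\coNP$, and by \Cref{thm:cspsat-intro} one cannot avoid this issue by clever choice of $S$ within the $\CSPf{S}$ family itself. The fix is to embed the lifted hard function into a monotone $\NP$ problem through a monotone projection (or monotone $\AC^0$-reduction) that preserves the exponential monotone hardness, for instance by encoding the lifted instance into a clique- or colouring-style detection problem whose positive instances admit a short nondeterministic certificate. Equivalently, one can work directly with a TFNP-style search problem associated to the lifted Tseitin instance and verify that its decision version has a polynomial-time nondeterministic verifier whose accepting set is monotone and inherits the lifted lower bound.

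Finally, I would check that the overall reduction is implementable by a polynomial-time nondeterministic machine, so that the constructed monotone function indeed lies in $\NP$, and that the monotone circuit lower bound survives the embedding up to constant factors in the exponent. The $\log n$ loss in the final bound is an artefact of the gadget composition used in the lifting step and appears to be intrinsic to current lifting technology; improving it would require sharper lifting theorems with sub-logarithmic gadget overhead.
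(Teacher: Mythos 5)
The paper does not prove this statement; it is cited verbatim from~\cite{CKR20} and used as a black box, so there is no in-paper argument to compare against. Your proposal is a genuinely different route: \cite{CKR20} proves the bound via the approximation method combined with the robust sunflower lemma (in the Razborov--Rossman tradition), applied to a monotone function built from a random DNF that is directly in $\NP$ by construction; your plan instead goes through query-to-communication lifting of Tseitin contradictions, the framework of~\cite{gkrs19}.

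The difficulty is quantitative, and I believe it is fatal to the plan as written. Lifting theorems of the kind used in~\cite{gkrs19} compose the base formula with a gadget whose size must be at least polylogarithmic, and in the Raz--McKenzie/GPW regime actually used there it is polynomial, in the number $m$ of base variables. Moreover, to turn the lifted two-party search problem into a \emph{total monotone function}, \cite{gkrs19} pass through the $\CSPf{S}$ encoding, which for arity-$3$ relations (as in $3$-XOR) blows the input up to $N = \Theta\bigl((m\cdot g)^3\bigr)$ bits when the gadget has size $g$. Even in the most optimistic hypothetical of an $O(\log m)$-size gadget, this gives $N = \widetilde{\Theta}(m^3)$ and hence a lower bound of only $2^{\Omega(N^{1/3}/\log N)}$; with the gadgets that lifting theorems actually support, the exponent is an unspecified small $\varepsilon > 0$, which is what~\cite{gkrs19} report. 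Your arithmetic posits $N = \widetilde{\Theta}(m^2)$, which is not achieved by any known lifting-plus-encoding pipeline, and in fact the $\sqrt{n}/\log n$ exponent is precisely what makes the \cite{CKR20} bound hard to obtain: it sits beyond what the polynomial (or even polylogarithmic) gadget overhead of current lifting technology allows. Obtaining lifting with gadget overhead small enough to reach this regime is a recognized open problem, so the central step of your proposal is not available.

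A secondary remark: the $\coNP$ worry is somewhat of a red herring for the $\CSPf{S}$ framework. For the specific $S = \{\oplus^0_3,\oplus^1_3\}$ of~\cite{gkrs19}, deciding unsatisfiability of an $\mathbb{F}_2$-linear system is in $\oplus\L \subseteq \P \subseteq \NP$, so $\txorsat$ is already in $\NP$ without any embedding; and for $S$ where $\CSPf{S}$ is genuinely $\coNP$-hard (e.g.\ $3$-SAT relations), \Cref{thm:cspsat-intro} tells you nothing that would help — the issue there is not $\NP$ membership but the gadget overhead above. The ad hoc clique/colouring embedding you sketch is not needed for $\NP$ membership and does not repair the exponent.
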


We are now ready to state and prove our first result regarding
new complexity separations from monotone simulations. Recall that obtaining explicit lower bounds against depth-$3$ $\AC^0$  circuits of size $2^{\omega(n^{1/2})}$ is a major challenge in circuit complexity theory, while the best lower bound on the size of depth-4 $\AC^0$ circuits
computing a function in $\NP$ is
currently $2^{\Omega(n^{1/3})}$~\cite{DBLP:conf/stoc/Hastad86}.
Moreover, no strict separation is known in the following sequence of inclusions
of complexity classes:
$\ACC \sseq \TC^0 \sseq \NC^1 \sseq \L \sseq \NL \sseq \pL \sseq \NC^2$.
We show that 
efficient monotone simulations would bring new results in both of these
fronts. 
(We stress that all lower bound consequences appearing below refer to
separations against non-uniform circuits.)\footnote{In other words, all upper bounds are
\emph{uniform}, but the lower bounds hold even for \emph{non-uniform}
circuits. Note that this is stronger than lower bounds for uniform
circuits.}

\begin{theorem}
    \label{cor:xor-sat-ag}
    Let $\calc$ be 
    a class of circuits.
    There exists $\eps > 0$ such that
    the following holds:
    \begin{enumerate}
        \item 
            \label{item:ac03}
            If $\semMon{\AC^0_3} \subseteq \mNC^1$,
            then 
            $\NP 
            \not\sseq
            \AC^0_3[2^{o(n)}]$.
        \item 
            \label{item:ac04}
            If 
            $\semMon{\AC^0_4} \subseteq \mSIZE[\poly]$,
            then 
            $\NP 
            \not\sseq
            \AC^0_4[ 2^{o( \sqrt{n}/\log n )} ]$.
        \item 
            \label{item:subexp}
            If $\semMon{\calc} \sseq \mSIZE[2^{O(n^\eps)}]$,
            then 
            $\NC^2 
            \not\sseq
            \calc$.
        \item 
            \label{item:ac-poly}
            If 
            $\semMon{\AC^0} 
            \sseq 
            \mSIZE[\poly]$,
            then
            $\NC^2 
            \not\sseq
            \NC^1$.
    \end{enumerate}
\end{theorem}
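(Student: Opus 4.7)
The plan is to prove each item by contrapositive: assume that the claimed non-monotone lower bound conclusion fails and combine the resulting collapse with the hypothesised monotone simulation to contradict a known unconditional monotone lower bound. A padding step will then convert a quantitatively strong hardness into a separation at a conveniently chosen padded input length. The global constant $\eps$ in the theorem statement is taken strictly smaller than the constant $\eps_0$ appearing in the GKRS lower bound of~\Cref{thm:xor-sat}.

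For item (3) the argument is essentially direct: assuming $\NC^2 \sseq \calc$, we have $\txorsat \in \semMon{\NC^2} \sseq \semMon{\calc}$ by~\Cref{thm:xor-sat}, and the simulation hypothesis puts $\txorsat \in \mSIZE[2^{O(n^\eps)}]$; for $\eps < \eps_0$ this contradicts the monotone lower bound $\mSIZE(\txorsat) = 2^{\Omega(n^{\eps_0})}$. Item (4) is reduced to this via a guess-and-verify simulation. Assuming $\NC^2 \sseq \NC^1$, we get $\txorsat \in \NC^1 \sseq \NL$, so~\Cref{lemma:ac0_nc1_sim} gives an $\AC^0$ circuit for $\txorsat$ of size $2^{n^\delta}$ for every $\delta > 0$. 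Padding $\txorsat$ with dummy variables to length $N = 2^{n^\delta}$ produces a monotone $F \in \AC^0$ with $\mSIZE(F) \geq \mSIZE(\txorsat) = 2^{\Omega((\log N)^{\eps_0/\delta})}$, which is $N^{\omega(1)}$ once $\delta < \eps_0$, contradicting $\semMon{\AC^0} \sseq \mSIZE[\poly]$. (Equivalently, \Cref{thm:nc1-ac0-size} already packages this deduction.)

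For items (1) and (2) we replace $\txorsat$ by a suitable monotone function in $\NP$. For (1), fix a monotone $f \in \NP$ with $\mdepth{f} = \Omega(n)$ from~\Cref{thm:pr17-fnp}. If $\NP \sseq \AC^0_3[2^{o(n)}]$, then $f$ has $\AC^0_3$ circuits of size $2^{o(n)}$. Pad $f$ to $N = 2^{\delta n}$ bits: the resulting monotone $F$ lies in $\AC^0_3[\poly(N)]$, and since any monotone circuit for $F$ restricts (by setting the dummy variables to $0$) to a monotone circuit for $f$ of no greater depth, $\mdepth{F} \geq \mdepth{f} = \Omega(n) = \Omega((\log N)/\delta)$. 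For $\delta$ sufficiently small this exceeds the $O(\log N)$ depth allowed by $\mNC^1$, contradicting the simulation hypothesis. Item (2) is analogous, using~\Cref{thm:ckr20} to pick $f \in \semMon{\NP}$ with $\mSIZE(f) = 2^{\Omega(\sqrt n/\log n)}$; writing the hypothesised $\AC^0_4$ upper bound as $2^{\ell(n)}$ with $\ell(n) = o(\sqrt n/\log n)$, we pad to $N = 2^{\phi(n)}$ for a function $\phi$ satisfying $\ell(n) \leq \phi(n) = o(\sqrt n/\log n)$ (for instance $\phi(n) = \sqrt{\ell(n) \cdot \sqrt n/\log n}$), so that $F \in \AC^0_4[\poly(N)]$ while $\mSIZE(F) \geq 2^{\Omega(\sqrt n/\log n)} = 2^{\omega(\phi(n))} = N^{\omega(1)}$, contradicting $\semMon{\AC^0_4} \sseq \mSIZE[\poly]$.

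The main technical point across all four items is calibrating the padding so that the assumed sub-exponential circuit upper bound becomes polynomial in $N$ while the monotone lower bound remains strictly super-polynomial (respectively, super-$2^{O(n^\eps)}$) in $N$. The necessary slack is provided either by the gap between $\eps$ and the GKRS exponent $\eps_0$ (items 3 and 4), or by the strict $o(\cdot)$ in the hypothesised circuit size bound against $\NP$ (items 1 and 2). Monotonicity of the padded function is automatic, since we only append unused input variables to a monotone function.
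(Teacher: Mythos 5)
Your overall strategy---argue by contraposition, pad the known hard monotone function so that the assumed subexponential circuit upper bound becomes polynomial in the new input length, then derive a contradiction with the unconditional monotone lower bound after restricting back---is exactly the paper's. Items (2), (3), and (4) are argued correctly, and in particular your choice of padding length in (2), namely $N = 2^{\phi(n)}$ with $\phi(n) = o(\sqrt{n}/\log n)$ and $\phi(n) \geq \ell(n)$, is calibrated properly.

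Item (1), however, contains a parameter error that invalidates the conclusion as written. You pad $f$ to $N = 2^{\delta n}$ bits for a \emph{fixed constant} $\delta > 0$, and then argue that $\mdepth{F} \geq \mdepth{f} = \Omega(n) = \Omega((1/\delta)\log N)$ contradicts $F \in \mNC^1$ ``for $\delta$ sufficiently small.'' But for any fixed $\delta$, the quantity $(1/\delta)\log N$ is still $\Theta(\log N)$, and the multiplicative constant $1/\delta$ is simply absorbed into the big-$O$ in the definition of $\mNC^1$. Equivalently, a monotone formula for $F$ of size $N^{c_0/\delta}$ is perfectly polynomial in $N$, so the hypothesised simulation is not contradicted. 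The fix---which you in fact carry out correctly in item (2) via $\phi(n)$, and which the paper implements with $m = 2^{n/(10\alpha(n))}$ for a function $\alpha(n) \to \infty$ extracted from the assumed $2^{o(n)}$ size bound---is to pad to $N = 2^{o(n)}$, so that $n = \omega(\log N)$ and hence $\mdepth{F} = \omega(\log N)$, which genuinely rules out $F \in \mNC^1$. In short, the padding length must shrink \emph{as a function of $n$} at the same subexponential rate as the assumed $\AC^0_3$ circuit size, not merely be governed by a small fixed constant.
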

\begin{proof}
    We will prove each item separately.
    \begin{enumerate}[label=\emph{Proof of (\arabic*)}., leftmargin=*]
        \item 
            Let us assume that
            $\semMon{\AC^0_3} \subseteq \mNC^1$.
            Let $f$ be the function of Theorem~\ref{thm:pr17-fnp}.
            For a contradiction,  suppose that 
            $f \in 
            \AC^0_3[2^{o(n)}]$.
            Let $\alpha : \bbn \to \bbn$ be such that
            $\alpha(n) \to_n \infty$
            and
            $f$ has a depth-3 $\AC^0$ circuit
            of size
            $2^{n/\alpha}$.
            Let $m = 2^{n/(10\cdot\alpha)}$
            and
            let $g : \blt^n \times \blt^m \to \blt$
            be the function
            $g(x,y) = f(x)$.
            Let $N = n+m = (1+o(1))2^{n/(10\cdot\alpha)}$.
            Clearly, the function $g$ has 
            a depth-3 $\AC^0$ circuit 
            of size $2^{n/\alpha} = N^{O(1)}$.
            Since $g$ is monotone,
            we conclude from the assumption 
            that $g$ is computed by a polynomial-size monotone formula.
            Now, since $f(x) = g(x,1^{m})$,
            we obtain a monotone formula of size
            $N^{O(1)} = 2^{o(n)}$ for computing $f$,
             which contradicts the lower bound of
             Theorem~\ref{thm:pr17-fnp}.
        \item 
            Similar to the proof of item~(\ref{item:ac03}),
            but using \Cref{thm:ckr20} instead.
        \item 
            Suppose that $\NC^2 \sseq \calc$.
            By \Cref{thm:xor-sat}, there exists a monotone function $f \in \NC^2$
            on $n$ bits
            and a number $\eps > 0$
            such that
            $f \notin \mSIZE[2^{o(n^{\eps})}]$.
            Therefore, for any $\delta > 0$ such that $\delta < \eps$,
            we have $f \notin \mSIZE[2^{O(n^\delta)}]$.
            Since, by assumption, we have $f \in \NC^2 \sseq \calc$,
            we obtain
            $\semMon{\calc} \not\sseq \mSIZE[2^{O(n^\delta)}]$.
        \item 
            If $\NC^2 \sseq \NC^1$, then,
            by 
            item (\ref{item:subexp}),
            we get
            $\semMon{\NC^1} \not\sseq \mSIZE[2^{o(n^\eps)}]$.
            From \Cref{thm:nc1-ac0-size},
            we obtain
            $\semMon{\AC^0} 
            \not\sseq \mSIZE[\poly]$.
            \qedhere
    \end{enumerate}
\end{proof}

As a motivation to 
the ensuing discussion,
we recall a result of
Rossman~\cite{rossman_2008},
who 
showed that any homomorphism-preserving
graph property computed by \(\AC^0\) circuits is also computed
by monotone \(\AC^0\) circuits.

\begin{theorem}[\cite{rossman_2008}]
    \label{t:ros08}
    $\AC^0
    \cap 
    \HomPres 
    \sseq \mDNF[\poly]$.
\end{theorem}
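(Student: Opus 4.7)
The plan is to prove $\AC^0 \cap \HomPres \sseq \mDNF[\poly]$ by combining the classical structural characterization of homomorphism-preserved properties with a bound on the size of minimal accepting graphs obtained from the $\AC^0$ upper bound. First, I would recall the basic structural fact: a graph property $f$ is preserved under homomorphisms if and only if there exists a family $\mathcal{H}$ of graphs (one may take $\mathcal{H}$ to be the $\to$-minimal elements of $f^{-1}(1)$) such that $f(G) = 1$ iff some $H \in \mathcal{H}$ admits a homomorphism into $G$. Given such a family, the candidate monotone DNF is
$$
D_{\mathcal{H}}(G) \;=\; \bigvee_{H \in \mathcal{H}} \;\bigvee_{\varphi \colon V(H) \to [n]} \;\bigwedge_{\{u,v\} \in E(H)} x_{\{\varphi(u),\varphi(v)\}},
$$
which has $\sum_H n^{|V(H)|}$ terms, each a conjunction of positive edge literals. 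One checks that $D_{\mathcal{H}} \equiv f$: a satisfied term exhibits a homomorphism $H \to G$ for some $H \in \mathcal{H}$, whence $f(G) = 1$ by hom-preservation, and conversely any accepting $G$ admits such a homomorphism by the defining property of $\mathcal{H}$.

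Second, the technical heart is to bound both $|\mathcal{H}|$ and $\max_{H \in \mathcal{H}} |V(H)|$ by a constant depending only on the depth and polynomial exponent of the $\AC^0$ circuit for $f$. Here I would invoke Rossman-style subgraph/homomorphism-detection lower bounds against $\AC^0$: if $H$ is a $\to$-minimal accepting graph on $k$ vertices, then taking $n$ large compared with $k$ and considering Erd\H{o}s--R\'enyi inputs $G(n,p)$ at the threshold $p \approx n^{-2/k}$, one compares such a random graph (which almost surely admits no homomorphism from $H$) with the same distribution perturbed to contain a planted copy of $H$. By hom-preservation and minimality of $H$, the circuit must reliably distinguish the two distributions, and this forces $k \leq c(d,s)$ by Rossman's subgraph-detection lower bound for $\AC^0_d$ circuits of size $s$. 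Since there are only finitely many isomorphism types of graphs on at most $c(d,s)$ vertices, and one may replace each $H \in \mathcal{H}$ by a single representative of its homomorphism-equivalence class, the family $\mathcal{H}$ itself can be taken of bounded size.

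Third, with $|\mathcal{H}|$ and $\max_{H} |V(H)|$ both $O(1)$, the DNF $D_{\mathcal{H}}$ has $n^{O(1)}$ terms and constant width, proving the desired inclusion. The main obstacle is the quantitative bounding step in the second paragraph: it is not a soft compactness argument, but requires the full subgraph-detection lower bound machinery (random restrictions, colour-coding style planted-distribution analysis) to translate ``circuit depth $d$ and polynomial size'' into ``cores of bounded order''. A secondary subtlety is reconciling hom-preservation (a statement about arbitrary finite input graphs) with circuit correctness (which only sees $n$-vertex graphs): this is handled by instantiating the preceding argument at a sufficiently large $n$ compared with any candidate minimal $H$, so that circuit failures on the planted-versus-unplanted distribution directly contradict the $\AC^0$ upper bound on $f$.
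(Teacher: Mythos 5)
The paper does not prove this statement: \Cref{t:ros08} is imported wholesale from Rossman's work on the finitary homomorphism preservation theorem, so there is no in-paper argument to compare against. Judged on its own terms, your sketch captures the right two-stage architecture of Rossman's proof: (i) homomorphism-closure gives the existential-positive normal form $f(G)=1 \iff \exists H\in\mathcal{H},\ H\to G$, with the obvious monotone DNF obtained by enumerating maps $V(H)\to[n]$; (ii) the nontrivial content is that the $\AC^0$ upper bound forces the $\to$-minimal accepting graphs to have bounded order, which is where the circuit lower bound for detecting planted subgraphs enters. You are also right that this cannot be a compactness argument -- failure of compactness in the finite is exactly what made the finite HPT hard -- and that one instantiates the contradiction at $n$ large relative to the candidate minimal $H$, reconciling the cross-size definition of $\HomPres$ with the fixed-$n$ circuit.

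Two points to tighten. First, $\to$ is only a preorder, so ``$\to$-minimal elements of $f^{-1}(1)$'' should be read as cores of a $\to$-antichain of accepting graphs; you gesture at this with ``homomorphism-equivalence class,'' but the cleanup deserves to be explicit since it is what makes $|\mathcal{H}|$ finite once $\max_{H}|V(H)|$ is bounded, not the other way around. Second, the planted-distribution step as written bounds each $H$ in isolation: to argue the unplanted $G(n,p)$ is rejected, you need $G(n,p)$ to a.a.s.\ contain \emph{no} $H'\in\mathcal{H}$ as a homomorphic image, which requires choosing $p$ below the threshold for every relevant $H'$ simultaneously; this is handled in Rossman's argument but is swept under the ``full machinery'' caveat in yours. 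Neither issue is fatal; the sketch is a faithful outline of the cited proof rather than a new route, and the genuinely hard ingredient (the $\AC^0$ subgraph-detection lower bound) is correctly identified as being invoked rather than re-derived.
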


This inspires the question of whether general graph properties
can also be efficiently simulated by monotone circuits.
We show that, if true,
such simulations would imply strong complexity separations.
Let us first recall an exponential monotone circuit lower bound for
monotone graph properties, and we will be ready to state and prove our main
result.

\begin{theorem}[\cite{DBLP:journals/combinatorica/AlonB87}]
    \label{thm:ab-graph}
    There exists $\eps > 0$
    such that
    $\semMon{\NP} \cap \GraphPpts \not\sseq \mSIZE[2^{o(n^\eps)}]$.
\end{theorem}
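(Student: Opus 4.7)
The plan is to witness the separation with the $k$-Clique problem for a suitably chosen $k = k(m)$ as a function of the vertex count $m$. Formally, let $\mathsf{Clique}_{k,m} \colon \blt^{\binom{m}{2}} \to \blt$ be the graph property that accepts the adjacency matrix of an $m$-vertex graph $G$ iff $G$ contains a clique of size $k$. This is manifestly monotone, manifestly a graph property (isomorphic graphs either both contain or both fail to contain a $k$-clique), and lies in $\NP$ because a size-$k$ vertex subset serves as a polynomial-size certificate whose verification only consults a polynomial number of edge variables.

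The Alon--Boppana theorem~\cite{DBLP:journals/combinatorica/AlonB87} sharpens Razborov's original clique bound and yields a constant $\delta > 0$ together with a choice of $k = k(m)$ (for instance $k(m) = \lfloor m^{1/4} \rfloor$, which lies inside the parameter window of the Alon--Boppana argument) such that
\[
    \Cmon{\mathsf{Clique}_{k(m),m}} \;\geq\; 2^{\Omega(m^{\delta})}.
\]
Denoting the input length by $n = \binom{m}{2} = \Theta(m^{2})$, this monotone lower bound reads $2^{\Omega(n^{\delta/2})}$, so setting $\eps := \delta/2$ delivers the claimed separation along the subsequence of input lengths of the form $\binom{m}{2}$.

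The argument is essentially extraction from Alon--Boppana combined with the translation between the vertex count $m$ and the adjacency-matrix input length $n$. Monotonicity, the graph-property condition, and $\NP$-membership are all immediate, and the parameter translation only affects the numerical value of $\eps$. The only minor bookkeeping is that a family $\{f_n\}_{n \geq 1}$ is required for every input length, whereas clique instances naturally live on input lengths $\binom{m}{2}$; this is handled by defining $f_n$ to be the clique instance on the largest $m$ with $\binom{m}{2} \leq n$ and ignoring (i.e., making insensitive) the remaining coordinates, which preserves monotonicity and the graph-property structure while leaving the lower bound intact up to a constant factor in $\eps$. In summary, there is no genuine obstacle: the theorem is a reinterpretation of the Alon--Boppana lower bound for clique in the correct parameter regime.
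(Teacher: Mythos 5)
Your proposal is correct and coincides with what the paper has in mind: the paper states this as a citation to Alon--Boppana without giving an argument, and the intended witness is exactly the $k$-Clique function (the internal label \texttt{item:kclq} in \Cref{cor:matching-ag}, which invokes \Cref{thm:ab-graph}, confirms this). The choice $k(m) = \lfloor m^{1/4}\rfloor$ is comfortably within the Alon--Boppana parameter window $3 \le k \le \frac{1}{4}(m/\log m)^{2/3}$ and yields a lower bound of roughly $2^{\Omega(m^{1/8}\log m)}$, i.e.\ $2^{\Omega(n^{1/16}\log n)}$ in terms of the input length $n = \binom{m}{2}$; taking $k$ near the top of the window improves $\eps$ to anything below $1/6$, but since the theorem only asks for \emph{some} $\eps > 0$, either choice suffices. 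Your handling of the translation from $m$ to $n$ and the extension to all input lengths via insensitive coordinates is the standard bookkeeping and is correct.
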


\begin{theorem}
    \label{cor:matching-ag}
    Let $\calc$ be 
    a class of circuits.
    The following holds:
    \begin{enumerate}
        \item 
            \label{item:gpsize}
            If $\semMon{\calc} \cap \GraphPpts \sseq \mSIZE[\poly]$,
            then 
            $\L
            \not\sseq
            \calc$.
        \item 
            \label{item:gpdepth}
            If 
            $\semMon{\calc} \cap \GraphPpts \sseq 
            \mDEPTH[o(\sqrt{n})]$, where $n$ denotes the number of input bits,
            then
            $\L
            \not\sseq
            \calc$.
        \item 
            \label{item:kclq}
            If 
            $\semMon{\AC^0} \cap \GraphPpts \subseteq \mSIZE[\poly]$,
            then $\NP \not\sseq \NC^1$.
    \end{enumerate}
\end{theorem}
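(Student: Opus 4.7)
The plan is to prove each item by contraposition, combining an existing monotone lower bound for a monotone graph property with an upper bound that places the hard function inside the complexity class whose inclusion is being hypothesised. The relevant hard graph properties are $\OddFactor$ (in $\L$ by \Cref{thm:oddfactor-logspace}, with the size and formula lower bounds of \Cref{thm:bip-oddfactor-lb}) for items (1) and (2), and the Alon-Boppana witness of \Cref{thm:ab-graph} for item (3). In the last item, the padded version of the hard function must again be a graph property, so \Cref{lemma:padding_graph} enters in a crucial way.

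For item (1), I would assume $\L \sseq \calc$. Then $\OddFactor \in \calc$ by \Cref{thm:oddfactor-logspace}, and since $\OddFactor$ is a monotone graph property, the hypothesis $\semMon{\calc} \cap \GraphPpts \sseq \mSIZE[\poly]$ forces $\OddFactor \in \mSIZE[\poly]$, contradicting the $n^{\Omega(\log n)}$ monotone circuit size lower bound of \Cref{thm:bip-oddfactor-lb}. For item (2), the same inclusion yields $\OddFactor \in \mDEPTH[o(\sqrt{n})]$ where $n$ is the input length; on graphs with $N$ vertices the input length is $\binom{N}{2}$, so $\sqrt{n} = \Theta(N)$, and the $2^{\Omega(N)}$ monotone formula lower bound translates, via the standard unfolding of a depth-$d$ fan-in-two circuit into a formula of size at most $2^d$, into $\mdepth{\OddFactor_N} = \Omega(N) = \Omega(\sqrt{n})$, a contradiction.

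For item (3), padding will be essential because the hypothesis concerns $\AC^0$ rather than $\NC^1$. Assuming $\NP \sseq \NC^1$, \Cref{thm:ab-graph} supplies a monotone graph property $f$ on $N$-vertex inputs with $\Cmon{f} = 2^{\Omega(m^{\eps_0})}$ for some $\eps_0 > 0$, where $m = \binom{N}{2}$ is the input length. The collapse places $f$ in $\NC^1 \sseq \NL$, so item (2) of \Cref{lemma:padding_graph} produces, for any $\delta > 0$, a monotone graph property $g$ on $V = 2^{N^\delta}$ vertices with $g \in \AC^0$ and $f \mprojred g$. The hypothesis $\semMon{\AC^0} \cap \GraphPpts \sseq \mSIZE[\poly]$ then gives $\Cmon{g} = V^{O(1)} = 2^{O(N^\delta)}$, and because the reduction is a monotone projection, $\Cmon{f} \leq 2^{O(N^\delta)}$. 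Choosing $\delta$ small enough that $2^{O(N^\delta)} = 2^{o(m^{\eps_0})}$ yields the desired contradiction.

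The main obstacle is item (3): one must pad the $\NP$-hard monotone graph property in a way that keeps it both a graph property and inside $\AC^0$, so that the hypothesis on monotone graph properties in $\AC^0$ can be triggered. This is exactly what \Cref{lemma:padding_graph} is designed to achieve, and once it is available the argument is a routine composition of the padding, the hypothesised monotone simulation, and the Alon-Boppana lower bound. Items (1) and (2) are essentially immediate once $\OddFactor$ is identified as the right hard function.
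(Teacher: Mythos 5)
Your proposal is correct and follows essentially the same route as the paper: items (1) and (2) use $\OddFactor$ together with Theorems~\ref{thm:oddfactor-logspace} and~\ref{thm:bip-oddfactor-lb}, and item (3) pads the Alon--Boppana hard graph property of Theorem~\ref{thm:ab-graph} (which lies in $\NL$ under the assumed collapse $\NP \sseq \NC^1$) via item (2) of Lemma~\ref{lemma:padding_graph}, exactly as the paper does. The only differences are cosmetic parameter bookkeeping (your $m = \binom{N}{2}$ vs.\ the paper's vertex-count convention, and your $\delta$ chosen small enough vs.\ their explicit $\delta = \eps/2$).
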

\begin{proof}
    We will prove each item separately.

    \begin{enumerate}[label=\emph{Proof of (\arabic*)}., leftmargin=*]
        \item 
            Suppose that $\L \sseq \calc$.
            By \Cref{thm:bip-oddfactor-lb},
            the monotone graph property 
            $\OddFactor$
            satisfies
            $\OddFactor \notin \mSIZE[\poly]$.
            Moreover, we have the upper bound $\OddFactor \in \L$ by
            \Cref{thm:oddfactor-logspace}.
            Since, by assumption, we have $\OddFactor \in \L \sseq \calc$,
            we obtain
            $\semMon{\calc} \cap \GraphPpts \not\sseq \mSIZE[\poly]$.
        \item 
            Suppose that $\L \sseq \calc$.
            By 
            \Cref{thm:oddfactor-logspace,thm:bip-oddfactor-lb}, 
            there exists a monotone graph property $f \in \L$ such
            that $f \notin \mDEPTH[o(\sqrt{n})]$.
            Since, by assumption, we have
            $f \in \L \sseq \calc$,
            we obtain
            $\semMon{\calc} \cap \GraphPpts \not\sseq \mDEPTH[o(\sqrt{n})]$.

        \item 
            Suppose that $\NP \sseq \NC^1$.
            By \Cref{thm:ab-graph}, there exists a monotone graph property
            $f \in \NC^1$ such that
            $\Cmon{f} = 2^{\Omega(n^\eps)}$ for some $\eps > 0$.
            Let $\delta = \eps/2$.
            By Lemma~\ref{lemma:padding_graph} (Item $2$),
            there exists a monotone graph property $g$ on
            $N = 2^{n^\delta}$ vertices
            computed by
            an $\AC^0$ circuit of size $N^{2+o(1)}$
            such that $f$ is a monotone projection of $g$.
            \Cref{thm:ab-graph}
            implies that 
            any monotone circuit computing $f$ has size
            $2^{\Omega(n^{\eps})} 
            = 2^{\Omega((\log N)^2)} = N^{\omega(1)}$.
            \qedhere
    \end{enumerate}
\end{proof}

\section{Monotone Complexity of Constraint Satisfaction Problems}
\label{sec:CSPs}

In this section, we study the monotone complexity of Boolean-valued $\CSP$s.
Our goal is to classify which types of Boolean $\CSP$s are hard for
monotone circuit size and monotone circuit depth,
eventually proving
\Cref{thm:cspsat-intro,thm:intro-mon-dichotomies}.

We will first spend some time recalling standard definitions and concepts
in the theory of CSPs (\Cref{sec:csp-def}),
as well as a few results about CSPs that were
proved in previous works
\cite{DBLP:conf/stoc/Schaefer78,DBLP:journals/tcs/Jeavons98,
playing_one,playing_two,DBLP:journals/jcss/AllenderBISV09}
(\Cref{sec:csp-fact}).
We will then prove \Cref{thm:intro-mon-dichotomies}
in \Cref{sec:csp-dichotomy},
and
we will finally prove \Cref{thm:cspsat-intro}
in \Cref{sec:csp-conseq}
after 
proving some auxiliary results
in \Cref{sec:csp-ac0}.

\subsection{Definitions}
\label{sec:csp-def}

For a good introduction to the concepts defined below, we refer the reader
to~\cite{playing_one, playing_two}.
We also refer the reader to 
\Cref{sec:intro_mon_csp} 
for 
the definition of the family of functions
$\CSPf{S}$, as well as the terms
\emph{constraint application, $S$-formula} and \emph{satisfiable formula}.

We denote by $p_i^n : \blt^n \to \blt$ the
$i$-th \emph{projection function} on $n$ variables,
whose operation is defined as
$p_i^n(x) = x_i$.
For a set of Boolean functions $B$,
we
denote by $[B]$ the \emph{closure} of $B$,
defined as follows:
a Boolean function $f$ is in $[B]$
if and only if
$f \in B \cup \set{\clonefont{Identity}}$
or if there exists $g \in B$
and $h_1,\dots,h_k$
such that
$f = g(h_1,\dots,h_k)$,
where each $h_i$ is either a 
projection function or a function from $[B]$.
We can equivalently define $[B]$ as
the set of all Boolean functions
that can be computed by circuits using 
the functions of $B$ as gates.
Note that $[B]$ necessarily contains an infinite number of Boolean
functions, 
since $p_1^n \in [B]$ for every $n \in \bbn$;
moreover, the constant functions
are not necessarily in $[B]$.
We say that $B$ is a \emph{clone} if $B = [B]$.
A few prominent examples of clones are the set of all Boolean functions
(equal to $[\set{\land,\neg}]$),
monotone functions
(equal to $[\set{\land,\lor,0,1}]$), 
and linear functions
(equal to $[\set{\xor,1}]$).

\begin{remark}
    \label{rk:post_lattice}
    The set of all clones forms a lattice, known as
    \defx{Post's lattice},
    under the operations
    $[A] \sqcap [B] := [A] \cap [B]$
    and
    $[A] \sqcup [B] := [A \cup B]$.
    From the next section onwards,
    we will refer to the clones defined in~\cite{playing_one}
    (such as $\clonefont{I_0}$, $\clonefont{I_1}$, etc.),
    assuming the reader is familiar with them.
    For the unfamiliar reader, we refer to \Cref{sec:clone-background} and 
    \Cref{fig:clone_table,fig:post_lattice},
    which contain all the definitions of the clones we will need,
    as well as the entire Post's lattice in graphical representation.
    
    To avoid confusion, we will always refer to clones with
    $\clonefont{normal}$-$\clonefont{Roman}$ font (e.g.,
    $\clonefont{S_1}, \clonefont{I_0}$, etc).
\end{remark}

Let $S$ be a finite set of Boolean relations.
We denote by $\CNF(S)$ the set of all $S$-formulas.
We denote by $\COQ(S)$ 
the set of all relations which can be expressed
with the following type of formula $\varphi$:
\begin{equation*}
    \varphi(x_1,\dots,x_k) = \exists y_1,\dots,y_\ell
    \,
    \psi(x_1,\dots,x_k,y_1,\dots,y_\ell),
\end{equation*}
where $\psi \in \CNF(S)$.
The relations in $\COQ(S)$ will also be referred as \emph{conjunctive
queries} over $S$.
We denote by $\ccln{S}$ the set of relations
defined as
$\ccln{S} := \COQ(S \cup \set{=})$.
If $S = \ccln{S}$, we say that $S$ is a \emph{co-clone}.
We define 
$$\CSP = \set{\CSPf{S} : \text{$S$ is a finite set of relations}}.$$
We say that $\CSPf{S}$ is \emph{trivial}
if $\CSPf{S}$ is a constant function.

Let $R$ be a $k$-ary Boolean relation 
and
let $f : \blt^\ell \to \blt$ be a Boolean function.
For $x \in R$ and $i \in [k]$, we denote by $x[i]$
the $i$-th bit of $x$.
\begin{definition}
    \label{def:polymorphism}
    We say that $f$ is a
    \defx{polymorphism} of $R$,
    and $R$ is 
    \defx{an invariant of $f$},
    if, for all $x_1, \dots, x_\ell \in R$, we have
    \begin{equation*}
        (
            f(x_1[1],\dots,x_{\ell}[1]),
            f(x_2[2],\dots,x_{\ell}[2]),
            \dots,
            f(x_k[k],\dots,x_{\ell}[k])
        )
        \in
        R.
    \end{equation*}
    We denote the set of all polymorphisms of $R$ by $\Pol(R)$.
    For a set of relations $S$,
    we denote by $\Pol(S)$ the set of Boolean functions
    which are polymorphisms of all the relations of $S$.
    For a set of Boolean functions, we denote by
    $\Inv(B)$ the set of all Boolean relations which are invariant
    under all functions of $B$ (i.e.,
    $\Inv(B) = \set{R : B \sseq \Pol(R)}$).
\end{definition}

The following summarises the important facts about
clones, co-clones and polymorphisms that are relevant to
the study of CSPs~\cite{10.1145/263867.263489}. 

\begin{lemma}
    \label{lem:galois}
    Let 
    $S$ 
    and $S'$
    be sets
    of Boolean relations 
    and
    let
    $B$ 
    and $B'$
    be sets
    of Boolean functions.
    We have
    \begin{enumerate}[label=(\roman*)]
        \item 
            \label{item:pol-clone}
            $\Pol(S)$ is a clone and $\Inv(B)$ is a co-clone;
        \item 
            \label{item:pol-invert}
            If 
            $S \sseq S'$,
            then
            $\Pol(S') \sseq \Pol(S)$;
        \item 
            \label{item:inv-invert}
            If 
            $B \sseq B'$,
            then
            $\Inv(B') \sseq \Inv(B)$;
        \item 
            \label{item:coq-idempotent}
            $\COQ(\COQ(S)) = \COQ(S)$;
        \item 
            \label{item:coq-sseq}
            If $S \sseq S'$, then
            $\COQ(S) \sseq \COQ(S')$;
        \item 
            \label{item:invpol}
            $\Inv(\Pol(S)) = \ccln{S}$;
        \item 
            \label{item:polinv}
            $\Pol(\Inv(B)) = [B]$.
    \end{enumerate}
\end{lemma}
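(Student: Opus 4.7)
The plan is to prove each part in turn, treating (ii), (iii), (v) as essentially immediate, (i) and (iv) as routine verifications, and recognising that (vi) and (vii) are the celebrated Galois correspondence between clones and co-clones due to Geiger and Bodnarchuk--Kaluzhnin--Kotov--Romov; the main work therefore sits in their non-trivial directions.

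First I would dispatch the easy monotonicity items. For (ii), if $S \sseq S'$ and $f \in \Pol(S')$ then $f$ preserves each relation in $S \sseq S'$, hence $f \in \Pol(S)$. Item (iii) is symmetric. Item (v) is immediate since any $S$-formula is by definition an $S'$-formula whenever $S \sseq S'$. For (iv), given a conjunctive query $\exists \vec y\,\psi(\vec x,\vec y)$ over $\COQ(S)$, where each conjunct of $\psi$ is itself a formula $\exists \vec z_i\,\varphi_i$ with $\varphi_i \in \CNF(S)$, I would rename the inner quantified variables to make them disjoint and pull all existential quantifiers to the front; this produces a single formula in $\COQ(S)$ equivalent to the original, which together with the obvious inclusion $\COQ(S) \sseq \COQ(\COQ(S))$ gives equality.

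For (i), I would verify the two closure conditions. To see that $\Pol(S)$ is a clone, note that projections preserve every relation by definition, and if $g,h_1,\dots,h_k \in \Pol(R)$ then applying $g(h_1,\dots,h_k)$ to tuples of $R$ coordinatewise reduces to two successive applications of the polymorphism property. To see that $\Inv(B)$ is a co-clone, observe that the equality relation is trivially invariant under any function, and that invariance is preserved under existential quantification and conjunction: given $R_1,\dots,R_t \in \Inv(B)$ and a primitive positive definition $R(\vec x) \equiv \exists \vec y\,\bigwedge_j R_j(\vec v_j)$, tuples $\vec x^{(1)},\dots,\vec x^{(\ell)}$ in $R$ come with witnesses $\vec y^{(1)},\dots,\vec y^{(\ell)}$, and applying $f \in B$ componentwise to the pairs $(\vec x^{(i)},\vec y^{(i)})$ yields a new witness because each $R_j$ is $f$-invariant.

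For (vi) and (vii), one direction is straightforward by the same closure argument: conjunctive queries with equality preserve invariance under $\Pol(S)$, giving $\ccln{S} \sseq \Inv(\Pol(S))$, and circuits built from $B$ and projections preserve every relation in $\Inv(B)$, giving $[B] \sseq \Pol(\Inv(B))$. The hard part will be the reverse inclusions, which carry the entire content of the Galois theorem. For (vi), given a $k$-ary relation $R \in \Inv(\Pol(S))$ with $R = \{r^{(1)},\dots,r^{(m)}\}$, the standard strategy is to introduce one fresh variable $y_j$ for each row $r^{(j)}$ and form the \emph{canonical conjunctive query} $\Phi_R$ consisting of all atoms $T(y_{i_1},\dots,y_{i_\ell})$ with $T \in S \cup \{=\}$ such that the induced matrix of columns $(r^{(1)}_{i_s},\dots,r^{(m)}_{i_s})_s$ lies in $T$; then by construction $R \sseq \{\Phi_R\text{-projections onto the correct coordinates}\}$, and the reverse inclusion is where invariance under all of $\Pol(S)$ is used: any satisfying assignment of $\Phi_R$ determines an $m$-ary operation that, applied row-wise to the rows of $R$, must land inside $R$ precisely because that operation is forced to be a polymorphism of $S$. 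Item (vii) is proved by the dual construction, where the ``generic'' object is a large-arity relation capturing all input/output behaviours of functions in $B$. These Galois-type arguments are the main obstacle, and I would invoke the classical references rather than reproduce the full combinatorial bookkeeping.
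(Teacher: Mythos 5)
The paper does not prove this lemma at all: it is stated as a collection of standard facts about the clone--coclone Galois connection with a single citation to the literature. Your proposal therefore does strictly more than the paper, and your overall strategy (dispatch the monotonicity and idempotence items by direct verification, recognise items (vi) and (vii) as the Geiger / Bodnarchuk--Kaluzhnin--Kotov--Romov correspondence and defer the hard direction to the classical references) matches the spirit of the paper exactly.

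Your treatments of (ii), (iii), (v) are correct and immediate, and (i) and (iv) are handled correctly: the prenexing argument for $\COQ(\COQ(S)) = \COQ(S)$ and the closure arguments (projections trivially preserve, composition of polymorphisms is a polymorphism, existential quantification and conjunction preserve invariance because witnesses can be combined componentwise) are all sound. The one place that needs correction is your sketch of the hard direction of (vi). The canonical conjunctive query for a relation $R \subseteq \{0,1\}^k$ with $m$ tuples does not use one fresh variable per \emph{row} of $R$; it uses one fresh variable $y_v$ for each $v \in \{0,1\}^m$, i.e.\ one per possible \emph{column pattern}, hence $2^m$ variables. One includes an atom $T(y_{v_1},\dots,y_{v_\ell})$ for each $T \in S$ of arity $\ell$ and each choice $v_1,\dots,v_\ell \in \{0,1\}^m$ such that $(v_1[i],\dots,v_\ell[i]) \in T$ for every $i \in [m]$, and then projects onto the coordinates $y_{c_1},\dots,y_{c_k}$ where $c_1,\dots,c_k$ are the actual columns of $R$. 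A satisfying assignment is then literally a function $f : \{0,1\}^m \to \{0,1\}$, and the constraints force $f \in \Pol(S)$, so the projected tuple $(f(c_1),\dots,f(c_k))$ lies in $R$ by $R \in \Inv(\Pol(S))$. With one variable per row as you wrote it, the atoms $T(y_{i_1},\dots,y_{i_\ell})$ do not correspond to a coherent set of constraints and the bookkeeping you describe does not go through. Since you explicitly defer to the references for the combinatorial details, this is a recoverable slip, but as written that part of the sketch would not compile into a correct proof.
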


We now define 
different
types of reductions.
We say that a reduction is a \emph{monotone OR-reduction}
if
every bit of the reduction is either constant
or can be computed by a monotone disjunction on the
input variables.
We write $f \mormred g$ if 
there exists a many-one monotone OR-reduction
from $f$ to $g$.
We also write 
$f \acmred g$
if 
there exists a many-one
$\AC^0$ reduction
from $f$ to $g$,
and
$f \mnlmred g$
if 
there exists a many-one
$\mNL$ reduction
from $f$ to $g$\footnote{A many-one $\AC^0$ (resp. $\mNL$) reduction is one
in which each bit of the reduction is either constant or 
can be computed with a polynomial-size $\AC^0$ circuit
(resp. monotone nondeterministic branching program). 
Recall that a 
\emph{monotone nondeterministic branching program} 
is a directed acyclic graph $G$ with two
distinguished vertices $s$ and $t$, 
in which each edge $e$ is labelled with an input function 
$\rho_e \in \set{1,x_1,\dots,x_n}$. 
Given an input $x$, the program accepts if there
exists a path from $s$ to $t$ in the subgraph $G_x$ of $G$ 
in which an edge $e$ appears if $\rho_e(x)=1$.
}.
Unless otherwise specified, every reduction we consider will generate an instance
of polynomial size on the length of the input.

Finally, we denote by $\OR^k$ and $\NAND^k$
the $k$-ary $\OR$ and $\NAND$ relations, respectively.

\subsection{Basic facts about $\CSPSAT$}
\label{sec:csp-fact}

We state 
here basic facts about the $\CSPSAT$ function.
These facts are proved in the original paper of
Schaefer~\cite{DBLP:conf/stoc/Schaefer78},
as well as in later
papers~\cite{DBLP:journals/tcs/Jeavons98,
playing_one,playing_two,DBLP:journals/jcss/AllenderBISV09}.

\Cref{prop:poly} below is one of the most important lemmas of this section and
will be used many times. It states that $\Pol(S)$
characterises the monotone complexity of $\CSPf{S}$, in the sense that 
the sets of relations with few polymorphisms give rise to the hardest
instances of CSPs.
A non-monotone version of this result was proved in
\cite[Theorem 2.4]{DBLP:journals/tcs/Jeavons98, playing_two},
and we
check 
in \Cref{sec:reduction_monotone}
that their proofs also hold in the monotone
case.

\begin{restatable}[\protect{Polymorphisms characterise the complexity of
    CSPs~\cite[Theorem 2.4]{DBLP:journals/tcs/Jeavons98, playing_two}}]
    {lemma}{polym}
    \label{prop:poly}
    If $\Pol(S_2) \sseq \Pol(S_1)$,
    then
    $\CSPfn{S_1}{n} \mnlmred \CSPfn{S_2}{\poly(n)}$.
\end{restatable}

\Cref{thm:mon-schaefer} gives monotone circuit upper bounds
for some instances of $\CSPf{S}$.
Non-monotone variants of this upper bound were originally obtained
in the seminal paper of Schaefer~\cite{DBLP:conf/stoc/Schaefer78},
and we again check that the monotone variants work in
\Cref{sec:monotone_upper}.

\begin{restatable}[Monotone version of the upper bounds 
    for $\CSPSAT$ 
\cite{DBLP:conf/stoc/Schaefer78,DBLP:journals/jcss/AllenderBISV09}]
{theorem}
{monschaeffer}
    \label{thm:mon-schaefer}
    Let $S$ be a finite set of relations.
    The following holds.
    \begin{enumerate}
        \item
            If $\clonefont{\clonefont{E_2}} \sseq \Pol(S)$ or $\clonefont{\clonefont{V_2}} \sseq \Pol(S)$,
            then
            $\CSPf{S} \in \mSIZE[\poly]$.
        \item 
            If 
            $\clonefont{\clonefont{D_2}} \sseq \Pol(S)$,
            or
            $\clonefont{\clonefont{S_{00}}} \sseq \Pol(S)$,
            or
            $\clonefont{\clonefont{S_{10}}} \sseq \Pol(S)$,
            then
            $\CSPf{S} \in \mNL$.
    \end{enumerate}
\end{restatable}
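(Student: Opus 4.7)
The plan is to verify that the classical non-monotone upper bounds for Horn, dual-Horn, bijunctive, and the Schaefer-like ``$\clonefont{S_{00}},\clonefont{S_{10}}$-preserving'' cases due to Schaefer \cite{DBLP:conf/stoc/Schaefer78} and refined by Allender et al.~\cite{DBLP:journals/jcss/AllenderBISV09} can be carried out with monotone circuitry of the asserted complexity. The guiding observation is that $\CSPf{S}$ is itself monotone in the clause-indicator encoding (adding a clause cannot restore satisfiability), so each intermediate quantity of these algorithms can be phrased as what is \emph{forced} or \emph{derived} by the present clauses, which is a monotone function of the input bits.

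The first step is a canonicalisation via \Cref{prop:poly}. If $\mathcal{C} \sseq \Pol(S)$ for one of the clones $\clonefont{E_2},\clonefont{V_2},\clonefont{D_2},\clonefont{S_{00}},\clonefont{S_{10}}$, pick any finite $S_0$ with $\Pol(S_0) = \mathcal{C}$; then $\Pol(S_0) \sseq \Pol(S)$ and hence $\CSPf{S} \mnlmred \CSPf{S_0}$. Since $\mNL \sseq \mSIZE[\poly]$ and monotone $\mNL$-reductions compose, it suffices to exhibit, for each clone, a canonical $S_0$ (whose defining relations are by the Galois correspondence \Cref{lem:galois}\ref{item:invpol} the familiar Horn, dual-Horn, bijunctive, and $\clonefont{S_{00}}/\clonefont{S_{10}}$-invariant relations) and implement the corresponding algorithm monotonically. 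For $\clonefont{E_2}$, the canonical instance is Horn-SAT, solved by unit propagation: iteratively compute, for each variable $x_i$ and $b\in\{0,1\}$, the indicator $F_{i,b}^{(t)}$ that the clauses force $x_i=b$ within $t$ propagation rounds, where each round is a polynomial-size monotone formula in the clause indicators and the $F_{i,b}^{(t-1)}$'s, and finally output $\bigvee_i (F_{i,0}^{(n)} \land F_{i,1}^{(n)})$. The clone $\clonefont{V_2}$ is symmetric (dual-Horn). For $\clonefont{D_2}$ the canonical instance is $2$-$\mathsf{SAT}$: the edges of the implication graph are monotone projections of the clause indicators, and detecting a path from some literal $\ell$ to $\neg\ell$ is precisely monotone nondeterministic reachability, placing the problem in $\mNL$. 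The two remaining clones $\clonefont{S_{00}},\clonefont{S_{10}}$ correspond to structured generalisations of Horn that, as observed in~\cite{DBLP:journals/jcss/AllenderBISV09}, admit an analogous nondeterministic logspace propagation-plus-reachability algorithm, which inspection shows to be monotone in the clause set.

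The main obstacle is checking monotonicity rigorously at two junctures: (i) the reduction furnished by \Cref{prop:poly} must itself be monotone, which I will discharge in \Cref{sec:reduction_monotone} by tracking every step of the Jeavons-style simulation of conjunctive queries by gadget replacement and confirming that the output bits depend only positively on the input clause indicators; and (ii) each algorithmic step above must never use the \emph{absence} of a clause as information. The latter is straightforward for unit propagation and for implication-graph reachability, but requires more care for the $\clonefont{S_{00}}/\clonefont{S_{10}}$ algorithms, where I expect to follow~\cite{DBLP:journals/jcss/AllenderBISV09} closely and recast their witness predicates so that they are expressible by monotone nondeterministic branching programs. Provided both points are verified, composition with the reduction step yields the stated $\mSIZE[\poly]$ and $\mNL$ upper bounds.
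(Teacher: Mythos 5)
Your proposal follows essentially the same route as the paper's: canonicalize via \Cref{prop:poly} to a set of relations invariant under the relevant clone, then implement the classical Schaefer/ABISV algorithm monotonically. The algorithmic cores you identify (Horn unit propagation for $\clonefont{E_2}$, negation of variables for $\clonefont{V_2}$, implication-graph reachability for $\clonefont{D_2}$, and propagation-plus-reachability for $\clonefont{S_{00}},\clonefont{S_{10}}$) all match. One presentational difference is the $\clonefont{E_2}$ case: you implement unit propagation directly as a layered monotone circuit, whereas the paper factors it through a reduction to $\GEN$ (which is in $\mSIZE[\poly]$ by \Cref{thm:stconn-gen}); both are valid and conceptually equivalent, but the paper's route is tidier because the monotonicity of $\GEN$'s circuit is already established.

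There is, however, a technical gap in your canonicalization step for the clones $\clonefont{S_{00}}$ and $\clonefont{S_{10}}$: you propose to ``pick any finite $S_0$ with $\Pol(S_0) = \mathcal{C}$,'' but no such finite $S_0$ exists when $\mathcal{C} \in \{\clonefont{S_{00}},\clonefont{S_{10}}\}$. These clones are infinite meets in Post's lattice ($\clonefont{S_{00}} = \bigcap_{k\ge 2}\clonefont{S_{00}}^k$) and cannot be the exact polymorphism clone of any finite set of relations. The paper handles this by observing that since $S$ itself is finite, the hypothesis $\clonefont{S_{00}} \sseq \Pol(S)$ actually forces $\clonefont{S_{00}}^k \sseq \Pol(S)$ for some finite $k$; one then takes $S_0$ with $\Pol(S_0) = \clonefont{S_{00}}^k = \Pol(\{\OR^k, x, \neg x, \to, =\})$ and applies \Cref{prop:poly}. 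You would hit this wall the moment you tried to instantiate your Step 1 for these two clones; once fixed, the rest of your argument goes through as you describe. A smaller nitpick: for the Horn case your output $\bigvee_i (F_{i,0}^{(n)} \land F_{i,1}^{(n)})$ only catches single-variable conflicts, so you need the propagation rules to also derive $F_{j,0}^{(t)}$ from a purely negative clause whose other literals are already forced to $1$; with that convention the condition is correct.
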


Finally, 
we state here a result of~\cite{DBLP:journals/jcss/AllenderBISV09},
which
classifies the \emph{non-monotone} complexity of $\CSPf{S}$
under $\acmred$ reductions.
The classification of the complexity of $\CSPf{S}$
is based solely on $\Pol(S)$.
See \Cref{fig:post_lattice} for a graphical representation.

\begin{theorem}[\protect{Refined classification of $\CSP$ problems 
    \cite[Theorem 3.1]{DBLP:journals/jcss/AllenderBISV09}}]
    \label{thm:refined-completeness}
    Let $S$ be a finite set of Boolean relations.
    The following holds.
    \begin{itemize}
        \item If 
            $\clonefont{I_0} \sseq \Pol(S)$
            or 
            $\clonefont{I_1} \sseq \Pol(S)$,
            then
            $\CSPf{S}$ is trivial.
        \item 
            If 
            $\Pol(S) \in \set{\clonefont{I_2},\clonefont{N_2}}$,
            then
            $\CSPf{S}$
            is
            \acmcmpl{}
            for
            $\NP$.
        \item 
            If 
            $\Pol(S) \in 
            \set{\clonefont{V_2},\clonefont{E_2}}$,
            then
            $\CSPf{S}$
            is
            \acmcmpl{}
            for
            $\P$.
        \item 
            If 
            $\Pol(S) \in 
            \set{\clonefont{L_2},\clonefont{L_3}}$,
            then
            $\CSPf{S}$
            is
            \acmcmpl{}
            for
            $\pL$.
        \item 
            If 
            $\clonefont{S_{00}} \sseq \Pol(S) \sseq \clonefont{S_{00}}^2$
            or
            $\clonefont{S_{10}} \sseq \Pol(S) \sseq \clonefont{S_{10}}^2$
            or
            $\Pol(S) \in 
            \set{\clonefont{D_2},\clonefont{M_2}}$,
            then
            $\CSPf{S}$
            is
            \acmcmpl{}
            for
            $\NL$.
        \item 
            If 
            $\Pol(S) \in 
            \set{\clonefont{D_1}, \clonefont{D}}$,
            then
            $\CSPf{S}$
            is
            \acmcmpl{}
            for
            $\L$.
        \item 
            If 
            $\clonefont{S_{02}} \sseq \Pol(S) \sseq \clonefont{R_2}$
            or
            $\clonefont{S_{12}} \sseq \Pol(S) \sseq \clonefont{R_2}$,
            then
            either
            $\CSPf{S} \in \AC^0$
            or
            $\CSPf{S}$
            is
            \acmcmpl{}
            for
            $\L$.
    \end{itemize}
\end{theorem}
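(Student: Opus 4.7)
The proof proceeds by case analysis on the location of $\Pol(S)$ in Post's lattice, using the correspondence $\Pol \leftrightarrow \ccln{\cdot}$ between clones and co-clones from Lemma \ref{lem:galois}. The key structural reduction is Lemma \ref{prop:poly}, which says that $\CSPf{S_1} \mnlmred \CSPf{S_2}$ whenever $\Pol(S_2) \sseq \Pol(S_1)$; since $\mnlmred$ is contained in $\acmred$ (up to $\AC^0$ reductions producing constraint applications of $S$), it suffices to verify both the upper bound and the matching hardness for a single canonical $S^*$ inside each clone listed in the theorem. Thus the entire classification reduces to finitely many concrete problems to analyse, and a careful traversal of Post's lattice confirms that every finite set $S$ of Boolean relations falls into exactly one of the enumerated cases.

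For the upper bounds, the trivial case when $\clonefont{I_0} \sseq \Pol(S)$ or $\clonefont{I_1} \sseq \Pol(S)$ is immediate: the constant $0$ (respectively, $1$) is a polymorphism, so every $S$-formula is satisfied by the all-zeros (respectively, all-ones) assignment, and $\CSPf{S}$ is the constant function. The cases $\{\clonefont{V_2}, \clonefont{E_2}, \clonefont{D_2}, \clonefont{S_{00}}, \clonefont{S_{10}}\}$ for $\Pol(S)$ are handled by Theorem~\ref{thm:mon-schaefer}, which places $\CSPf{S}$ into $\mSIZE[\poly]$ or $\mNL$ respectively. For the linear cases $\Pol(S) \in \{\clonefont{L_2}, \clonefont{L_3}\}$, a canonical $S^*$ expresses (unsatisfiability of) systems of linear equations over $\bbf_2$, which lies in $\pL$ by Gaussian elimination~\cite{DBLP:journals/mst/BuntrockDHM92}. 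For $\Pol(S) \in \{\clonefont{D_1}, \clonefont{D}\}$, the instances reduce to affine systems of the form $x_i = x_j$ and $x_i \neq x_j$, whose satisfiability is decided by checking bipartiteness of a graph, a task in $\L$ via Reingold's algorithm~\cite{DBLP:conf/stoc/Reingold05}. The $\AC^0$ upper bound in the $\clonefont{S_{02}}, \clonefont{S_{12}}$ sub-case arises when the expressible relations become so constrained that satisfiability reduces to checking a finite pattern on each clause independently.

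The hardness direction requires exhibiting, for each case, an $\acmred$-reduction from a canonical complete problem of the target class to $\CSPf{S^*}$. The $\NP$-complete case follows from Schaefer's reduction of $3$-$\mathsf{SAT}$ (for $\clonefont{I_2}$) and $\mathsf{NAE}$-$3$-$\mathsf{SAT}$ (for $\clonefont{N_2}$); the $\P$-complete case from a gadget simulating $\mathsf{Horn}$-$\mathsf{SAT}$; the $\pL$-complete case from $3$-$\mathsf{XORSAT}$; the $\NL$-complete cases from an implication-graph encoding of $2$-$\mathsf{SAT}$ and reachability; and the $\L$-complete cases from undirected $s$-$t$-connectivity. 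For each canonical pair, one explicitly writes each edge/clause/equation of the instance as an $\AC^0$-computable projection into constraint applications of $S^*$, using the fact that $\COQ(S^*)$ supplies enough expressive power (by Lemma \ref{lem:galois}\ref{item:invpol}) to encode the gadgets with only constantly many auxiliary variables per unit of input.

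The main obstacle is the finer $\AC^0$ versus $\L$-complete dichotomy in the $\clonefont{S_{02}} \sseq \Pol(S) \sseq \clonefont{R_2}$ and $\clonefont{S_{12}} \sseq \Pol(S) \sseq \clonefont{R_2}$ regimes: here the co-clone lies between a trivially easy case and a genuinely $\L$-hard one, and whether the problem stays in $\AC^0$ depends on whether the specific $S$ can or cannot force an equality chain across polynomially many variables. Resolving this requires a delicate inspection of the generators of $\ccln{S}$, checking exactly which two-variable implications are expressible as $\COQ(S)$ queries; this is the step where one must truly invoke the refined analysis of~\cite{DBLP:journals/jcss/AllenderBISV09} rather than the coarser original dichotomy of~\cite{DBLP:conf/stoc/Schaefer78}. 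A secondary technical difficulty is controlling uniformity: while the reductions are conceptually local, one must verify that the indexing of auxiliary variables across polynomially many gadgets can be produced by a $\DLOGTIME$-uniform $\AC^0$ family, which is routine but must be checked clone-by-clone.
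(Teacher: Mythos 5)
The paper does not prove this statement at all: Theorem~\ref{thm:refined-completeness} is imported verbatim as Theorem~3.1 of Allender, Bauland, Immerman, Schnoor and Vollmer~\cite{DBLP:journals/jcss/AllenderBISV09} and used as a black box. There is therefore no ``paper's own proof'' to compare against; what follows is an assessment of whether your sketch would constitute a valid independent argument.

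Your overall plan --- traverse Post's lattice, pick a canonical $S^*$ in each class, verify membership and hardness for $S^*$, and transfer via polymorphism-based reductions --- is conceptually the right shape, and you correctly isolate the genuine subtlety (the $\AC^0$ vs.\ $\L$-hard split in the $\clonefont{S_{02}}/\clonefont{S_{12}}$ regime hinges on whether $S$ can express equality). However, the argument as written has a fatal flaw in the transfer step: you assert that $\mnlmred$ is ``contained in $\acmred$ (up to $\AC^0$ reductions producing constraint applications of $S$).'' This is false. $\NL$ is not contained in $\AC^0$ (indeed $\AC^0$ cannot even compute parity), and a many-one reduction whose output bits are computed by nondeterministic branching programs is not an $\AC^0$ reduction. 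Concretely, the $\mnlmred$ bound in Lemma~\ref{prop:poly} comes from handling the equality constraints introduced when passing from $\ccln{S_2} = \COQ(S_2 \cup \{=\})$ back to $\CNF(S_2)$ (see Lemma~\ref{prop:eql}, which invokes $\stconn$). So the reduction between two CSPs with the same polymorphism clone is an $\NL$ reduction, not an $\AC^0$ one, and $\acmred$-hardness of a canonical $S^*$ does not automatically transfer to an arbitrary $S$ with $\Pol(S) = \Pol(S^*)$.

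The fix --- which is precisely what~\cite{DBLP:journals/jcss/AllenderBISV09} carry out and what makes their Theorem~3.1 ``refined'' relative to Schaefer --- is a case split on whether $S$ can express equality. When it cannot, one shows $S \sseq \COQ(S' \cup \{x, \neg x\})$ for an equality-free canonical $S'$, and the resulting reduction is a local (OR-type) projection, hence genuinely $\AC^0$ (this is the content of Lemma~\ref{prop:incl-coq}, which gives $\mormred$ reductions, and of Lemmas~\ref{lem:coq-no-eq}--\ref{lem:ac0-cspsat}). When $S$ \emph{can} express equality, the problem is already $\L$-hard (Lemma~\ref{lem:equality-lhard}), which is why the $\AC^0$-reduction issue disappears exactly at the boundary of the theorem's last bullet. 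Your sketch gestures at this distinction but does not actually resolve the reduction-strength gap, so as written the hardness claims under $\acmred$ do not follow. There is also a minor mis-citation: you invoke Lemma~\ref{lem:galois} item $\Pol(\Inv(B)) = [B]$ where you mean $\Inv(\Pol(S)) = \ccln{S}$, which is the Galois-correspondence direction actually used in the $\COQ$-inclusion argument.
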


\newcommand\ABISV
    {R2/.cstyle=LN,
     M/.cstyle=NA,
     M2/.cstyle=NL,
     D/.cstyle=L,
     D2/.cstyle=NL,
     S212/.cstyle=LN,
     S312/.cstyle=LN,
     S12/.cstyle=LN,
     S202/.cstyle=LN,
     S302/.cstyle=LN,
     S02/.cstyle=LN,
     S00/.cstyle=NL,
     S300/.cstyle=NL,
     S200/.cstyle=NL,
     S10/.cstyle=NL,
     S310/.cstyle=NL,
     S210/.cstyle=NL,
     E2/.cstyle=P,
     V2/.cstyle=P,
     L2/.cstyle=pL,
     L3/.cstyle=pL,
     N2/.cstyle=NP,
     I2/.cstyle=NP,
     xscale=-1, yscale=-1,scale=1,labels=wagner,cshape/circle/.append style={minimum width=15pt}
    }
\newlength{\boxwidth}
\settowidth{\boxwidth}{$\NP$-complete}
\addtolength{\boxwidth}{5mm}
\begin{figure}
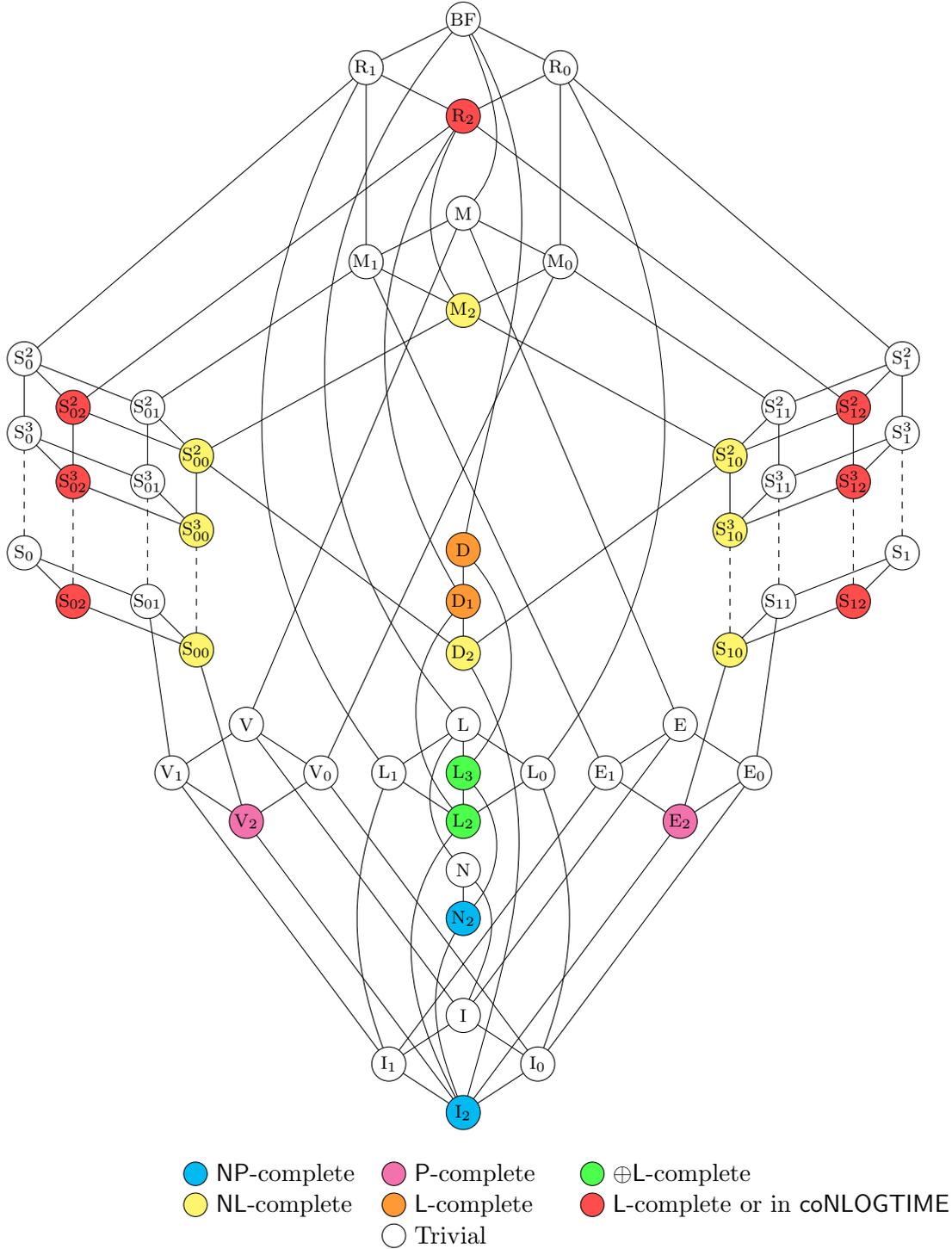

    \centering
    \expandafter\postlattice\expandafter[\ABISV]
    \bigskip

    \begin{tabular}{lll}
        \makebox[\boxwidth][l]{\protect\postlegend{NP} $\NP$-complete}
        & \makebox[\boxwidth][l]{\protect\postlegend{P} $\P$-complete}
        & \makebox[\boxwidth][l]{\protect\postlegend{pL} $\pL$-complete}
        \\
        \makebox[\boxwidth][l]{\protect\postlegend{NL} $\NL$-complete}
        & \makebox[\boxwidth][l]{\protect\postlegend{L} $\L$-complete}
        & \makebox[\boxwidth][l]{\protect\postlegend{LN} $\L$-complete or in  $\mathsf{coNLOGTIME}$}
        \\
        & \makebox[\boxwidth][l]{\protect\postlegend{NA} Trivial}
        &
        \\
    \end{tabular}

    \caption{
        Graph of all closed classes of Boolean functions.
        The vertices are colored with the complexity of 
        deciding $\CSP$s 
        whose set of polymorphisms corresponds to the label of the vertex.
        Trivial $\CSP$s are those that correspond to constant functions.
        Every hardness result is proved under $\acmred$ reductions.
        See \Cref{thm:refined-completeness} for details.
        A similar figure appears in
        \cite[Figure 1]{DBLP:journals/jcss/AllenderBISV09}.}
    \label{fig:post_lattice}
\end{figure}

\subsection{A monotone dichotomy for $\CSPSAT$}
\label{sec:csp-dichotomy}

In this section, we prove \Cref{thm:intro-mon-dichotomies}.
We first prove Part (1) of the theorem (the dichotomy for circuit size),
and then we prove Part (2) of the theorem (the dichotomy for circuit
depth).

\subparagraph{Dichotomy for circuits.}

To prove the dichotomy for circuits, 
we first show that, for
any set of relations $S$ whose set of polymorphisms is contained in $\clonefont{L_3}$,
we can monotonically reduce $\txorsat$ to $\CSPf{S}$.

\begin{lemma}
    \label{lem:l3_xorsat}
    Let $S$ be a finite set of relations.
    If 
    $\Pol(S) \sseq \clonefont{L_3}$,
    then
    $\txorsat \mnlmred \CSPf{S}$.
\end{lemma}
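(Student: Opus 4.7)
The plan is to invoke \Cref{prop:poly}: if $\Pol(S_2) \sseq \Pol(S_1)$, then $\CSPf{S_1} \mnlmred \CSPf{S_2}$. Setting $S_1 = S_{\oplus} := \{\oplus^0_3, \oplus^1_3\}$, we have $\CSPf{S_1} = \txorsat$. Taking $S_2 = S$, the lemma reduces to showing that, whenever $\Pol(S) \sseq \clonefont{L_3}$, one has $\Pol(S) \sseq \Pol(S_{\oplus})$. By transitivity of inclusion, it suffices to establish the single ``calibration'' inclusion $\clonefont{L_3} \sseq \Pol(S_{\oplus})$.

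To prove this inclusion, I would perform a direct coordinate-wise verification from the definition of polymorphism. Consulting \Cref{fig:clone_table} for the description of $\clonefont{L_3}$, one sees that every $f \in \clonefont{L_3}$ has the form $f(x_1,\dots,x_k) = \alpha_1 x_1 \oplus \cdots \oplus \alpha_k x_k$ with $\bigoplus_i \alpha_i = 1$ in $\bbf_2$. Given $k$ triples $t^{(1)},\dots,t^{(k)} \in \oplus^{\varepsilon}_3$ (with $\varepsilon \in \{0,1\}$, so each $t^{(i)}$ has coordinate-sum $\varepsilon$), the $\bbf_2$-sum of the three coordinates of the image tuple obtained by coordinate-wise application of $f$ telescopes to $\bigoplus_j \bigoplus_i \alpha_i\, t^{(i)}_j = \bigoplus_i \alpha_i \bigl(\bigoplus_j t^{(i)}_j\bigr) = \bigl(\bigoplus_i \alpha_i\bigr)\cdot \varepsilon = \varepsilon$, so the image tuple still lies in $\oplus^{\varepsilon}_3$. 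Hence $f$ preserves both $\oplus^0_3$ and $\oplus^1_3$, giving $\clonefont{L_3} \sseq \Pol(S_{\oplus})$.

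Chaining the two steps, the hypothesis $\Pol(S)\sseq \clonefont{L_3}$ combined with $\clonefont{L_3} \sseq \Pol(S_{\oplus})$ yields $\Pol(S) \sseq \Pol(S_{\oplus})$, and \Cref{prop:poly} then produces the desired monotone $\mNL$-reduction $\txorsat = \CSPf{S_{\oplus}} \mnlmred \CSPf{S}$.

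The main obstacle is notational rather than conceptual: one needs to carefully extract from \Cref{fig:clone_table} the precise description of $\clonefont{L_3}$ in the Wagner labelling of Post's lattice and confirm that it coincides with the clone of linear functions whose coefficients sum to $1$ modulo $2$. This is exactly what we should expect, since by \Cref{lem:galois} the clone $\Pol(\oplus^0_3) \cap \Pol(\oplus^1_3)$ is a maximal element of Post's lattice among clones that do not solve $\txorsat$, and the direct computation above pins it down unambiguously. Granted this identification, the rest of the proof is a clean application of the Galois correspondence packaged in \Cref{prop:poly}, with no additional combinatorial input required.
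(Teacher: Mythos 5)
Your calibration claim $\clonefont{L_3} \sseq \Pol(\set{\oplus^0_3,\oplus^1_3})$ is false, and this is precisely the part of the lemma that requires the real work. You extracted from \Cref{fig:clone_table} that elements of $\clonefont{L_3}$ have the form $\alpha_1 x_1 \oplus \cdots \oplus \alpha_k x_k$ with $\bigoplus_i \alpha_i = 1$, but this drops the constant term: $\clonefont{L_3} = \clonefont{L} \cap \clonefont{D}$ consists of \emph{all} linear forms $\bigoplus_i \alpha_i x_i \oplus b$ with $\bigoplus_i \alpha_i = 1$, including $b = 1$ (the listed basis function $x\oplus y \oplus z \oplus 1$ is itself such an example). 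Redoing your coordinate-wise computation with a constant term $b$, the sum of the three image coordinates becomes $\bigl(\bigoplus_i \alpha_i\bigr)\varepsilon \oplus 3b = \varepsilon \oplus b$; when $b = 1$ this \emph{flips} the parity, so $f$ maps $\oplus^0_3$-tuples to $\oplus^1_3$-tuples and vice versa, and therefore does not preserve either relation. Concretely, $f = x \oplus y \oplus z \oplus 1 \in \clonefont{L_3}$ sends $((0,0,0),(0,0,0),(0,0,0))$ to $(1,1,1) \notin \oplus^0_3$. The clone you actually pinned down is $\clonefont{L_2}$ (linear, $0$- and $1$-reproducing, equivalently $b=0$ and $\bigoplus_i \alpha_i = 1$); the inclusion $\clonefont{L_2} \sseq \Pol(\set{\oplus^0_3,\oplus^1_3})$ is correct and is exactly the observation the paper makes to handle the sub-case $\Pol(S) \sseq \clonefont{L_2}$.

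The case $\Pol(S) = \clonefont{L_3}$ genuinely cannot be dispatched by a single application of \Cref{prop:poly} to $\set{\oplus^0_3,\oplus^1_3}$, because the hypothesis $\Pol(S) \sseq \Pol(S_\oplus)$ it would require fails there. The paper handles this via an extra step (the Claim adapted from~\cite[Lemma~3.11]{DBLP:journals/jcss/AllenderBISV09}): given $S$ with $\Pol(S) = \clonefont{L_2}$, one forms $S'$ by negating all relation entries, checks $\Pol(S') = \clonefont{L_3}$, and exhibits a monotone projection $\CSPfn{S}{n} \mprojred \CSPfn{S'}{n+1}$ that attaches a fresh variable $\alpha$ to each constraint to absorb the negations. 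Chaining this projection with the $\clonefont{L_2}$ case and then with \Cref{prop:poly} yields the $\clonefont{L_3}$ case. Your plan as stated skips this entirely, so it proves a strictly weaker statement (the conclusion only for $\Pol(S) \sseq \clonefont{L_2}$, and hence also for $\clonefont{N_2}, \clonefont{I_2}$, but not for $\clonefont{L_3}$ itself).
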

\begin{proof}
    Inspecting Post's lattice
    (\Cref{fig:post_lattice}),
    note that the only clones 
    strictly contained in
    $\clonefont{L_3}$ are
    $\clonefont{L_2}, \clonefont{N_2}$ and $\clonefont{I_2}$.
    We will first show that the reduction holds for the case
    $\Pol(S)=\clonefont{L_2}$ and then prove that 
    the reduction also holds for the case $\Pol(S)=\clonefont{L_3}$.
    \Cref{prop:poly}
    will then imply the cases $\Pol(S) \in \set{\clonefont{N_2},\clonefont{I_2}}$, since
    $\clonefont{I_2} \sseq \clonefont{N_2} \sseq \clonefont{L_3}$.

    It's not hard to check that,
    if $\Pol(S) = \clonefont{L_2}$,
    then $\Pol(S) \sseq \Pol(\txorsat)$
    (it suffices to observe that bitwise XORing
    three satisfying assignments to a linear equation gives rise to 
    a new satisfying assignment to the same equation).
    Therefore, from \Cref{prop:poly} we deduce
    that $\txorsat$ admits a reduction to
    $\CSPf{S}$ in $\mNL$.
    In order to prove the case $\Pol(S) = \clonefont{L_3}$, we first prove the
    following claim.

    \begin{claim*}[\protect{\cite[Lemma 3.11]{DBLP:journals/jcss/AllenderBISV09}}]
        \label{lem:l2-to-l3}
        Let $S$ be a finite set of relations such that
        $\Pol(S) = \clonefont{L_2}$.
        There exists a finite set of relations $S'$
        such that
        $\Pol(S') = \clonefont{L_3}$
        and
        $\CSPfn{S}{n} \mprojred \CSPfn{S'}{n+1}$.
    \end{claim*}
    \begin{proof}
        We describe the proof of Lemma 3.11
        in~\cite{DBLP:journals/jcss/AllenderBISV09}
        and observe that it gives a monotone reduction.

        For a relation $R \in S$,
        let $R' = \set{(\neg x_1,\dots, \neg x_k) : (x_1,\dots,x_k) \in R}$.
        Let also $S' = \set{R' : R \in S}$.
        It's not hard to check that 
        $\Pol(S') = \clonefont{L_3}$,
        since $S'$ is 
        an invariant of
        $\clonefont{L_2}$ and $\clonefont{N_2}$,
        and $\clonefont{L_3}$
        is the smallest clone containing both $\clonefont{L_2}$ and $\clonefont{N_2}$;
        moreover, if $\rho \in \Pol(S')$ and $\rho$ is a Boolean function on at
        least two bits, then
        $\rho \in \Pol(S) = \clonefont{L_2}$.

        Now let 
        $F$
        be a instance of $\CSPfn{S}{n}$.
        For every constraint $C = R(x_1,\dots,x_k)$ in $F$,
        we
        add the constraint
        $C' = R'(\alpha,x_1,\dots,x_k)$ to the $S'$-formula $F'$,
        where $\alpha$ is a new variable.
        Note that $F'$ is a $S'$-formula, defined on $n+1$ variables,
        which
        is satisfiable 
        if and only if $F$ is satisfiable.
        Moreover, the construction of $F'$ from $F$ can be done with a monotone
        projection.
    \end{proof}

    Since the case $\Pol(S) = \clonefont{L_2}$ holds,
    the case $\Pol(S) = \clonefont{L_3}$ now follows
    from~\Cref{prop:poly} and the Claim.
    Finally, from \Cref{prop:poly}
    we conclude that the 
    reduction
    also holds for the case
    $\Pol(S) \in \set{\clonefont{N_2},\clonefont{I_2}}$, since
    $\clonefont{I_2} \sseq \clonefont{N_2} \sseq \clonefont{L_3}$.
\end{proof}

\begin{theorem}[Dichotomy for monotone circuits]
    \label{thm:monckt-dichotomy}
    Let $S$ be a finite set of relations.
    If 
    $\Pol(S) \sseq \clonefont{L_3}$
    then there exists a constant $\eps > 0$
    such that
    $\Cmon{\CSPf{S}} = 2^{\Omega(n^\eps)}$.
    Otherwise, 
    we have
    $\Cmon{\CSPf{S}} = n^{O(1)}$.
\end{theorem}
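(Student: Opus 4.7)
My plan splits naturally along the two directions of the dichotomy. For the lower bound, suppose $\Pol(S) \sseq \clonefont{L_3}$. I would apply \Cref{lem:l3_xorsat} to obtain a polynomial-size monotone $\NL$ reduction from $\txorsat_m$ to $\CSPfn{S}{\poly(m)}$. Since any polynomial-size monotone nondeterministic branching program can be simulated by a polynomial-size monotone circuit, a hypothetical monotone circuit of size $s(n)$ computing $\CSPfn{S}{n}$ would yield a monotone circuit of size $\poly(m) \cdot s(\poly(m))$ computing $\txorsat_m$. Combined with the exponential lower bound $\Cmon{\txorsat_m} = 2^{\Omega(m^\eps)}$ from \Cref{thm:xor-sat}, this gives $\Cmon{\CSPfn{S}{n}} = 2^{\Omega(n^{\eps'})}$ for some $\eps' > 0$ that depends on $\eps$ and the degree of the polynomial overhead in the reduction.

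For the upper bound, suppose $\Pol(S) \not\sseq \clonefont{L_3}$, and perform a case analysis on $\Pol(S)$ via Post's lattice (\Cref{fig:post_lattice}). The only clones contained in $\clonefont{L_3}$ are $\set{\clonefont{I_2}, \clonefont{N_2}, \clonefont{L_2}, \clonefont{L_3}}$. By inspection of the lattice, any other $\Pol(S)$ must contain either one of the constant-preserving clones $\clonefont{I_0}$, $\clonefont{I_1}$ (in which case $\CSPf{S}$ is trivial by \Cref{thm:refined-completeness}), or at least one of the clones appearing in the hypotheses of \Cref{thm:mon-schaefer}: namely $\clonefont{E_2}$ or $\clonefont{V_2}$ (giving $\CSPf{S} \in \mSIZE[\poly]$ by Part (1) of that theorem), or $\clonefont{D_2}$, $\clonefont{S_{00}}$ or $\clonefont{S_{10}}$ (giving $\CSPf{S} \in \mNL$ by Part (2), which in turn implies polynomial-size monotone circuits). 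For intermediate clones that do not themselves appear on the tractable list (for instance $\clonefont{M_2}$, $\clonefont{D}$, $\clonefont{D_1}$, or the $\clonefont{S}_{02}^k$/$\clonefont{S}_{12}^k$/$\clonefont{R_2}$ families appearing in \Cref{thm:refined-completeness}), I would invoke \Cref{prop:poly} to reduce $\CSPf{S}$ monotonically to an instance whose polymorphism set equals one of the tractable clones already handled, and inherit the polynomial-size monotone upper bound from that case.

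The main obstacle is the systematic case analysis in the upper bound: one must verify, by a careful walk through Post's lattice, that every clone outside $\clonefont{L_3}$ is either tractable by \Cref{thm:mon-schaefer} or reduces via \Cref{prop:poly} to a tractable one. This relies on the structural picture developed in \cite{DBLP:journals/jcss/AllenderBISV09} and on well-known relations among the clones of Post's lattice. A minor subtlety on the lower bound side is that the polynomial overhead introduced by \Cref{lem:l3_xorsat} slightly degrades the exponent; this is absorbed by reparameterizing in terms of the input length of $\CSPf{S}$, yielding the stated constant $\eps > 0$ in the theorem.
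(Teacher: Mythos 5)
Your proposal is correct and follows essentially the same strategy as the paper's proof: the lower bound combines \Cref{lem:l3_xorsat} with the exponential bound of \Cref{thm:xor-sat}, and the upper bound is a case analysis over Post's lattice using \Cref{thm:mon-schaefer}.

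One small redundancy in the upper-bound direction is worth flagging. You introduce an extra reduction step via \Cref{prop:poly} for ``intermediate clones that do not themselves appear on the tractable list,'' but this step is unnecessary. \Cref{thm:mon-schaefer} is stated with hypotheses of the form $\clonefont{E_2} \sseq \Pol(S)$, $\clonefont{V_2} \sseq \Pol(S)$, etc.\ (containment, not equality), so it applies directly to every clone containing one of the listed subclones. The key Post's-lattice fact the paper exploits is that any clone not contained in $\clonefont{L_3}$ must contain at least one of $\clonefont{I_0}$, $\clonefont{I_1}$, $\clonefont{E_2}$, $\clonefont{V_2}$, $\clonefont{D_2}$ --- a list even shorter than yours, since $\clonefont{V_2} \sseq \clonefont{S_{00}}$ and $\clonefont{E_2} \sseq \clonefont{S_{10}}$ --- which makes the three-case analysis immediate. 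Your extra invocation of \Cref{prop:poly} would also work, but it is a detour rather than a requirement.
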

\begin{proof}
    If $\Pol(S) \sseq \clonefont{L_3}$, the lower bound follows from
    the 'moreover' part of 
    \Cref{thm:xor-sat}, and 
    \Cref{lem:l3_xorsat}.
    For the upper bound, we inspect Post's
    lattice~(\Cref{fig:post_lattice}).
    Observe that, if $\Pol(S) \not\sseq \clonefont{L_3}$, the following are the only possible cases:
    \begin{enumerate}
        \item $\clonefont{I_0} \sseq \Pol(S)$ or $\clonefont{I_1} \sseq \Pol(S)$.
            In both cases, any $\CNF(S)$ is trivially satisfiable.
        \item $\clonefont{E_2} \sseq \Pol(S)$ or $\clonefont{V_2} \sseq \Pol(S)$.
            In this case, $\CSPf{S} \in \mSIZE[\poly]$ 
            by~\Cref{thm:mon-schaefer}.
        \item $\clonefont{D_2} \sseq \Pol(S)$.
            In this case, $\CSPf{S} \in \mNL \sseq \mSIZE[\poly]$ 
            by~\Cref{thm:mon-schaefer}.
            \qedhere
    \end{enumerate}
\end{proof}

\tikzset{NA/.style  = {fill = white}}
\tikzset{mP/.style  = {fill = magenta!50}}
\tikzset{LB/.style  = {pattern = north east lines, pattern color = applegreen!80, line width=1.5pt}}
\newcommand\CktDichotomy
    {R2/.cstyle=mP,
     M/.cstyle=NA,
     M2/.cstyle=mP,
     D/.cstyle=mP,
     D2/.cstyle=mP,
     S212/.cstyle=mP,
     S312/.cstyle=mP,
     S12/.cstyle=mP,
     S202/.cstyle=mP,
     S302/.cstyle=mP,
     S02/.cstyle=mP,
     S00/.cstyle=mP,
     S300/.cstyle=mP,
     S200/.cstyle=mP,
     S10/.cstyle=mP,
     S310/.cstyle=mP,
     S210/.cstyle=mP,
     E2/.cstyle=mP,
     V2/.cstyle=mP,
     L2/.cstyle=LB,
     L3/.cstyle=LB,
     N2/.cstyle=LB,
     I2/.cstyle=LB,
     xscale=-1, yscale=-1,scale=1,labels=wagner,cshape/circle/.append style={minimum width=15pt}
    }
\newlength{\cktdichotomybox}
\settowidth{\cktdichotomybox}{Requires monotone circuits of large size}
\addtolength{\cktdichotomybox}{5mm}
\begin{figure}
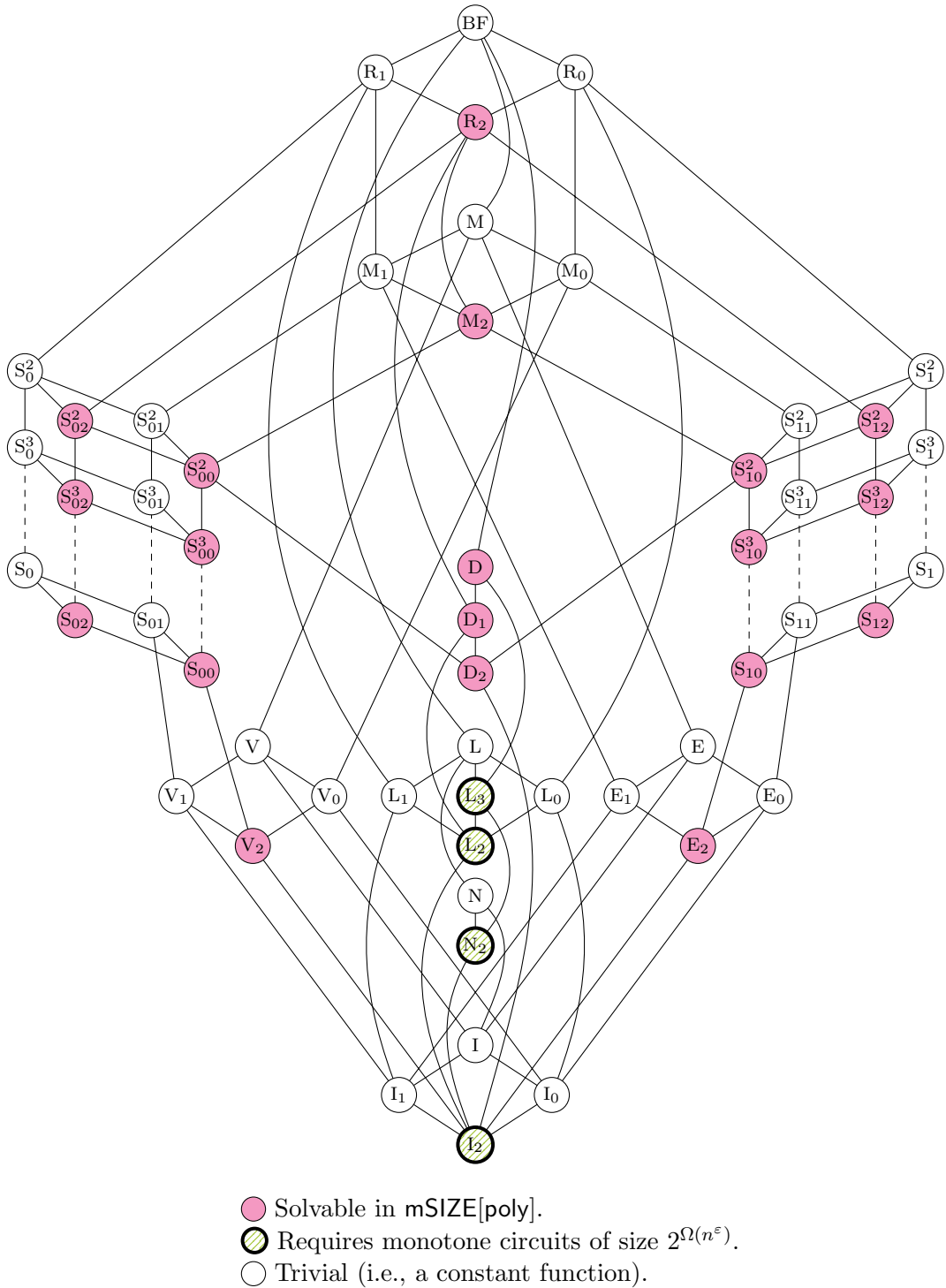

    \centering
    \expandafter\postlattice\expandafter[\CktDichotomy]
    \bigskip

    \begin{tabular}{l}
        \makebox[\cktdichotomybox][l]{\protect\postlegend{mP} Solvable in
        $\mSIZE[\poly]$.}
        \\
        \makebox[\cktdichotomybox][l]{\protect\postlegend{LB} Requires
        monotone circuits of size $2^{\Omega(n^{\eps})}$.}
        \\
        \makebox[\cktdichotomybox][l]{\protect\postlegend{NA} 
        Trivial (i.e., a constant function).}
    \end{tabular}

    \caption{Illustration of \Cref{thm:monckt-dichotomy}.
    The vertices are colored with the 
    \emph{monotone circuit size} complexity of deciding CSPs whose set of
    polymorphisms corresponds to the label of the vertex.
    }
    \label{fig:monckt-dichotomy}
\end{figure}

\begin{remark}
    \label{rk:lifting-approximation}
    We remark that the lifting theorem of~\cite{gkrs19}
    (which is an ingredient in the proof of \Cref{thm:xor-sat})
    is only used to prove that the monotone complexity of 
    $\CSPf{S}$
    is exponential
    when
    $\Pol(S) \sseq \clonefont{L_3}$.
    If we only care to show a superpolynomial separation, then 
    it suffices to apply the superpolynomial lower bound for CSPs 
    with counting proved
    in~\cite{DBLP:journals/siamcomp/FederV98,DBLP:journals/combinatorica/BabaiGW99}
    using the approximation method.
    Indeed, we give an explicit proof in~\Cref{sec:xorsat-appr}.
    The same holds for the consequences of this theorem
    (see \Cref{thm:cspsat-hard}).
\end{remark}

\subparagraph{Dichotomy for formulas.}

Define $\thornsat_n : \blt^{2n^3+n} \to \blt$ 
as $\thornsat_n = \CSPfn{\hornt}{n}$, where
$$
\hornt = \set{
    (\neg x_1 \lor \neg x_2 \lor x_3),
    (\neg x_1 \lor \neg x_2 \lor \neg x_3),
    (x)
}.$$
The following is proved 
in~\cite{DBLP:journals/combinatorica/RazM99, gkrs19}.

\begin{theorem}[\cite{DBLP:journals/combinatorica/RazM99, gkrs19}]
    \label{thm:3hornsat_lb}
    There exists $\eps > 0$
    such that
    $\thornsat \in \mSIZE[\poly] \sm \mDEPTH[o(n^\eps)]$.
\end{theorem}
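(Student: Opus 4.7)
The statement bundles a polynomial monotone size upper bound with a polynomial monotone depth lower bound, and I would prove the two parts independently, re-using machinery already developed in the paper.

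For the upper bound $\thornsat \in \mSIZE[\poly]$, the plan is to exhibit the binary conjunction $\land$ as a polymorphism of $\hornt$. Each relation in $\hornt$ is a Horn clause (at most one positive literal per clause), and the standard Horn closure property says that the coordinate-wise $\land$ of two satisfying assignments of a Horn clause is again satisfying; the unit relation $(x)$ is trivially closed under $\land$ as well. Hence $\land \in \Pol(\hornt)$, and inspecting Post's lattice (\Cref{fig:post_lattice}) this gives $\clonefont{E_2} \sseq \Pol(\hornt)$. Part~1 of \Cref{thm:mon-schaefer} then immediately yields a polynomial-size monotone circuit for $\thornsat$.

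For the depth lower bound, my plan is to invoke the lifting framework of \cite{gkrs19}---the same tool underlying \Cref{thm:xor-sat}---but applied to \emph{pebbling contradictions} in place of Tseitin contradictions. Pebbling contradictions over a DAG $G$ with sources $s_1,\ldots,s_k$ and a designated sink $t$ form an unsatisfiable $\hornt$-formula in a natural way: source axioms are constraint applications of the unit relation $(x)$, internal pebbling rules ``$x_u \land x_v \Rightarrow x_w$'' are constraint applications of $(\neg x_1 \lor \neg x_2 \lor x_3)$, and the sink-negation axiom is a constraint application of $(\neg x_1 \lor \neg x_2 \lor \neg x_3)$ on a triple that includes the sink variable. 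It is classical, going back to \cite{DBLP:journals/combinatorica/RazM99}, that for a suitable family of DAGs (e.g., pyramids or suitably balanced trees) on $m$ nodes pebbling contradictions require resolution depth $\Omega(m^{\delta})$ for some $\delta > 0$. Feeding such a depth lower bound into the gadget composition of \cite{gkrs19} produces a monotone formula depth lower bound of $\Omega(m^{\delta})$ for the lifted Boolean function, which by construction factors through $\CSPfn{\hornt}{n}$ for $n = \poly(m)$; setting $\eps = \delta/c$ for the appropriate polynomial blow-up $c$ gives $\mdepth{\thornsat_n} = \Omega(n^{\eps})$.

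The main technical obstacle is bookkeeping around the gadget composition, namely verifying that the lifted unsatisfiable instance, after the Raz--McKenzie-style gadget is attached, still corresponds to an $S$-formula with $S = \hornt$ rather than some larger relational vocabulary. This parallels the verification in the $\txorsat$ case: because pebbling contradictions already sit inside $\CNF(\hornt)$, and the lifting gadget introduces fresh variables but preserves the underlying clausal template, the resulting CSP instance can be realised on $\poly(m)$ variables with constraints drawn only from $\hornt$, so the monotone depth lower bound proved for the lifted function transfers to $\thornsat$ itself.
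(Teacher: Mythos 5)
Your proposal is correct and follows essentially the same route as the paper's own proof sketch: for the upper bound, note that $\clonefont{E_2} \subseteq \Pol(\hornt)$ and apply Part~1 of \Cref{thm:mon-schaefer}; for the lower bound, lift a Resolution-depth lower bound for pebbling contradictions (which are $\hornt$-formulas) via the Raz--McKenzie/GKRS framework. One small remark: the ``technical obstacle'' you flag at the end is not really an obstacle in the GKRS framework, since their lifting theorem is stated directly as ``Resolution depth lower bound for an unsatisfiable $S$-formula $\Rightarrow$ monotone depth lower bound for $\CSPf{S}$,'' so the conclusion is automatically about $\thornsat = \CSPf{\hornt}$ without any bookkeeping about whether the gadget composition stays inside the Horn vocabulary.
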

\begin{proof}[Proof sketch]
    Since $\clonefont{E_2} \sseq \Pol(\hornt)$ (see,
    e.g.,~\cite[Lemma 4.8]{DBLP:books/daglib/0004131}),
    the upper bound follows from~\Cref{thm:mon-schaefer}.
    The lower bound follows from a lifting theorem
    of~\cite{DBLP:journals/combinatorica/RazM99,gkrs19}.
    They show that 
    the monotone circuit-depth of $\thornsat$
    is at least the
    depth of the smallest Resolution-tree
    refuting a so-called \emph{pebbling formula}.
    Since this formula requires Resolution-trees of depth $n^\eps$,
    the lower bound follows.
\end{proof}

Analogously to the previous section, we show that $\thornsat$ reduces
to $\CSPf{S}$ whenever
$\Pol(S)$ is small enough, in 
a precise sense stated below.
We then deduce the dichotomy for formulas with a similar argument.

\begin{lemma}
    \label{lem:l3_hornsat}
    Let $S$ be a finite set of relations.
    If 
    $\Pol(S) \sseq \clonefont{E_2}$ or $\Pol(S) \sseq \clonefont{V_2}$,
    then
    $\thornsat \mnlmred \CSPf{S}$.
\end{lemma}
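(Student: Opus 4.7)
The plan is to mirror the two-case strategy used in the proof of \Cref{lem:l3_xorsat}, exploiting the duality between $\clonefont{E_2}$ and $\clonefont{V_2}$. Case~1 ($\Pol(S)\sseq\clonefont{E_2}$) is the ``Horn side'' and is handled by a direct application of \Cref{prop:poly}. Case~2 ($\Pol(S)\sseq\clonefont{V_2}$) is the ``dual-Horn side'', which I reduce to Case~1 by passing through a dual CSP $\CSPf{\hornt^*}$; the crucial point is that the bit-flipping duality, which is not monotone at the level of Boolean assignments, is realisable by a monotone projection at the level of the monotone CSP encoding.

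For Case~1, recall that the proof sketch of \Cref{thm:3hornsat_lb} already observes $\clonefont{E_2}\sseq\Pol(\hornt)$: Horn clauses are preserved under coordinate-wise $\land$, and the unit clause $\{1\}$ is as well, so both $\land$ and the relevant constants used to generate $\clonefont{E_2}$ lie in $\Pol(\hornt)$. Combining with the hypothesis $\Pol(S)\sseq\clonefont{E_2}$ gives $\Pol(S)\sseq\Pol(\hornt)$, so \Cref{prop:poly} yields $\thornsat=\CSPf{\hornt}\mnlmred\CSPf{S}$ directly.

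For Case~2, I would define $\hornt^*:=\{R^*:R\in\hornt\}$, where $R^*$ is obtained from $R$ by flipping $0\leftrightarrow 1$ in every coordinate of every tuple. Explicitly, $\hornt^*=\{(x_1\lor x_2\lor\neg x_3),(x_1\lor x_2\lor x_3),(\neg x)\}$. The first step is to show $\thornsat\mprojred\CSPf{\hornt^*}$: given a $\hornt$-formula $F$, let $F^*$ be the $\hornt^*$-formula obtained by replacing each constraint application of $R$ on a tuple of variables $V$ by the constraint application of $R^*$ on the same $V$. The map $\alpha\mapsto\neg\alpha$ is a bijection between satisfying assignments of $F$ and $F^*$, so $F$ is satisfiable iff $F^*$ is. In the monotone encoding, the transformation $F\mapsto F^*$ is merely a permutation of input bits, routing the bit for each $(V,R)$ to the bit for $(V,R^*)$, hence a monotone projection. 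The second step is to apply \Cref{prop:poly}: by the duality between $\land$ and $\lor$, dual-Horn clauses are preserved under coordinate-wise $\lor$, so an analogous check gives $\clonefont{V_2}\sseq\Pol(\hornt^*)$. Together with $\Pol(S)\sseq\clonefont{V_2}$, this yields $\Pol(S)\sseq\Pol(\hornt^*)$, hence $\CSPf{\hornt^*}\mnlmred\CSPf{S}$ by \Cref{prop:poly}. Composing the two reductions finishes the case.

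The main obstacle, and the point to be careful about, is preserving monotonicity through the duality step in Case~2. Naively, duality between Horn and dual-Horn is implemented by negating all variables, which is non-monotone. The observation that rescues the argument is that in the monotone CSP encoding the input bits do not encode variable assignments; they encode the presence of constraint applications, indexed by pairs $(V,R)$. The duality therefore acts purely as a syntactic relabelling of relations (i.e.\ a bit permutation of the instance description), which is a monotone projection. Once this is spelled out, the only remaining verifications are the two routine facts $\clonefont{E_2}\sseq\Pol(\hornt)$ and $\clonefont{V_2}\sseq\Pol(\hornt^*)$, both of which follow from the standard closure properties of Horn and dual-Horn relations.
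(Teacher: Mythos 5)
Your proof is correct and takes essentially the same approach as the paper: Case~1 is a direct application of \Cref{prop:poly} using $\clonefont{E_2}\sseq\Pol(\hornt)$, and Case~2 passes to the dual-Horn set (your $\hornt^*$ equals the paper's $\ahornt$) via the variable-negation trick realised as a monotone projection on the CSP encoding, then applies \Cref{prop:poly} again. Your explicit remark that the duality acts as a bit permutation on the encoding (rather than a negation on assignments) nicely spells out what the paper leaves implicit; one small slip is the reference to ``constants used to generate $\clonefont{E_2}$'' — the clone $\clonefont{E_2}=[\{\land\}]$ has no constants in its base — but this does not affect the correctness of the inclusion $\clonefont{E_2}\sseq\Pol(\hornt)$.
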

\begin{proof}
    We first consider the case $\Pol(S) \sseq \clonefont{E_2}$.
    Note that
    $\clonefont{E_2} \sseq \Pol(\thornsat)$
    (see, e.g., \cite[Lemma 4.8]{DBLP:books/daglib/0004131}).
    Therefore, from \Cref{prop:poly} we deduce
    that $\thornsat$ admits a reduction to
    $\CSPf{S}$ in $\mNL$.

    Now let
    $\tahornsat = \CSPf{\ahornt}$,
    where
    $\ahornt = 
    \set{
        (x_1 \lor x_2 \lor \neg x_3),
        (x_1 \lor x_2 \lor x_3),
        (\neg x)
    }$.
    Observe that a
    $\hornt$-formula $\varphi$
    is satisfiable if and only if
    the $\ahornt$-formula 
    $\varphi(\neg x_1,\dots,\neg x_n)$
    is satisfiable.
    Therefore, we have
    $\thornsat \mprojred \tahornsat$.
    Observing that
    $\clonefont{V_2} \sseq \Pol(\ahornt)$
    (again, see e.g. \cite[Lemma 4.8]{DBLP:books/daglib/0004131}),
    the result now follows from
    \Cref{prop:poly} and the previous paragraph.
\end{proof}

\begin{theorem}[Dichotomy for monotone formulas]
    \label{thm:monfml-dichotomy}
    Let $S$ be a finite set of relations.
    If 
    $\Pol(S) \sseq \clonefont{L_3}$, or
    $\Pol(S) \sseq \clonefont{V_2}$, or
    $\Pol(S) \sseq \clonefont{E_2}$,
    then there 
    is
    a constant $\eps > 0$
    such that
    $\mdepth{\CSPf{S}} = {\Omega(n^\eps)}$.
    Otherwise, 
    we have
    $\CSPf{S} \in \mNL
    \sseq \mNC^2 \sseq \mDEPTH[\log^2 n]$.
\end{theorem}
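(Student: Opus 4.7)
The plan is to mirror the template of \Cref{thm:monckt-dichotomy}, splitting into a lower bound direction (three cases, matching the three containments) and an upper bound direction (via Post's lattice together with \Cref{thm:mon-schaefer}).

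For the lower bound, when $\Pol(S) \sseq \clonefont{L_3}$, I would apply \Cref{lem:l3_xorsat} to obtain $\txorsat \mnlmred \CSPf{S}$ and convert the size lower bound of \Cref{thm:xor-sat} into a depth lower bound for $\txorsat$ by the elementary observation that any monotone circuit of depth $d$ unfolds into a monotone formula of depth $d$ and size $\leq 2^d$; since formulas are a special case of circuits, the size lower bound $2^{\Omega(n^\eps)}$ forces $d = \Omega(n^\eps)$. When $\Pol(S) \sseq \clonefont{V_2}$ or $\Pol(S) \sseq \clonefont{E_2}$, I would instead apply \Cref{lem:l3_hornsat} to get $\thornsat \mnlmred \CSPf{S}$ and invoke the depth lower bound $\mdepth{\thornsat} = \Omega(n^\eps)$ of \Cref{thm:3hornsat_lb} directly. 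In both cases the reduction transfers the lower bound: each bit of an $\mnlmred$ reduction is computed by a polynomial-size monotone nondeterministic branching program, hence by a monotone circuit of depth $O(\log^2 n)$, so composing with any depth-$d$ monotone circuit for $\CSPf{S}$ yields a depth-$O(d + \log^2 n)$ monotone circuit for the hard function. Since the instance blow-up $N = \poly(n)$ is polynomial, we conclude $\mdepth{\CSPf{S}} = \Omega(N^{\eps'})$ for some $\eps' > 0$.

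For the upper bound, assume none of the three containments holds. Inspecting Post's lattice (\Cref{fig:post_lattice}) and the taxonomy underlying \Cref{thm:refined-completeness}, the remaining possibilities are precisely those where $\Pol(S)$ either contains $\clonefont{I_0}$ or $\clonefont{I_1}$ (in which case $\CSPf{S}$ is trivial and the bound is immediate), or contains one of $\clonefont{D_2}$, $\clonefont{S_{00}}$, or $\clonefont{S_{10}}$. In each of the latter three cases \Cref{thm:mon-schaefer}, Part 2, yields $\CSPf{S} \in \mNL$, and the standard monotone version of the inclusion $\NL \sseq \NC^2$ (via monotone iterated matrix powering of the reachability matrix for monotone nondeterministic branching programs) gives $\CSPf{S} \in \mNC^2 \sseq \mDEPTH[\log^2 n]$.

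The main obstacle is the case analysis on Post's lattice for the upper bound: verifying that every clone $\Pol(S)$ not below any of $\clonefont{L_3}$, $\clonefont{V_2}$, $\clonefont{E_2}$ either trivialises $\CSPf{S}$ or triggers the $\mNL$ upper bound of \Cref{thm:mon-schaefer}. This requires a systematic walk through the finitely many clones appearing above the three ``hard'' ones in Post's lattice, which fortunately has already been charted by the classification behind \Cref{thm:refined-completeness}. A secondary, purely bookkeeping issue is tracking how the polynomial size blow-up in the $\mnlmred$ reduction affects the exponent of the depth lower bound, but this only changes $n^\eps$ to $N^{\eps/c}$ for the polynomial degree $c$ of the reduction, preserving the desired $\Omega(N^{\eps'})$ form.
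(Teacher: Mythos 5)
Your proposal is correct and takes essentially the same approach as the paper's proof: for the lower bound, the $\clonefont{V_2}$/$\clonefont{E_2}$ cases go through $\thornsat$ via \Cref{lem:l3_hornsat} and \Cref{thm:3hornsat_lb} exactly as you describe, and for the upper bound you perform the same Post's lattice case split (trivial clones above $\clonefont{I_0}/\clonefont{I_1}$; $\mNL$ via \Cref{thm:mon-schaefer} for $\clonefont{D_2}$, $\clonefont{S_{00}}$, $\clonefont{S_{10}}$; then $\mNL \sseq \mNC^2$). The only cosmetic difference is in the $\clonefont{L_3}$ case: the paper simply invokes \Cref{thm:monckt-dichotomy} to get $\Cmon{\CSPf{S}} = 2^{\Omega(n^\eps)}$ and reads off the depth bound from the fan-in-two ``size $\leq 2^{\text{depth}}$'' relation applied directly to $\CSPf{S}$, whereas you re-derive the same inequality at the level of $\txorsat$ and then transfer through the $\mnlmred$ reduction, which is logically equivalent.
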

\begin{proof}
    We will first prove the lower bound.
    If $\Pol(S) \sseq \clonefont{L_3}$, the lower bound follows
    from \Cref{thm:monckt-dichotomy}.
    If
    $\Pol(S) \sseq \clonefont{V_2}$ or
    $\Pol(S) \sseq \clonefont{E_2}$,
    the lower bound follows 
    from \Cref{thm:3hornsat_lb,lem:l3_hornsat}.
    
    By inspecting Post's 
    lattice~(\Cref{rk:post_lattice}),
    we see that
    the remaning cases are:
    \begin{enumerate}
        \item $\clonefont{I_0} \sseq \Pol(S)$ or $\clonefont{I_1} \sseq \Pol(S)$.
            In both cases, any $\CNF(S)$ is trivially satisfiable.
        \item 
            $\clonefont{S_{00}} \sseq \Pol(S)$,
            or
            $\clonefont{S_{10}} \sseq \Pol(S)$,
            or
            $\clonefont{D_2} \sseq \Pol(S)$.
            In all of those three cases, we have
            $\CSPf{S} \in \mNL$ by
            \Cref{thm:mon-schaefer}.
            \qedhere
    \end{enumerate}
\end{proof}

\tikzset{mD/.style = {fill = mikadoyellow}}
\tikzset{dLB/.style = {pattern = north east lines, pattern color = cyan(process)!70, line width=1.5pt}}
\newcommand\FmlDichotomy
    {R2/.cstyle=mD,
     M/.cstyle=NA,
     M2/.cstyle=mD,
     D/.cstyle=mD,
     D2/.cstyle=mD,
     S212/.cstyle=mD,
     S312/.cstyle=mD,
     S12/.cstyle=mD,
     S202/.cstyle=mD,
     S302/.cstyle=mD,
     S02/.cstyle=mD,
     S00/.cstyle=mD,
     S300/.cstyle=mD,
     S200/.cstyle=mD,
     S10/.cstyle=mD,
     S310/.cstyle=mD,
     S210/.cstyle=mD,
     E2/.cstyle=dLB,
     V2/.cstyle=dLB,
     L2/.cstyle=dLB,
     L3/.cstyle=dLB,
     N2/.cstyle=dLB,
     I2/.cstyle=dLB,
     xscale=-1, yscale=-1,scale=1,labels=wagner,cshape/circle/.append style={minimum width=15pt}
    }
\begin{figure}
    \centering
    \expandafter\postlattice\expandafter[\FmlDichotomy]
    \bigskip

    \begin{tabular}{l}
        \makebox[\cktdichotomybox][l]{\protect\postlegend{mD} Solvable in
        $\mNL \sseq \mDEPTH[O(\log^2 n)]$.}
        \\
        \makebox[\cktdichotomybox][l]{\protect\postlegend{dLB} Requires
        monotone depth ${\Omega(n^{\eps})}$.}
        \\
        \makebox[\cktdichotomybox][l]{\protect\postlegend{NA} 
        Trivial
        (i.e., a constant function).}
    \end{tabular}

    \caption{Illustration of \Cref{thm:monfml-dichotomy}.
        The vertices are colored with the 
        \emph{monotone circuit depth} complexity of deciding CSPs whose set of
        polymorphisms corresponds to the label of the vertex.
    }
    \label{fig:monfml-dichotomy}
\end{figure}

\subsection{Some auxiliary results}
\label{sec:csp-ac0}

In this section, we prove auxiliary results needed in the proof
of a more general form of \Cref{thm:cspsat-intro}.
In particular, we will prove that all $\CSPf{S}$ which are in $\AC^0$ 
are also contained in $\mAC^0 \sseq \mNC^1$.
Moreover, we show that,
if $\CSPf{S} \notin \mNC^1$,
then $\CSPf{S}$ is $\L$-hard under $\acmred$ reductions.

We first observe that, when $\COQ(S_1) \sseq \COQ(S_2)$,
there exists an efficient low-depth reduction
from $\CSPf{S_1}$ to $\CSPf{S_2}$.
This reduction, which will be useful in this section,
is more refined than the one given by
\Cref{prop:poly}.
A proof of the non-monotone version of this statement
is found in
\cite[Proposition 2.3]{playing_two},
and we give a monotone version of this proof in
\Cref{sec:reduction_monotone}.

\begin{restatable}[\protect{\cite[Proposition 2.3]{playing_two}}]{lemma}{inclcoq}
    \label{prop:incl-coq}
    If $\COQ(S_1) \sseq \COQ(S_2)$,
    then
    there exists a constant $C \in \bbn$ such that
    $\CSPfn{S_1}{n} \mormred \CSPfn{S_2}{C n}$.
\end{restatable}
\begin{proof}
    We defer the proof to
    \Cref{sec:reduction_monotone}.
\end{proof}

We now recall some lemmas
from~\cite{DBLP:journals/jcss/AllenderBISV09},
and prove a few consequences from them.
We say that a set $S$ of relations \emph{can express equality} if
$\set{=} \sseq \COQ(S)$.

\begin{lemma}[\cite{DBLP:journals/jcss/AllenderBISV09}]
    \label{lem:coq-no-eq}
    Let $S$ be a finite set of relations.
    Suppose
    $\clonefont{S_{02}} \sseq \Pol(S)$
    ($\clonefont{S_{12}} \sseq \Pol(S)$, resp.)
    and that $S$ cannot express equality.
    Then there exists $k \geq 2$
    such that
    $S \sseq \COQ(\set{\OR^k, x, \neg x})$
    ($S \sseq \COQ(\set{\NAND^k, x, \neg x})$, resp.).
\end{lemma}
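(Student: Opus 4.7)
The plan is to combine the Galois correspondence of \Cref{lem:galois} with a known base of the co-clone $\Inv(\clonefont{S_{02}})$, and then eliminate the equality atoms in conjunctive-query representations of $S$ by a short case analysis driven by the hypothesis. I only write the argument for $\clonefont{S_{02}} \sseq \Pol(S)$; the $\clonefont{S_{12}}$ case follows by complementation of variables, which sends $\OR^k$ to $\NAND^k$ and swaps $x$ with $\neg x$.

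First, by items (iii) and (vi) of \Cref{lem:galois},
\[
    S \,\sseq\, \ccln{S} \,=\, \Inv(\Pol(S)) \,\sseq\, \Inv(\clonefont{S_{02}}).
\]
Next I would invoke the standard description of $\Inv(\clonefont{S_{02}})$ from the catalogue of Post's lattice (see, e.g., \cite{playing_one, playing_two}): this co-clone admits $\set{\OR^j : j \geq 2} \cup \set{x, \neg x}$ as a base under $\ccln{\cdot}$, so every relation in it is a conjunctive query with equality over these relations. Because $S$ is finite all its relations have bounded arity, and a single $k \geq 2$ (depending on $S$) can be chosen so that each $R \in S$ admits such a representation using only $\OR^k$, $x$, $\neg x$, and equality; smaller ORs are padded with fresh existential variables constrained to be false via the $\neg y$ atom.

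It remains to remove the equality atoms from these representations. For each $R \in S$, write $R(x_1, \dots, x_n) = \exists \bar y \, \psi(\bar x, \bar y)$ with $\psi$ a conjunction of $\OR^k$-atoms, unary atoms $z$, $\neg z$, and equality atoms. Equalities with at least one existentially quantified side can be removed by substituting one variable for the other throughout $\psi$ and dropping the redundant existential quantifier. After saturating such substitutions and collapsing chains of equalities, any surviving equality has the form $x_i = x_j$ between free variables, and then $R$ itself enforces $x_i = x_j$. I would now inspect the existential projection $\pi_{i,j}(R)$ onto these two coordinates: if $\pi_{i,j}(R) = \set{(0,0),(1,1)}$ then $=$ lies in $\COQ(S)$, contradicting the hypothesis; if $\pi_{i,j}(R) = \set{(0,0)}$ or $\set{(1,1)}$, then $R$ forces specific values at coordinates $i$ and $j$, and the atom $x_i = x_j$ can be replaced by $\neg x_i \land \neg x_j$ or by $x_i \land x_j$ respectively without changing $R$; finally if $\pi_{i,j}(R) = \emptyset$ the relation is empty and trivial to represent. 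In every case the equality atom disappears, giving a $\COQ(\set{\OR^k, x, \neg x})$ representation of $R$.

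The main obstacle will be pinning down the precise base of $\Inv(\clonefont{S_{02}})$ and justifying the choice of the single arity $k$ that works uniformly across the finitely many relations of $S$: while the description of Post's lattice and its co-clone bases is by now standard, one has to verify that the particular base $\set{\OR^k, x, \neg x}$ really does generate $\Inv(\clonefont{S_{02}})$ under $\ccln{\cdot}$ and that padding to a uniform $k$ is always available. Once this is settled, the equality-elimination step proceeds essentially mechanically as outlined, and the argument concludes. A detailed treatment along these lines appears in \cite{DBLP:journals/jcss/AllenderBISV09}.
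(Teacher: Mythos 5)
The paper does not actually prove this lemma; it simply points to the proof of Lemma~3.8 in \cite{DBLP:journals/jcss/AllenderBISV09}. What you have done is reconstruct the underlying argument from first principles, and your reconstruction is, as far as I can tell, both correct and faithful to the route taken in that reference. The chain $S \sseq \ccln{S} = \Inv(\Pol(S)) \sseq \Inv(\clonefont{S_{02}})$ via \Cref{lem:galois} is the right starting point; the description of $\Inv(\clonefont{S_{02}})$ as generated (with equality) by $\{\OR^j : j \geq 2\} \cup \{x,\neg x\}$ is the standard catalogue fact from \cite{playing_one,playing_two}, and since $S$ is finite one can indeed drop to a single $\clonefont{S_{02}^k}$ (the $\clonefont{S_{02}^m}$ form a descending chain with meet $\clonefont{S_{02}}$, so the co-clones form an ascending chain with union $\Inv(\clonefont{S_{02}})$), which gives a uniform $k$; padding smaller disjunctions with a fresh existential variable forced to $0$ by a $\neg y$ atom is correct. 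Your equality-elimination step is also sound: substitution disposes of any equality touching a bound variable, and for a surviving $x_i = x_j$ between free variables the projection $\pi_{i,j}(R)$ is a conjunctive query over $S$, so $\pi_{i,j}(R)=\{(0,0),(1,1)\}$ would contradict the hypothesis, while the subcases $\{(0,0)\}$, $\{(1,1)\}$, $\emptyset$ let you replace the equality by $\neg x_i \land \neg x_j$, by $x_i \land x_j$, or by an unsatisfiable pair $y \land \neg y$, without changing $R$. The only thing I would add for completeness is to note explicitly that after a variable substitution an $\OR^k$ atom may acquire repeated arguments, i.e.~become a lower-arity disjunction, which is then re-padded to arity $k$ in the same way; but you have already signposted the padding mechanism. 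The two caveats you flag yourself (pinning down the precise co-clone base and justifying the uniform $k$) are exactly the external inputs one needs, and both are standard. In short: correct, and essentially the same argument the cited reference carries out.
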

\begin{proof}
    Follows from the proof of Lemma 3.8
    of~\cite{DBLP:journals/jcss/AllenderBISV09}.
\end{proof}

\begin{lemma}
    \label{lem:ac0-cspsat}
    Let $S$ be a finite set of relations
    such that $\Pol(S) \sseq \clonefont{R_2}$.
    If $\clonefont{S_{02}} \sseq \Pol(S)$
    or
    $\clonefont{S_{12}} \sseq \Pol(S)$,
    and $S$ cannot express equality,
    then
    $\CSPf{S} \in 
    \mAC^0_3$.
\end{lemma}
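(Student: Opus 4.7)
The plan is to reduce $\CSPf{S}$ to a simpler CSP whose unsatisfiability we can directly verify with a constant-depth monotone circuit. Assume first that $\clonefont{S_{02}} \sseq \Pol(S)$ (the $\clonefont{S_{12}}$ case is symmetric by negating literals / swapping $\OR^k$ with $\NAND^k$). Since $S$ cannot express equality, \Cref{lem:coq-no-eq} yields a constant $k \geq 2$ such that $S \sseq \COQ(T)$ where $T = \set{\OR^k, x, \neg x}$. By \Cref{lem:galois} (items \ref{item:coq-sseq} and \ref{item:coq-idempotent}), this gives $\COQ(S) \sseq \COQ(\COQ(T)) = \COQ(T)$, so \Cref{prop:incl-coq} provides a monotone OR-reduction from $\CSPfn{S}{n}$ to $\CSPfn{T}{O(n)}$.

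The next step is to show $\CSPf{T} \in \mDNF[\poly]$. An input to $\CSPfn{T}{n}$ is the indicator vector of a multiset of constraints of the forms $x_i$, $\neg x_i$, and $\OR^k(x_{i_1},\dots,x_{i_k})$ over $n$ variables. Such a collection is unsatisfiable if and only if either (a) there is some variable $x_i$ for which both the constraint $x_i$ and the constraint $\neg x_i$ occur, or (b) there is some $\OR^k$ constraint on variables $x_{i_1},\dots,x_{i_k}$ all of which have a $\neg x_{i_j}$ unit constraint present. Both (a) and (b) are ORs (over choices of variable or clause) of ANDs (of at most $k+1 = O(1)$ input bits indicating which constraints appear), so their disjunction is a polynomial-size monotone DNF, i.e.\ a circuit in $\mAC^0_2$.

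Composing the monotone OR-reduction (depth $1$, bottom $\lor$-layer) with the OR-of-ANDs DNF for $\CSPf{T}$ yields a circuit of the form $\lor$-of-$\land$-of-$\lor$ in $\mAC^0_3$, establishing $\CSPf{S} \in \mAC^0_3$. The $\clonefont{S_{12}}$ case proceeds identically using the $\NAND^k$ variant of \Cref{lem:coq-no-eq}: here unsatisfiability is witnessed by either a $\neg x_i$ and $x_i$ both appearing, or by a $\NAND^k$ constraint all of whose variables are forced to $1$ by unit constraints $x_{i_j}$, again a monotone DNF of constant bottom fan-in.

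The one subtlety to verify is that \Cref{prop:incl-coq} indeed gives a \emph{monotone} OR-reduction (each output bit being a constant or a monotone disjunction of input bits), so that its composition with a monotone DNF remains a monotone depth-$3$ circuit; this is precisely the content of that lemma as stated. The assumption $\Pol(S) \sseq \clonefont{R_2}$ (i.e.\ polymorphisms preserve both $0$ and $1$) is used implicitly via \Cref{lem:coq-no-eq}, which is formulated under hypotheses consistent with it. No step involves a hard computation; the proof is essentially a three-line chain of reductions plus the observation that $\CSPf{T}$ has an obvious monotone DNF of constant bottom fan-in.
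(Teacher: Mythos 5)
Your proposal is correct and follows essentially the same route as the paper: invoke \Cref{lem:coq-no-eq} to reduce to $\COQ(\set{\OR^k,x,\neg x})$, use \Cref{lem:galois} and \Cref{prop:incl-coq} to obtain a monotone OR-reduction, observe that unsatisfiability of a $\set{\OR^k,x,\neg x}$-formula is detected by a polynomial-size monotone DNF of bottom fan-in $O(k)$, and compose to get a depth-$3$ monotone circuit. The paper's proof is identical in substance (it treats the $\clonefont{S_{12}}$ case as ``analogous'' rather than spelling out the $\NAND^k$ dual as you do), so there is nothing to flag.
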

\begin{proof}
    We write the proof in the case
    $\clonefont{S_{02}} \sseq \Pol(S)$.
    The other case is analogous.

    From \Cref{prop:incl-coq,lem:coq-no-eq} and
    \Cref{item:coq-idempotent,item:coq-sseq} of \Cref{lem:galois},
    we get that
    there is a monotone OR-reduction
    from
    $\CSPf{S}$ to
    $\CSPf{\set{\OR^k, x, \neg x}}$ for some $k$.
    However, an $\set{\OR^k, x, \neg x}$-formula
    is unsatisfiable iff
    there exists a literal and its negation as a constraint in the formula,
    or if there exists a disjunction
    in the formula
    such that
    every one of its literals appears negatively as a constraint.
    This condition can be easily checked by a polynomial-size
    monotone DNF.
    Composing the monotone DNF with the monotone OR-reduction,
    we obtain a depth-3 $\AC^0$ circuit computing
    $\CSPf{S}$.
\end{proof}

\begin{lemma}[\protect{\cite[Lemma 3.8]{DBLP:journals/jcss/AllenderBISV09}}]
    \label{lem:equality-lhard}
    Let $S$ be a finite set of relations
    such that $\Pol(S) \sseq \clonefont{R_2}$.
    If 
    $\clonefont{S_{02}} \sseq \Pol(S)$
    or
    $\clonefont{S_{12}} \sseq \Pol(S)$,
    and $S$ can express equality,
    then
    $\CSPf{S}$ is $\L$-hard
    under $\acmred$ reductions.
\end{lemma}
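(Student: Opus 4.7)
The plan is to establish $\L$-hardness of $\CSPf{S}$ under $\acmred$ reductions by composing two reductions
\begin{equation*}
    \mathsf{USTCON} \;\acmred\; \CSPf{\set{x, \neg x, =}} \;\mormred\; \CSPf{S},
\end{equation*}
where the first arrow is a standard $\AC^0$ reduction from undirected $st$-connectivity (which is $\L$-complete under $\acmred$ reductions by Reingold's theorem combined with classical completeness results for $\mathsf{SL}$), and the second arrow is furnished by \Cref{prop:incl-coq}. Since every monotone OR-reduction is in particular an $\AC^0$ reduction, the composition is an $\acmred$ reduction, as desired.

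The first key step is to verify that $\set{x, \neg x, =} \sseq \COQ(S)$. Viewing $0$ and $1$ as unary relations, one has $\Pol(\set{0}) = \clonefont{I_0}$ and $\Pol(\set{1}) = \clonefont{I_1}$, namely the clones of $c$-preserving Boolean functions. By hypothesis $\Pol(S) \sseq \clonefont{R_2} = \clonefont{I_0} \cap \clonefont{I_1}$, so items~\ref{item:pol-invert} and \ref{item:invpol} of \Cref{lem:galois} give $\set{0}, \set{1} \in \Inv(\Pol(S)) = \ccln{S}$. Because $S$ can express equality by assumption, $\ccln{S} = \COQ(S)$, and hence $\set{x, \neg x, =} \sseq \COQ(S)$. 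Invoking \Cref{prop:incl-coq} then produces the monotone OR-reduction $\CSPf{\set{x, \neg x, =}} \mormred \CSPf{S}$.

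For the first arrow, given an instance $(G, s, t)$ of $\mathsf{USTCON}$ I would introduce a variable $x_v$ for each $v \in V(G)$ and add the constraints $x_s$, $\neg x_t$, together with $x_u = x_v$ for every edge $\set{u, v} \in E(G)$. Propagating equalities shows that the resulting $\set{x, \neg x, =}$-formula is unsatisfiable iff $s$ and $t$ lie in the same connected component of $G$. Each bit in the description of the output formula is a trivial Boolean projection of the input bits, so the reduction is in $\AC^0$.

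The main point to verify carefully will be that the reduction produced by \Cref{prop:incl-coq} is genuinely monotone: expanding each of $\set{0}, \set{1}$ and $=$ into a conjunctive query over $S$ replaces a single constraint of the input formula by a conjunction of $S$-constraints on fresh existentially quantified variables, and the indicator bit for a given $S$-constraint in the target formula is a monotone disjunction over the (few) input indicator bits that could have generated it. I also observe that the hypothesis $\clonefont{S_{02}} \sseq \Pol(S)$ or $\clonefont{S_{12}} \sseq \Pol(S)$ is not explicitly used in this argument; it serves to delimit the relevant region of Post's lattice for the subsequent dichotomy, while the hardness proof itself only relies on $\Pol(S) \sseq \clonefont{R_2}$ together with the expressibility of equality.
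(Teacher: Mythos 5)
The paper does not supply a proof of this lemma---it is cited directly from Allender, Bauland, Immerman, Schnoor and Vollmer \cite[Lemma 3.8]{DBLP:journals/jcss/AllenderBISV09}. Your argument is therefore an independent reconstruction, and as a matter of structure it is sound: reduce $\mathsf{USTCON}$ (an $\L$-complete problem, by Reingold together with the older $\mathsf{SL}$-completeness and $\L \sseq \mathsf{SL}$ facts) to $\CSPf{\set{x,\neg x,=}}$ by a projection reduction, then use \Cref{prop:incl-coq} to get a monotone OR-reduction from $\CSPf{\set{x,\neg x,=}}$ to $\CSPf{S}$, and compose. The reduction from $\mathsf{USTCON}$ you describe (constraints $x_s$, $\neg x_t$, and $x_u = x_v$ on edges, unsatisfiable iff $s$ and $t$ are connected) is correct and is a projection, hence $\acmred$.

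There is, however, a genuine clone-naming error in the second paragraph. You write that $\Pol(\set{0}) = \clonefont{I_0}$ and $\Pol(\set{1}) = \clonefont{I_1}$, calling these ``the clones of $c$-preserving Boolean functions,'' and then assert $\clonefont{R_2} = \clonefont{I_0} \cap \clonefont{I_1}$. In the notation of this paper (see \Cref{fig:clone_table}), $\clonefont{I_0}$ and $\clonefont{I_1}$ are the tiny clones generated by projections together with a single constant, so $\clonefont{I_0} \cap \clonefont{I_1} = \clonefont{I_2}$, \emph{not} $\clonefont{R_2}$. The clones of $0$- and $1$-reproducing functions are $\clonefont{R_0}$ and $\clonefont{R_1}$, and one does indeed have $\Pol(\set{0}) = \clonefont{R_0}$, $\Pol(\set{1}) = \clonefont{R_1}$, and $\clonefont{R_2} = \clonefont{R_0} \cap \clonefont{R_1}$ by definition. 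With those names substituted, the deduction $\set{0},\set{1} \in \Inv(\Pol(S)) = \ccln{S} = \COQ(S)$ (using $\Pol(S) \sseq \clonefont{R_2}$, items \ref{item:pol-invert} and \ref{item:invpol} of \Cref{lem:galois}, and the hypothesis that $S$ expresses equality) is correct, and \Cref{prop:incl-coq} applies. Your closing observation---that the $\clonefont{S_{02}}/\clonefont{S_{12}}$ hypothesis is never used, and the conclusion holds already from $\Pol(S) \sseq \clonefont{R_2}$ plus expressibility of equality---is accurate; that hypothesis in the ABISV statement delimits a case in their larger analysis rather than being load-bearing here.
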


\begin{lemma}
    \label{lem:everything-lhard}
    Let $S$ be a finite set of relations.
    If 
    $\clonefont{S_{02}} \not\sseq \Pol(S)$
    and
    $\clonefont{S_{12}} \not \sseq \Pol(S)$,
    then
    $\CSPf{S}$ is $\L$-hard or trivial.
\end{lemma}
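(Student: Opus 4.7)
The plan is to derive this lemma as an immediate consequence of the refined classification \Cref{thm:refined-completeness} of Allender, Bauland, Immerman, Schnoor and Vollmer, together with a finite inspection of Post's lattice (\Cref{fig:post_lattice}). Since Post's lattice enumerates every clone of Boolean functions, $\Pol(S)$ must fall into one of the buckets listed in \Cref{thm:refined-completeness}. The hypothesis $\clonefont{S_{02}} \not\sseq \Pol(S)$ and $\clonefont{S_{12}} \not\sseq \Pol(S)$ is specifically designed to exclude the only bucket that can place $\CSPf{S}$ strictly below $\L$, namely the line ``$\clonefont{S_{02}} \sseq \Pol(S) \sseq \clonefont{R_2}$ or $\clonefont{S_{12}} \sseq \Pol(S) \sseq \clonefont{R_2}$''.

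More concretely, I would first handle the trivial case: if $\clonefont{I_0} \sseq \Pol(S)$ (resp.\ $\clonefont{I_1} \sseq \Pol(S)$), then every relation in $S$ contains the all-zeros (resp.\ all-ones) tuple, so every $S$-formula is satisfied by the constant assignment, and hence $\CSPf{S}$ is the constant-$0$ function. In the remaining cases, since the bucket with $\clonefont{S_{02}}$/$\clonefont{S_{12}}$ is excluded by hypothesis, \Cref{thm:refined-completeness} guarantees that $\CSPf{S}$ is $\acmred$-complete for one of $\NP, \P, \pL, \NL, \L$. Each of these classes contains $\L$, so composing the inclusion $\L \sseq \mathcal{X}$ with the $\acmred$-hardness reduction yields $\L$-hardness of $\CSPf{S}$ under $\acmred$ reductions, as desired.

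The only step that requires genuine (if routine) verification is the exhaustiveness of this case split: one must check that every clone $C$ satisfying $C \not\supseteq \clonefont{I_0}, C \not\supseteq \clonefont{I_1}, C \not\supseteq \clonefont{S_{02}}, C \not\supseteq \clonefont{S_{12}}$ is of one of the forms enumerated as hard cases in \Cref{thm:refined-completeness}, i.e., $C \in \{\clonefont{I_2}, \clonefont{N_2}, \clonefont{L_2}, \clonefont{L_3}, \clonefont{V_2}, \clonefont{E_2}, \clonefont{D_2}, \clonefont{M_2}, \clonefont{D_1}, \clonefont{D}\}$ or $C$ lies in a range $\clonefont{S_{00}} \sseq C \sseq \clonefont{S_{00}}^2$ or $\clonefont{S_{10}} \sseq C \sseq \clonefont{S_{10}}^2$. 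This is a finite diagram-chase on \Cref{fig:post_lattice} and constitutes the only obstacle to an entirely automatic proof; after this check, the lemma follows without any further combinatorial work.
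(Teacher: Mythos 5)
Your proposal is correct and takes essentially the same approach as the paper: the paper's proof is a one-line appeal to the classification theorem (\Cref{thm:refined-completeness}) together with an inspection of Post's lattice (\Cref{fig:post_lattice}), and your write-up is a faithful elaboration of exactly that argument (separating the trivial bucket, noting all remaining buckets yield completeness for a class containing $\L$, and flagging the finite exhaustiveness check on the lattice).
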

\begin{proof}
    This
    follows by 
    inspecting
    Post's 
    lattice~(\Cref{fig:post_lattice})
    and the classification theorem~(\Cref{thm:refined-completeness}).
\end{proof}

We may now prove the main result of this subsection.

\begin{theorem}
    \label{thm:ac0-csp}
    We have
    $\CSP \cap \AC^0 \sseq \mAC^0_3$.
    Moreover, if $\CSPf{S} \notin 
    \mAC^0_3$,
    then
    $\CSPf{S}$ is $\L$-hard under $\acmred$ reductions.
\end{theorem}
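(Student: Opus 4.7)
The plan is to first establish the ``moreover'' clause by a case analysis on $\Pol(S)$ driven by Post's lattice, and then derive the inclusion $\CSP \cap \AC^0 \sseq \mAC^0_3$ as an easy consequence. Concretely, suppose $\CSPf{S} \notin \mAC^0_3$ and split according to whether $\Pol(S)$ sits above $\clonefont{S_{02}}$ or $\clonefont{S_{12}}$.

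If $\clonefont{S_{02}} \sseq \Pol(S)$ or $\clonefont{S_{12}} \sseq \Pol(S)$, then \Cref{lem:ac0-cspsat} and \Cref{lem:equality-lhard} partition the remaining possibilities by whether $S$ can express equality: the ``cannot'' case yields $\CSPf{S} \in \mAC^0_3$ (excluded by hypothesis), while the ``can'' case yields that $\CSPf{S}$ is $\L$-hard under $\acmred$ reductions. Otherwise, neither clone is contained in $\Pol(S)$, and \Cref{lem:everything-lhard} gives that $\CSPf{S}$ is either trivial or $\L$-hard under $\acmred$; the trivial case is ruled out since constant functions lie in $\mAC^0_3$. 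In all remaining subcases, $\CSPf{S}$ is $\L$-hard under $\acmred$ reductions, which is exactly the conclusion of the moreover clause.

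For the inclusion, I would argue by contradiction. Suppose $\CSPf{S} \in \AC^0 \setminus \mAC^0_3$. By the moreover clause just proved, $\CSPf{S}$ is $\L$-hard under $\acmred$ reductions. Composing this reduction with the hypothesised $\AC^0$ circuit for $\CSPf{S}$, and using the fact that $\acmred$ reductions compose within $\AC^0$, we would obtain $\L \sseq \AC^0$. This contradicts the classical unconditional separation $\L \not\sseq \AC^0$ (witnessed for instance by $\mathsf{PARITY} \in \L$ together with the Furst--Saxe--Sipser/H\aa{}stad lower bound). Hence $\CSPf{S} \in \mAC^0_3$, as desired.

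The work is almost entirely bookkeeping on top of the three lemmas of \Cref{sec:csp-ac0}; the only subtle point is ensuring that the case split is genuinely exhaustive in Post's lattice, which is immediate once one notes that the three lemmas together cover every possible location of $\Pol(S)$. The ``main obstacle'' is really conceptual rather than technical: one must be comfortable with the fact that all three lemmas have already been stated in a form matching exactly the Post-lattice case split, so no further manipulation of clones, co-clones, or polymorphism dualities is required at this stage.
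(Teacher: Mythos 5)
Your overall strategy matches the paper's: establish the ``moreover'' clause by a Post-lattice case split on $\Pol(S)$ using \Cref{lem:ac0-cspsat,lem:equality-lhard,lem:everything-lhard}, then derive the inclusion $\CSP\cap\AC^0\sseq\mAC^0_3$ by combining the hardness conclusion with $\L\not\sseq\AC^0$ (which is also exactly how the paper finishes). However, there is a real gap in your case split, and the closing remark that ``the three lemmas together cover every possible location of $\Pol(S)$'' is not correct as stated.

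Both \Cref{lem:ac0-cspsat} and \Cref{lem:equality-lhard} carry the standing hypothesis $\Pol(S)\sseq\clonefont{R_2}$. In your first branch (``$\clonefont{S_{02}}\sseq\Pol(S)$ or $\clonefont{S_{12}}\sseq\Pol(S)$'') you invoke these two lemmas to ``partition the remaining possibilities by whether $S$ can express equality,'' but that dichotomy only exhausts the possibilities \emph{under the additional constraint} $\Pol(S)\sseq\clonefont{R_2}$. You never address what happens when $\clonefont{S_{02}}\sseq\Pol(S)$ (or $\clonefont{S_{12}}\sseq\Pol(S)$) but $\Pol(S)\not\sseq\clonefont{R_2}$: in that regime neither lemma applies. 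This is precisely why the paper's proof carves out an explicit third case $\Pol(S)\not\sseq\clonefont{R_2}$ and disposes of it by direct inspection of Post's lattice together with \Cref{thm:refined-completeness}. Your ``otherwise'' branch via \Cref{lem:everything-lhard} does pick up part of the $\Pol(S)\not\sseq\clonefont{R_2}$ territory (e.g.~$\Pol(S)=\clonefont{D}$, which is $\L$-complete and lies above neither $\clonefont{S_{02}}$ nor $\clonefont{S_{12}}$), but it does not cover the clones sitting above $\clonefont{S_{02}}$ or $\clonefont{S_{12}}$ that escape $\clonefont{R_2}$.

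The gap can be closed, but it requires a Post-lattice observation you do not make: every clone $C$ with $\clonefont{S_{02}}\sseq C$ or $\clonefont{S_{12}}\sseq C$ and $C\not\sseq\clonefont{R_2}$ contains a constant function, i.e.~$\clonefont{I_0}\sseq C$ or $\clonefont{I_1}\sseq C$. By \Cref{thm:refined-completeness} this forces $\CSPf{S}$ to be trivial, hence in $\mAC^0_3$, contradicting the hypothesis $\CSPf{S}\notin\mAC^0_3$. Once you add that argument (or, as the paper does, simply introduce $\Pol(S)\not\sseq\clonefont{R_2}$ as a standalone case and dispatch it via the classification theorem), the proof is complete and coincides with the paper's.
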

\begin{proof}
    Let $S$ be a finite set of relations.
    If
    $\CSPf{S} \not\in 
    \mAC^0_3$,
    then,
    by \Cref{lem:ac0-cspsat},
    at least 
    one of the following cases hold:
    \begin{enumerate}
        \item $\clonefont{S_{02}} \sseq \Pol(S) \sseq \clonefont{R_2}$ 
            or 
            $\clonefont{S_{12}} \sseq \Pol(S) \sseq \clonefont{R_2}$,
            and $S$ can express the equality relation;
        \item 
            $\clonefont{S_{02}} \not\sseq \Pol(S) \sseq \clonefont{R_2}$
            and
            $\clonefont{S_{12}} \not \sseq \Pol(S) \sseq \clonefont{R_2}$.
        \item 
            $\Pol(S) \not\sseq \clonefont{R_2}$.
    \end{enumerate}
    Since $\CSPf{S}$ is not trivial,
    we obtain that $\CSPf{S}$ is $\L$-hard 
    in the first two cases
    by
    \Cref{lem:equality-lhard,lem:everything-lhard},
    and it's easy to check that $\CSPf{S}$ is also
    $\L$-hard in the third case
    by 
    inspecting
    Post's 
    lattice~(\Cref{fig:post_lattice})
    and the classification theorem~(\Cref{thm:refined-completeness}).
    Since $\L \not\sseq \AC^0$,
    this also implies that, if $\CSPf{S} \in \AC^0$,
    then
    $\clonefont{S_{02}} \sseq \Pol(S) \sseq \clonefont{R_2}$
    or
    $\clonefont{S_{12}} \sseq \Pol(S) \sseq \clonefont{R_2}$,
    and $S$ cannot express the equality relation.
    \Cref{lem:ac0-cspsat}
    again gives
    $\CSPf{S} \in \mAC^0_3$.
\end{proof}

\subsection{Consequences for monotone circuit lower bounds via lifting}
\label{sec:csp-conseq}

We now prove a stronger form of \Cref{thm:cspsat-intro}.
In the previous section, we showed that
$\CSP \cap \AC^0 \sseq \mAC^0$.
In particular, this means that there does not exist a finite set of relations $S$
such that
$\CSPf{S}$ separates $\AC^0$ and $\mNC^1$,
a separation which we proved in \Cref{thm:ac0-fml-graph-intro}.
We will also observe that, if $\CSPf{S} \notin \mNC^2$, then $\CSPf{S}$ is
$\pL$-hard.

\begin{theorem}
    \label{thm:cspsat-hard}
    Let $S$ be a finite set of Boolean relations.
    \begin{enumerate}
        \item 
            If
            $\CSPf{S} \notin 
            \mAC^0_3$
            then $\CSPf{S}$ is $\L$-hard
            under
            $\acmred$ reductions.
        \item 
            If
            $\CSPf{S} \notin \mNC^2$,
            then $\CSPf{S}$ is $\pL$-hard
            under
            $\acmred$ reductions.
    \end{enumerate}
\end{theorem}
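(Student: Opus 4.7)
The plan is to deduce Part (1) immediately from \Cref{thm:ac0-csp}, which already asserts that whenever $\CSPf{S} \notin \mAC^0_3$ the function $\CSPf{S}$ is $\L$-hard under $\acmred$ reductions. No additional work is required for Part (1), since the theorem proved in the previous subsection is stated in exactly this contrapositive form. The interesting task is therefore Part (2), which I would tackle by combining the monotone depth dichotomy of \Cref{thm:monfml-dichotomy} with the refined non-monotone classification of Boolean CSPs from \Cref{thm:refined-completeness}.

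For Part (2), I would start from the assumption $\CSPf{S} \notin \mNC^2$. Since the ``otherwise'' clause of \Cref{thm:monfml-dichotomy} places every remaining $\CSPf{S}$ in $\mNL \sseq \mNC^2$, this hypothesis forces one of the three containments $\Pol(S) \sseq \clonefont{L_3}$, $\Pol(S) \sseq \clonefont{V_2}$, or $\Pol(S) \sseq \clonefont{E_2}$ to hold. Also, $\CSPf{S}$ cannot be a constant function (constants lie trivially in $\mNC^2$), so by the first clause of \Cref{thm:refined-completeness} we additionally have $\clonefont{I_0} \not\sseq \Pol(S)$ and $\clonefont{I_1} \not\sseq \Pol(S)$. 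These constraints together cut the relevant region of Post's lattice down to a short list.

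Inspecting \Cref{fig:post_lattice}, the clones $C$ contained in $\clonefont{L_3}$, $\clonefont{V_2}$, or $\clonefont{E_2}$ and neither containing $\clonefont{I_0}$ nor $\clonefont{I_1}$ are exactly $\clonefont{L_3}, \clonefont{L_2}, \clonefont{V_2}, \clonefont{E_2}, \clonefont{N_2}, \clonefont{I_2}$. Applying \Cref{thm:refined-completeness} in each case yields that $\CSPf{S}$ is $\acmred$-complete for $\NP$ when $\Pol(S) \in \{\clonefont{I_2}, \clonefont{N_2}\}$, for $\P$ when $\Pol(S) \in \{\clonefont{V_2}, \clonefont{E_2}\}$, and for $\pL$ when $\Pol(S) \in \{\clonefont{L_2}, \clonefont{L_3}\}$. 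Because $\pL \sseq \P \sseq \NP$, any language in $\pL$ admits a corresponding $\acmred$ reduction to $\CSPf{S}$ in each of these cases, so $\CSPf{S}$ is $\pL$-hard under $\acmred$ reductions. The whole argument is essentially bookkeeping and carries no genuine obstacle: all substantive work has already been absorbed into \Cref{thm:monfml-dichotomy} and \Cref{thm:refined-completeness}. The only delicate point to verify is that the ``hard'' side of the monotone depth dichotomy is contained in the union of classification cases that are at least $\pL$-hard, which is exactly what the Post-lattice enumeration above confirms.
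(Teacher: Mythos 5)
Your proposal is correct and follows exactly the route taken in the paper: Part (1) is dispatched by Theorem~\ref{thm:ac0-csp}, and Part (2) is obtained by combining the monotone depth dichotomy (Theorem~\ref{thm:monfml-dichotomy}) with the refined non-monotone classification (Theorem~\ref{thm:refined-completeness}). The only difference is cosmetic: you spell out the Post-lattice enumeration of the six possible clones ($\clonefont{L_3}, \clonefont{L_2}, \clonefont{N_2}, \clonefont{V_2}, \clonefont{E_2}, \clonefont{I_2}$) and check $\pL$-hardness in each case using $\pL \sseq \P \sseq \NP$, whereas the paper compresses that enumeration into a single appeal to Theorem~\ref{thm:refined-completeness}.
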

\begin{proof}
    Item (1) follows 
            from \Cref{thm:ac0-csp}.
    To prove item (2),
            suppose that $\mdepth{\CSPf{S}} = {\omega(\log^2 n)}$.
            Then, by \Cref{thm:monfml-dichotomy},
            we conclude that 
            $\Pol(S) \sseq \clonefont{L_3}$, or
            $\Pol(S) \sseq \clonefont{V_2}$, or
            $\Pol(S) \sseq \clonefont{E_2}$.
            \Cref{thm:refined-completeness} 
            implies that
            $\CSPf{S}$ is $\pL$-hard.
            \qedhere
\end{proof}

\noindent \textbf{Further Discussion.} We recall the discussion of \Cref{sec:intro_mon_csp}.
We introduced and defined the functions $\CSPf{S}$ in that section,
as a way to capture monotone circuit lower bounds proved via lifting.
This in particular captures the monotone function $\txorsat$,
which was proved in~\cite{gkrs19} to require monotone circuit lower bounds
of size $2^{n^{\Omega(1)}}$ to compute, even though 
$\pL$-machines running in 
polynomial-time
can compute it.
\Cref{thm:cspsat-hard}
proves that this separation between monotone and non-monotone circuit lower
bounds cannot be improved by varying the set of relations $S$, as we argue
below.

There are two ways one could try to find a function in $\AC^0$
with large monotone complexity using a $\CSPSAT$ function.
First, one could try to define a set of relations $S$
such that $\CSPf{S} \in \AC^0$, but the monotone complexity of
$\CSPf{S}$ is large.
However, 
Item (1) of \Cref{thm:cspsat-hard}
proves that this is impossible,
as any $\CSPSAT$ function outside of $\mAC^0$
is $\L$-hard under simple reductions and, therefore, cannot be computed in $\AC^0$.

Secondly, one could try to be apply the arguments of
\Cref{s:constant-depth},
consisting of a padding trick and a simulation
theorem.
When $S$ is the set of 3XOR relations,
then indeed we obtain a function in $\AC^0[\xor]$ with superpolynomial
monotone circuit complexity,
as proved in \Cref{thm:xor-sat}.
However, 
Item (2) of \Cref{thm:cspsat-hard}
proves that this is best possible,
as any $\CSPSAT$ function which admits a superpolynomial monotone circuit
lower bound must be $\pL$-hard and, therefore,
at least as hard as $\txorsat$ for non-monotone circuits.
Item (2) also shows that even $\CSPSAT$ functions with a $\omega(\log^2 n)$
monotone depth lower bound must be $\pL$-hard,
which suggests that the arguments of
\Cref{s:constant-depth} applied to a $\CSPSAT$ function
are not able to prove the separation of \Cref{thm:ac0-fml-graph}.

A caveat to these impossibility results is in order.
First, 
we only study Boolean-valued CSPs here,
though the framework of lifting can also be applied in the context of
non-Boolean CSPs.
It's not clear if non-Boolean CSPs exhibit the same dichotomies
for monotone computation
we proved in this section.
We remark that Schaefer's dichotomy for Boolean-valued CSPs~\cite{DBLP:conf/stoc/Schaefer78}
has been extended to non-Boolean
CSPs~\cite{DBLP:conf/focs/Zhuk17,DBLP:conf/focs/Bulatov17}.

Secondly,
the instances of $\CSPSAT$ generated by lifting
do not cover the entirety of the minterms and maxterms of $\CSPSAT$.
In particular, our results do not rule out the possibility
that a clever interpolation of the instances generated by lifting
may give rise to a function that is easier to compute by non-monotone
circuits,
and therefore bypasses the hardness results of~\Cref{thm:cspsat-hard}.
One example is the Tardós function~\cite{tardos_1988}.
A lifting theorem applied to a Pigeonhole Principle formula can be used to prove 
a lower bound on the size of monotone circuits that accept cliques of size
$k$ and reject graphs that are $(k-1)$-colorable, for some $k = n^\eps$~\cite{DBLP:journals/combinatorica/RazM99,susanna_2023}.
A natural interpolation for these instances would be the $\kclq$
function, which, being $\NP$-complete, would be related to an $\NP$-complete
$\CSPSAT$.
However, as proved by~\cite{tardos_1988}, there is a monotone function
in
$\P$ which has the same output behaviour over these instances.

\bibliographystyle{alpha}	
\bibliography{refs}

\newcommand{\etalchar}[1]{$^{#1}$}
\begin{thebibliography}{HWWY94}

\bibitem[AB87]{DBLP:journals/combinatorica/AlonB87}
Noga Alon and Ravi~B. Boppana.
\newblock The monotone circuit complexity of boolean functions.
\newblock {\em Combinatorica}, 7(1):1--22, 1987.

\bibitem[ABI{\etalchar{+}}09]{DBLP:journals/jcss/AllenderBISV09}
Eric Allender, Michael Bauland, Neil Immerman, Henning Schnoor, and Heribert
  Vollmer.
\newblock The complexity of satisfiability problems: Refining {S}chaefer's
  theorem.
\newblock {\em J. Comput. Syst. Sci.}, 75(4):245--254, 2009.

\bibitem[AG87]{ajtai_gurevich_1987}
Mikl\'{o}s Ajtai and Yuri Gurevich.
\newblock Monotone versus positive.
\newblock {\em J. Assoc. Comput. Mach.}, 34(4):1004--1015, 1987.

\bibitem[AHM{\etalchar{+}}08]{AHMPS08}
Eric Allender, Lisa Hellerstein, Paul McCabe, Toniann Pitassi, and Michael~E.
  Saks.
\newblock Minimizing disjunctive normal form formulas and {AC}$^0$ circuits
  given a truth table.
\newblock {\em {SIAM} J. Comput.}, 38(1):63--84, 2008.

\bibitem[AK11]{akiyama_factors_2011}
Jin Akiyama and Mikio Kano.
\newblock {\em Factors and {Factorizations} of {Graphs}}, volume 2031 of {\em
  Lecture {Notes} in {Mathematics}}.
\newblock Springer Berlin Heidelberg, Berlin, Heidelberg, 2011.

\bibitem[AKR{\etalchar{+}}01]{AKRRV01}
Eric Allender, Michal Kouck{\'{y}}, Detlef Ronneburger, Sambuddha Roy, and
  V.~Vinay.
\newblock Time-space tradeoffs in the counting hierarchy.
\newblock In {\em Proceedings of the 16th Annual {IEEE} Conference on
  Computational Complexity, Chicago, Illinois, USA, June 18-21, 2001}, pages
  295--302. {IEEE} Computer Society, 2001.

\bibitem[AKS83]{aks83}
Mikl{\'{o}}s Ajtai, J{\'{a}}nos Koml{\'{o}}s, and Endre Szemer{\'{e}}di.
\newblock An {$O(n \log n)$} sorting network.
\newblock In {\em Proceedings of the 15th Annual {ACM} Symposium on Theory of
  Computing, 25-27 April, 1983, Boston, Massachusetts, {USA}}, pages 1--9.
  {ACM}, 1983.

\bibitem[And85]{andreev1985method}
Alexander~E Andreev.
\newblock On a method for obtaining lower bounds for the complexity of
  individual monotone functions.
\newblock {\em Doklady Akademii Nauk SSSR}, 282:1033--1037, 1985.

\bibitem[BCRV03]{playing_one}
Elmar Böhler, Nadia Creignou, Steffen Reith, and Heribert Vollmer.
\newblock Playing with boolean blocks, part i: Post’s lattice with
  applications to complexity theory.
\newblock {\em SIGACT News}, 2003.

\bibitem[BCRV04]{playing_two}
Elmar Böhler, Nadia Creignou, Steffen Reith, and Heribert Vollmer.
\newblock Playing with boolean blocks, part ii: Constraint satisfaction
  problems.
\newblock {\em ACM SIGACT-Newsletter}, 35, 2004.

\bibitem[BDHM92]{DBLP:journals/mst/BuntrockDHM92}
Gerhard Buntrock, Carsten Damm, Ulrich Hertrampf, and Christoph Meinel.
\newblock Structure and importance of logspace-mod class.
\newblock {\em Math. Syst. Theory}, 25(3):223--237, 1992.

\bibitem[Ber82]{berkowitz}
S.~J. Berkowitz.
\newblock On some relationships between monotone and non-monotone circuit
  complexity.
\newblock Technical Report. University of Toronto, 1982.

\bibitem[BGW99]{DBLP:journals/combinatorica/BabaiGW99}
L{\'{a}}szl{\'{o}} Babai, Anna G{\'{a}}l, and Avi Wigderson.
\newblock Superpolynomial lower bounds for monotone span programs.
\newblock {\em Combinatorica}, 19(3):301--319, 1999.

\bibitem[BHST14]{DBLP:conf/icalp/BlaisHST14}
Eric Blais, Johan H{\aa}stad, Rocco~A. Servedio, and Li{-}Yang Tan.
\newblock On {DNF} approximators for monotone boolean functions.
\newblock In {\em International Colloquium on Automata, Languages, and
  Programming \emph{(ICALP)}}, pages 235--246, 2014.

\bibitem[BL88]{DBLP:conf/crypto/Leichter88}
Josh~Cohen Benaloh and Jerry Leichter.
\newblock Generalized secret sharing and monotone functions.
\newblock In {\em Advances in Cryptology \emph{(CRYPTO)}}, pages 27--35, 1988.

\bibitem[BST13]{BST13}
Eric Blais, Dominik Scheder, and Li-Yang Tan.
\newblock Ajtai-gurevich redux. {M}anuscript.
\newblock 2013.

\bibitem[BT96]{DBLP:journals/jacm/BshoutyT96}
Nader~H. Bshouty and Christino Tamon.
\newblock On the {F}ourier spectrum of monotone functions.
\newblock {\em J. {ACM}}, 43(4):747--770, 1996.

\bibitem[Bul17]{DBLP:conf/focs/Bulatov17}
Andrei~A. Bulatov.
\newblock A dichotomy theorem for nonuniform csps.
\newblock In Chris Umans, editor, {\em 58th {IEEE} Annual Symposium on
  Foundations of Computer Science, {FOCS} 2017, Berkeley, CA, USA, October
  15-17, 2017}, pages 319--330. {IEEE} Computer Society, 2017.

\bibitem[Bul18]{DBLP:journals/siglog/Bulatov18}
Andrei~A. Bulatov.
\newblock Constraint satisfaction problems: complexity and algorithms.
\newblock {\em {ACM} {SIGLOG} News}, 5(4):4--24, 2018.

\bibitem[CDM21]{cdm21}
Arkadev Chattopadhyay, Rajit Datta, and Partha Mukhopadhyay.
\newblock Lower bounds for monotone arithmetic circuits via communication
  complexity.
\newblock In {\em {STOC} '21: 53rd Annual {ACM} {SIGACT} Symposium on Theory of
  Computing, Virtual Event, Italy, June 21-25, 2021}, pages 786--799. {ACM},
  2021.

\bibitem[CHO{\etalchar{+}}20]{choprs20}
Lijie Chen, Shuichi Hirahara, Igor~C. Oliveira, J{\'a}n Pich, Ninad Rajgopal,
  and Rahul Santhanam.
\newblock {Beyond Natural Proofs: Hardness Magnification and Locality}.
\newblock In {\em 11th Innovations in Theoretical Computer Science Conference
  (ITCS 2020)}, volume 151, pages 70:1--70:48, 2020.

\bibitem[CKR20]{CKR20}
Bruno~Pasqualotto Cavalar, Mrinal Kumar, and Benjamin Rossman.
\newblock Monotone circuit lower bounds from robust sunflowers.
\newblock In {\em {LATIN} 2020: Theoretical Informatics - 14th Latin American
  Symposium, S{\~{a}}o Paulo, Brazil, January 5-8, 2021, Proceedings}, volume
  12118 of {\em Lecture Notes in Computer Science}, pages 311--322. Springer,
  2020.

\bibitem[CKS01]{DBLP:books/daglib/0004131}
Nadia Creignou, Sanjeev Khanna, and Madhu Sudan.
\newblock {\em Complexity classifications of Boolean constraint satisfaction
  problems}, volume~7 of {\em {SIAM} monographs on discrete mathematics and
  applications}.
\newblock {SIAM}, 2001.

\bibitem[COS17]{cos15}
Xi~Chen, Igor~C. Oliveira, and Rocco~A. Servedio.
\newblock Addition is exponentially harder than counting for shallow monotone
  circuits.
\newblock In {\em S{TOC}'17---{P}roceedings of the 49th {A}nnual {ACM} {SIGACT}
  {S}ymposium on {T}heory of {C}omputing}, pages 1232--1245. ACM, New York,
  2017.

\bibitem[CZ16]{DBLP:conf/stoc/ChattopadhyayZ16}
Eshan Chattopadhyay and David Zuckerman.
\newblock Explicit two-source extractors and resilient functions.
\newblock In {\em Symposium on Theory of Computing \emph{(STOC)}}, pages
  670--683, 2016.

\bibitem[dRGR22]{DBLP:journals/sigact/RezendeGR22}
Susanna~F. de~Rezende, Mika G{\"{o}}{\"{o}}s, and Robert Robere.
\newblock Guest column: Proofs, circuits, and communication.
\newblock {\em {SIGACT} News}, 53(1):59--82, 2022.

\bibitem[FV98]{DBLP:journals/siamcomp/FederV98}
Tom{\'{a}}s Feder and Moshe~Y. Vardi.
\newblock The computational structure of monotone monadic {SNP} and constraint
  satisfaction: {A} study through datalog and group theory.
\newblock {\em {SIAM} J. Comput.}, 28(1):57--104, 1998.

\bibitem[GGLR98]{DBLP:conf/focs/GoldreichGLR98}
Oded Goldreich, Shafi Goldwasser, Eric Lehman, and Dana Ron.
\newblock Testing monotonicity.
\newblock In {\em Symposium on Foundations of Computer Science, \emph{(FOCS)}},
  pages 426--435, 1998.

\bibitem[GI12]{DBLP:journals/toc/GoldreichI12}
Oded Goldreich and Rani Izsak.
\newblock Monotone circuits: One-way functions versus pseudorandom generators.
\newblock {\em Theory Comput.}, 8(1):231--238, 2012.

\bibitem[GJW18]{gjw18}
Mika G{\"{o}}{\"{o}}s, Rahul Jain, and Thomas Watson.
\newblock Extension complexity of independent set polytopes.
\newblock {\em {SIAM} J. Comput.}, 47(1):241--269, 2018.

\bibitem[GKRS19]{gkrs19}
Mika G\"{o}\"{o}s, Pritish Kamath, Robert Robere, and Dmitry Sokolov.
\newblock Adventures in monotone complexity and {TFNP}.
\newblock In {\em 10th {I}nnovations in {T}heoretical {C}omputer {S}cience},
  volume 124 of {\em LIPIcs. Leibniz Int. Proc. Inform.}, pages Art. No. 38,
  19. Schloss Dagstuhl. Leibniz-Zent. Inform., Wadern, 2019.

\bibitem[GMOR15]{DBLP:conf/tcc/GuoMOR15}
Siyao Guo, Tal Malkin, Igor~C. Oliveira, and Alon Rosen.
\newblock The power of negations in cryptography.
\newblock In {\em Theory of Cryptography Conference \emph{(TCC)}}, pages
  36--65, 2015.

\bibitem[GS92]{gs92}
Michelangelo Grigni and Michael Sipser.
\newblock Monotone complexity.
\newblock In {\em Proceedings of the London Mathematical Society Symposium on
  Boolean Function Complexity}, page 57–75, USA, 1992. Cambridge University
  Press.

\bibitem[H{\aa}s86]{DBLP:conf/stoc/Hastad86}
Johan H{\aa}stad.
\newblock Almost optimal lower bounds for small depth circuits.
\newblock In {\em Proceedings of the 18th Annual {ACM} Symposium on Theory of
  Computing, May 28-30, 1986, Berkeley, California, {USA}}, pages 6--20, 1986.

\bibitem[Hir22]{Hir22}
Shuichi Hirahara.
\newblock {NP}-hardness of learning programs and partial {MCSP}.
\newblock In {\em Symposium on Foundations of Computer Science \emph{(FOCS)}},
  2022.

\bibitem[HWWY94]{hwwy94}
Johan H{\aa}stad, Ingo Wegener, Norbert Wurm, and Sang{-}Zin Yi.
\newblock Optimal depth, very small size circuits for symmetric functions in
  {AC}$^0$.
\newblock {\em Inf. Comput.}, 108(2):200--211, 1994.

\bibitem[IKL{\etalchar{+}}19]{DBLP:journals/jacm/IkenmeyerKLLMS19}
Christian Ikenmeyer, Balagopal Komarath, Christoph Lenzen, Vladimir Lysikov,
  Andrey Mokhov, and Karteek Sreenivasaiah.
\newblock On the complexity of hazard-free circuits.
\newblock {\em J. {ACM}}, 66(4):25:1--25:20, 2019.

\bibitem[JCG97]{10.1145/263867.263489}
Peter Jeavons, David Cohen, and Marc Gyssens.
\newblock Closure properties of constraints.
\newblock {\em J. ACM}, 44(4):527–548, jul 1997.

\bibitem[Jea98]{DBLP:journals/tcs/Jeavons98}
Peter Jeavons.
\newblock On the algebraic structure of combinatorial problems.
\newblock {\em Theor. Comput. Sci.}, 200(1-2):185--204, 1998.

\bibitem[JLL76]{DBLP:journals/mst/JonesLL76}
Neil~D. Jones, Y.~Edmund Lien, and William~T. Laaser.
\newblock New problems complete for nondeterministic log space.
\newblock {\em Math. Syst. Theory}, 10:1--17, 1976.

\bibitem[Joh03]{joh03}
Jan Johannsen.
\newblock The complexity of satisfiability problems with two occurrences.
\newblock Unpublished Manuscript, 2003.

\bibitem[Juk12]{jukna_2012}
Stasys Jukna.
\newblock {\em Boolean Function Complexity - Advances and Frontiers}.
\newblock Springer, 2012.

\bibitem[Kra97]{Krajicek97}
Jan Kraj\'{\i}\v{c}ek.
\newblock Interpolation theorems, lower bounds for proof systems, and
  independence results for bounded arithmetic.
\newblock {\em J. Symbolic Logic}, 62(2):457--486, 1997.

\bibitem[KS92]{DBLP:journals/siamdm/KalyanasundaramS92}
Bala Kalyanasundaram and Georg Schnitger.
\newblock The probabilistic communication complexity of set intersection.
\newblock {\em {SIAM} J. Discret. Math.}, 5(4):545--557, 1992.

\bibitem[Kup21]{DBLP:conf/lics/Kuperberg21}
Denis Kuperberg.
\newblock Positive first-order logic on words.
\newblock In {\em Symposium on Logic in Computer Science \emph{(LICS)}}, pages
  1--13, 2021.

\bibitem[Kup22]{DBLP:journals/corr/abs-2201-11619}
Denis Kuperberg.
\newblock Positive first-order logic on words and graphs.
\newblock {\em CoRR}, abs/2201.11619, 2022.

\bibitem[KW93]{DBLP:conf/coco/KarchmerW93}
Mauricio Karchmer and Avi Wigderson.
\newblock On span programs.
\newblock In {\em Structure in Complexity Theory Conference \emph{(CCC)}},
  pages 102--111, 1993.

\bibitem[LT09]{DBLP:journals/tcs/LaroseT09}
Beno{\^{\i}}t Larose and Pascal Tesson.
\newblock Universal algebra and hardness results for constraint satisfaction
  problems.
\newblock {\em Theor. Comput. Sci.}, 410(18):1629--1647, 2009.

\bibitem[Oko82]{okolnishnikova_1982}
E.~A. Okol'nishnikova.
\newblock The effect of negations on the complexity of realization of monotone
  {B}oolean functions by formulas of bounded depth.
\newblock {\em Metody Diskret. Analiz.}, (38):74--80, 1982.

\bibitem[Oli15]{Oliveira15}
Igor~C. Oliveira.
\newblock Unconditional lower bounds in complexity theory, 2015.
\newblock (PhD Thesis, Columbia University).

\bibitem[OSS19]{DBLP:conf/coco/OliveiraS019}
Igor~C. Oliveira, Rahul Santhanam, and Srikanth Srinivasan.
\newblock Parity helps to compute majority.
\newblock In {\em 34th Computational Complexity Conference, {CCC} 2019, July
  18-20, 2019, New Brunswick, NJ, {USA}}, pages 23:1--23:17, 2019.

\bibitem[Pos41]{post41}
Emil~L. Post.
\newblock {\em The Two-Valued Iterative Systems of Mathematical Logic. (AM-5)}.
\newblock Princeton University Press, 1941.

\bibitem[PR17]{DBLP:conf/stoc/PitassiR17}
Toniann Pitassi and Robert Robere.
\newblock Strongly exponential lower bounds for monotone computation.
\newblock In {\em Symposium on Theory of Computing \emph{(STOC)}}, pages
  1246--1255, 2017.

\bibitem[Pud97]{DBLP:journals/jsyml/Pudlak97}
Pavel Pudl{\'{a}}k.
\newblock Lower bounds for resolution and cutting plane proofs and monotone
  computations.
\newblock {\em J. Symb. Log.}, 62(3):981--998, 1997.

\bibitem[Qui53]{quine_1953}
W.~V. Quine.
\newblock Two theorems about truth functions.
\newblock {\em Bol. Soc. Mat. Mexicana}, 10:64--70, 1953.

\bibitem[Raz85a]{razborov_boolean_85}
A.~A. Razborov.
\newblock Lower bounds on the monotone complexity of some {B}oolean functions.
\newblock {\em Dokl. Akad. Nauk SSSR}, 281(4):798--801, 1985.

\bibitem[Raz85b]{razborov1985lower}
Alexander~A Razborov.
\newblock Lower bounds on monotone complexity of the logical permanent.
\newblock {\em Mathematical Notes of the Academy of Sciences of the USSR},
  37(6):485--493, 1985.

\bibitem[Rei05]{DBLP:conf/stoc/Reingold05}
Omer Reingold.
\newblock Undirected st-connectivity in log-space.
\newblock In Harold~N. Gabow and Ronald Fagin, editors, {\em Proceedings of the
  37th Annual {ACM} Symposium on Theory of Computing, Baltimore, MD, USA, May
  22-24, 2005}, pages 376--385. {ACM}, 2005.

\bibitem[Rez23]{susanna_2023}
Susanna Rezende.
\newblock Personal communication, 2023.

\bibitem[RM99]{DBLP:journals/combinatorica/RazM99}
Ran Raz and Pierre McKenzie.
\newblock Separation of the monotone {NC} hierarchy.
\newblock {\em Combinatorica}, 19(3):403--435, 1999.

\bibitem[Ros08a]{rossman_2008}
Benjamin Rossman.
\newblock Homomorphism preservation theorems.
\newblock {\em J. ACM}, 55(3):Art. 15, 53, 2008.

\bibitem[Ros08b]{DBLP:conf/stoc/Rossman08}
Benjamin Rossman.
\newblock On the constant-depth complexity of $k$-clique.
\newblock In {\em Symposium on Theory of Computing \emph{(STOC)}}, pages
  721--730, 2008.

\bibitem[Ros17a]{rossman_entropy}
Benjamin Rossman.
\newblock An entropy proof of the switching lemma and tight bounds on the
  decision-tree size of {AC}$^0$.
\newblock 2017.

\bibitem[Ros17b]{rossman_16}
Benjamin Rossman.
\newblock An improved homomorphism preservation theorem from lower bounds in
  circuit complexity.
\newblock In {\em Innovations in Theoretical Computer Science Conference
  \emph{(ITCS)}}, pages 27:1--27:17, 2017.

\bibitem[RPRC16]{rprc16}
Robert Robere, Toniann Pitassi, Benjamin Rossman, and Stephen~A. Cook.
\newblock Exponential lower bounds for monotone span programs.
\newblock In {\em {IEEE} 57th Annual Symposium on Foundations of Computer
  Science, {FOCS} 2016, 9-11 October 2016, Hyatt Regency, New Brunswick, New
  Jersey, {USA}}, pages 406--415. {IEEE} Computer Society, 2016.

\bibitem[RW92]{RW92}
Ran Raz and Avi Wigderson.
\newblock Monotone circuits for matching require linear depth.
\newblock {\em J. {ACM}}, 39(3):736--744, 1992.

\bibitem[Sch78]{DBLP:conf/stoc/Schaefer78}
Thomas~J. Schaefer.
\newblock The complexity of satisfiability problems.
\newblock In {\em Proceedings of the 10th Annual {ACM} Symposium on Theory of
  Computing, May 1-3, 1978, San Diego, California, {USA}}, pages 216--226.
  {ACM}, 1978.

\bibitem[Sto95]{DBLP:conf/lics/Stolboushkin95}
Alexei~P. Stolboushkin.
\newblock Finitely monotone properties.
\newblock In {\em Symposium on Logic in Computer Science \emph{(LICS)}}, pages
  324--330, 1995.

\bibitem[Tar88]{tardos_1988}
\'{E}. Tardos.
\newblock The gap between monotone and nonmonotone circuit complexity is
  exponential.
\newblock {\em Combinatorica}, 8(1):141--142, 1988.

\bibitem[Urq87]{DBLP:journals/jacm/Urquhart87}
Alasdair Urquhart.
\newblock Hard examples for resolution.
\newblock {\em J. {ACM}}, 34(1):209--219, 1987.

\bibitem[Val80]{DBLP:journals/tcs/Valiant80}
Leslie~G. Valiant.
\newblock Negation can be exponentially powerful.
\newblock {\em Theor. Comput. Sci.}, 12:303--314, 1980.

\bibitem[Zhu17]{DBLP:conf/focs/Zhuk17}
Dmitriy Zhuk.
\newblock A proof of {CSP} dichotomy conjecture.
\newblock In Chris Umans, editor, {\em 58th {IEEE} Annual Symposium on
  Foundations of Computer Science, {FOCS} 2017, Berkeley, CA, USA, October
  15-17, 2017}, pages 331--342. {IEEE} Computer Society, 2017.

\end{thebibliography}

\normalsize

\appendix

\section{A Lower Bound for $\txorsat$ Using the Approximation Method}
\label{sec:xorsat-appr}

As discussed in \Cref{sec:intro_mon_csp},
\citep{gkrs19}
obtained an exponential lower bound on the monotone
circuit size of the function $\txorsat$ using techniques from communication
complexity and lifting. Here we observe that a weaker but still
super-polynomial lower bound  can be proved using the approximation method.

First, we recall the
function $\OddFactor_n : \blt^{\binom{n}{2}} \to \blt$
of \Cref{s:padding_graph},
which accepts 
a 
given graph
if the graph contains an \emph{odd factor},
which is a spanning subgraph in which the degree of every
vertex is odd.
For convenience,
in this section we consider a weaker version of $\OddFactor$, which takes as an input a
\emph{bipartite} graph with $n$ vertices on each part,
and accepts if the graph contains an odd factor.
Let $\BipOddFactor_n : \blt^{n^2} \to \blt$
be this function.
We remark that the lower bounds of
Babai, Gál and Wigderson~\cite{DBLP:journals/combinatorica/BabaiGW99}
for $\OddFactor$ (\Cref{thm:bip-oddfactor-lb})
also hold for $\BipOddFactor$.
The proof of the monotone circuit lower bound in particular
is essentially Razborov's lower bound for $\Matching$ via the approximation method~\cite{razborov1985lower}.

\begin{theorem}[\cite{DBLP:journals/combinatorica/BabaiGW99}]
    \label{thm:oddfactor}
    We have
    $$\Cmon{\BipOddFactor_n} = n^{\Omega(\log n)}
    \quad \text{and} \quad 
    \mdepth{\BipOddFactor_n} = \Omega(n).$$
\end{theorem}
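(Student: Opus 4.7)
My plan is to prove the size bound via Razborov's approximation method, adapting his $\Matching$ lower bound to the bipartite odd-factor setting, and to prove the depth bound via the Karchmer--Wigderson characterization combined with a reduction from Set Disjointness. For the size bound, I take the positive test set $Y$ to consist of all perfect matchings on $A \cup B$ with $|A|=|B|=n$, since every such matching is itself an odd factor and thus a $1$-input. For the negative test set $N$, I use the bipartite analogue of Razborov's coloring construction: each $N$-instance is parametrized by a pair of colorings $\chi_A : A \to [n-1]$, $\chi_B : B \to [n-1]$ with edge set $\{(a,b) : \chi_A(a) = \chi_B(b)\}$, chosen so that for some color $c$ the component $|\chi_A^{-1}(c)| + |\chi_B^{-1}(c)|$ is odd. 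By the characterization used in \Cref{thm:oddfactor-logspace} -- a graph admits an odd factor iff every connected component has even size -- such colored graphs lie in $\BipOddFactor_n^{-1}(0)$.

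With $Y$ and $N$ in place, the standard approximation-method template applies: replace each $\land$/$\lor$ gate of a hypothetical monotone circuit $C$ by an approximator that outputs a bounded-term monotone $\CNF$/$\DNF$, control the error introduced at each gate via the sunflower lemma exactly as in \citep{razborov1985lower}, and compare $|Y|$ and $|N|$ against the aggregated error to conclude that $C$ must have size $n^{\Omega(\log n)}$. The only deviation from Razborov is verifying that the per-gate error estimates still hold with $Y$ being (bipartite) matchings and $N$ being colored-bipartite graphs, which is a routine adaptation of his counting.

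For the depth bound, I invoke Karchmer--Wigderson to identify $\mdepth{\BipOddFactor_n}$ with the deterministic communication complexity of the following search problem: Alice holds a bipartite graph $G_A$ with an odd factor, Bob holds a bipartite graph $G_B$ with no odd factor, and they must output an edge in $G_A \setminus G_B$. I would reduce $\mathsf{DISJ}_n$ to this problem with a coordinate gadget: for each index $i \in [n]$, introduce a small bipartite block $B_i$ which is ``turned on'' in $G_A$ precisely when $x_i = 1$ and ``turned off'' in $G_B$ precisely when $y_i = 0$, and surround these blocks with auxiliary edges engineered to keep $G_A$ odd-factorable throughout while forcing $G_B$ to contain at least one odd-sized connected component unless $x$ and $y$ intersect. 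Any edge output by the protocol then pinpoints an index $i$ with $x_i = 1 = y_i$, solving disjointness. Feeding this reduction into the $\Omega(n)$ randomized lower bound of \cite{DBLP:journals/siamdm/KalyanasundaramS92} yields $\mdepth{\BipOddFactor_n} = \Omega(n)$.

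The main obstacle will be the gadget design on the depth side: $G_B$ must be \emph{certifiably} free of an odd factor for every Bob-input, and the obstruction must be localized to an input coordinate that intersects with Alice's input. Typical first attempts make $G_B$ accidentally admit an odd factor (because parity failures cancel across blocks) or encode a more expressive problem than disjointness. Getting the parity bookkeeping right -- so that a single intersection $x_i = y_i = 1$ is what flips $G_B$ from having-odd-component to all-components-even -- is the technical crux. On the size side, the analogous concern is the existence of enough negative colorings with the required odd-component condition to drive the counting; this is easily arranged by dedicating one color class to have $|A_c|=1$, $|B_c|=2$ and distributing the remainder freely, so the Razborov-style estimates transfer without loss.
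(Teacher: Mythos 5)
The paper does not supply its own proof of this theorem; it is cited directly to Babai, G\'al and Wigderson \cite{DBLP:journals/combinatorica/BabaiGW99}, with the remark earlier in the appendix that the monotone circuit size bound is ``essentially Razborov's lower bound for $\Matching$ via the approximation method'' and the footnote near \Cref{thm:bip-oddfactor-lb} that the $\Omega(n)$ depth bound uses the optimal randomized $\mathsf{DISJ}$ bound of \cite{DBLP:journals/siamdm/KalyanasundaramS92}. Your high-level plan -- approximation method with matchings as the positive test set and odd-component colourings as the negative test set for size, plus a Karchmer--Wigderson reduction from $\mathsf{DISJ}$ for depth -- is therefore the same route as the cited proof, and your two design choices (restricting the negative colourings to those with an odd colour class, and invoking the randomized $\mathsf{DISJ}$ bound to get $\Omega(n)$ rather than BGW's weaker bound) are exactly the adaptations the paper signals.

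That said, both halves of the sketch leave gaps that are not routine bookkeeping. On the size side, you are right that not every $(n-1)$-colouring yields a $0$-instance of $\BipOddFactor$ (a colouring with all colour classes of even total size, both sides nonempty, gives a graph that \emph{does} have an odd factor), so the negative test set must be thinned. But it is not automatic that ``Razborov-style estimates transfer without loss'' after you fix one colour class to have prescribed sizes $|A_c|=1$, $|B_c|=2$: the approximation-method argument relies on delicate counting over the distribution of negative instances (the sunflower/plucking step bounds how many colourings can land in a fixed closure), and conditioning the colouring changes that distribution. One would need to re-run those estimates rather than assert they hold. On the depth side, the gadget is entirely unspecified, and this is where the real work is. The known reduction (\`a la Raz--Wigderson for $\Matching$, carried over to odd factor in BGW) is a concrete construction in which Alice's $\mathsf{DISJ}$ coordinates select one of two perfect matchings in a fixed gadget graph and Bob's coordinates delete edges so as to isolate a vertex (creating an odd-size component) exactly when $x$ and $y$ are disjoint; the protocol's output edge then identifies an intersecting index. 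Your description of ``blocks surrounded by auxiliary edges engineered to keep $G_A$ odd-factorable while forcing $G_B$ to have an odd component'' correctly states the design constraints but does not exhibit any gadget meeting them, and, as you note, naive attempts fail because parity violations cancel across blocks. Until that gadget is written down and verified, the depth lower bound is not proved.
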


We can reduce 
$\BipOddFactor$
to $\txorsat$ by
noting that computing $\BipOddFactor_n(M)$ on a given matrix
$M \in \blt^{n^2}$ is computationally equivalent to deciding the satisfiability of the following $\mathbb{F}_2$ linear system over variables $\{x_{ij}\}$:
\begin{itemize}
    \item For all $i \in [n]$:
        $\bigxor_{k=1}^n x_{ik} = 1$;
    \item For all $j \in [n]$:
        $\bigxor_{k=1}^n x_{kj} = 1$;
    \item For all $i,j \in [n]$ such that $M_{ij} = 0$:
        $x_{ij} = 0$.
\end{itemize}

We can then use a circuit for $\txorsat$ to solve this system
by using a standard trick of introducing new variables
to reduce the number of variables that appear in each equation,
as done in the textbook reduction from $\SAT$ to 3-$\SAT$. As the corresponding reductions turn out to be monotone, this implies monotone circuit and formula lower bounds for $\txorsat$.
We note that a somewhat similar   argument (in the non-monotone setting) appears in Feder and
Vardi~\cite[Theorem 30]{DBLP:journals/siamcomp/FederV98} regarding
constraint satisfaction problems with the ability to count.

In order to formalise this argument, we will need the following definition and results.

\begin{definition}
    \label{def:dual}
    Let $f$ be a Boolean function.
    We define $\dual(f) : x \mapsto \neg f(\neg x)$
    as the \emph{dual} of $f$.
\end{definition}

\begin{lemma}
    \label{lem:dual}
    Let $f$ be a monotone Boolean function.
    We have
    $\Cmon{f} = \Cmon{\dual(f)}$
    and
    $\mdepth{f} = \mdepth{\dual(f)}$.
\end{lemma}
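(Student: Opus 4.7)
The plan is to use De Morgan's laws to transform a monotone circuit for $f$ into a monotone circuit for $\dual(f)$ of the same size and depth, and vice versa.

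First, I would take an optimal monotone circuit $C$ computing $f$ of size $s = \Cmon{f}$ and depth $d = \mdepth{f}$. From $C$, I would construct a new circuit $C^*$ by (i) keeping the underlying DAG structure unchanged, (ii) swapping every $\land$-gate with an $\lor$-gate, and (iii) leaving the input labels $x_1, \ldots, x_n$ (and any constants $0, 1$ swapped) in place. The circuit $C^*$ has exactly the same size and depth as $C$, and it is still monotone since it only uses $\land$ and $\lor$ gates.

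Next, by induction on the depth of a gate $g$ in $C$, I would verify the dualisation identity: if $C_g(x)$ is the function computed at $g$ in $C$ on input $x$, and $C^*_g(x)$ is the function computed at the corresponding gate of $C^*$, then $C^*_g(x) = \lnot C_g(\lnot x)$. The base case is immediate, since for an input gate labelled $x_i$ both sides equal $x_i$ (and for a constant, the swap $0 \leftrightarrow 1$ matches the negation). For the inductive step, using De Morgan's laws, a gate $g = h_1 \land h_2$ in $C$ becomes $g^* = h_1^* \lor h_2^*$ in $C^*$, and
\[
C^*_g(x) = C^*_{h_1}(x) \lor C^*_{h_2}(x) = \lnot C_{h_1}(\lnot x) \lor \lnot C_{h_2}(\lnot x) = \lnot (C_{h_1}(\lnot x) \land C_{h_2}(\lnot x)) = \lnot C_g(\lnot x),
\]
and symmetrically for $\lor$-gates. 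Applied at the output gate, this yields $C^*(x) = \lnot C(\lnot x) = \lnot f(\lnot x) = \dual(f)(x)$. Hence $\Cmon{\dual(f)} \leq s$ and $\mdepth{\dual(f)} \leq d$.

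For the reverse inequalities, I would apply exactly the same transformation to an optimal monotone circuit for $\dual(f)$. Since $\dual(\dual(f)) = f$, this produces a monotone circuit for $f$ of the same size and depth, giving $\Cmon{f} \leq \Cmon{\dual(f)}$ and $\mdepth{f} \leq \mdepth{\dual(f)}$. Combining the two directions gives the claimed equalities. There is no real obstacle here: the only thing to be careful about is the bookkeeping in the inductive step of the De Morgan transformation, and the observation that $\dual$ is an involution on monotone Boolean functions.
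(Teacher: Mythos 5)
Your proof is correct and follows essentially the same approach as the paper: both use the De Morgan swap of $\land$/$\lor$ gates (pushing negations to the inputs), with yours simply spelling out the inductive verification that the swapped circuit computes $\lnot f(\lnot x)$ and making the involution $\dual(\dual(f)) = f$ explicit for the reverse inequality.
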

\begin{proof}
    The idea is to push negations to the bottom and eliminate double negations at the input layer. In other words, applying De Morgan rules, we can convert any $\set{\land,\lor}$-circuit computing
    $f$ into a circuit computing $\dual(f)$ 
    by swapping $\land$-gates for
    $\lor$-gates, and vice-versa.
    Moreover, this transformation preserves the depth of the circuit.
\end{proof}

We are ready to describe a monotone reduction from the function $\BipOddFactor_n$ 
to $\txorsat$, which implies the desired lower bounds.

\begin{theorem}
    \label{thm:xorsat-lb}
    There exists $\eps > 0$
    such that
    $$\Cmon{\txorsat} = n^{\Omega(\log n)}
    \quad \text{and} \quad 
    \mdepth{\txorsat} = \Omega(n^\eps).$$
\end{theorem}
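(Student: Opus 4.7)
The plan is to exhibit a monotone projection from $\dual(\BipOddFactor_n)$ to a $\txorsat$ instance on $V = O(n^2)$ variables, and then invoke the monotone size and depth lower bounds for $\BipOddFactor_n$ (\Cref{thm:oddfactor}) together with the duality \Cref{lem:dual}.

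First, I would fix the bipartite graph side length $n$ and introduce Boolean variables $\{x_{ij}\}_{i,j \in [n]}$. The target linear system, as in the main text, consists of $2n$ parity equations (one per row, one per column) together with a family of $n^2$ potential ``forbidden edge'' equations $x_{ij} = 0$. This system is satisfiable if and only if the bipartite graph given by the entries $M_{ij}$ (where $M_{ij} = 0$ iff $x_{ij} = 0$ is enforced) admits an odd factor. The first conversion step is to replace each $n$-ary XOR equation by a chain of $n-2$ ternary XOR equations, using $O(n)$ fresh auxiliary variables per equation; the $x_{ij} = 0$ constraints are represented as the triple $\oplus_3^0(x_{ij}, x_{ij}, x_{ij})$, which is a legal application of a relation from $S = \{\oplus_3^0, \oplus_3^1\}$ since the CSP framework allows repeated indices. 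This yields a fixed pool of $V = O(n^2)$ variables and gives an encoding of the system as a $\txorsat$ instance on $N = 2V^3 = O(n^6)$ potential constraint bits.

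Next I would define the reduction $\rho$ mapping an input matrix $M' \in \blt^{n \times n}$ to an instance $w = \rho(M') \in \blt^N$. Set $w_\alpha = 1$ for each constraint $\alpha$ appearing in the ``always-on'' portion (the row/column chains and their auxiliary glue), set $w_\alpha = M'_{ij}$ for the forbidden-edge constraint $\oplus_3^0(x_{ij}, x_{ij}, x_{ij})$, and set $w_\alpha = 0$ for every other constraint application. Each coordinate of $\rho$ is either a constant or a single input bit, so $\rho$ is a monotone projection. By construction, $w$ is unsatisfiable as a 3-XOR formula iff the original system (with forbidden-edge equations placed exactly where $M'_{ij} = 1$) is unsatisfiable iff the complement bipartite graph $\neg M'$ fails to contain an odd factor. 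In other words, $\txorsat_N(\rho(M')) = \neg \BipOddFactor_n(\neg M') = \dual(\BipOddFactor_n)(M')$, so $\dual(\BipOddFactor_n) \mprojred \txorsat_N$.

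Finally, combining \Cref{lem:dual} with \Cref{thm:oddfactor} yields $\Cmon{\dual(\BipOddFactor_n)} = n^{\Omega(\log n)}$ and $\mdepth{\dual(\BipOddFactor_n)} = \Omega(n)$. Pulling these bounds through the monotone projection $\rho$ and re-expressing in terms of the input size $N = O(n^6)$ gives $\Cmon{\txorsat_N} \geq n^{\Omega(\log n)} = N^{\Omega(\log N)}$ and $\mdepth{\txorsat_N} \geq \Omega(n) = \Omega(N^{1/6})$, which establishes the theorem with $\eps = 1/6$ (any smaller constant also suffices). The principal obstacle I anticipate is simply the bookkeeping of the reduction: ensuring that every ``$n$-XOR to 3-XOR'' gadget is implementable using only constants and a single copy of an input bit (so that monotonicity of $\rho$ is preserved), and that the enforcement of $x_{ij} = 0$ via a repeated-index ternary XOR is permitted by the formal CSP definition used for $\txorsat$. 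Once these syntactic points are checked, the duality step and the application of the $\BipOddFactor$ lower bound are immediate.
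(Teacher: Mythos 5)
Your proposal is correct and follows essentially the same approach as the paper: encode $\BipOddFactor_n$ as an $\mathbb{F}_2$ linear system, break the $n$-ary parity equations into ternary ones via auxiliary chain variables, express the $x_{ij}=0$ constraints within $\{\oplus^0_3,\oplus^1_3\}$, and then use the self-duality trick (\Cref{lem:dual}) together with the $\BipOddFactor$ lower bounds of \Cref{thm:oddfactor}. The one cosmetic difference is the direction in which the dual is applied: the paper writes $\BipOddFactor_n(M) = \dual(\txorsat)(\beta)$ with $\beta$ a monotone projection of $M$, whereas you write $\dual(\BipOddFactor_n)(M') = \txorsat(\rho(M'))$ with $\rho$ a monotone projection; these are identical up to relabelling. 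Your write-up is actually a bit more explicit than the paper in spelling out that the unary constraint $x_{ij}=0$ is realised as the repeated-index ternary constraint $\oplus^0_3(x_{ij},x_{ij},x_{ij})$, which the paper leaves implicit; this is indeed legal because the $\CSPf{S}$ framework allows constraint applications $V \in [n]^k$ with repeated coordinates.
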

\begin{proof}
    Recall that the
    value of the function
    $\BipOddFactor_n(M)$ on a given matrix
    $M \in \blt^{n^2}$ is equal to 1 if the following system is
    satisfiable:
    \begin{itemize}
        \item For all $i \in [n]$:
            $\bigxor_{k=1}^n x_{ik} = 1$;
        \item For all $j \in [n]$:
            $\bigxor_{k=1}^n x_{kj} = 1$;
        \item For all $i,j \in [n]$ such that $M_{ij} = 0$:
            $x_{ij} = 0$.
    \end{itemize}
    We introduce some extra variables to reduce the number of variables in
    each equation in the following way.
    For every $i \in [n]$,
    introduce variables 
    $z_{i1},\dots,z_{i(n-1)}$
    and the equations
    \begin{align*}
        z_{i1} 
        &= 
        x_{i1} \xor x_{i2},
        \\
        z_{i2} 
        &= 
        z_{i1} \xor x_{i3},
        \\
        \dots
        \\
        z_{i,(n-1)} 
        &= 
        z_{i,(n-2)} \xor x_{i,n},
        \\
        z_{i,(n-1)}
        &= 
        1.
    \end{align*}
    Now note that these equations imply
    $z_{i,(n-2)} = \bigxor_{k=1}^n x_{ik} = 1$.
    For each ``column'' equation
    $\bigxor_{k=1}^n x_{kj} = 1$,
    we also add variables
    $w_{j1},\dots, w_{j(n-1)}$ as above.
    In total, we add
    at most
    $2n^2$ variables
    and
    $2n^2$ equations.
    Therefore, there is a system of linear equations
    on $O(n^2)$ variables, where each constraint contains at most $3$ variables, which is satisfiable
    if and only if
    $\BipOddFactor_n(M)=1$.
    Moreover, it is easy to see that the characteristic vector $\alpha$
    of the set of 
    equations of this system
    can be computed from $M$
    by an \emph{anti-monotone} projection, as we activate a constraint that depends on the input when $M_{ij} = 0$.

    Now let 
    $f = \dual(\txorsat)$ and $\beta = \neg \alpha$. Since, by definition, $\txorsat$ accepts \emph{unsatisfiable} systems,
    we get $\BipOddFactor_n(M) = \neg \txorsat(\alpha) = f(\beta)$ and
    that $\beta$ is a \emph{monotone} projection of $M$.
    Therefore, by \cref{lem:dual},
    we obtain
    \begin{equation*}
        \Cmon{\BipOddFactor_n} \leq \Cmon{\txorsat}
    \end{equation*}
    and
    \begin{equation*}
        \mdepth{\BipOddFactor_n} \leq \mdepth{\txorsat}.
        \qedhere
    \end{equation*}
\end{proof}

\section{Schaefer's Theorem in Monotone Complexity}
\label{sec:schaefer_proof}

\subsection{Connectivity and generation functions}

We recall the definitions of two prominent monotone Boolean functions
that have efficient monotone circuits.
Let $\stconn : \blt^{n^2} \to \blt$
be the function that outputs $1$ on a given directed graph $G$
if there exists a path from $1$ to $n$ in $G$.
Let $\GEN : \blt^{n^3} \to \blt$
be the Boolean function which receives a set $T$ of triples 
$(i,j,k) \in [n^3]$,
and outputs $1$ if 
$n \in S$, where $S \sseq [n]$ is the 
set generated with the following rules:

\begin{itemize}
    \item \emph{Axiom:} $1 \in S$,
    \item \emph{Generation:} If $i,j \in S$ and $(i,j,k) \in T$, then
        $k \in S$.
\end{itemize}

The following upper bounds are well-known and easy to prove.

\begin{theorem}
    [\protect{\cite[Exercise 7.3]{jukna_2012}, \cite{DBLP:journals/combinatorica/RazM99}}]
    \label{thm:stconn-gen}
    We have $\stconn \in \mNL$ and
    $\GEN \in \mSIZE[\poly]$.
\end{theorem}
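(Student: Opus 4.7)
The plan is to exhibit explicit monotone computations for both functions, using in each case the fact that the natural algorithm only inspects input bits positively.

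For $\stconn \in \mNL$, I will construct a monotone nondeterministic branching program of polynomial size, following the definition recalled in the preliminaries. Introduce layered nodes $(v, i)$ for $v \in [n]$ and $i \in \{0, 1, \ldots, n-1\}$, with source $s = (1, 0)$ and sink $t = (n, n-1)$. From each $(u, i)$ with $i < n-1$, place an edge to $(v, i+1)$ labelled by the input variable $x_{uv}$, together with a ``stay'' edge from $(v, i)$ to $(v, i+1)$ labelled by the constant $1$. On input $G$, the subgraph of labels evaluating to $1$ contains an $s$--$t$ path iff there is a sequence of at most $n-1$ edges of $G$ (padded by stay moves) witnessing a walk from vertex $1$ to vertex $n$. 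Every edge label is either an input variable or the constant $1$, as required, and the program has $O(n^2)$ nodes and $O(n^3)$ edges, so $\stconn \in \mNL$.

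For $\GEN \in \mSIZE[\poly]$, I will unroll the fixed-point iteration of the generation rule into a layered monotone circuit. Introduce gates $y^t_k$ for $t \in \{0, 1, \ldots, n\}$ and $k \in [n]$, where $y^t_k$ captures ``$k$ belongs to the set generated after $t$ rounds.'' Initialise $y^0_1 = 1$ and $y^0_k = 0$ for $k \neq 1$, and for $t \geq 1$ define
\begin{equation*}
    y^t_k \;=\; y^{t-1}_k \;\lor\; \bigvee_{i, j \in [n]} \bigl( y^{t-1}_i \land y^{t-1}_j \land T_{ijk} \bigr),
\end{equation*}
where $T_{ijk}$ is the input bit indicating $(i,j,k) \in T$. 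The output is $y^n_n$. Correctness follows because the generated set grows monotonically and is contained in $[n]$, so saturation is reached within $n$ rounds. Each layer uses $O(n^3)$ AND gates and $n$ OR gates of fan-in $O(n^2)$, for a total of $O(n^4)$ gates; every gate is $\land$ or $\lor$ with inputs that are either gates from a previous layer or positive input literals, so the circuit is monotone.

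Both constructions are routine, and no real obstacle is expected; the only care needed is to match the formal definitions recalled in the preliminaries (edges of the branching program labelled by elements of $\{1, x_1, \ldots, x_n\}$, and $\{\land, \lor\}$-gates over positive literals for the circuit). The underlying reason monotonicity is automatic is structural: both algorithms only ever test whether an edge or triple is \emph{present}, never whether it is absent, so no negation is ever introduced.
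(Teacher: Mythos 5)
Your proof is correct. The paper does not prove this theorem itself but simply cites it to \cite{jukna_2012} and \cite{DBLP:journals/combinatorica/RazM99}; your two constructions (the layered monotone branching program for $\stconn$ with stay-edges labelled $1$ and transition-edges labelled $x_{uv}$, and the $n$-round unrolled fixed-point circuit for $\GEN$) are exactly the standard arguments underlying those references and match the paper's definitions of $\mNL$ and monotone circuits.
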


\subsection{Proof of reduction lemmas}
\label{sec:reduction_monotone}

Here we present monotonised versions of the proofs of
\cite[Propositions 2.2 - 2.4]{playing_two},
which give a simplified presentation of the results
of~\cite{DBLP:conf/stoc/Schaefer78}.

\inclcoq*

\begin{proof}
    If $\COQ(S_1) \sseq \COQ(S_2)$, then
    each relation of $S_1$ can be represented as a conjunctive query
    over $S_2$.
    Let $F_1$ be a $S_1$-formula.
    For each constraint $C_1$ of $F_1$, there exists
    a formula $\varphi(C_1)$ in $\CNF(S_2)$
    such that
    $C_1$ is a projection of $\varphi(C_1)
    $ (i.e., $C_1$ is a conjunctive query of
    $\varphi(C_1)$).
    However, note that $C_1$ is satisfiable if and only if
    $\varphi(C_1)$ is satisfiable.
    So we can replace the constraint $C_1$ by the \emph{set} of constraints
    in $\varphi(C_1)$.
    Doing this for every constraint in $F_1$, we obtain an $S_2$-formula $F_2$
    which is satisfiable iff $F_1$ is satisfiable.

    Now note that, to decide if a given constraint application $C$ of
    $S_2$
    is in the reduction,
    it suffices
    to check if there exists a $S_1$-constraint $C_1$
    in $F_1$
    such that
    $C$ is in $\varphi(C_1)$.
    Using non-uniformity, this can be easily done
    by an OR over the relevant input bits.

    Finally, we observe that, since the arities of each relation in
    $S_1$ and $S_2$ are constant, we only add a constant number of
    variables for each constraint
    to represent $S_1$-formulas with conjunctive queries over $S_2$-formulas.
\end{proof}

\begin{lemma}
    \label{prop:eql}
    Let $S$ be a set of Boolean relations.
    We have
    $\CSPf{S \cup \set{=}} \mnlmred \CSPf{S}$.
\end{lemma}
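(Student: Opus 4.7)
The plan is to build $F'$ from $F$ by ``closing'' each $S$-constraint in $F$ under the equivalence relation induced by the equality constraints of $F$, rather than picking a single representative per class. Concretely, write $x \sim_F y$ if there is a path $x = z_0, z_1, \ldots, z_\ell = y$ such that $(z_{i-1} = z_i)$ is an equality constraint of $F$ for every $i$. Given the $(S \cup \{=\})$-formula $F$ on variables $x_1, \ldots, x_n$, I would define the target $S$-formula $F'$ on the same $n$ variables by the rule: for every $R \in S$ of arity $k$ and every $(y_1, \ldots, y_k) \in \{x_1, \ldots, x_n\}^k$, include $R(y_1, \ldots, y_k)$ in $F'$ iff there exists a constraint $R(x_{i_1}, \ldots, x_{i_k})$ of $F$ with $y_t \sim_F x_{i_t}$ for every $t \in [k]$. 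This uses no new variables, so the target has $\poly(n)$ many possible constraint applications.

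The correctness proof would split into two directions. For the forward direction, any assignment $\sigma$ satisfying $F$ is constant on each $\sim_F$-class, so the value pattern of $R(y_1, \ldots, y_k)$ under $\sigma$ is the same as that of the witnessing constraint $R(x_{i_1}, \ldots, x_{i_k}) \in F$; hence $\sigma$ satisfies $F'$. For the backward direction, given $\sigma$ satisfying $F'$, I would \emph{smooth} it: pick an arbitrary representative $y_C$ for each $\sim_F$-class $C$ and define $\sigma'(z) := \sigma(y_C)$ whenever $z \in C$. By construction $\sigma'$ respects every equality in $F$. For an $S$-constraint $R(x_{i_1}, \ldots, x_{i_k}) \in F$, the tuple $(y_{C(i_1)}, \ldots, y_{C(i_k)})$ is an admissible witness, so $R(y_{C(i_1)}, \ldots, y_{C(i_k)}) \in F'$, and since $\sigma$ satisfies $F'$ we get $(\sigma'(x_{i_1}), \ldots, \sigma'(x_{i_k})) = (\sigma(y_{C(i_1)}), \ldots, \sigma(y_{C(i_k)})) \in R$. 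Thus $F$ is satisfiable iff $F'$ is satisfiable, giving $\CSPf{S \cup \{=\}}(F) = \CSPf{S}(F')$.

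Finally I would argue the reduction lies in $\mNL$. Each output bit, indexed by a pair $(R, (y_1, \ldots, y_k))$, equals
\[
    \bigvee_{(i_1, \ldots, i_k) \in [n]^k}
    \Bigl(\,
        [R(x_{i_1}, \ldots, x_{i_k}) \in F]
        \;\wedge\; \bigwedge_{t=1}^{k} [\,y_t \sim_F x_{i_t}\,]
    \,\Bigr).
\]
Each factor $[y_t \sim_F x_{i_t}]$ is an instance of undirected $st$-connectivity in the graph whose edges are the equality-bits of $F$, and this is a monotone function of the input, computable in $\mNL$ by \Cref{thm:stconn-gen}. Each conjunct is the AND of one input bit with $k = O(1)$ many $\mNL$-computations, the outer $\bigvee$ has $n^k$ branches, and $\mNL$ is closed under polynomial-size monotone conjunctions and disjunctions (by sequencing and ``paralleling'' monotone nondeterministic branching programs). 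Monotonicity of the reduction is immediate: turning on an input bit either adds a new $S$-constraint (which can only activate more output bits) or adds an equality edge (which only merges classes, and hence only adds witnesses $y_t \sim_F x_{i_t}$). The only subtle design point, and what would be the main obstacle if one tried the classical ``replace each variable by the least element of its class'' reduction, is that picking a single representative is inherently non-monotone, since adding an equality can change the representative and thus remove output bits. Replicating constraints across the class, rather than collapsing to a representative, is what preserves monotonicity.
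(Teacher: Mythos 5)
Your proposal is correct and takes essentially the same approach as the paper: both replace each $S$-constraint by all of its images under the equivalence relation generated by the equality constraints (rather than collapsing to a representative), and both argue that the reduction is in $\mNL$ via the monotone $st$-connectivity algorithm. Your closing remark about why the representative-based reduction would break monotonicity is a nice addition, but the core argument is the same.
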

\begin{proof}
    Let $F$ be a $(S \cup \set{=})$-formula on $n$ variables
    given as an input.
    Remember that $F$ is given as a Boolean vector $\alpha$,
    where each bit of $\alpha$
    represents
    the presence of a constraint application on $n$ variables
    from $S \cup \set{=}$.
    We first build an undirected graph $G$ with the variables
    $x_1,\dots,x_n$ as vertices,
    and we put an edge between $x_i$ and $x_j$
    if the constraint $x_i = x_j$ appears in $F$.
    Note that $G$ can be constructed by a monotone projection from
    $F$.

    Let $R \in S$ and let
    $C = R(x_1,\dots,x_n)$ be a constraint application of $R$.
    If $C$ appears in $F$,
    we add to the system every constraint of the form
    $C' = R(y_1,\dots,y_n)$
    such that,
    for every $i \in [n]$,
    there exists a path from $x_i$ to $y_i$ in the graph $G$.
    In this case, we say that $C$ \emph{generates} $C'$.
    Let $F_2$ be the formula that contains all
    non-equality constraints of $F$,
    and all the non-equality constraints
    generated by a constraint in $F$.
    It's not hard to see that 
    $F$ is satisfiable if and only if $F_2$ is satisfiable,
    and therefore the reduction is correct.

    Moreover, the reduction can be done in monotone $\NL$
    using the monotone $\NL$ algorithm for $\stconn$
    (\Cref{thm:stconn-gen}).
    Indeed, there are at most $n^k$ constraint applications of a given
    relation $R$ of arity $k$. Therefore,
    to decide if a constraint $C' = R(y_1,\dots,y_n)$ appears in $F_2$,
    it suffices to check if there exists a constraint application 
    of $R$ in $F$ which generates $C'$.
    This can be checked with $n^k$ calls to $\stconn$.
\end{proof}

\polym*
\begin{proof}
    If $\Pol(S_2) \sseq \Pol(S_1)$, then 
    from \Cref{lem:galois}
    (\Cref{item:inv-invert,item:coq-sseq,item:invpol})
    we obtain
    $\COQ(S_1) \sseq \ccln{S_1} \sseq \ccln{S_2} = \COQ(S_2 \cup \set{=})$.
    Therefore, by~\Cref{prop:incl-coq,prop:eql} 
    we can do the following chain of reductions in monotone $\NL$:
    \begin{equation*}
        \CSPf{S_1}
        \mormred
        \CSPf{S_2 \cup \set{=}}
        \mnlmred
        \CSPf{S_2}.
        \qedhere
    \end{equation*}
\end{proof}

\subsection{Monotone circuit upper bounds}
\label{sec:monotone_upper}

We restate and prove the theorem.

\monschaeffer*

\begin{proof}
    We prove each case separately.

    \emph{Proof of (1)}.
    We first observe that
    $\thornsat$ 
    (see definition in \Cref{sec:csp-dichotomy}, \emph{Dichotomy for formulas})
    can be solved by a reduction to
    $\GEN \in \mSIZE[\poly]$.
    Indeed, we interpret each constraint of the form
    $(\neg x_i \lor \neg x_j \lor x_k)$
    (which is equivalent to 
    $x_i \land x_j \implies x_k$)
    as a triple
    $(i,j,k)$,
    and constraints of the form
    $x_i$
    as a triple
    $(0,0,i)$.
    Let $S \sseq \set{0,1,2\dots,n}$ be the set generated
    by these triples, applying the generation rules of $\GEN$,
    using $0 \in S$ as the axiom.
    It suffices to check that there exists
    some constraint of the form
    $\neg x_i \lor \neg x_j \lor \neg x_k$,
    such that $\set{i,j,k} \sseq S$.
    This process can be done with polynomial-size
    monotone circuits, invoking $\GEN$.
    Therefore, it follows
    from \Cref{thm:stconn-gen}
    that
    $\thornsat \in \mSIZE[\poly]$.

    Moreover, we recall that, if $\clonefont{E_2} \sseq \Pol(S)$, then
    $S \sseq \COQ(\hornt)$
    (in other words, every $S$-formula can be written as a set of 3-Horn
    equations) -- see, e.g,~\cite[Lemma 4.8]{DBLP:books/daglib/0004131}.
    Therefore, from 
    \Cref{item:coq-idempotent,item:coq-sseq} of \Cref{lem:galois}
    and
    \Cref{prop:incl-coq},
    we conclude that
    $\CSPf{S} \mormred \thornsat \in \mSIZE[\poly]$.

    Now recall that, if $\clonefont{V_2} \sseq \Pol(S)$,
    then
    $S \sseq \COQ(\ahornt)$,
    where
    $\ahornt$ is the set of width-3 Anti-Horn relations
    (i.e., $\ahornt = 
    \set{
        (x_1 \lor x_2 \lor \neg x_3),
        (x_1 \lor x_2 \lor x_3),
        (\neg x)
    }$; see~\cite[Lemma 4.8]{DBLP:books/daglib/0004131} for a proof of this
    observation).
    But note that an
    $\ahornt$-formula $\varphi$
    is satisfiable if and only if
    the $\hornt$-formula 
    $\varphi(\neg x_1,\dots,\neg x_n)$
    is satisfiable.
    Therefore by~\Cref{prop:incl-coq}
    and
    \Cref{item:coq-idempotent,item:coq-sseq} of \Cref{lem:galois},
    we have
    $\CSPf{S} \mormred 
    \CSPf{\ahornt}
    \mprojred \thornsat \in \mSIZE[\poly]$.

    \emph{Proof of (2)}.
    We first prove the case $\clonefont{D_2} \sseq \Pol(S)$.
    Let $\dSAT = \CSPf{\Gamma}$,
    where $\Gamma = \set{(x_1 \lor x_2), (x_1 \lor \neg x_2), (\neg x_1 \lor
    \neg x_2)}$.
    It's easy to check that the standard reduction from
    $\dSAT$ to $\stconn$ can be done in monotone $\NL$
    (see~\cite[Theorem 4]{DBLP:journals/mst/JonesLL76}).
    Therefore, it follows from \Cref{thm:stconn-gen}
    that
    $\dSAT \in \mNL$.
    Now, recall that, if
    $\clonefont{D_2} \sseq \Pol(S)$, then
    $S \sseq \COQ(\Gamma)$
    (see, e.g., \cite[Lemma 4.9]{DBLP:books/daglib/0004131}).
    Therefore,
    from \Cref{prop:incl-coq} and
    \Cref{item:coq-idempotent,item:coq-sseq} of \Cref{lem:galois},
    we conclude
    $\CSPf{S} \in \mNL$.

    We now suppose that $\clonefont{S_{00}} \sseq \Pol(S)$.
    We check that the proof of~\cite[Lemma 3.4]{DBLP:journals/jcss/AllenderBISV09}
    gives a monotone circuit.
    If $\clonefont{S_{00}} \sseq \Pol(S)$, then
    there exists $k \geq 2$ such that
    $\clonefont{S_{00}}^k \sseq \Pol(S)$
    (that's because there does not exist a finite set of relations
    $S$ such that $\Pol(S) = \clonefont{S_{00}}$).
    Note that
    $\clonefont{S_{00}}^k = \Pol(\Gamma)$,
    where $\Gamma = \set{\OR^k, x, \neg x, \to, =}$.
    We show below how to decide if a $\Gamma$-formula is unsatisfiable 
    in monotone $\NL$.
    The result then follows from~\Cref{prop:poly}.

    Let $F$ be a given $\Gamma$-formula with $n$ variables.
    We first construct a directed graph $G$,
    with vertex set $\set{x_1,\dots,x_n}$,
    and with arcs 
    $(x_i,x_j)$
    if $x_i \to x_j$ is a constraint of $F$,
    and 
    arcs
    $(x_i,x_j)$
    and
    $(x_j,x_i)$
    if
    $x_i = x_j$ is a constraint of $F$.
    This can be done with a monotone projection.
    Observe that a $\Gamma$-formula $F$
    is unsatisfiable if, and only if,
    there exists a constraint of the form $x_{i_1} \lor \dots \lor x_{i_k}$ in $F$,
    such that there exists a path from some $x_{i_j}$ to a constraint
    $\neg y$ in $F$.
    This can be checked in monotone $\NL$ by \Cref{thm:stconn-gen}.

    The case $\clonefont{S_{10}} \sseq \Pol(S)$ is analogous.
\end{proof}

\section{Background on Post's Lattice and Clones}
\label{sec:clone-background}

In this section, we include
 the definitions of the various clones that are used in the
paper,
as well as
a figure of Post's lattice, which can be helpful when checking the proofs
of
\Cref{sec:CSPs}.

Let
$\rightarrow : (x,y) \mapsto (\neg x \lor y)$.
Let also
$\leftrightarrow : (x,y) \mapsto \neg (x \xor y)$
and
$\clonefont{id} : x \mapsto x$.
Let $f : \blt^k \to \blt$ be a Boolean function.
We say that 
$f$
is \emph{linear}
if there exists 
$c \in \blt^k$ and $b \in \blt$
such that
$f(x) = \inner{c}{x} + b \pmod 2$.
We say that $f$ is self-dual if $f = \dual(f)$. Let $a \in \blt$.
We say that $f$ is 
\emph{a-reproducing}
if
$f(a,\dots,a) = a$.
We say that
a set $T \sseq \blt^k$ is 
\emph{$a$-separating}
if there exists $i \in [k]$
such that
$x_i = a$ for all $x \in T$.
We say that
$f$ is 
\emph{$a$-separating}
if
$f^{-1}(a)$ is $a$-separating.
We say that $f$ is 
\emph{a-separating of degree $k$}
if every $T \sseq f^{-1}(a)$
such that $\card{T} = k$ is $a$-separating. The \emph{basis} of a clone $B$ is a set of Boolean functions $F$
such that $B = [F]$.

\begin{figure}[hbtp]
    \centering
    \begin{small}
        \begin{longtable}[]{p{0.1\textwidth}p{0.5\textwidth}p{0.3\textwidth}}
            \toprule
            \begin{minipage}[b]{0.12\columnwidth}\raggedright
                Name\strut
            \end{minipage} & \begin{minipage}[b]{0.41\columnwidth}\raggedright
                Definition\strut
            \end{minipage} & \begin{minipage}[b]{0.38\columnwidth}\raggedright
                Base\strut
            \end{minipage}\tabularnewline
            \midrule
            \endhead
            \begin{minipage}[t]{0.12\columnwidth}\raggedright
                \(\mathrm{BF}\)\strut
            \end{minipage} & \begin{minipage}[t]{0.41\columnwidth}\raggedright
                All Boolean functions\strut
            \end{minipage} & \begin{minipage}[t]{0.38\columnwidth}\raggedright
                \(\set{\lor,\land,\neg}\)\strut
            \end{minipage}\tabularnewline
            \begin{minipage}[t]{0.12\columnwidth}\raggedright
                \(\clonefont{R_0}\)\strut
            \end{minipage} & \begin{minipage}[t]{0.41\columnwidth}\raggedright
                \(\set{f \in \clonefont{BF} : f \text{ is $0$-reproducing}}\)\strut
            \end{minipage} & \begin{minipage}[t]{0.38\columnwidth}\raggedright
                \(\set{\land,\xor}\)\strut
            \end{minipage}\tabularnewline
            \begin{minipage}[t]{0.12\columnwidth}\raggedright
                \(\clonefont{R_1}\)\strut
            \end{minipage} & \begin{minipage}[t]{0.41\columnwidth}\raggedright
                \(\set{f \in \clonefont{BF} : f \text{ is $1$-reproducing}}\)\strut
            \end{minipage} & \begin{minipage}[t]{0.38\columnwidth}\raggedright
                \(\set{\lor,\leftrightarrow}\)\strut
            \end{minipage}\tabularnewline
            \begin{minipage}[t]{0.12\columnwidth}\raggedright
                \(\clonefont{R_2}\)\strut
            \end{minipage} & \begin{minipage}[t]{0.41\columnwidth}\raggedright
                \(\clonefont{R_1 \cap R_0}\)\strut
            \end{minipage} & \begin{minipage}[t]{0.38\columnwidth}\raggedright
                \(\set{\lor, x \land (y \leftrightarrow z)}\)\strut
            \end{minipage}\tabularnewline
            \begin{minipage}[t]{0.12\columnwidth}\raggedright
                \(\clonefont{M}\)\strut
            \end{minipage} & \begin{minipage}[t]{0.41\columnwidth}\raggedright
                \(\set{f \in \clonefont{BF} : f \text{ is monotonic}}\)\strut
            \end{minipage} & \begin{minipage}[t]{0.38\columnwidth}\raggedright
                \(\set{\lor,\land,0,1}\)\strut
            \end{minipage}\tabularnewline
            \begin{minipage}[t]{0.12\columnwidth}\raggedright
                \(\clonefont{M_1}\)\strut
            \end{minipage} & \begin{minipage}[t]{0.41\columnwidth}\raggedright
                \(\clonefont{M} \cap \clonefont{R_1}\)\strut
            \end{minipage} & \begin{minipage}[t]{0.38\columnwidth}\raggedright
                \(\set{\lor,\land,1}\)\strut
            \end{minipage}\tabularnewline
            \begin{minipage}[t]{0.12\columnwidth}\raggedright
                \(\clonefont{M_0}\)\strut
            \end{minipage} & \begin{minipage}[t]{0.41\columnwidth}\raggedright
                \(\clonefont{M} \cap \clonefont{R_0}\)\strut
            \end{minipage} & \begin{minipage}[t]{0.38\columnwidth}\raggedright
                \(\set{\lor,\land,0}\)\strut
            \end{minipage}\tabularnewline
            \begin{minipage}[t]{0.12\columnwidth}\raggedright
                \(\clonefont{M_2}\)\strut
            \end{minipage} & \begin{minipage}[t]{0.41\columnwidth}\raggedright
                \(\clonefont{M} \cap \clonefont{R_2}\)\strut
            \end{minipage} & \begin{minipage}[t]{0.38\columnwidth}\raggedright
                \(\set{\lor,\land}\)\strut
            \end{minipage}\tabularnewline
            \begin{minipage}[t]{0.12\columnwidth}\raggedright
                \(\clonefont{S_0^n}\)\strut
            \end{minipage} & \begin{minipage}[t]{0.41\columnwidth}\raggedright
                \(\set{f \in \clonefont{BF} : f \text{ is $0$-separating of degree $n$}}\)\strut
            \end{minipage} & \begin{minipage}[t]{0.38\columnwidth}\raggedright
                \(\set{\rightarrow,\dual(h_n)}\)\strut
            \end{minipage}\tabularnewline
            \begin{minipage}[t]{0.12\columnwidth}\raggedright
                \(\clonefont{S_0}\)\strut
            \end{minipage} & \begin{minipage}[t]{0.41\columnwidth}\raggedright
                \(\set{f \in \clonefont{BF} : f \text{ is $0$-separating}}\)\strut
            \end{minipage} & \begin{minipage}[t]{0.38\columnwidth}\raggedright
                \(\set{\rightarrow}\)\strut
            \end{minipage}\tabularnewline
            \begin{minipage}[t]{0.12\columnwidth}\raggedright
                \(\clonefont{S_1^n}\)\strut
            \end{minipage} & \begin{minipage}[t]{0.41\columnwidth}\raggedright
                \(\set{f \in \clonefont{BF} : f \text{ is $1$-separating of degree $n$}}\)\strut
            \end{minipage} & \begin{minipage}[t]{0.38\columnwidth}\raggedright
                \(\set{x \land \ovl{y} ,h_n}\)\strut
            \end{minipage}\tabularnewline
            \begin{minipage}[t]{0.12\columnwidth}\raggedright
                \(\clonefont{S_1}\)\strut
            \end{minipage} & \begin{minipage}[t]{0.41\columnwidth}\raggedright
                \(\set{f \in \clonefont{BF} : f \text{ is $1$-separating}}\)\strut
            \end{minipage} & \begin{minipage}[t]{0.38\columnwidth}\raggedright
                \(\set{x \land \ovl{y}}\)\strut
            \end{minipage}\tabularnewline
            \begin{minipage}[t]{0.12\columnwidth}\raggedright
                \(\clonefont{S_{02}^n}\)\strut
            \end{minipage} & \begin{minipage}[t]{0.41\columnwidth}\raggedright
                \(\clonefont{S_0^n} \cap \clonefont{R_2}\)\strut
            \end{minipage} & \begin{minipage}[t]{0.38\columnwidth}\raggedright
                \(\set{x \lor (y \land \ovl{z}), \dual(h_n)}\)\strut
            \end{minipage}\tabularnewline
            \begin{minipage}[t]{0.12\columnwidth}\raggedright
                \(\clonefont{S_{02}}\)\strut
            \end{minipage} & \begin{minipage}[t]{0.41\columnwidth}\raggedright
                \(\clonefont{S_0} \cap \clonefont{R_2}\)\strut
            \end{minipage} & \begin{minipage}[t]{0.38\columnwidth}\raggedright
                \(\set{x \lor (y \land \ovl{z})}\)\strut
            \end{minipage}\tabularnewline
            \begin{minipage}[t]{0.12\columnwidth}\raggedright
                \(\clonefont{S_{01}^n}\)\strut
            \end{minipage} & \begin{minipage}[t]{0.41\columnwidth}\raggedright
                \(\clonefont{S_0^n} \cap \clonefont{M}\)\strut
            \end{minipage} & \begin{minipage}[t]{0.38\columnwidth}\raggedright
                \(\set{\dual(h_n), 1}\)\strut
            \end{minipage}\tabularnewline
            \begin{minipage}[t]{0.12\columnwidth}\raggedright
                \(\clonefont{S_{01}}\)\strut
            \end{minipage} & \begin{minipage}[t]{0.41\columnwidth}\raggedright
                \(\clonefont{S_0} \cap \clonefont{M}\)\strut
            \end{minipage} & \begin{minipage}[t]{0.38\columnwidth}\raggedright
                \(\set{x \lor (y \land z), 1}\)\strut
            \end{minipage}\tabularnewline
            \begin{minipage}[t]{0.12\columnwidth}\raggedright
                \(\clonefont{S_{00}^n}\)\strut
            \end{minipage} & \begin{minipage}[t]{0.41\columnwidth}\raggedright
                \(\clonefont{S_0^n} \cap \clonefont{R_2} \cap \clonefont{M}\)\strut
            \end{minipage} & \begin{minipage}[t]{0.38\columnwidth}\raggedright
                \(\set{x \lor (y \land z), \dual(h_n)}\)\strut
            \end{minipage}\tabularnewline
            \begin{minipage}[t]{0.12\columnwidth}\raggedright
                \(\clonefont{S_{00}}\)\strut
            \end{minipage} & \begin{minipage}[t]{0.41\columnwidth}\raggedright
                \(\clonefont{S_0} \cap \clonefont{R_2} \cap \clonefont{M}\)\strut
            \end{minipage} & \begin{minipage}[t]{0.38\columnwidth}\raggedright
                \(\set{x \lor (y \land z)}\)\strut
            \end{minipage}\tabularnewline
            \begin{minipage}[t]{0.12\columnwidth}\raggedright
                \(\clonefont{S_{12}^n}\)\strut
            \end{minipage} & \begin{minipage}[t]{0.41\columnwidth}\raggedright
                \(\clonefont{S_1^n} \cap \clonefont{R_2}\)\strut
            \end{minipage} & \begin{minipage}[t]{0.38\columnwidth}\raggedright
                \(\set{x \land (y \lor \ovl{z}), h_n}\)\strut
            \end{minipage}\tabularnewline
            \begin{minipage}[t]{0.12\columnwidth}\raggedright
                \(\clonefont{S_{12}}\)\strut
            \end{minipage} & \begin{minipage}[t]{0.41\columnwidth}\raggedright
                \(\clonefont{S_1} \cap \clonefont{R_2}\)\strut
            \end{minipage} & \begin{minipage}[t]{0.38\columnwidth}\raggedright
                \(\set{x \land (y \lor \ovl{z})}\)\strut
            \end{minipage}\tabularnewline
            \begin{minipage}[t]{0.12\columnwidth}\raggedright
                \(\clonefont{S_{11}^n}\)\strut
            \end{minipage} & \begin{minipage}[t]{0.41\columnwidth}\raggedright
                \(\clonefont{S_1^n} \cap \clonefont{M}\)\strut
            \end{minipage} & \begin{minipage}[t]{0.38\columnwidth}\raggedright
                \(\set{h_n, 0}\)\strut
            \end{minipage}\tabularnewline
            \begin{minipage}[t]{0.12\columnwidth}\raggedright
                \(\clonefont{S_{11}}\)\strut
            \end{minipage} & \begin{minipage}[t]{0.41\columnwidth}\raggedright
                \(\clonefont{S_1} \cap \clonefont{M}\)\strut
            \end{minipage} & \begin{minipage}[t]{0.38\columnwidth}\raggedright
                \(\set{x \land (y \lor z), 0}\)\strut
            \end{minipage}\tabularnewline
            \begin{minipage}[t]{0.12\columnwidth}\raggedright
                \(\clonefont{S_{10}^n}\)\strut
            \end{minipage} & \begin{minipage}[t]{0.41\columnwidth}\raggedright
                \(\clonefont{S_1^n} \cap \clonefont{R_2} \cap \clonefont{M}\)\strut
            \end{minipage} & \begin{minipage}[t]{0.38\columnwidth}\raggedright
                \(\set{x \land (y \lor z), h_n}\)\strut
            \end{minipage}\tabularnewline
            \begin{minipage}[t]{0.12\columnwidth}\raggedright
                \(\clonefont{S_{10}}\)\strut
            \end{minipage} & \begin{minipage}[t]{0.41\columnwidth}\raggedright
                \(\clonefont{S_1} \cap \clonefont{R_2} \cap \clonefont{M}\)\strut
            \end{minipage} & \begin{minipage}[t]{0.38\columnwidth}\raggedright
                \(\set{x \land (y \lor z)}\)\strut
            \end{minipage}\tabularnewline
            \begin{minipage}[t]{0.12\columnwidth}\raggedright
                \(\clonefont{D}\)\strut
            \end{minipage} & \begin{minipage}[t]{0.41\columnwidth}\raggedright
                \(\set{f \in \clonefont{BF} : f \text{ is self-dual}}\)\strut
            \end{minipage} & \begin{minipage}[t]{0.38\columnwidth}\raggedright
                \(\set{(x \land \ovl{y}) \lor (x \land \ovl{z}) \lor (\ovl{y} \land \ovl{z})}\)\strut
            \end{minipage}\tabularnewline
            \begin{minipage}[t]{0.12\columnwidth}\raggedright
                \(\clonefont{D_1}\)\strut
            \end{minipage} & \begin{minipage}[t]{0.41\columnwidth}\raggedright
                \(\clonefont{D} \cap \clonefont{R_2}\)\strut
            \end{minipage} & \begin{minipage}[t]{0.38\columnwidth}\raggedright
                \(\set{(x \land y) \lor (x \land \ovl{z}) \lor ({y} \land \ovl{z})}\)\strut
            \end{minipage}\tabularnewline
            \begin{minipage}[t]{0.12\columnwidth}\raggedright
                \(\clonefont{D_2}\)\strut
            \end{minipage} & \begin{minipage}[t]{0.41\columnwidth}\raggedright
                \(\clonefont{D} \cap \clonefont{M}\)\strut
            \end{minipage} & \begin{minipage}[t]{0.38\columnwidth}\raggedright
                \(\set{(x \land y) \lor (y \land {z}) \lor (x \land {z})}\)\strut
            \end{minipage}\tabularnewline
            \begin{minipage}[t]{0.12\columnwidth}\raggedright
                \(\clonefont{L}\)\strut
            \end{minipage} & \begin{minipage}[t]{0.41\columnwidth}\raggedright
                \(\set{f \in \clonefont{BF} : f \text{ is linear}}\)\strut
            \end{minipage} & \begin{minipage}[t]{0.38\columnwidth}\raggedright
                \(\set{\xor, 1}\)\strut
            \end{minipage}\tabularnewline
            \begin{minipage}[t]{0.12\columnwidth}\raggedright
                \(\clonefont{L_0}\)\strut
            \end{minipage} & \begin{minipage}[t]{0.41\columnwidth}\raggedright
                \(\clonefont{L} \cap \clonefont{R_0}\)\strut
            \end{minipage} & \begin{minipage}[t]{0.38\columnwidth}\raggedright
                \(\set{\xor}\)\strut
            \end{minipage}\tabularnewline
            \begin{minipage}[t]{0.12\columnwidth}\raggedright
                \(\clonefont{L_1}\)\strut
            \end{minipage} & \begin{minipage}[t]{0.41\columnwidth}\raggedright
                \(\clonefont{L} \cap \clonefont{R_1}\)\strut
            \end{minipage} & \begin{minipage}[t]{0.38\columnwidth}\raggedright
                \(\set{\leftrightarrow}\)\strut
            \end{minipage}\tabularnewline
            \begin{minipage}[t]{0.12\columnwidth}\raggedright
                \(\clonefont{L_2}\)\strut
            \end{minipage} & \begin{minipage}[t]{0.41\columnwidth}\raggedright
                \(\clonefont{L} \cap \clonefont{R}\)\strut
            \end{minipage} & \begin{minipage}[t]{0.38\columnwidth}\raggedright
                \(\set{x \xor y \xor z}\)\strut
            \end{minipage}\tabularnewline
            \begin{minipage}[t]{0.12\columnwidth}\raggedright
                \(\clonefont{L_3}\)\strut
            \end{minipage} & \begin{minipage}[t]{0.41\columnwidth}\raggedright
                \(\clonefont{L} \cap \clonefont{D}\)\strut
            \end{minipage} & \begin{minipage}[t]{0.38\columnwidth}\raggedright
                \(\set{x \xor y \xor z \xor 1}\)\strut
            \end{minipage}\tabularnewline
            \begin{minipage}[t]{0.12\columnwidth}\raggedright
                \(\clonefont{V}\)\strut
            \end{minipage} & \begin{minipage}[t]{0.41\columnwidth}\raggedright
                \(\set{f \in \clonefont{BF} : f \text{ is constant or an $n$-ary OR function}}\)\strut
            \end{minipage} & \begin{minipage}[t]{0.38\columnwidth}\raggedright
                \(\set{\lor,0,1}\)\strut
            \end{minipage}\tabularnewline
            \begin{minipage}[t]{0.12\columnwidth}\raggedright
                \(\clonefont{V_0}\)\strut
            \end{minipage} & \begin{minipage}[t]{0.41\columnwidth}\raggedright
                \([\set{\lor}] \cup [\set{0}]\)\strut
            \end{minipage} & \begin{minipage}[t]{0.38\columnwidth}\raggedright
                \(\set{\lor,0}\)\strut
            \end{minipage}\tabularnewline
            \begin{minipage}[t]{0.12\columnwidth}\raggedright
                \(\clonefont{V_1}\)\strut
            \end{minipage} & \begin{minipage}[t]{0.41\columnwidth}\raggedright
                \([\set{\lor}] \cup [\set{1}]\)\strut
            \end{minipage} & \begin{minipage}[t]{0.38\columnwidth}\raggedright
                \(\set{\lor,1}\)\strut
            \end{minipage}\tabularnewline
            \begin{minipage}[t]{0.12\columnwidth}\raggedright
                \(\clonefont{V_2}\)\strut
            \end{minipage} & \begin{minipage}[t]{0.41\columnwidth}\raggedright
                \([\set{\lor}]\)\strut
            \end{minipage} & \begin{minipage}[t]{0.38\columnwidth}\raggedright
                \(\set{\lor}\)\strut
            \end{minipage}\tabularnewline
            \begin{minipage}[t]{0.12\columnwidth}\raggedright
                \(\clonefont{E}\)\strut
            \end{minipage} & \begin{minipage}[t]{0.41\columnwidth}\raggedright
                \(\set{f \in \clonefont{BF} : f \text{ is constant or an $n$-ary AND function}}\)\strut
            \end{minipage} & \begin{minipage}[t]{0.38\columnwidth}\raggedright
                \(\set{\land,0,1}\)\strut
            \end{minipage}\tabularnewline
            \begin{minipage}[t]{0.12\columnwidth}\raggedright
                \(\clonefont{E_0}\)\strut
            \end{minipage} & \begin{minipage}[t]{0.41\columnwidth}\raggedright
                \([\set{\land}] \cup [\set{0}]\)\strut
            \end{minipage} & \begin{minipage}[t]{0.38\columnwidth}\raggedright
                \(\set{\land,0}\)\strut
            \end{minipage}\tabularnewline
            \begin{minipage}[t]{0.12\columnwidth}\raggedright
                \(\clonefont{E_1}\)\strut
            \end{minipage} & \begin{minipage}[t]{0.41\columnwidth}\raggedright
                \([\set{\land}] \cup [\set{1}]\)\strut
            \end{minipage} & \begin{minipage}[t]{0.38\columnwidth}\raggedright
                \(\set{\land,1}\)\strut
            \end{minipage}\tabularnewline
            \begin{minipage}[t]{0.12\columnwidth}\raggedright
                \(\clonefont{E_2}\)\strut
            \end{minipage} & \begin{minipage}[t]{0.41\columnwidth}\raggedright
                \([\set{\land}]\)\strut
            \end{minipage} & \begin{minipage}[t]{0.38\columnwidth}\raggedright
                \(\set{\land}\)\strut
            \end{minipage}\tabularnewline
            \begin{minipage}[t]{0.12\columnwidth}\raggedright
                \(\clonefont{N}\)\strut
            \end{minipage} & \begin{minipage}[t]{0.41\columnwidth}\raggedright
                \([\set{\neg}] \cup [\set{0}] \cup [\set{1}]\)\strut
            \end{minipage} & \begin{minipage}[t]{0.38\columnwidth}\raggedright
                \(\set{\neg, 1}\)\strut
            \end{minipage}\tabularnewline
            \begin{minipage}[t]{0.12\columnwidth}\raggedright
                \(\clonefont{N_2}\)\strut
            \end{minipage} & \begin{minipage}[t]{0.41\columnwidth}\raggedright
                \([\set{\neg}]\)\strut
            \end{minipage} & \begin{minipage}[t]{0.38\columnwidth}\raggedright
                \(\set{\neg}\)\strut
            \end{minipage}\tabularnewline
            \begin{minipage}[t]{0.12\columnwidth}\raggedright
                \(\clonefont{I}\)\strut
            \end{minipage} & \begin{minipage}[t]{0.41\columnwidth}\raggedright
                \([\set{\clonefont{id}}] \cup [\set{0}] \cup [\set{1}]\)\strut
            \end{minipage} & \begin{minipage}[t]{0.38\columnwidth}\raggedright
                \(\set{\clonefont{id}, 0, 1}\)\strut
            \end{minipage}\tabularnewline
            \begin{minipage}[t]{0.12\columnwidth}\raggedright
                \(\clonefont{I_0}\)\strut
            \end{minipage} & \begin{minipage}[t]{0.41\columnwidth}\raggedright
                \([\set{\clonefont{id}}] \cup [\set{0}]\)\strut
            \end{minipage} & \begin{minipage}[t]{0.38\columnwidth}\raggedright
                \(\set{\clonefont{id}, 0}\)\strut
            \end{minipage}\tabularnewline
            \begin{minipage}[t]{0.12\columnwidth}\raggedright
                \(\clonefont{I_1}\)\strut
            \end{minipage} & \begin{minipage}[t]{0.41\columnwidth}\raggedright
                \([\set{\clonefont{id}}] \cup [\set{1}]\)\strut
            \end{minipage} & \begin{minipage}[t]{0.38\columnwidth}\raggedright
                \(\set{\clonefont{id}, 1}\)\strut
            \end{minipage}\tabularnewline
            \begin{minipage}[t]{0.12\columnwidth}\raggedright
                \(\clonefont{I_2}\)\strut
            \end{minipage} & \begin{minipage}[t]{0.41\columnwidth}\raggedright
                \([\set{\clonefont{id}}]\)\strut
            \end{minipage} & \begin{minipage}[t]{0.38\columnwidth}\raggedright
                \(\set{\clonefont{id}}\)\strut
            \end{minipage}\tabularnewline
            \bottomrule
        \end{longtable}
    \end{small}
    \caption{
        Table of all closed classes of Boolean functions, and their bases.
        Here, $h_n$ denotes the function
        $h_n(x_1,\dots,x_{n+1}) = \bigvee_{i=1}^{n+1} \bigwedge_{j = 1,
        j \neq i}^{n+1} x_j$.
        See \Cref{def:dual} for the definition of $\dual(\cdot)$.
        The same
        table appears in
    \cite[Table 1]{DBLP:journals/jcss/AllenderBISV09}.}
    \label{fig:clone_table}
\end{figure}

\end{document}